\documentclass[a4paper, 11pt]{article}
\usepackage[utf8]{inputenc}
\usepackage{amsmath}
\usepackage{amssymb}
\usepackage{amsthm}
\usepackage[backend=biber,style=alphabetic]{biblatex}
\usepackage{url}
\usepackage{hyperref}
\usepackage{xcolor}
\usepackage{dsfont}
\usepackage{float}
\usepackage{mathrsfs}
\usepackage{yfonts}
\newtheorem{definition}{Definition}

\newtheorem{remark}[definition]{Remark}
\newtheorem{theorem}[definition]{Theorem}
\newtheorem{lemma}[definition]{Lemma}
\newtheorem{corollary}[definition]{Corollary}

\newtheorem{note}[definition]{Note}
\usepackage{graphicx}
 \usepackage{caption}
 \usepackage{subcaption}
 \usepackage{verbatim}
\usepackage{upgreek}
\usepackage{tikz}
\usetikzlibrary{shapes.geometric,plotmarks,backgrounds,fit,calc,circuits.ee.IEC}
\usetikzlibrary {arrows.meta}
\usetikzlibrary{patterns}
\usetikzlibrary{decorations.pathreplacing}
\usepackage{quantikz}

\usepackage{geometry}
\geometry{margin = 1.1in}

\bibliography{references}
\newcommand{\bL}{\mathbf{L}}

\newcommand{\bD}{\mathbf{D}}

\newcommand{\cA}{{\cal A}}
\newcommand{\cB}{{\cal B}}

\newcommand{\cD}{{\cal D}}
\newcommand{\cE}{{\cal E}}
\newcommand{\cF}{{\cal F}}

\newcommand{\cI}{{\cal I}}
\newcommand{\cJ}{{\cal J}}

\newcommand{\cM}{{\cal M}}
\newcommand{\cN}{{\cal N}}

\newcommand{\cP}{{\cal P}}

\newcommand{\cR}{{\cal R}}
\newcommand{\cS}{{\cal S}}
\newcommand{\cT}{{\cal T}}
\newcommand{\cU}{{\cal U}}
\newcommand{\cV}{{\cal V}}
\newcommand{\cW}{{\cal W}}

\newcommand{\cZ}{{\cal Z}}

\newcommand{\CNOT}{\mathrm{CNOT}}

\newcommand{\ident}{\mathds{1}}
\newcommand{\dnorm}[1]{{\ensuremath{\|#1\|_\diamond}}}
\newcommand{\opn}[1]{{\ensuremath{\|#1\|_\infty}}}

\usepackage{braket}

\DeclareMathOperator{\tr}{Tr}
\DeclareMathOperator{\trans}{Trans}

\title{Fault-tolerant quantum input/output}
\author{Matthias Christandl$^1$, Omar Fawzi$^2$, and Ashutosh Goswami$^1$ \\[2mm]
    {\small $^1$Department of Mathematical Sciences, University of Copenhagen, Denmark} \\
    {\small $^2$Universit\'e de Lyon, Inria, ENS de Lyon, UCBL, LIP, France}\\
  {\small christandl@math.ku.dk, omar.fawzi@ens-lyon.fr, akg@math.ku.dk}}
\date{}

\begin{document}

\maketitle
\begin{abstract}
Usual scenarios of fault-tolerant computation are concerned with the fault-tolerant realization of quantum algorithms that compute classical functions, such as Shor's algorithm for factoring. In particular, this means that input and output to the quantum algorithm are classical. 
In contrast to stand-alone single-core quantum computers, in many distributed scenarios, quantum information might have to be passed on from one quantum information processing system to another one, possibly via noisy quantum communication channels with noise levels above fault-tolerant thresholds.  In such situations, quantum information processing devices will have quantum inputs, quantum outputs or even both, which pass qubits among each other. 

Working in the fault-tolerant framework of [Kitaev, 1997], we show that any quantum circuit with quantum input and output can be transformed into a fault-tolerant circuit that produces the ideal circuit with some controlled noise applied at the input and output. The framework allows the direct composition of the statements, enabling versatile future applications. We illustrate this with a concrete application, namely, communication over a noisy channel with faulty encoding and decoding operations [Christandl and M{\"u}ller-Hermes, 2024]. For communication codes with linear minimum distance, we construct fault-tolerant encoders and decoders for general noise (including coherent errors). For the weaker, but standard, model of local stochastic noise, we obtain fault-tolerant encoders and decoders for any communication code that can correct a constant fraction random errors.
%
\end{abstract}

\section{Introduction}

\subsection{Background}

Many communication and computation scenarios involve black boxes, which implement a fixed quantum channel on an input-output space, while the user does not have control over their design and implementation. The user can only apply these boxes on a quantum input and receive the corresponding output. For example, in the communication setting, a noisy quantum channel, which is used as a resource to transmit information, acts as the black box~\cite{holevo1972mathematical, wilde2013quantum, watrous2018theory}. Another example is quantum learning, where an unknown quantum device, whose properties are to be determined, can be viewed as such a black box~\cite{giovannetti2011advances, montanaro2013survey,eisert2020quantum, arunachalam2017guest}. In general, a protocol with black boxes is a composition of many instances of black boxes and quantum circuits, designed by the user to optimize the performance of the protocol. In the communication setting, the performance of the protocol can be the number of bits/qubits that are reliably transmitted using the noisy quantum channel. In the setting of quantum learning, it can be the number of black box uses needed to accurately estimate the parameters of interest. 

\smallskip The design of black box protocols focuses typically on aspects of efficiency in terms of the performance, while it usually supposes that the protocols can be implemented perfectly without any circuit noise. However, due to the inherent noisy nature of quantum devices, this assumption is not realistic. As a consequence, one needs to implement such protocols in a fault-tolerant way. However, we note that the standard fault-tolerant methods for implementing quantum algorithms~\cite{shor1996fault, aharonov1997fault, kitaev, preskill1998fault, Aliferis2006quantum, gottesman2010introduction, gottesman2013fault, chamberland2016thresholds, yamasaki2024time} are not directly applicable to black box protocols because one can not implement black boxes in the logical subspace of an error correcting code. 

\medskip Standard fault-tolerant quantum computing deals with quantum circuits that have classical input and output. Such circuits consist of logic gates such as logical state preparation, unitary gates, and quantum measurement~\cite{preskill1998fault, Gottesman-book}. A fault-tolerant realization of such circuits is based on a quantum error-correcting code, which encodes a logical qubit in multiple physical qubits, such that the logical qubit is more resilient to noise than the physical qubits. The basic idea behind fault-tolerant quantum computing is that the logic  gates can be realized in the code space of an error correcting code with the help of fault-tolerant procedures, which make sure that errors do not propagate catastrophically, and, therefore, can be corrected by the error correcting code.

\smallskip In this article, working within Kitaev’s general framework for fault-tolerant quantum computation~\cite{kitaev}, we extend the standard model of fault-tolerant computation to circuits with quantum input and output. We then apply this extended framework to derive concrete results in the context of fault-tolerant quantum communication. We note that Kitaev’s framework is built around what we call a fault-tolerant scheme (see Definition~\ref{def:ft_scheme}), comprising three essential components: a family of quantum error-correcting codes, fault-tolerant implementations of logic gates, and encoding/decoding interfaces. It is worth emphasizing that this framework is general and does not prescribe a specific construction. Moreover, the third point, which implies the ability to interface between unencoded and encoded qubits is largely absent from traditional fault-tolerant constructions such as those in~\cite{aharonov1997fault, Aliferis2006quantum}.

\smallskip Roughly speaking, the interfaces allow to encode and decode arbitrary quantum state in an error correcting code, while ensuring that logical information is corrupted with an error rate $O(\delta)$, where $\delta$ is the physical error rate of circuit-level noise. Although Knill and Laflamme~\cite{knill1996concatenated} showed that this property holds for concatenated codes, their work focuses only on the encoding/decoding stage and does not provide a full treatment of circuits with quantum input/output. Our one of the main contributions is to provide an explicit, modular, and composable framework that integrates all three components of a fault-tolerant scheme into rigorous statements about fault-tolerant quantum circuits with quantum input and output. We demonstrate how this unified framework enables the design of fault-tolerant encoders and decoders for quantum communication over noisy channels.

To prove our results, we consider two kinds of independent circuit noise, namely general circuit noise, and circuit-level stochastic noise. For general circuit noise with rate $\delta$, a noisy gate is an arbitrary gate, which is $\delta$ close to the original gate in the diamond norm (quantum gates are seen as quantum channels in this framework).  For circuit-level stochastic noise with rate $\delta$, a noisy gate applies the ideal gate with probability $(1-\delta)$, and an arbitrary error channel with probability $\delta$. We suppose that the classical computation is noiseless.

\subsection{Our results} \label{sec:results}

We below present in an informal way our results including a key application in the context of quantum communication with noisy encoder and decoders.

\medskip Consider a circuit-level noise on a circuit with quantum input/output. Intuitively, noise in the quantum gates applied at the first and the last layer of the circuit is unavoidable. Therefore, the best fault-tolerant realization that is possible is that the noisy circuit realizes the original circuit up to independent weak noisy channels acting on the input and output systems. In this work, we show that this is essentially achievable. More precisely, we provide fault-tolerant realizations of circuits with quantum input/output, such that the noisy circuit realizes the original circuit up to some noisy channels acting on the input and output systems. Although the noisy channels thereof are not independent and can correlate quantum systems, we show that their detrimental effects can still be controlled by characterizing them as so called \emph{adversarial} and \emph{local stochastic} channels (see Def.~\ref{def:adver-can} and Def.~\ref{def:local-stoc}).

\medskip For any quantum circuit $\Phi$ with classical input and quantum output (i.e. state preparation circuits), we show that there exists another quantum circuit $\overline{\Phi}$, whose noisy version $[\overline{\Phi}]_{\delta}$, under general circuit noise with parameter $\delta$, realizes $\Phi$ up to an adversarial channel $\cW$ with parameter $O(\delta)$. Roughly speaking, an adversarial channel on $n$ qubits (i.e. a two-dimensional quantum system), with parameter $\delta$, affects non-trivially at most $O(n \delta)$ qubits. An informal version of our result for quantum circuits with classical input and quantum output for general circuit noise is given below.

\begin{theorem}[Informal, see Corollary~\ref{cor:dec} for details] \label{thm:main-ft-informal}
    Consider a quantum circuit $\Phi$ with classical input and  $n$ qubit output. 
    Let $\theta$ be the size (total number of locations) of $\Phi$, $k$ be any positive integer, and $\delta < \tfrac{1}{c}$ be a fixed noise rate for sufficiently large constant $c$, considering general circuit noise. Then, there exists a quantum circuit $\overline{\Phi}$, with the same input and output systems as $\Phi$ and of size $\theta \cdot \mathrm{poly}(k)$,
   such that for any noisy version $[\overline{\Phi}]_{\delta}$ of the circuit $\overline{\Phi}$, we have
    \begin{equation}
        \dnorm{[\overline{\Phi}]_{\delta} - \cW \circ \Phi} \leq  O(\theta \sqrt{(c\delta)^k}),
    \end{equation}
where $\cW$ is a $n$-qubit adversarial channel with parameter $O(\delta)$.
\end{theorem}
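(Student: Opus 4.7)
The plan is to build $\overline{\Phi}$ by concatenated fault-tolerant simulation in the sense of Kitaev's framework (Def.~\ref{def:ft_scheme}), followed by a decoding step that maps each encoded logical output qubit back to one physical output qubit. Concretely, I would fix a small base code (for example the Steane $[[7,1,3]]$ code, which supports the standard transversal Clifford gadgets, fault-tolerant magic-state preparation for $T$, fault-tolerant error correction, and fault-tolerant measurement) and concatenate it $\ell = O(\log k)$ times, producing a code of effective distance $d = \Omega(k)$ and block size $\mathrm{poly}(k)$. The circuit $\overline{\Phi}$ is then obtained by: (i) replacing every gate of $\Phi$ by its $\ell$-level concatenated fault-tolerant rectangle, with error-correction gadgets inserted between consecutive rectangles at every level; (ii) exploiting that $\Phi$'s input is classical, so all logical inputs can be produced by fault-tolerant preparation gadgets from the classical description; and (iii) appending, at the very end, a decoding gadget per logical output qubit that outputs a single physical qubit. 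Each logical location of $\Phi$ is blown up to $\mathrm{poly}(k)$ physical locations, giving total size $\theta \cdot \mathrm{poly}(k)$.

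For the bulk noise analysis I would use the standard extended-rectangle argument of Aliferis--Gottesman--Preskill, adapted to the diamond-norm / adversarial noise setting of Kitaev. The key estimate is that, after expanding the noise on each location as (ideal) plus (deviation of diamond norm at most $\delta$), any top-level extended rectangle that contains at most $t = \lfloor (d-1)/2 \rfloor \sim k/2$ faulty locations simulates the ideal logical operation exactly. Summing the diamond norms of all fault patterns with more than $t$ faults in a single top-level rectangle of size $s = \mathrm{poly}(k)$ is bounded by $\binom{s}{t+1}\delta^{t+1} \le (c\delta)^{t+1} = O\!\bigl(\sqrt{(c\delta)^k}\bigr)$ for a suitable constant $c$. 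A union bound over the $\theta$ top-level rectangles then yields
\begin{equation*}
\dnorm{[\overline{\Phi}_{\mathrm{enc}}]_\delta - \mathrm{Enc}\circ \Phi} \;\le\; O\!\bigl(\theta\sqrt{(c\delta)^k}\bigr),
\end{equation*}
where $\overline{\Phi}_{\mathrm{enc}}$ denotes $\overline{\Phi}$ without the final decoding layer and $\mathrm{Enc}$ denotes the (ideal) encoding.

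It remains to absorb the noise in the final decoders into an adversarial channel on the output. The decoders are not fault-tolerantly protected, but they have two crucial features: each output qubit has its own decoder of size $\mathrm{poly}(k)$ that touches \emph{only} the corresponding code block, and a fault-free decoder of a correctly decodable block returns exactly the ideal output qubit. Expanding the noisy decoders as a formal sum over fault patterns and grouping terms by the set $S$ of output qubits whose blocks see at least one fault, one observes that for every such pattern the decoder action factors as (ideal decoding) composed with a channel supported on exactly the qubits indexed by $S$; moreover the weight of patterns with $|S|=s$ is bounded by $\binom{n\cdot\mathrm{poly}(k)}{s}\delta^{s} \le (c'\delta)^{s}$. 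This is precisely the defining tail bound for an adversarial channel $\cW$ with parameter $O(\delta)$ (Def.~\ref{def:adver-can}). Composing with the bulk estimate gives $\dnorm{[\overline{\Phi}]_\delta - \cW\circ\Phi} \le O(\theta\sqrt{(c\delta)^k})$.

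\textbf{Main obstacle.} The delicate step is the interface between the fault-tolerantly simulated bulk and the bare final decoders: bulk faults that escape error correction propagate into the last decoders and threaten to spread across many output qubits, which would destroy the adversarial structure of $\cW$. I expect to handle this by merging the final error-correction rectangle of the bulk with the decoding gadget into a single ``output rectangle'' and arguing that each such output rectangle either behaves ideally (contributing to $\Phi$ verbatim) or contributes a fault pattern of some weight $s$, which is then charged to $\cW$ on a support of size $O(s)$ containing the corresponding output qubit. Getting the combinatorial constants so that the adversarial tail bound survives with parameter $O(\delta)$ rather than a weaker $O(\delta^{1/\mathrm{poly}(k)})$, while the bulk deviation remains $O(\theta\sqrt{(c\delta)^k})$, is the technically non-trivial accounting step.
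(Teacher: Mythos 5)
Your overall plan matches the paper's route at a high level: fault-tolerantly encode $\Phi$ in a concatenated-code scheme, append per-block decoders ($\Gamma^i_{k,1}$ in the paper's notation, see Eq.~\eqref{eq:circuit-Phi-prime} and Corollary~\ref{cor:dec}), and charge decoder faults to an adversarial output channel. But there is a genuine gap in the step you describe as ``the decoder action factors as (ideal decoding) composed with a channel supported on exactly the qubits indexed by $S$''. The good-rectangle/correctness argument only tells you that the bulk output $\rho$ in $\bL(L_k^{\otimes n'})$ satisfies $\mu_k^{*\otimes n'}(\rho) = \cT_\Phi(\cdot)$; it does \emph{not} tell you that $\rho$ has the product form $\cU_k^{\otimes n'}(\psi_L\otimes\sigma_F)$ with the logical and syndrome factors separated. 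Without that separation, the syndrome content of block $i$ can be entangled with the logical or syndrome content of block $j$; a noisy decoder on block $i$ can then leak this correlated syndrome information into its output qubit, and the resulting channel $\cW$ is not a sum of superoperators supported on the faulty-block indices applied to $\cT_\Phi(\cdot)$ — the tensor structure you need for the adversarial tail bound breaks. The paper supplies exactly the missing ingredient: Lemma~\ref{thm:main-ft} (Stinespring continuity) upgrades the approximate representation to an exact one at a $\sqrt{\epsilon}$ cost, and the transformation rules (Corollaries~\ref{cor:tprime-unitary}--\ref{cor:tprime-measurement}) show that an exact representation of a unitary, preparation or measurement gate necessarily factors as $\cU^\dagger\circ\cR = (\cP\otimes\cF)\circ\cU^\dagger$; composing these through the circuit (Lemma~\ref{lem:Phi-code-imp}) forces the bulk output into the product form $\cU_k^{\otimes n'}(\cT_\Phi\otimes\cF)$, after which the decoder expansion $\cW_i = \cI\otimes\tr_{F_k} + (\cW_i - \cI\otimes\tr_{F_k})$ gives Definition~\ref{def:adver-can} directly. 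Your proposal has no analogue of this structural lemma, and direct fault-pattern bookkeeping does not replace it.

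A secondary problem is the bulk estimate. You bound bad fault patterns inside a top-level rectangle of size $s=\mathrm{poly}(k)$ by $\binom{s}{t+1}\delta^{t+1}$ with $t\sim k/2$, but for $s=k^{\alpha}$ and $t+1\sim k/2$ one has $\binom{s}{t+1}\delta^{t+1}\gtrsim (c'k^{\alpha-1}\delta)^{k/2}$, which is not $O((c\delta)^{k/2})$ for any $k$-independent constant $c$ once $\alpha>1$. The correct count is the recursive one used in Lemma~\ref{lem:rep-gates} (a rectangle is good iff it has at most one bad sub-rectangle at each level), giving $(c\delta)^{2^r}$ with $r\sim\log k$; the $\sqrt{\cdot}$ in the theorem's statement then comes entirely from the exact-representation conversion of Lemma~\ref{thm:main-ft}, not from halving the correctable weight $t$. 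These happen to land at numerically similar exponents, but the reasoning you give to reach yours is not sound.
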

For the noisy version $[\overline{\Phi}]_{\delta}$, under circuit-level stochastic noise with parameter $\delta$, we show that it realizes $\Phi$ up to a local stochastic channel, with parameter $O(\delta)$. Roughly speaking, a local stochastic channel with parameter $\delta$ applies an arbitrary error channel on a subset of qubits containing $A \subseteq [n] := \{1, \dots, n\}$, with a probability that decreases exponentially in the size of $A$ as $\delta^{|A|}$. An informal version of our result for circuit-level stochastic noise is given below.  
\begin{theorem}[Informal, see Corollary~\ref{cor:dec-stc} for details] \label{thm:main-ft-informal-stc}
       Consider a quantum circuit $\Phi$ as in Theorem~\ref{thm:main-ft-informal}. Let $k$ be any integer and let $\delta < \tfrac{1}{c}$ be a fixed noise rate for sufficiently large constant $c$, considering circuit-level stochastic noise. Then, there exists a quantum circuit $\overline{\Phi}$, with the same input and output systems as $\Phi$ and of size $\theta \cdot \mathrm{poly}(k)$, 
   such that for any noisy version $[\overline{\Phi}]_{\delta}$ of the circuit $\overline{\Phi}$, we have
    \begin{equation}
        \dnorm{[\overline{\Phi}]_{\delta} - \cW \circ \Phi} \leq  O(\theta (c\delta)^k),
    \end{equation}
where $\cW$ is a $n$-qubit local stochastic channel with parameter $O(\delta)$.
\end{theorem}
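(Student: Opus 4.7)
The plan is to build $\overline{\Phi}$ from a fault-tolerant scheme based on $L$-level concatenation of a fixed distance-$3$ base code (e.g.\ Steane), with $L = \lceil \log_2 k \rceil$. Each location of $\Phi$ is replaced by its level-$L$ gadget; the classical input is absorbed into fault-tolerant encoded state-preparation gadgets, and the output is obtained by appending a fault-tolerant decoding gadget to each logical output qubit. Since a level-$L$ gadget has size $O(a^L)$ for some constant $a$, the total size is $\theta \cdot a^{O(\log k)} = \theta \cdot \mathrm{poly}(k)$, which matches the claimed overhead. This is essentially the same construction as for Theorem~\ref{thm:main-ft-informal}; the distinction is entirely in the noise analysis.

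For the bulk circuit-error bound I would follow the standard level-induction from the stochastic fault-tolerance literature, exploiting independence of faults throughout. At the base level, a gadget is bad only when two or more stochastic faults occur inside it, which by independence has probability at most $c_0 \delta^2$ for a constant $c_0$ determined by the base code. Inductively, a level-$\ell$ gadget is bad only if at least two of its level-$(\ell-1)$ subgadgets are bad; since the stochastic faults across disjoint subgadgets are independent, the failure probability is at most $(c\delta)^{2^{\ell}}$. Taking $\ell = L$ gives a per-gadget bad probability of $(c\delta)^k$, and a union bound over the $\theta$ extended rectangles in $\Phi$ yields a bulk error of $O(\theta (c\delta)^k)$. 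The absence of the square root relative to Theorem~\ref{thm:main-ft-informal} is exactly the benefit of stochastic faults: probabilities of disjoint fault patterns can be summed directly instead of going through amplitudes and a Cauchy--Schwarz / Uhlmann step.

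The main obstacle, and the place where the proof genuinely diverges from the general-noise case, is showing that the residual output noise $\cW$ is \emph{local stochastic} (not just adversarial) with parameter $O(\delta)$. On the good event, the output of $[\overline{\Phi}]_{\delta}$ equals $\Phi$ followed by whatever error pattern is imprinted by faults inside the final decoding layer. For any $A \subseteq [n]$, I would argue that a nontrivial error on exactly the qubits in $A$ requires, for each $i \in A$, at least one stochastic fault in a region of the decoder that can causally influence output qubit $i$. Because the level-$L$ decoding gadgets for distinct output qubits act on essentially disjoint physical registers, these regions are (up to bounded overlap from shared syndrome-extraction ancillas) disjoint across $i \in A$, so independence of the stochastic faults yields a probability of order $\delta^{|A|}$ with combinatorial prefactor bounded by the size of a single level-$L$ decoder, i.e.\ $\mathrm{poly}(k)$. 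Absorbing this prefactor into the $O(\delta)$ parameter of the local stochastic channel gives the Def.~\ref{def:local-stoc} decay. Combining the bulk bad-event bound $O(\theta (c\delta)^k)$ with this characterization of the final-layer noise yields the claimed $\dnorm{[\overline{\Phi}]_\delta - \cW\circ \Phi} \leq O(\theta (c\delta)^k)$; the hard part is making the ``essentially disjoint regions'' argument precise, since one must carefully separate faults that affect only the output from faults already accounted for in the bulk bad events so as to avoid double-counting.
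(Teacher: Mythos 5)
Your overall strategy — concatenation to level $L = O(\log k)$, stochastic independence to avoid the $\sqrt{\cdot}$, and a per-output-qubit disjointness argument to get local stochasticity — matches the paper's approach in spirit, and your bulk bound $O(\theta(c\delta)^k)$ is the right answer for the right reason. However, the place where you flag the difficulty (``essentially disjoint physical registers, up to bounded overlap from shared syndrome-extraction ancillas'') is precisely where your proof has a genuine gap, and also where the paper's construction is designed to remove the difficulty entirely rather than just control it. In the paper, $\overline{\Phi} = (\otimes_{i=1}^{n'}\Gamma^i_{k,1}) \circ \Phi_k$, where each $\Gamma^i_{k,1}$ is a separate interface circuit (Def.~\ref{def:ft_scheme_stoc}(c)) that decodes the $i$-th code block from $D_k$ down to a bare qubit, acting on a physical register \emph{literally disjoint} from the others. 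The independence of the per-output-qubit failure events is therefore exact, not approximate, and the stochastic decomposition $\tilde{\cT}_{\Gamma^i_{k,1}}\circ\cJ_k\circ\cU_k = (1-\epsilon)\,\cI\otimes\tr_{F_k} + \epsilon\,\cZ_{\Gamma^i_{k,1}}$ (Lemma~\ref{lem:interface-2-stc}) is a tensor-factor-wise Bernoulli event. Your version, with shared ancillas, would have to control correlations between failure events across qubits, and you do not supply an argument that this can be done while preserving the bound $\delta^{|T|}$ required by Eq.~(\ref{eq:local-stc-2}).

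A second missing ingredient is the role of the auxiliary register $F_k$. After pulling $\Phi_k$'s noise into an exact representation via Lemma~\ref{lem:Phi-code-imp-stoc}, the output of the encoded circuit is $\cT_\Phi(\cdot)\otimes\eta$ with $\eta\in\bD(F_k^{\otimes n'})$ a (generically entangled) syndrome state, and $\overline{\cW}$ is obtained by applying $\otimes_i\cW_i$ to this joint state. The correlations you worry about do not live in the decoder wiring; they live in $\eta$. The paper's proof of Corollary~\ref{cor:dec-stc} then verifies Def.~\ref{def:local-stoc} by an explicit binomial expansion, setting $\cV_A := (1-\epsilon)^{n'-|A|}\epsilon^{|A|}\cZ_A$ with $\cZ_A = (\otimes_{i\in A}\cZ_{\Gamma^i_{k,1}})(\cdot\otimes\eta_A)$ and $\eta_A$ a marginal of $\eta$ on the qubits in $A$; the point is that Def.~\ref{def:local-stoc} permits the error content $\cV_A$ to be arbitrary and even correlated through $\eta$, and only demands that the \emph{aggregated failure weight} over supersets of any $T$ decay as $\delta^{|T|}$, which follows from the independent Bernoulli structure of the interfaces alone. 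Your proposal instead argues at the level of individual fault patterns in the decoder and explicitly defers the bookkeeping (``avoid double-counting''); that deferred step is not routine — it is exactly what the decomposition $\overline{\Phi} = (\otimes_i\Gamma^i_{k,1})\circ\Phi_k$ together with the $\cU_k^\dagger\cU_k$ identity insertion is engineered to handle, by cleanly attributing bulk failures to $\Phi_k$ (contributing to the $O(\theta(c\delta)^k)$ term via Lemma~\ref{lem:Phi-code-imp-stoc}) and the residual output noise to the interfaces (contributing to $\overline{\cW}$ via Lemma~\ref{lem:interface-2-stc}). Without this architecture and the $F_k$-dilation, your argument does not close.
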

We also provide analogous results to Theorems~\ref{thm:main-ft-informal} and \ref{thm:main-ft-informal-stc} for quantum circuits with quantum input and classical output (i.e. measurement circuits). In this case, an independent quantum channel acts on the input qubits, such that the independent  channel is a special case of adversarial or local stochastic channel depending on the circuit noise similarly to Theorems~\ref{thm:main-ft-informal} and \ref{thm:main-ft-informal-stc} (see Corollaries~\ref{cor:enc} and \ref{cor:enc-stc}). We note that for a Pauli circuit noise, a similar construction to Theorem~\ref{thm:main-ft-informal-stc} was used in~\cite{gottesman2013fault}.

\medskip 
We also establish a theorem with both quantum input and output that is similar to Theorems~\ref{thm:main-ft-informal} and \ref{thm:main-ft-informal-stc} in the sense that noisy circuit $[\overline{\Phi}]_{\delta}$ gives a realization of $\Phi$ (with both qubit input and output), up to a noise channel with parameter $O(\delta)$ acting on the input and output qubits. In this case, however, the statement includes an auxiliary system, which acts as an environment to the input and output qubits. The noise channels are described in terms of an interaction with the auxiliary system, which can be arbitrarily correlated (see Fig.~\ref{fig:ft-picture-intro}, and Theorem~\ref{thm:main-ft2}).


As an application of the fault-tolerant realizations of circuits with quantum input/output, we focus on a specific black box setting, namely fault-tolerant communication over noisy channels, which was previously considered in~\cite{christandl2022fault,belzig2023fault} for Pauli noise model. Fault-tolerant quantum communication is concerned with reliably sending quantum information, which is encoded in a computational code,  between the sender and the receiver that are connected by a noisy channel. The encoding and decoding circuits used for reliable fault-tolerant quantum communication are also considered to be noisy. Using fault-tolerant realizations for quantum circuits from Theorems~\ref{thm:main-ft-informal} and \ref{thm:main-ft-informal-stc} (and the analogous versions for circuits with quantum input and classical output), we show that a (standard) communication code, which is robust in the sense that it can reliably transmit information over a family of channels close to a communication channel $\Lambda$, can be used to construct a fault-tolerant communication code on $\Lambda$. This provides a general method for importing existing communication codes to the fault-tolerant scenario. An informal version of this result is given in the following theorem.

\begin{theorem} [Informal, see Theorems~\ref{thm:ft-comm-code-general} and \ref{thm:ft-code-stc} for details]  \label{thm:fault-tol-comm}
   Let $\Lambda$ be a $n$-qubit communication channel and let $\delta$ be a sufficiently small noise rate. Consider a communication code, which can reliably transmit quantum information on channels of type $ \cW' \circ \Lambda \circ \cW$, where $\cW, \cW'$ are arbitrary $n$-qubit adversarial (or local stochastic) channels with parameter $O(\delta)$. Then, the communication code can be used to construct a reliable fault-tolerant communication code for $\Lambda$, considering general circuit (or circuit-level stochastic) noise with the parameter $\delta$ in the encoding and decoding circuits.
\end{theorem}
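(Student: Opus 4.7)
The plan is to fault-tolerantly replace the encoder $E$ and the decoder $D$ of the assumed communication code using our state-preparation and measurement-circuit fault-tolerant realizations (Theorems~\ref{thm:main-ft-informal} and \ref{thm:main-ft-informal-stc} together with the analogous Corollaries~\ref{cor:enc} and \ref{cor:enc-stc}), and then to invoke the robustness hypothesis on the effective channel that appears between the encoder's output and the decoder's input.

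More concretely, I would first build a fault-tolerant encoder $\overline{E}$ by applying the classical-input/quantum-output theorem (or the both-quantum-I/O variant Theorem~\ref{thm:main-ft2} if the message register is itself quantum), choosing the concatenation parameter $k$ large enough so that any noisy realization satisfies
\[
    \dnorm{[\overline{E}]_\delta - \cW \circ E} \leq \epsilon,
\]
where $\cW$ is an $n$-qubit adversarial (respectively local stochastic) channel with parameter $O(\delta)$ on the $n$ qubits that will enter $\Lambda$. Symmetrically, I would build $\overline{D}$ from the quantum-input/classical-output corollary (or Theorem~\ref{thm:main-ft2}), so that
\[
    \dnorm{[\overline{D}]_\delta - D \circ \cW'} \leq \epsilon,
\]
with $\cW'$ of the same flavor acting on the $n$ qubits coming out of $\Lambda$. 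Composing $\overline{E}$, $\Lambda$, and $\overline{D}$, and using the triangle inequality together with submultiplicativity of the diamond norm under composition with fixed channels on either side, gives
\[
    \dnorm{[\overline{D}]_\delta \circ \Lambda \circ [\overline{E}]_\delta - D \circ \cW' \circ \Lambda \circ \cW \circ E} \leq 2\epsilon.
\]
Since $\cW$ and $\cW'$ have precisely the form required by the robustness hypothesis, $D \circ \cW' \circ \Lambda \circ \cW \circ E$ already transmits quantum information reliably, and one further triangle inequality transfers the guarantee to the fault-tolerant protocol $[\overline{D}]_\delta \circ \Lambda \circ [\overline{E}]_\delta$.

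The main technical obstacle I anticipate is the case of a quantum message register, where the encoder's input and the decoder's output are themselves quantum. Then Theorem~\ref{thm:main-ft2} yields a noise pattern correlated, via an auxiliary environment, across the message-side and the code-side wires of $\overline{E}$ (and dually for $\overline{D}$). To cast this into the clean sandwich form $\cW' \circ \Lambda \circ \cW$ demanded by the hypothesis, I would argue that any correlations supported on wires external to $\Lambda$ (the message registers on Alice's and Bob's sides) can be absorbed into the user's description of the source and the final readout, so that only channel-facing noise appears between $\Lambda$ and $E$, $D$; the cost is at worst a constant-factor enlargement of the parameters of $\cW$ and $\cW'$. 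Apart from this bookkeeping, the argument reduces to a routine diamond-norm composition, and the stochastic-noise case runs identically with the local-stochastic versions of the building blocks in place of the adversarial ones.
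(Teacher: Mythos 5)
Your high-level strategy matches the paper's: push circuit noise to the channel-facing wires using the fault-tolerant state-preparation and measurement corollaries, and then invoke the robustness hypothesis on the resulting sandwich $\cW' \circ \Lambda \circ \cW$. However, the paragraph where you anticipate the ``quantum message register'' obstacle flags a genuine gap, and your proposed fix does not close it. In the paper, Def.~\ref{def:fault-tol-comm} and the error measure in Eq.~(\ref{eq:comm-err}) evaluate the protocol only through composition with an arbitrary state-preparation circuit $\Phi^1$ at the source and an arbitrary measurement circuit $\Phi^2$ at the readout, each realized in the computational code $D_k$. The encoder is built as $\Phi_A = (\otimes_i \Gamma^i_{k,1}) \circ \Phi_{E,k}$, so the composite $\Phi_A \circ \Phi^1_k = (\otimes_i \Gamma^i_{k,1}) \circ (\Phi_E\circ\Phi^1)_k$ is exactly the $\overline{\Phi}$ of Corollary~\ref{cor:dec} for the state-preparation circuit $\Phi_E\circ\Phi^1$ (classical input only), and dually $\Phi^2_k\circ\Phi_B$ is a measurement circuit to which Corollary~\ref{cor:enc} applies. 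Thus Theorem~\ref{thm:main-ft2} is never invoked; the message register is never exposed as a bare quantum wire, and there is no ``absorption cost'' to pay.

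By contrast, if one realizes $E$ alone via Theorem~\ref{thm:main-ft2}, the resulting channel is $(\otimes_i \cW_i)\circ(E\otimes\cF)\circ(\otimes_i\cN_i)$, in which the output-side noise is correlated with the input-side noise through the environment map $\cF$. That correlation is supported jointly on message- and channel-facing wires via $F_k$, so it cannot simply be shunted into a modified source while leaving behind a stand-alone adversarial channel on $H_A$; your claim of ``a constant-factor enlargement of the parameters of $\cW$ and $\cW'$'' is asserted rather than derived, and it does not reflect what the paper does. Finally, your sketch omits the structural constraint of Def.~\ref{def:fault-tol-comm} that $\Phi_A$ and $\Phi_B$ must interface with logical qubits in the computational code $D_k$: this is precisely why the paper uses the $D_k$-realizations $\Phi_{E,k}, \Phi_{D,k}$ sandwiched by the interfaces $\Gamma_{k,1}, \Gamma_{1,k}$, not a fault-tolerant version of the bare channels $\cE, \cD$.
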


Using Theorem~\ref{thm:fault-tol-comm}, we further show that one can construct reliable fault-tolerant communication codes for general circuit noise, using communication codes that have linear minimum distance, for example, asymptotically good codes~\cite{ashikhmin2001asymptotically, chen2001asymptotically, li2009family, matsumoto2002improvement, brown2013short, panteleev2022asymptotically, leverrier2022quantum}. This extends a similar result in~\cite{christandl2022fault}, derived for a Pauli circuit noise to general circuit noise. Furthermore, we show that for circuit-level stochastic noise, it is possible to construct fault-tolerant communication codes, using communication codes with sub-linear minimum distance. This improves significantly the result from~\cite{christandl2022fault}, while considering a more general circuit noise than the Pauli noise. As an explicit example, we show that fault-tolerant communication codes can be constructed using toric codes~\cite{kitaev, kitaev2003fault, bravyi1998quantum, dennis2002topological}, whose minimum distance scales as the square root of the code length.  We note that toric codes are only used for the purpose of illustration, a similar argument can be applied for codes with a linear rate of communication, for example, expander codes~\cite{tillich2013quantum, fawzi2018efficient}.

\begin{note}
 In a related work~\cite{christandl2025fault}, we use our state preparation method  from Theorem~\ref{thm:main-ft-informal} within the construction of~\cite{gottesman2013fault} to establish that fault-tolerant quantum computation for general noise can be achieved with constant overhead. We note that the state preparation theorem is only partly responsible for this application; further results on fault-tolerant error correction are needed to establish it.
\end{note}

\subsection{Set up and proof techniques}

\subsubsection{The fault-tolerant framework of Kitaev}

In this section, we describe in an informal way the fault-tolerant framework of Kitaev~\cite{kitaev}, highlighting some of the key features of the framework.  For more details, we refer the reader to Section~\ref{sec:kitaev-frame}.

\paragraph{Quantum codes.} Kitaev's framework considers a notion of \emph{many-to-one} quantum codes to encode logical information in physical qubits, where many-to-one refers to the fact that the representation of a quantum state in the code is not unique. Many-to-one quantum codes can be obtained from a standard (one-to-one) quantum code, encoding $m$ logical qubits in $n$ physical qubits, defined by an isometry $V$ from $m$ qubit space to $n$ qubit space. This is done as follows.    

\smallskip Let $\mathscr{E}$ be any set of error operators that can be corrected by a one-to-one code $C$.  The many-to-one code is defined by applying operators in $\mathscr{E}$ on the code states of $C$. Consider a code state $V \ket{\psi}$ of $C$, where $\ket{\psi}$ is a $m$-qubit state. In the many-to-one code $D$ derived from $C$ using correctable errors $\mathscr{E}$,  the state $E V\ket{\psi}, \forall E \in \mathscr{E}$ is a code state corresponding to $\ket{\psi}$. In fact, there exists a quantum channel $\mu^*$, which acts as an \emph{ideal decoder}, decoding any code state of the many-to-one code to its unencoded version in the sense that 

\begin{equation}
  \mu^*(E V\proj{\psi}V^\dagger E') \propto \proj{\psi},  \forall E, E' \in \mathscr{E}.
\end{equation}
where the proportionality constant solely depends on the operators $E, E'$. Moreover, $\mu^*$ is given by $\tr_F \circ \: \cU^\dagger$, where $\cU = U (\cdot) U^\dagger$ is a unitary channel, and system $F$ can be considered as a syndrome space (see Def.~\ref{def:many-to-one}).

\begin{remark}
It is worth noting that many-to-one codes are not better than the one-to-one code in terms of error correction performance. However, many-to-one codes are natural in the context of fault-tolerant quantum computing, where one can only hope to prepare code states with some correctable errors on it.
\end{remark}

\paragraph{Logic channels.}  
Using a notion of \emph{representation of quantum channels} in the code, Kitaev's framework allows applying quantum channels on the logical subspace of many-to-one codes.
%
The representation of quantum channels is formalized using a commutative diagram (see Fig.~\ref{fig:strong-rep}) involving channels $\cP$ and $\cT$, wherein if the commutative diagram holds with accuracy $\epsilon > 0$, we say $\cT$ represents $\cP$ with accuracy $\epsilon$. The definition is general as $\cT$ may represent $\cP$ in two different many-to-one codes $D_1, D_2$. It also allows \emph{composability} in the sense that if $\cT_1$ represents $\cP_1$ in $D_1, D_2$, with accuracy $\epsilon_1$, and $\cT_2$ represents $\cP_2$ in $D_2, D_3$, with accuracy $\epsilon_2$, then $\cT_2 \circ \cT_1$ represents $\cP_2 \circ \cP_1$ (assuming that $\cP_2 \circ \cP_1$ is a valid composition) in the code $D_1, D_3$, with an accuracy $\epsilon_1 + \epsilon_2$ (see Lemma~\ref{lem:union-bound-rep}).

\paragraph{Circuit noise.} We consider two types of circuit noise, namely general and stochastic circuit noises. For general noise with parameter $\delta > 0$, any gate $g$ (a gate realizes a quantum channel) from a gate set is replaced by a noisy gate $\tilde{g}$, such that $\dnorm{g - \tilde{g}} \leq \delta$. For stochastic noise with parameter $\delta$, the noisy version is given by $\tilde{g} = (1 - \delta) g + \delta \cZ$, for some quantum channel $\cZ$. 

\paragraph{Fault-tolerant scheme.} To realize quantum circuits fault-tolerantly in the presence of the circuit noise, Kitaev's framework supposes existence of a fault-tolerant scheme, which consists of the following three points:

\begin{itemize}
    \item[$(a)$] \textbf{Quantum codes:} A sequence of many-to-one quantum codes $D_k,  k= 1, 2, \dots, $ encoding one logical qubit in $n_k$ physical qubits. Here, $D_1$ is the trivial code, that is, the corresponding $n_k = 1$.

    \item[$(b)$] \textbf{Logic gate:} For any gate $g$ from a gate set, a sequence of quantum circuits $\Psi_{g, k}, k= 1, 2, \dots, $, such that the noisy version $[\Psi_{g, k}]_\delta$, with noise parameter $\delta > 0$, represents the gate $g$ in codes $D_k$, with an accuracy $(c\delta)^k$, for a constant $c > 0$.  

    \item[$(c)$] \textbf{Interface:} A sequence of circuits $\Gamma_{k, l},  k, l= 1, 2, \dots $,  such that the noisy version $[\Gamma_{k, l}]_\delta$, with the parameter $\delta$, represents the identity gate in codes $D_k, D_l$, with accuracy $(c\delta)^{\min\{ k, l \}}$.   
\end{itemize}
The parameters $n_k$, $|\Psi_{g, k}|$ (size of $\Psi_{g, k}$), and $|\Gamma_{k, l}|$ are given by $\mathrm{poly}(k)$. We emphasize that circuits $\Gamma_{k, l}$ in point $(c)$ of the fault-tolerant scheme allows to switch between codes $D_k$'s, in a fault-tolerant way (i.e., preserving the logical information with a reasonable accuracy while switching). This is important for our fault-tolerant realizations of quantum circuits with quantum input/output. We note that the circuits $\Gamma_{k, l}$ are analogous to interfaces considered in~\cite{christandl2022fault}.
%
%








\subsubsection{Our key technical contributions}

\paragraph{Transformation Rules.}
In Lemma~\ref{thm:main-ft}, we show that if a quantum channel $\tilde{\cT}$ approximately represents $\cP$ in the code $D_1, D_2$ with accuracy $\epsilon$, then there exists an exact representation $\cT$ of $\cP$, which is $O(\sqrt{\epsilon})$ close to $\tilde{\cT}$. This strengthens a result from~\cite{kitaev}, where the bound is $O(|M|\sqrt{\epsilon})$, where $|M|$ is the output dimension of $\cP$. Using the exact representation, we obtain a set of transformation rules, when $\cP$ is unitary, state preparation or measurement gate (see Corollaries~\ref{cor:tprime-unitary},~\ref{cor:tprime-stprep}, and~\ref{cor:tprime-measurement}). The obtained transformation rules show that the channel $\cT$ can be written as a product channel over the \emph{logical} and \emph{syndrome} space of the many-to-one code (see also Figures~\ref{fig:Tprime-ind},~\ref{fig:Tprime-prep}, and~\ref{fig:R-measurement}). This property is useful in our proofs of theorems for fault-tolerant quantum input/output.


\paragraph{Interfaces.} In Appendix~\ref{sec:cons-ft-scheme}, we provide an explicit construction of fault-tolerant scheme. To this end, we translate the fault-tolerant constructions based on concatenated codes~\cite{Aliferis2006quantum, aharonov1997fault} to Kitaev's framework. Our main contribution in this section is showing the existence of quantum circuits $\Gamma_{k, l}$ according to Point $(c)$ of the fault-tolerant scheme.

\paragraph{Fault-tolerant realization of quantum circuits with quantum input/output. }

In Theorem~\ref{thm:main-ft2}, we show that for any quantum circuit $\Phi$ of size $\theta$, with $n$ qubit input and $n'$ qubit output, there exists a quantum circuit $\overline{\Phi}$ of size $\theta \: \mathrm{poly}(k)$, with the same input and output systems, such that the noisy version $[\overline{\Phi}]_\delta$, with parameter $\delta$, realizes the following channel, with an accuracy $O(\theta\sqrt{(c\delta)^k})$ (see also Fig.~\ref{fig:ft-picture-intro}),
\begin{equation} \label{eq:syndrome-noise-intro}
\dnorm{[\overline{\Phi}]_\delta - (\otimes_{i = 1}^{n'} \cW_i) \circ (\Phi \otimes \cF) \circ (\otimes_{i = 1}^n \cN_i)} \leq O( \theta \sqrt{(c\delta)^k}) ,
\end{equation}
where, for some Hilbert space $F_k$, $\cN_i, i \in [n]$ is a channel from the space $\bL(\mathbb{C}^2)$ (that is, the set of linear operators on the qubit Hilbert space $\mathbb{C}^2$) to $\bL(\mathbb{C}^2 \otimes F_k)$ and $\cW_i, i \in [n']$ is a channel from $ \bL(\mathbb{C}^2 \otimes F_k)$ to $\bL(\mathbb{C}^2)$ and $\cF$ is a channel from $\bL(F_k^{\otimes n})$  to $\bL(F_k^{\otimes n'})$. Furthermore, $\cN_i$ and $\cW_i$ satisfy the following inequalities ($\cI$ being the identity channel),
\begin{align}
    \dnorm{\tr_{F_k} \circ\: \cN_i - \cI_{\mathbb{C}^2}} \leq O(\delta). \label{eq:N-i-intro}  \\
    \dnorm{ \cW_i - \cI_{\mathbb{C}^2} \otimes \tr_{F_k}} \leq O(\delta). \label{eq:W-i-intro} 
\end{align}

\begin{figure}[!t]
    \centering
\begin{tikzpicture}
\draw
(0, 1) node[inner sep = 0pt] (a) {}
(0, -1) node[inner sep = 0pt] (b) {}
(0, -2) node[inner sep = 0pt] (e) {}
(0, -4) node[inner sep = 0pt] (f) {}
(1.5,0) node[inner xsep=0.5cm, inner ysep=1.2cm, draw] (c) {$\Phi$}
(1.5,-3) node[inner xsep=0.5cm, inner ysep=1.2cm, draw] (d) {$\cF$}
(-2, 0.8) node[draw, minimum size = 20pt] (g){$\cN_1$}
(-2, -1.2) node[draw, minimum size = 20pt] (h){$\cN_n$}
(5, 0.8) node[draw, minimum size = 20pt] (i){$\cW_1$}
(5, -1.2) node[draw, minimum size = 20pt] (j){$\cW_n$}
;

\draw
 (a -| g.east) to node[above]{$\mathbb{C}^2$} ++(0.5, 0) to (a -| c.west)
(b) to (b -| c.west)
(g.east) to node[below]{$F_k$} ++(0.5, 0) to   (0, -2) to (e -| d.west)
(g.west) to node[above]{$\mathbb{C}^2$} ++ (-0.8, 0) 
(b -| h.east) to ++(0.5, 0) 
(h.east) to ++(0.5, 0) to (0, -4) to (f -| d.west)
(h.west) to ++ (-0.8, 0)
;

\draw
(a -| i.west) to node[above]{$\mathbb{C}^2$} ++(-0.5, 0) to (a -| c.east)
(e -| d.east) ++(0.5, 0) to ($(i.west) +(-0.5, 0)$) to node[below]{$F_k$} (i.west)
(i.east) to node[above]{$\mathbb{C}^2$} ++(0.8, 0)
(e -| d.east) to ++(0.5, 0)
(b -| j.west) to ++(-0.5, 0)
(b -| c.east) to ++(0.5, 0)
(j.west) to ++(-0.5, 0) to ($(f -| d.east) + (0.5, 0)$) to (f -| d.east)
(j.east) to ++(0.8, 0)
;

\draw[dashed, thick]
(b -| h.east) ++(0.5, 0) to (0, -1) 
(b -| j.west)++(-0.5, 0) to ($(b -| c.east) + (0.5, 0)$) 
;  

\draw
($0.5*(g.south) + 0.5*(h.north)$) node[rotate = 90](){$\cdots$}
($0.5*(i.south) + 0.5*(j.north)$) node[rotate = 90](){$\cdots$}
;
\end{tikzpicture}
    \caption{The figure shows the channel $(\otimes_{i = 1}^{n'} \cW_i) \circ (\cT_{\Phi} \otimes \cF) \circ (\otimes_{i = 1}^n \cN_i)$ from Eq.~(\ref{eq:syndrome-noise-intro}).}
 \label{fig:ft-picture-intro}
\end{figure}

In Eq.~(\ref{eq:syndrome-noise-intro}),
systems $F_k$ act as an environment, which interacts with the ``data" qubits (systems corresponding to $\mathbb{C}^2$'s) at the first and last step through channels $\cN_i$'s and $\cW_i$'s, while between the first and the last step, the circuit $\Phi$ is applied perfectly on the data qubits, as depicted in Fig.~\ref{fig:ft-picture-intro}. Therefore, the noisy circuit $[\overline{\Phi}]_{\delta}$ realizes $\Phi$ up to noisy channels, which are $O(\delta)$ close to identity (in the sense of Eqs~(\ref{eq:N-i-intro}) and (\ref{eq:W-i-intro})), applied on each input and output qubit. We note that environment systems $F_k$'s can get correlated with each other through the quantum channel $\cF$, which is not necessarily a product channel. Furthermore, the environment can also have a memory meaning that it preserves correlation with the data qubits after the interaction.

\smallskip When the circuit noise is stochastic, Eq.~(\ref{eq:N-i-intro}) and Eq.~(\ref{eq:W-i-intro}) can be further simplified as follows,
  \begin{align}
    \tr_{F_k} \circ\: \cN_i = (1- \epsilon) \: \cI_{\mathbb{C}^2} + \epsilon \: \cZ_{\Gamma^i_{1, k}}, \label{eq:N_i-stoc-intro}\\
    \cW_i = (1- \epsilon) \: \cI_{\mathbb{C}^2} \otimes \tr_{F_k} + \epsilon \: \cZ_{\Gamma^i_{k, 1}},\label{eq:W_i-stoc-intro},
\end{align}
for $\epsilon = O(\delta)$ and some quantum channels $\cZ_{\Gamma^i_{1, k}}$, and $\cZ_{\Gamma^i_{k, 1}}$.

\paragraph{Proof idea:} To obtain the circuit $\overline{\Phi}$, we first consider a circuit $\Phi_k$, which is realization of $\Phi$ in the code $D_k$ in the sense that every location $g$ in the circuit $\Phi$ is replaced by the corresponding $\Psi_{g, k}$ from the fault-tolerant scheme. We then define $ \overline{\Phi}$, which has the same input and output systems as $\Phi$ 
\begin{equation} \label{eq:circuit-phi-bar-intro}
    \overline{\Phi} := (\otimes_{i=1}^{n'} \Gamma^i_{k, 1}) \circ \Phi_k \circ (\otimes_{i=1}^{n} \Gamma^i_{1, k}),
\end{equation}
where $\Gamma_{k, 1}$ and $\Gamma_{1, k}$ are interfaces according to Point $(c)$ of the fault-tolerant scheme. We insert identity $\cU_k \circ \cU_k^\dagger$ in the noisy circuit $[\overline{\Phi}]_\delta$, where $\cU_k$ is the unitary channel corresponding to the ideal decoder $\mu^*_k$ of $D_k$. The gives us,
\begin{equation} 
    [\overline{\Phi}]_\delta = \left(\otimes_{i=1}^{n'} ([\Gamma^i_{k, 1}]_\delta \circ \cU^i_k)\right) \circ \left( (\otimes_{i=1}^{n'} {\cU^i_k}^\dagger) \circ [\Phi_k]_\delta \circ (\otimes_{i=1}^{n} \cU^i_k)\right) \circ \left(  \otimes_{i=1}^{n} ({\cU^i_k}^\dagger \circ [\Gamma^i_{1, k}]_\delta) \right),
\end{equation}
Using the fact that $[\Gamma^i_{k, 1}]_\delta$ (or $[\Gamma^i_{1, k}]_\delta$) represents the identity channel in $D_k, D_1$ (or $D_1, D_k$) with accuracy $c\delta$, we show that $[\Gamma^i_{k, 1}]_\delta \circ \: \cU^i_k$ and ${\cU^i_k}^\dagger \circ [\Gamma^i_{1, k}]_\delta$ produce the quantum channels $\cW_i$ and $\cN_i$ from Fig.~\ref{fig:ft-picture-intro}, respectively (see also Lemmas~\ref{lem:interface-1} and~\ref{lem:interface-2}). Moreover, using the transformation rules mentioned above, we show that $(\otimes_{i=1}^{n'} {\cU^i_k}^\dagger) \circ [\Phi_k]_\delta \circ (\otimes_{i=1}^{n} \cU^i_k)$ produces the quantum channel $\Phi \otimes \cF$ (see also Lemma~\ref{lem:Phi-code-imp}).





\begin{figure}[!t]
\centering
\begin{subfigure}[b]{0.45 \textwidth}
\centering
\begin{tikzpicture}
\draw 
(-0.3, 1.6) node[draw, minimum size = 6pt](a){$\overline{\cN}_1$}
(-0.3, -1.6) node[draw, minimum size = 6pt](b){$\overline{\cN}_n$}
(3.0, 1.6) node[above, right](e){}
(4.0, -1.6) node[](f){}
;
\draw 
(1.5,0) node[inner xsep=0.6cm, inner ysep=2cm, draw] (c) {$\cT_{\Phi}$}
(a) -- (a -| c.west)
(b) -- (b -| c.west)
(a) to ++ (-0.8, 0) 
(b) to ++ (-0.8, 0)
;
\draw
($0.5*(a.south) + 0.5*(b.north)$) node[rotate = 90](y){$\cdots$}
(y) ++(0, 0.9) node[rotate = 90] () 
{$\cdots$}
(y) ++(0, -0.9) node[rotate = 90] () 
{$\cdots$} ;
;

\draw[double]
(e) -- (a -| c.east)
;
\end{tikzpicture} 
\caption{Quantum input-classical output}
\end{subfigure}
\begin{subfigure}[b]{0.45 \textwidth}
\centering
\begin{tikzpicture}
\draw 
(0.4, 1.6) node[above, left](a){}
(4.3, 1.6) node[above, right](e){}
(4.3, -1.6) node[above, right](f){}
;
\draw 
(2,0) node[inner xsep=0.6cm, inner ysep=2cm, draw] (c) {$\cT_{\Phi}$}
(3.5,0) node[inner ysep=2cm, draw] (d) {$\overline{\cW}$}
(a) -- (a -| c.west)
(a -| c.east) -- (a -| d.west)
(f -| c.east) -- (f -| d.west)
(e) -- (e -| d.east)
(f) -- (f -| d.east)
;

\draw[double]
(a) -- (a -| c.west)
;

\draw
(d.east) ++ (0.25, 0) node[rotate = 90] (y) 
{$\cdots$}
(y) ++(0, 1.1) node[rotate = 90] () 
{$\cdots$}
(y) ++(0, -1.1) node[rotate = 90] () 
{$\cdots$} ;

\end{tikzpicture}
\caption{Classical input-quantum output}
\end{subfigure}
\caption{Fault-tolerant realizations of quantum circuits with quantum systems only at the input or output. The channels $\overline{\cW}$ and $\overline{\cN}$ are either adversarial or local stochastic (depending on the circuit noise), with parameter $O(\delta)$.}
\label{fig:ft-cl-qu-intro}
\end{figure}

\subsubsection{Techniques behind main results}
In this section, we highlight the techniques used in the proofs of the theorems stated in Section~\ref{sec:results}.

\noindent\paragraph{Fault-tolerant state preparation circuit (Theorem~\ref{thm:main-ft-informal} and~Theorem~\ref{thm:main-ft-informal-stc}):} For a state preparation circuit $\Phi$, we have $n = 0$ (no quantum input) and the channel $\cF$ in Eq.~(\ref{eq:syndrome-noise-intro}) is a state preparation map. Let $\eta$ be the state prepared by $\cF$,  then we get the following from Eq.~(\ref{eq:syndrome-noise-intro}) (see also part $(b)$ of Fig.~\ref{fig:ft-cl-qu-intro}),
\begin{equation} \label{eq:st-prep-noise-intro}
\dnorm{[\overline{\Phi}]_\delta -  \cW \circ \Phi } \leq O( \theta \sqrt{(c\delta)^k}),
\end{equation}
where $\cW := (\otimes_{i = 1}^{n'} \cW_i) (\cdot \otimes \eta)$ is a quantum channel, with the channels $\cW_i, i \in [n']$ satisfying Eq.~(\ref{eq:W-i-intro}). We show that $\cW$ can be approximated by as $ \dnorm{\cW -\cW'} \leq O(\exp(-O(n\delta))$, where $\cW'$ is a superoperator of weight $O(n\delta)$ in the sense that it can be written as linear combination of superoperators $ (\otimes_{i \in [n'] \setminus A} \cI_i) \otimes \cV_A$, where $\cV_A$ is a superoperator acting on qubits in set $A \subseteq [n'] := \{1, 2, \dots, n'\}$, with size $|A| = O(n\delta)$ (see also Fig.~\ref{fig:ft-cl-qu-intro}), that is,
\begin{equation} \label{eq:low-weight}
    \cW' = \sum_{A \subseteq [n'], |A| = O(n\delta)} (\otimes_{i \in [n'] \setminus A} \cI_i) \otimes \cV_A
\end{equation}
When the circuit noise is stochastic with parameter $\delta$, the channels $\cW_i$ in $\cW$ are stochastic with parameter $O(\delta)$, as given by Eq.~(\ref{eq:W_i-stoc-intro}). Using Eq.~(\ref{eq:W_i-stoc-intro}), we show that $\cW$ is a local stochastic channel with parameter $O(\delta)$ for any state $\eta$ as follows,

\smallskip The channel $\cW$ can be written as,
\begin{equation} \label{eq:local-stc-1-intro}
 \cW = \sum_{A \subseteq [n']} \big( \otimes_{i \in [n'] \setminus A} \cI_i \big) \otimes \cV_A,  
\end{equation}
for completely positive maps $\cV_A: \bL(M^{\otimes |A|}) \to \bL(M^{\otimes |A|})$ that only act on the set $A \subseteq [n']$ and that satisfy the following: for every $T \subseteq [n']$,
\begin{equation} \label{eq:local-stc-2-intro}
    \dnorm{\sum_{A : T \subseteq A} \big( \otimes_{i \in [n'] \setminus A} \cI_i \big) \otimes \cV_A } \leq \delta^{|T|}.
\end{equation}
We note that this condition is stronger than than the condition in Eq.~(\ref{eq:low-weight}) for $\cW'$, as roughly speaking, even though errors can be slightly correlated in local stochastic noise, they are still typical (random) errors. This is also a reason for calling $\cW'$ an adversarial noise. 

Finally, we note that analogous results for a measurement circuit $\Phi$ can be obtained by substituting $n' = 0$ and $\cF = \tr_{F_k}^{\otimes n'}$ in Eq.~(\ref{eq:syndrome-noise-intro}) (see also part $(a)$ of Figure~\ref{fig:ft-cl-qu-intro}).

\noindent\paragraph{Fault-tolerant quantum communication (Theorem~\ref{thm:fault-tol-comm}).}
We apply fault-tolerant realizations of our state preparation and measurements circuits to design fault-tolerant encoders and decoders for communication using noisy channels. We consider the model of fault-tolerant communication from~\cite{christandl2022fault}, where the goal is to establish point-to-point communication between two fault-tolerant quantum computers. More precisely, one would like to transmit logical information encoded in the computational code (for example, $D_k$ from the fault-tolerant scheme) from one fault-tolerant computer to another.

\smallskip To specify the notion of logical information in the computational code, we consider a state preparation circuit $\Phi_1$ at the sender's end, and an arbitrary measurement circuit $\Phi_2$ at the decoder's end~\cite{christandl2022fault}. A qubit (or a set of qubits) can be completely defined by the classical output of the composed circuit $\Phi_2 \circ \Phi_1$. The logical information encoded in the code then can be defined using $\Phi_2^k \circ \Phi_1^k$, where $\Phi_1^k$ and $\Phi_2^k$  are realizations of $\Phi_1$ and $\Phi_2$ in the code $D_k$, respectively.

\smallskip A fault-tolerant coding to transmit $m$ logical qubits ($m \leq n$) on a $n$ qubit communication channel $\Lambda$ consists of finding an encoding circuit $\Phi_A$, that takes $m$ logical qubits as input and outputs $n$ physical qubits and a decoding circuit $\Phi_B$, which takes $n$ physical qubits as input and outputs $m$ logical qubits, respectively, such that there noisy realizations satisfy,
\begin{equation}
[\Phi_2^k]_\delta \circ [\Phi_B]_\delta \circ \Lambda \circ [\Phi_A]_\delta \circ  [\Phi_1^k]_\delta \approx  \Phi_2 \circ \Phi_1  
\end{equation}
for any preparation circuit $\Phi_1$, preparing a $m$ qubit state and any measurement circuit $\Phi_2$, measuring a $n$ qubit state. Note that $[\Phi_B]_\delta \circ \Lambda \circ [\Phi_A]_\delta$ simulates identity on $D_k$. We also note that the effective circuit applied at the encoder's end, that is, $\Phi_A \circ \Phi_1$ is a state preparation circuit and the effective circuit at the decoder end, that is, $\Phi_2 \circ \Phi_B$ is a measurement circuit. Below we give an outline of the proof of Theorem~\ref{thm:fault-tol-comm}, assuming the above model of fault-tolerant communication.

\smallskip   Consider the $n$ qubit communication channel $\Lambda$ in Theorem~\ref{thm:fault-tol-comm}. We run the preparation and measurement circuits at the sender and receiver's side, respectively, using our fault-tolerant realizations. As our fault-tolerant realizations ensure that the circuit noise is pushed to the last or first step (see Fig.~\ref{fig:ft-cl-qu-intro}), we can consider noiseless circuits at both the sender and receiver end, at the cost of an effective communication channel, which is given by $\cW \circ \Lambda \circ \cW'$, where $\cW$ is the channel due to preparation circuit and $\cW'$ is the channel from measurement circuit. The channels $\cW, \cW'$ are either adversarial or local stochastic channels depending on the circuit noise.  Therefore, to communicate reliably, one needs communication codes that are robust against a weak adversarial channel or local stochastic channel applied at the input and output of the communication channel.

\smallskip We also provide examples of robust communication codes. For adversarial noise, we show that a communication code, with a linear minimum distance gives a robust communication code (see Theorem~\ref{thm:ft-comm-code-general}). For local stochastic noise, we show that it suffices to consider the codes that correct random errors with a local stochastic distribution (see Theorem~\ref{thm:ft-code-stc}) and such codes do not need to have linear minimum distance. We demonstrate this with an example, where the toric code is used as a communication code (see Section~\ref{sec:toric-code}).

\subsection{Structure of the manuscript}
The remainder of the paper is organized as follows. In Section~\ref{sec:kitaev-frame}, we review basic definitions related to Kitaev's framework for fault-tolerant quantum computing~\cite{kitaev}, including circuit noise and the corresponding fault-tolerant schemes. In Section~\ref{sec:ft-tol-scheme}, we provide transformation rules, which are crucial for the fault-tolerant realizations of quantum input/output, and also provide a construction of fault-tolerant scheme based on the fault-tolerant constructions in~\cite{aharonov1997fault, Aliferis2006quantum}. In Section~\ref{sec:ft-real}, we prove our main results regarding the fault-tolerant implementation of circuits with quantum input/output. In Section~\ref{sec:ft-comm}, we provide quantum communication with noisy encoder and decoders.

\section{Kitaev's framework for fault-tolerant quantum computation} \label{sec:kitaev-frame}

\subsection{Notation and preliminaries} 

\paragraph{Notation.} We use symbols $M, L, N$ for finite dimensional Hilbert spaces, and $\bL(M, N)$ for linear operators from $M$ to $N$ or simply $\bL(M)$ for linear operators on $M$. We use $\bD(M)$ for quantum states on $M$. A qubit Hilbert space is denoted by $\mathbb{C}^2$. For a Hilbert space $M$, $|M|$ denotes its dimension, for example, $|\mathbb{C}^2| = 2$. Quantum channels are denoted by the calligraphic letters $\cT, \cP, \cR, \cN, \cV, \cW$, etc. The identity operator is denoted by $\ident$ and the identity quantum channel by $\cI$.

\medskip Let $[n] := \{1, \dots, n\}$ be a set of labels for $n$ qubits. For any $A \subseteq [n]$, we define the space of errors with support on $A$ as the space of linear operators acting only on qubits in $A$, i.e.,

\begin{equation} \label{eq:map-A}
  \mathscr{E}(A) := \{ (\otimes_{i \in [n] \setminus A} \ident_i) \otimes E_A : E_A \in \bL(M^{\otimes |A|})\} \subseteq \bL(M^{\otimes n}),
\end{equation}
 Using $\mathscr{E}(A)$, we define the space of operators with \emph{weight} $z \leq n$ qubits as follows.
\begin{definition}[Weight of an operator] \label{def:weight-z}
The set of operators with weight $z \leq n$ qubits is defined as, 
   \begin{equation} \label{eq:weight-z}
    \mathscr{E}(n, z) := \mathrm{span}\{ E : E \in \mathscr{E}(A) \text{ for some } A \subseteq [n] \text{ s.t. } |A| \leq z \}
\end{equation}
\end{definition}
Therefore, an operator in $\mathscr{E}(n, z)$ is a linear combination of operators in $\{\mathscr{E}(A_i) : A_i \subseteq [n], |A_i| < z\}$. Similarly, we can define the space of superoperators with weight $z$.
\begin{definition}[Weight of a superoperator] \label{def:weight}
 The space of superoperators with weight $z \leq n$ is defined as, 
\begin{equation} \label{eq:weight-z-sup}
  \mathscr{E}(n,z) (\cdot) \mathscr{E}(n, z)^*  = \mathrm{span}\{ E_1 (\cdot) E_2 : E_1 \in \mathscr{E}(n,z), E_2 \in \mathscr{E}(n,z)^*\}.
\end{equation}   
\end{definition}
We will use the diamond norm as a distance measure between two superoperators~\cite{kitaev, watrous2009semidefinite, kretschmann2008information}.

\begin{definition}[Diamond norm]
 For any superoperator $\cT: \bL(M) \to \bL(N)$, the diamond norm $\dnorm{T}$ is defined as,
\begin{equation}
    \dnorm{\cT} := \sup_{G} \| \cI_G \otimes \cT  \|_1,
\end{equation}
where the one-norm is given by,
\begin{equation}
 \| \cT  \|_1   := \sup_{\substack{\| \rho\|_1 \leq 1 \\ \rho \in \bL(M)}}  \| \cT(\rho) \|_1. 
\end{equation}    
\end{definition}
We will need the continuity theorem for Stinespring dilations from~\cite{kretschmann2008information, kretschmann2008continuity}, here formulated in the Schrödinger picture rather than the Heisenberg picture.

\begin{theorem} \label{thm:cont-stine}
Let $\cT_1, \cT_2: \bL(M) \to \bL(N)$ be two quantum channels, i.e. completely positive and trace preserving superoperators. Then 
 \begin{equation} \label{eq:cont-stine}
   \frac{\dnorm{\cT_1 - \cT_2}}{2} \leq  \inf_{V_1, V_2} \opn{ V_1 - V_2 } \leq \sqrt{\dnorm{\cT_1 - \cT_2}},
 \end{equation}
where the minimization extends over Stinespring dilations $V_1$ and $V_2 : M \to N \otimes G$, of $\cT_1$ and $\cT_2$, respectively.\end{theorem}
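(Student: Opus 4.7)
\medskip\noindent\textbf{Proof proposal.} The plan is to treat the two inequalities separately, since they have rather different character. The left inequality is algebraic and elementary, whereas the right one is the substantive continuity estimate originally due to Kretschmann, Schlingemann, and Werner~\cite{kretschmann2008information, kretschmann2008continuity}.

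For the left inequality $\tfrac{1}{2}\dnorm{\cT_1 - \cT_2} \leq \opn{V_1 - V_2}$, I would first take two Stinespring dilations $V_i : M \to N \otimes G$ sharing a common environment $G$ (by padding if necessary). For any reference system $R$ and any operator $X$ on $M \otimes R$ with $\|X\|_1 \leq 1$, expanding
\[
(V_1 \otimes \ident_R) X (V_1 \otimes \ident_R)^\dagger - (V_2 \otimes \ident_R) X (V_2 \otimes \ident_R)^\dagger = \bigl((V_1 - V_2) \otimes \ident_R\bigr) X (V_1 \otimes \ident_R)^\dagger + (V_2 \otimes \ident_R) X \bigl((V_1 - V_2) \otimes \ident_R\bigr)^\dagger
\]
and applying Hölder's inequality together with the isometry property $\opn{V_i} = 1$ and contractivity of the partial trace over $G$ in trace norm yields $\|(\cT_1 - \cT_2) \otimes \cI_R (X)\|_1 \leq 2\opn{V_1 - V_2}$. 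Taking the supremum over $X$ and $R$, and then the infimum over dilation pairs, completes this half.

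For the right inequality $\inf \opn{V_1 - V_2} \leq \sqrt{\dnorm{\cT_1 - \cT_2}}$, I would proceed via Uhlmann's theorem combined with the Fuchs--van de Graaf inequality. Starting from arbitrary dilations $V_i : M \to N \otimes G$ with common environment and any pure $\ket{\psi} \in M \otimes R$, the vectors $(V_i \otimes \ident_R)\ket{\psi}$ purify $(\cT_i \otimes \cI_R)(\ket{\psi}\bra{\psi})$ on $N \otimes G \otimes R$. Uhlmann's theorem then supplies a unitary $U$ on $G$ such that, after replacing $V_2$ by $(\ident_N \otimes U) V_2$ (still a valid dilation of $\cT_2$), the inner product $\bra{\psi}\bigl((V_1^\dagger V_2) \otimes \ident_R\bigr)\ket{\psi}$ equals the fidelity $F\bigl((\cT_1 \otimes \cI_R)(\ket{\psi}\bra{\psi}), (\cT_2 \otimes \cI_R)(\ket{\psi}\bra{\psi})\bigr)$ and in particular is real. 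Combining with the Fuchs--van de Graaf estimate $1 - F \leq \tfrac{1}{2}\|\cdot\|_1 \leq \tfrac{1}{2}\dnorm{\cT_1 - \cT_2}$ gives
\[
\|(V_1 - V_2) \otimes \ident_R \ket{\psi}\|^2 = 2 - 2 \operatorname{Re} \bra{\psi}\bigl((V_1^\dagger V_2) \otimes \ident_R\bigr)\ket{\psi} \leq \dnorm{\cT_1 - \cT_2}.
\]

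The main obstacle is upgrading this pointwise estimate into a bound on $\opn{V_1 - V_2}$, which demands uniformity over all normalized inputs of $M$ with a \emph{single} choice of Uhlmann unitary $U$, whereas a priori $U$ depends on $\ket{\psi}$. The standard resolution is to take $\ket{\psi}$ to be the maximally entangled state on $M \otimes M$ (so $R = M$): this one input probes every one-dimensional subspace of $M$ simultaneously through its Schmidt decomposition, and a careful minimax argument over Kraus decompositions of the two channels shows that a single Uhlmann unitary controls the operator norm globally. Since Theorem~\ref{thm:cont-stine} enters the subsequent arguments only as a black-box continuity estimate, I would defer the detailed verification of this uniformity step to the original reference~\cite{kretschmann2008continuity}.
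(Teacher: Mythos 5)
The paper states Theorem~\ref{thm:cont-stine} as an import from Kretschmann, Schlingemann and Werner~\cite{kretschmann2008information, kretschmann2008continuity} and does not prove it, so there is no internal argument to compare against; the question is whether your reconstruction is sound. The left inequality is: the telescoping split $V_1 X V_1^\dagger - V_2 X V_2^\dagger = (V_1-V_2)XV_1^\dagger + V_2 X(V_1-V_2)^\dagger$, tensored with $\ident_R$, together with H\"older's inequality, $\opn{V_i}=1$, and contractivity of $\tr_G$ in trace norm, gives exactly the factor of $2$. For the right inequality your combination of Uhlmann and Fuchs--van de Graaf correctly delivers the \emph{pointwise} estimate $\|((V_1-(\ident_N\otimes U_\psi)V_2)\otimes\ident_R)\ket{\psi}\|^2\le\dnorm{\cT_1-\cT_2}$, and you are right that the whole content of the theorem is promoting this to a single $U$ that works for every input simultaneously.

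Your proposed route to that uniformity is, however, incorrect. Evaluating the vector $((V_1-(\ident\otimes U)V_2)\otimes\ident_M)\ket{\psi}$ at the maximally entangled $\ket{\psi}\in M\otimes M$ does not control $\opn{V_1-(\ident\otimes U)V_2}$: for any $A\in\bL(M,N\otimes G)$ one has $\|(A\otimes\ident_M)\ket{\psi_{\mathrm{MES}}}\|^2=|M|^{-1}\tr(A^\dagger A)$, the normalized Hilbert--Schmidt norm, which is \emph{dominated by}, not dominating, $\opn{A}^2$. A single state, however entangled, therefore only probes a strictly weaker norm, and the Schmidt decomposition does not repair this. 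The genuine resolution in~\cite{kretschmann2008continuity} is a minimax over the unitary freedom on the environment $G$: one writes
\[
\inf_U\opn{V_1-(\ident_N\otimes U)V_2}^2 \;=\; 2 - 2\sup_U\inf_{\|x\|=1}\operatorname{Re}\langle V_1x,(\ident_N\otimes U)V_2x\rangle,
\]
enlarges the unitary $U$ to the convex compact set of contractions on $G$, applies Sion's minimax theorem to swap $\sup$ and $\inf$, and then returns to unitaries by an extremality argument --- not a ``minimax over Kraus decompositions'' as you wrote. Deferring the verification to the reference is a defensible editorial choice, matching what the paper itself does, but the heuristic you offer for the deferred step is not a correct sketch of the argument and would mislead a reader who tried to follow it.
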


\subsection{Quantum codes}
We consider two definitions of quantum codes from~\cite{kitaev}, namely ``one-to-one" and ``many-to-one" codes. The one-to-one codes are a subfamily of the many-to-one codes. 

\begin{definition}[one-to-one quantum code] \label{def:one-to-one}
Let $V: M \to N$ be an isometry, where $M$ and $N$ are Hilbert spaces. A one-to-one quantum code $C$ of type $(N, M)$ is defined by the code space $\overline{L} := \mathrm{Im}(V) \subseteq N$ in the following sense. For any $\ket{\xi} \in M$, the corresponding code state is unique and is given by $V\ket{\xi} \in \overline{L}.$  If $N$ and $M$ are $n$-qubit and $m$-qubit systems, we say $C$ is a code of type $(n, m)$.
\end{definition}
Let a linear space $\mathscr{E} \subseteq \bL(N)$ be a space of error operators. We define another linear space $L = \mathscr{E} \overline{L} \subseteq N$.  We note that $L$ can be equal to $N$. Let $\mathscr{E} (\cdot) \mathscr{E}^*$ be the linear space of superoperators, consisting of maps of type $X(\cdot)Y^\dagger$ for $X, Y \in \mathscr{E}$ and linear combinations thereof. Then, a one-to-one code is said to correct errors from $\mathscr{E}$ if there exists a ``recovery" quantum channel $\cP: \bL(L) \to \bL(\overline{L})$ such that~\cite[Def.~8.2]{kitaev}
\begin{equation} \label{eq:ec-cond-1}
    (\cP \circ \cT)(\rho) = K(\cT) \rho, \: \forall  \cT \in \mathscr{E}(\cdot)\mathscr{E}^* \text{ and } \rho \in \bL(\overline{L}),
\end{equation}
where the constant $K(\cT)$ only depends on $\cT$.

\smallskip Let $E = \{E_i\}$ be a basis for $\mathscr{E} \subseteq \bL(N)$. We know that such a recovery channel exists if and only if~\cite{knill1997theory, nielsen2001quantum} 
\begin{equation} \label{eq:ec-cond-2}
    P E_i^\dagger E_j P = \alpha_{ij} P,  \: \forall E_i, E_j \in E.
\end{equation}
where $P = VV^\dagger \in \bL(N)$  is the projector on the code space $\overline{L} \subseteq N$ and $\alpha_{ij}$ are elements of a Hermitian matrix $\alpha$. From Eq.~(\ref{eq:ec-cond-2}), we can obtain another basis $F = \{ F_i\}$ for $\mathscr{E}$ such that (see~\cite[Chapter 10]{nielsen2001quantum} for more details)
\begin{equation} 
    P F_i^\dagger F_j P = d_{ij} P,  \: \forall F_i, F_j \in F,
\end{equation}
where $d$ is a diagonal matrix. Consider the polar decomposition of $F_iP$ given by a unitary $U_i$
\begin{equation}
  F_i P = U_i \sqrt{P F_i^\dagger F_i P} = d_{ii} U_i P.   
\end{equation}
It can be seen that $\{P_i \}_{i = 1}^r$ , where $P_i = U_i P U_i^\dagger$, defines a projective measurement on the subspace $L \subseteq N$, that is, $P_i P_j = \delta_{ij} P_i$ and $\sum_i P_i \equiv \ident_L$ (see~\cite[Chapter 10]{ nielsen2001quantum} for more details). Furthermore, the recovery channel $\cP : \bL(L) \to \bL(\overline{L}) $ is given by,
\begin{equation}
    \cP(\sigma) = \sum_i {U_i}^\dagger P_i \sigma P_iU_i, \: \sigma \in \bL(L).
\end{equation}
Note that $\cP$ can be extended to the full space $N$ by adding an extra Kraus operator $X(\ident_N - \sum_i P_i)$, where $X : N \to N$ is any unitary.

\medskip The recovery can also be done coherently by applying the unitary $W^\dagger: L \to \overline{L} \otimes F$, where $F$ is an auxiliary system of dimension  $r$ such that,
\begin{equation} \label{eq:W}
    W^\dagger \ket{\psi} =   \sum_{i= 1}^r (U_i^\dagger P_i \ket{\psi}) \otimes \ket{i}, \: \ket{\psi} \in L.
\end{equation}
Note that $\bra{\phi} W W^\dagger \ket{\psi} = \langle{\phi} |\psi\rangle$ and $|L| = |\overline{L} \otimes F| = r |\overline{L}|$. Hence, $W$ is a unitary.

\medskip\noindent Now consider an $\ket{\xi} \in \overline{L}$ and an $E \in \mathscr{E}$.  Note that 
\begin{align*}
    \tr_F(W^\dagger E|\xi\rangle \langle\xi|E^\dagger W) &= \sum_i U_i^\dagger P_i E|\xi\rangle \langle\xi| E^\dagger P_i U_i \\
    &= \cP( E  |\xi\rangle \langle\xi| E^\dagger) \\
    & =  K(E (\cdot) E^\dagger) |\xi\rangle \langle\xi|, 
\end{align*} 
where the last equality uses Eq.~(\ref{eq:ec-cond-1}). Therefore, we must have $U_i^\dagger P_i E|\xi\rangle \propto \ket{\xi}$, which implies 
\begin{equation} \label{eq:syndrome}
    W^\dagger E\ket{\xi} =     \ket{\xi} \otimes \ket{\varphi},
\end{equation}
where $\ket{\varphi}$ is an unnormalized state. 
\paragraph{Many-to-one codes.} 
 Let $\mathscr{E}$ be a space of errors that can be corrected by a one-to-one code $C$ of type $(N, M)$ defined by the subspace  $\overline{L} \subseteq N$, and consider the linear space $L = \mathscr{E} \overline{L} \subseteq N$. From Eq.~(\ref{eq:syndrome}), there exists a unitary $W: \overline{L} \otimes F \to L$ such that, for all $\ket{\xi} \in \overline{L}$ and $E \in \mathscr{E}$, we have~\cite{knill1997theory}
\begin{equation}
    W^\dagger  (E\ket{\xi}) =  \ket{\xi} \otimes \ket{\varphi}_F, 
\end{equation}
where $F$ is an auxiliary Hilbert space and $\ket{\varphi}$ is an unnormalized state depending on $E$.  The unitary $U: M \otimes F \to L , \: U = W(V \otimes \ident_F)$ defines a many-to-one quantum code in the following sense:  A quantum state $\gamma \in \bD(M)$ is encoded by any $\rho \in \bD(L)$ if $$\tr_F(U^\dagger \rho U)= \gamma.$$ 
The many-to-one code defined by $U: M \otimes F \to L$ is also called ``derived code"  and is denoted by $\mathrm{Der}(\overline{L}, \mathscr{E})$ in~\cite{kitaev} (see Fig.~\ref{fig:der-space}). Below, we give  a formal definition of many-to-one codes.
\begin{figure}[!b]
\centering
\begin{tikzpicture}
\draw (0, 0) -- (0,1.5) -- (1.5, 1.5) -- (1.5,0) -- (0,0);
\draw (-0.5, -0.5) -- (-0.5,2) -- (2.0,2.0) -- (2.0, -0.5) -- (-0.5, -0.5);
\draw (-0.8, -0.8) -- (-0.8, 2.8) -- (2.8,2.8) -- (2.8, -0.8) -- (-0.8, -0.8);
\draw (-1.5, -1.5) -- (0,-1.5) -- (0, -3.0) -- (-1.5, -3.0) -- (-1.5, -1.5);
\draw (0.5, -1.5) -- (2,-1.5) -- (2, -3.0) -- (0.5, -3.0) -- (0.5, -1.5);
\draw[decorate, decoration = {brace}]  (-1.5, -1.45) -- (2,-1.45);
\draw[->, thick] (0,1.5) to node[right] () {$\mathscr{E}$} (-0.5,2);

\draw[->, thick] plot [smooth] coordinates { (0.25, -1.4) (2.3, -0.8) (2, 1.25) };

\draw[->, thick] plot [smooth] coordinates { (-1.5, -2.0) (-2.0, 1.0) (0, 0.75) };

\draw
(0.75, 0.75) node[] () {$\overline{L}$}
(1.65, -0.3) node[] () {$L$}
(2.4, 2.4) node[] () {$N$}
(-0.75, -2.25) node[] () {$M$}
(1.25, -2.25) node[] () {$F$}
(-2.1, 1.0) node[] () {$V$}
(2.45, -0.8)  node[] () {$U$}
;
\end{tikzpicture}
\caption{Schematic representation of the spaces in the derived code $\mathrm{Der}(\overline{L}, \mathscr{\mathscr{E}})$. The isometry $V: M \to N, \mathrm{Im}(V) = \overline{L}$ is the original one-to-one quantum code. The unitary $U: M \otimes F \to L$, where $L = \mathscr{E}{\overline{L}} \subseteq N$, defines the many-to-one code.}
\label{fig:der-space}
\end{figure}
 
\begin{definition}[many-to-one quantum code] \label{def:many-to-one}
A many-to-one quantum code of type $(N,M)$ is defined by a unitary $U : M \otimes F \to L, L \subseteq N$ for some Hilbert space $F$. Let $\mu: \bL(M) \to \bL(L)$ be the superoperator defined by $\mu : X \mapsto U(X \otimes \ident_F) U^{\dagger}$. A quantum state $\gamma \in \bD(M)$ is encoded by any $\rho \in \bD(L)$ satisfying $$ \mu^*(\rho) = \tr_F(U^\dagger \rho U) = \gamma.$$
 A many-to-one code is fully specified by the pair $(L, \mu)$. As for one-to-one codes, if $N$ and $M$ are $n$-qubit and $m$-qubit systems, we say the code is of type $(n, m)$.
 \end{definition}

\medskip A quantum channel $\cE:\bL(M) \to \bL(L)$ is said to be an encoding transformation of the many-to-one code $(L, \mu)$ if $\cE(\gamma) \in \bL(L)$ is an encoding of $\gamma$ for all $\gamma \in \bL(M)$, that is $\mu^* \circ \cE = \cI_M$.
Encoding transformations of many-to-one codes have a particular form, as given in the following lemma.
\begin{lemma}[\cite{kitaev}]
   For any encoding transformation $\cE:\bL(M) \to \bL(L)$ of a code $(L, \mu)$, there exists an $\eta \in \bD(F)$ such that
    \begin{equation*}
        \cE(\cdot) = U (\cdot \otimes \eta)U^\dagger,
    \end{equation*}
    where $\mu( \cdot ) = U  (\cdot \otimes \ident_F) U^{\dagger}$.
\end{lemma}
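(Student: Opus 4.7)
The plan is to reduce the statement to a standard fact about quantum channels whose marginal on part of the output equals the identity. First, I would define $\tilde{\cE}: \bL(M) \to \bL(M \otimes F)$ by $\tilde{\cE}(X) := U^\dagger \cE(X) U$. Since $\cE$ is CPTP with image in $\bL(L)$ and $U: M \otimes F \to L$ is a unitary, $\tilde{\cE}$ is itself a quantum channel. The hypothesis $\mu^* \circ \cE = \cI_M$ translates exactly into $\tr_F \circ \tilde{\cE} = \cI_M$, and a product form $\tilde{\cE}(X) = X \otimes \eta$ immediately yields the desired $\cE(X) = U(X \otimes \eta) U^\dagger$.

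It therefore suffices to prove the following abstract claim: if $\tilde{\cE}: \bL(M) \to \bL(M \otimes F)$ is a quantum channel satisfying $\tr_F \circ \tilde{\cE} = \cI_M$, then $\tilde{\cE}(X) = X \otimes \eta$ for some $\eta \in \bD(F)$. I would argue via the Choi--Jamio{\l}kowski isomorphism. Let $M'$ be a copy of $M$ and let $\ket{\Omega}_{MM'} = \sum_i \ket{i}_M\ket{i}_{M'}$ be the unnormalized maximally entangled vector. Set
\begin{equation*}
\sigma_{MFM'} := (\tilde{\cE} \otimes \cI_{M'})(\ket{\Omega}\bra{\Omega}_{MM'}) \in \bL(M \otimes F \otimes M').
\end{equation*}
Applying the hypothesis on the $M$ factor gives $\tr_F(\sigma_{MFM'}) = \ket{\Omega}\bra{\Omega}_{MM'}$, a rank-one operator.

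The key step is then the standard fact that a positive semidefinite operator whose marginal on a subsystem is rank-one must factorize as a tensor product across that cut. Concretely, spectrally decomposing $\sigma_{MFM'} = \sum_i \lambda_i \ket{\phi_i}\bra{\phi_i}$, the marginal on $MM'$ is a convex combination of density operators summing to the pure state $\ket{\Omega}\bra{\Omega}_{MM'}$; purity forces $\tr_F \ket{\phi_i}\bra{\phi_i} \propto \ket{\Omega}\bra{\Omega}_{MM'}$ for every $i$ with $\lambda_i > 0$, and the Schmidt decomposition of $\ket{\phi_i}$ across the cut $F$ versus $MM'$ then gives $\ket{\phi_i} \propto \ket{\Omega}_{MM'} \otimes \ket{\alpha_i}_F$. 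Hence $\sigma_{MFM'} = \ket{\Omega}\bra{\Omega}_{MM'} \otimes \eta_F$ for some $\eta \in \bD(F)$, with normalization forced by $\tr(\sigma) = |M|$. Inverting the Choi--Jamio{\l}kowski map recovers $\tilde{\cE}(X) = X \otimes \eta$ for all $X \in \bL(M)$, completing the proof. I do not expect any step to be genuinely delicate; the only non-trivial input is the pure-marginal-implies-product lemma, which is a one-line consequence of the Schmidt decomposition, so the main conceptual move is really just the initial change of variables that replaces the unitary $U$ by the abstract invariance condition $\tr_F \circ \tilde{\cE} = \cI_M$.
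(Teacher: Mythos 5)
Your proof is correct. Note that the paper itself offers no argument here — the lemma is simply cited to Kitaev — so there is no in-text proof to compare against. Your strategy (conjugate by $U^\dagger$ to reduce to the purely abstract statement that a channel $\tilde{\cE}: \bL(M)\to\bL(M\otimes F)$ with $\tr_F\circ\tilde{\cE}=\cI_M$ must act as $X\mapsto X\otimes\eta$, then pass to the Choi operator and use that a PSD operator with rank-one marginal on $MM'$ factorizes) is a clean and standard route; the factorization step is exactly the familiar ``rank-one marginal forces product state'' lemma, and you handle the normalization ($\tr\sigma=|M|$) correctly. One small wording nit: you describe $\tr_F\sigma$ as a ``pure state'' even though $\ket{\Omega}\bra{\Omega}$ has trace $|M|$; what your argument actually uses is only that it is rank one, which is enough to force each summand $\lambda_i\tr_F\ket{\phi_i}\bra{\phi_i}$ to be proportional to it. With that cosmetic correction the argument is complete.
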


\paragraph{Encoding for composite quantum systems.} For a composite quantum system consisting of $b$ subsystems, i.e. $M^{\otimes b}$, we consider a subsytemwise  enoding using the code $(L, \mu)$. This gives us the tensor product code  $(L^{\otimes b}, \mu^{ \otimes b})$. A quantum state $\rho \in \bD(L^{\otimes b})$ is an encoding of $\gamma \in \bD(M^{\otimes b})$ if the qubit-wise decoding yields
$$(\mu^{*\otimes b})(\rho) = \gamma.$$

\begin{remark}[Subsystem codes]
  The quantum code $(L, \mu)$  partitions the Hilbert space (up to a unitary equivalence) as $N = M \otimes F \bigoplus K  $, where $K = (M \otimes F)^\perp$ and the information is stored in the subsystem $M$ of the subspace $(M \otimes F)$. Therefore, $(L, \mu)$ is  a subsystem code~\cite{kribs2005operator}. $C^*$-algebra encodings for subsystem codes have been given in~\cite{bodmann2007decoherence}.
\end{remark}

\subsubsection{Many-to-one codes from stabilizer codes} \label{sec:stab-CSS-codes}

In this section, we explain how to obtain many-to-one codes from the Stabilizer codes. A stabilizer code $C$ of type $(n, 1)$ is  defined by a commuting set of Pauli operators $\{ S_1, \dots, S_{n-1}\}$, generating a subgroup of the $n$-qubit Pauli group not containing $-\ident$. Consider a bit string $\mathbf{s} = (s_1, \dots s_{n-1})$, where $s_i \in \{0, 1\}$ and consider the subspace $H_\mathbf{s} \subseteq (\mathbb{C}^2)^{\otimes n}$ such that any element in $H_\mathbf{s}$ is an eigenvector of $S_i$ with the eigenvalue $(-1)^{s_i}$ for all $i \in \{ 1, \dots, n-1\}$.

\medskip  We have that $H_\mathbf{s} \cong {\mathbb{C}^2}$. The spaces $H_\mathbf{s}$ and $H_{\mathbf{s}'}$ are orthogonal to each other for $s \neq s'$ as they are common eigenspaces  of the generating set $\{ S_1, \dots, S_{n-1}\}$ with respect to different eigenvalues. Therefore,
\begin{equation}
    (\mathbb{C}^2)^{\otimes n} = \bigoplus_{\mathbf{s} \in \{0, 1\}^{n-1}}  H_\mathbf{s}.
\end{equation}
The stabilizer code $C$ is a one-to-one code given by $\overline{L} := H_{(0,0, \dots, 0)}$. We associate the standard basis $\{ \ket{0}, \ket{1} \}$ of ${\mathbb{C}^2}$ to a basis $ \{\ket{\overline{0}}, \ket{\overline{1}} \}$ in  $\overline{L}$. For any $\ket{\psi} = \alpha \ket{0} + \beta \ket{1} \in {\mathbb{C}^2}$, the state $\ket{\overline{\psi}} = \alpha \ket{\overline{0}} + \beta \ket{\overline{1}} \in \overline{L}$ is the corresponding code state.

\smallskip Any Pauli error $E$ maps $\ket{\overline{\psi}} \in \overline{L}$ to a vector in $H_\mathbf{s}$, where $\mathbf{s} = (s_1,  \dots, s_{n-1}) \break \in \{0, 1\}^{n-1}$ is as follows:
$s_i = 0$ if $[S_i, E] = 0$ and $s_i = 1$ if $\{S_i, E\} = 0$. The vector $\mathbf{s}$ corresponding to the Pauli error $E$ is referred to as its syndrome.

\smallskip Consider now a set of Pauli errors $\{E_1, \dots, E_l \}$ such that syndromes of any $E_i$ and $E_j$ are different. Consider the error space $\mathscr{E} = \mathrm{span} \{E_1, \dots, E_l \}.$ We now define a derived (many-to-one) code $\mathrm{Der}(C, \mathscr{E})$ with the code space, 
\begin{equation} \label{eq:codespace-L}
    L := \mathscr{E}\overline{L} = \bigoplus_{\mathbf{s} \text{ of errors }  E_1, \dots, E_l } \: H_\mathbf{s} \subseteq (\mathbb{C}^2)^{\otimes n}.
\end{equation}
Note that $L \cong U({\mathbb{C}^2} \otimes F)$, where $F = \mathrm{span} \{ \ket{\mathbf{s}} : \mathbf{s} \text{ of errors in } \{ E_1, \dots, E_l \} \}$ and $U: {\mathbb{C}^2} \otimes F \to L$ is the unitary defined as follows, 
\begin{equation} \label{eq:U-unitary}
    U^\dagger E\ket{\overline{\psi}} = \ket{\psi} \otimes \ket{\mathbf{s}},  \forall \ket{\psi} \in {\mathbb{C}^2}, E \in \{ E_1, \dots, E_l \},
\end{equation}
where $\mathbf{s}$ is the syndrome of $E$.  The $\mu : \bL(\mathbb{C}^2) \to \bL(L)$ corresponding to $\mathrm{Der}(C, \mathscr{E})$ is given by $\mu(\cdot) = U (\cdot \otimes \ident_F)U^\dagger$. 

\subsection{Represention of quantum channels in  quantum codes}  \label{sec:rep}
To do fault-tolerant quantum computation using a many-to-one code $(L, \mu)$, we need to process the encoded quantum information, without leaving the code space. To this end, we need a notion of representation of  quantum channels in codes.

\medskip Let $\cJ : \bL(L) \to \bL(N), L \subseteq N$ be the natural embedding such that $$\cJ = J (\cdot) J^\dagger,$$ where $J: L \to N$ is the natural embedding of the Hilbert space $L$ into $N$. $J$ is an isometry, that is, $J^\dagger J = \ident_{L}$, and therefore $J J^\dagger:N \to N$ is the projector on the subspace $L \subseteq N$. Note that for a channel $\cT : \bL(N) \to \bL(N')$, the channel $\cT \circ \cJ:\bL(L) \to \bL(N')$ is the same as $\cT$, with the input being restricted to the code space $\bL(L)$.

\smallskip We now give the following definition, which is referred as the representation of a quantum channel in the strong sense in~\cite{kitaev}.

\begin{definition}[Representation of a quantum channel in quantum codes~\cite{kitaev}] \label{def:rep_codes}
    Consider two quantum codes $(L, \mu)$ and $(L', \mu')$ of types $(N, M)$ and $(N', M')$, respectively. Consider quantum channels $\cP : \bL(M) \to \bL(M ')$ and $\cT : \bL(N) \to \bL(N')$. We say that $\cT$ represents $\cP$ in codes $(L, \mu)$ and $(L', \mu')$ if the commutative diagram in Fig.~\ref{fig:strong-rep} holds for some quantum channels $\cR \in \bL(L) \to \bL(L')$ and $\cS \in \bL(N) \to \bL(L')$, that is, the following hold,
    \begin{align}
        (\mu^\prime)^* \circ \cR &= \cP \circ \mu^{*}. \label{eq:comm-1} \\
       \cJ'\circ \cR &= \cT \circ \cJ. \label{eq:comm-2} \\
       \cJ' \circ \cS  &= \cT. \label{eq:comm-3}
    \end{align}
Furthermore, we say $\cT$ represents $\cP$ in codes $C, C'$ with accuracy $(\epsilon_1, \epsilon_2, \epsilon_3)$ if Eqs.~(\ref{eq:comm-1}), ~(\ref{eq:comm-2}) and ~(\ref{eq:comm-3}) with accuracy  $\epsilon_1, \epsilon_2$ and $\epsilon_3$ in diamond norm, that is, 
       \begin{align}
        \dnorm{(\mu^\prime)^* \circ \cR  - \cP \circ \mu^{*}} & \leq \epsilon_1. \label{eq:comm-1-del} \\
       \dnorm{\cJ'\circ \cR - \cT \circ \cJ} & \leq \epsilon_2. \label{eq:comm-2-del} \\
       \dnorm{ \cJ' \circ \cS  - \cT} & \leq \epsilon_3. \label{eq:comm-3-del}
    \end{align}
\end{definition}

\begin{figure}[!t]
    \centering
\begin{tikzpicture}[node distance=2cm, auto]
  \node (A) {$\bL(M')$};
  \node(B) [right of =A]{$\bL(L')$};
  \node (C) [right of =B] {$\bL(N')$};
  \node (D) [below of =A] {$\bL(M)$};
  \node(E) [below of =B]{$\bL(L)$};
  \node(F) [below of =C]{$\bL(N)$};  
  \draw[<-] (A) to node {$(\mu^{\prime})^*$} (B);
  \draw[->] (B) to node {$\cJ'$} (C);
  \draw[<-] (D) to node {$\mu^{*}$} (E);
  \draw[->] (E) to node {$\cJ$} (F);
  \draw[<-] (A) to node {$\cP$} (D);
  \draw[<-] (B) to node {$\cR$} (E);
  \draw[<-] (C) to node {$\cT$} (F);
  \draw[<-] (B) to node {$\cS$} (F);
\end{tikzpicture}
\caption{Commutative diagram. $\cT$ represents $\cP$ in codes $(L, \mu)$ and $(L', \mu')$. Here, $\cJ : \bL(L) \to \bL(N), L \subseteq N$ is the natural embedding.}
\label{fig:strong-rep}
\end{figure}
Note that Eq.~(\ref{eq:comm-3}) means that $\cT$ maps any element in $\bL(N)$ to an element in the code space $\bL(L')$, that is, $\mathrm{Im}(\cT) \subseteq \bL(L')$. In other words, $\cJ^* \circ \cT$ is a quantum channel.
This condition on $\cT$ is not always needed, leading to a weak representation as given in the remark below. 

\begin{remark}[Weak representation] \label{rem:weak-rep}
   In Def.~\ref{def:rep_codes}, we will say $\cT$ weakly represents $\cP$ if there exists $\cR$ satisfying Eqs.~(\ref{eq:comm-1}) and~(\ref{eq:comm-2}). Similarly, $\cT$ weakly represents $\cP$ with accuracy $(\epsilon_1, \epsilon_2)$ if Eqs.~(\ref{eq:comm-1-del}) and~(\ref{eq:comm-2-del}) hold.
\end{remark}
The following lemma ensures the composability of representations of quantum channels, which can themselves be composed.  

\begin{lemma} \label{lem:union-bound-rep}
    Let $\cP_1: \bL(M) \to \bL(M')$ and $\cP_2:\bL(M') \to \bL(M'')$ be quantum channels. Consider quantum codes $(L, \mu), (L', \mu')$ and $(L'', \mu'')$ of types $(N, M)$, $(N', M')$, and $(N'', M'')$, respectively. Suppose  $\cT_1: \bL(N) \to \bL(N')$ represents $\cP_1$ in codes $(L, \mu), (L', \mu')$, with accuracy $(\epsilon_1, \epsilon_2, \epsilon_3)$ and  $\cT_2: \bL(N') \to \bL(N'')$ represents $\cP_2$ in codes $(L', \mu'),(L'', \mu'')$, with accuracy $(\kappa_1, \kappa_2, \kappa_3)$. Then, $\cT_2 \circ \cT_1$ represents $\cP_2 \circ \cP_1$ in $(L, \mu), (L'', \mu'')$, with accuracy $(\epsilon_1 + \kappa_1, \epsilon_2 + \kappa_2, \kappa_3)$.
\end{lemma}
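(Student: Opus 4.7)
}

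The plan is to construct explicit witnesses $\cR$ and $\cS$ for the composed representation from the witnesses associated with $\cT_1$ and $\cT_2$, and then verify the three bounds by two applications of the triangle inequality together with submultiplicativity of the diamond norm. Let $\cR_1,\cS_1$ be the channels certifying that $\cT_1$ represents $\cP_1$ with accuracy $(\epsilon_1,\epsilon_2,\epsilon_3)$, and $\cR_2,\cS_2$ the analogous ones for $\cT_2$ with accuracy $(\kappa_1,\kappa_2,\kappa_3)$. The natural candidates are
\begin{equation*}
    \cR := \cR_2 \circ \cR_1 : \bL(L) \to \bL(L''), \qquad \cS := \cS_2 \circ \cT_1 : \bL(N) \to \bL(L'').
\end{equation*}
Both are compositions of quantum channels and hence quantum channels.

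Next I would verify the three inequalities in Eqs.~(\ref{eq:comm-1-del})--(\ref{eq:comm-3-del}) for $(\cR,\cS)$, using throughout the two facts that $\dnorm{\cdot}$ satisfies the triangle inequality and that $\dnorm{\cA \circ \cB} \leq \dnorm{\cA}\,\dnorm{\cB}$, with $\dnorm{\cC}=1$ for any quantum channel $\cC$. For Eq.~(\ref{eq:comm-1-del}), the idea is the telescoping
\begin{equation*}
    (\mu'')^* \circ \cR_2 \circ \cR_1 \;\longrightarrow\; \cP_2 \circ (\mu')^* \circ \cR_1 \;\longrightarrow\; \cP_2 \circ \cP_1 \circ \mu^{*},
\end{equation*}
where the first arrow costs $\kappa_1$ (by post-composing the bound for $\cT_2$ with the channel $\cR_1$) and the second costs $\epsilon_1$ (by pre-composing the bound for $\cT_1$ with the channel $\cP_2$), giving the total $\epsilon_1+\kappa_1$. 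The bound for Eq.~(\ref{eq:comm-2-del}) follows by the analogous telescoping
\begin{equation*}
    \cJ'' \circ \cR_2 \circ \cR_1 \;\longrightarrow\; \cT_2 \circ \cJ' \circ \cR_1 \;\longrightarrow\; \cT_2 \circ \cT_1 \circ \cJ,
\end{equation*}
yielding $\epsilon_2+\kappa_2$. For Eq.~(\ref{eq:comm-3-del}) there is only one step: $\dnorm{\cJ''\circ\cS_2\circ\cT_1 - \cT_2\circ\cT_1} = \dnorm{(\cJ''\circ\cS_2 - \cT_2)\circ\cT_1} \leq \kappa_3$.

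I do not expect any genuine obstacle here; the lemma is a ``union bound'' statement whose content is already encoded in the submultiplicativity of the diamond norm under composition with channels. The only points that require a bit of care are (i) choosing $\cS = \cS_2 \circ \cT_1$ rather than something involving $\cS_1$, since it is $\cT_2$ (not $\cT_1$) that needs to be approximated by a map landing in $\bL(L'')$; and (ii) making sure that the maps one pre- or post-composes with at each step are channels (or at least have diamond norm at most $1$), so that the errors add rather than multiply. With these two points in mind, the proof reduces to writing out the two telescoping bounds above.
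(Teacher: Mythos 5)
Your proposal is correct and follows exactly the same route as the paper's proof: the paper also takes $\cR = \cR_2 \circ \cR_1$ and $\cS = \cS_2 \circ \cT_1$, and leaves the telescoping/triangle-inequality verification implicit ("it can be easily seen"). You have merely spelled out the details the paper omits, including the key observation that it is $\cS_2 \circ \cT_1$, not anything built from $\cS_1$, that provides the third witness.
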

\begin{proof}
    Consider $\cR_1$ for $\cT_1$ and $\cP_1$, and $\cR_2, \cS_2$ for $\cT_2$ and $\cP_2$, according to Def.~\ref{def:rep_codes}. By taking $\cR = \cR_2 \circ \cR_1$, and $\cS = \cS_2 \circ \cT_1$, it can be easily seen that the commutative diagram in Fig.~\ref{fig:strong-rep} holds with accuracy $(\epsilon_1 + \kappa_1, \epsilon_2 + \kappa_2, \kappa_3)$ for $\cT = \cT_2 \circ \cT_1$ and $\cP = \cP_2 \circ \cP_1$, and for codes $D, D''$. 
\end{proof}
In order to make the reader more familiar with the introduced definition, we give an example of the representation of quantum channels in Appendix~\ref{app:ex-rep}.

\subsection{Quantum circuits and noise model} \label{sec:q_circuit-noise}
In this section, we provide the quantum circuit model, consisting of qubits and classical bits and the noise model for quantum circuits from~\cite{kitaev}.

\paragraph{Classical states:} 
We denote the space of a classical bit by $$\mathbb{B} \cong \mathrm{span} \{\proj{0}, \proj{1} \} \subset \bL(\mathbb{C}^2) .$$ The space of $n$ bits is given by $$\mathbb{B}^{\otimes n} \cong \mathrm{span}\{ \proj{\mathbf{u}} :  \mathbf{u} \in  \{0, 1 \}^{n}\} \subset \bL((\mathbb{C}^2)^{\otimes n}).$$

\paragraph{Combined classical-quantum states:}
Consider a hybrid classical-quantum system  with $q$-qubits and $b$-bits. A state in the combined space is given by,
\begin{equation} \label{eq:hybrid_classical_quantum}
    \rho = \sum_{\mathbf{u} \in \{0, 1\}^{b}} p(\mathbf{u}) ( \proj{\mathbf{u}} \otimes \gamma_{\mathbf{u}}) \in\mathbb{B}^{\otimes b} \otimes \bL((\mathbb{C}^2)^{ \otimes q}),
\end{equation}
where $p(\mathbf{u}) \geq 0$, s.t. $\sum_{\mathbf{u} \in \{0, 1\}^{b}} p(\mathbf{u}) = 1$ and $\gamma_{\mathbf{u}} \in \bD((\mathbb{C}^2)^{ \otimes q})$.

\subsubsection{Quantum circuits}
To construct quantum circuits, we need to use a gate set $\mathbf{A}$, which generates universal quantum computation. There are several finite universal quantum gate sets, however here we shall not fix a particular gate set but rather work in a general setting. We shall consider a gate set $\mathbf{A}$, containing quantum measurement, state preparation, and unitary gates, and classical controlled unitary gates. We also suppose that $\mathbf{A}$ contains an identity (idle) gate that realizes identity channel on a qubit.  

\begin{figure}[!b]
\centering
\begin{tikzpicture}
\draw
(0, 0.4) node[] (a){}
(0, -0.4) node[] (c) {}
(0, 1.3) node[] (d) {}
(a)++ (0.5, 0) node[above] (){{\footnotesize $OC$}}
(c)++ (0.5, 0) node[above] (){{ \footnotesize $Q$}}
(d)++ (0.5, 0) node[above] (){{ \footnotesize $CC$}}
(1.8, 0) node[draw, minimum height = 1.8cm, minimum width = 1.0 cm] (g) {$\cT_\mathbf{u}$}
;
\draw
(d) to ++(3.5, 0) node[right]  {}
(g.north) to ++(0, 0.4) 
(a -| g.west)[double, double distance = 0.45mm] to (a)
(a -| g.east)[double] to +(1.2, 0) node[right]  {}
;
\draw
(c -| g.west) to (c)
(c -| g.east) to +(1.2, 0)
node[right]  {}
;
\end{tikzpicture}
\caption{The quantum channel $\cT_g$ realized by the quantum gate $g \in \mathbf{A}$. Each double wire ($CC$ and $OC$) represents a set of classical bits and the single wire ($Q$) represents a set of qubits. The state of classical bits are diagonal in the computational basis (see Eq.~\ref{eq:hybrid_classical_quantum}). Classical wires are divided into control and operational bits, labelled by $CC$ and $OC$, respectively. Depending on the state $\ket{\mathbf{u}}\bra{\mathbf{u}}_{CC}$ of the control wire, the quantum channel $\cT_{\mathbf{u}}$ is applied on $OC$ and $Q$.  } 
\label{fig:model-qgate}
\end{figure}

\medskip In general, we model a gate as follows (see also Fig.~\ref{fig:model-qgate}):
each gate $g \in \mathbf{A}$ has hybrid classical-quantum input and output. We divide classical bits as control bits and operational bits. Let $l$ be the number of control bits and $o, o'$ be the number of operational bits at the input and output, respectively. Let $q, q'$ be the number of qubits at the input and output, respectively.  Then, the gate $g$ realizes the quantum channel $\cT_g$,
\begin{equation} \label{eq:realization-qgate}
\cT_g = \sum_{\mathbf{u} \in \{0, 1\}^l} \proj{\mathbf{u}}(\cdot)\proj{\mathbf{u}} \otimes \cT^\mathbf{u},
\end{equation}
where $\cT^\mathbf{u}$ is a quantum channel from $ \mathbb{B}^{\otimes o} \otimes  \bL((\mathbb{C}^2)^{\otimes q})$ to $  \mathbb{B}^{\otimes o'} \otimes \bL((\mathbb{C}^2)^{\otimes q'})$.

\smallskip We now give the definition of quantum circuits, which is a slightly modified version of the one from~\cite{kitaev}. 

\begin{definition}[Quantum Circuit] \label{def:quantum_circuit}
    A quantum circuit $\Phi$ of depth $d$ is a collection of the following objects.
\begin{itemize}
    \item[$(1)$] A sequence of finite sets $\Delta_0, \dots, \Delta_i, \dots, \Delta_d$ called layers; the zeroth layer is called input of the quantum circuit and the last layer $\Delta_d$ is called the output. Each $\Delta_i = \Delta_i^c \cup \Delta_i^q$, where $\Delta_i^c$ contains classical wires and $\Delta^q$ contains quantum wires.     For a quantum circuit with no input, $\Delta_0$ is an empty set.
    
    \item[$(2)$] Partitions of each set $\Delta_{i-1}$, $i = 1, \dots, d$ into ordered subsets (registers) $A_{i1}, \dots, A_{is_i}$, as well as partition of some sets $\Delta'_i \supseteq \Delta_i$ into registers $A'_{i1}, \dots, A'_{is_i}$. Each register $A_{ij}, A'_{ij}$ contains $l_{ij}$ control bits, $q_{ij}, q'_{ij}$ qubits,  and $o_{ij}, o'_{ij}$ operational bits, respectively.

    \item[$(3)$] A family of gates $g_{ij}$ from the gate set $\mathbf{A}$, realizing  channels $\cT_{g_{i,j}}: \mathbb{B}^{
    \otimes {l_{ij}}} \otimes \mathbb{B}^{\otimes {o_{ij}}} \otimes \bL((\mathbb{C}^2)^{\otimes q_{ij}}) \to \mathbb{B}^{\otimes {l_{ij}}} \otimes \mathbb{B}^{\otimes {o'_{ij}}} \otimes \bL((\mathbb{C}^2)^{\otimes q'_{ij}})$.   The gate $g_{i,j}$ or their numbers $(i,j)$ are called the locations of the quantum circuit and the total number of locations is called the size of the quantum circuit.  
\end{itemize}
\end{definition}
\begin{remark} \label{rem:idle}
 \smallskip Note that in Def.~\ref{def:quantum_circuit}, all the qubits in $\Delta_{i-1}, i =1, \dots, d$, are acted upon by a gate at the $i^{th}$ layer. We suppose that the idle gate acts on each waiting qubit. This is natural for capturing the effect of noise on the waiting qubits (see Section~\ref{sec:noise-qcirc} below).    
\end{remark}
 
\smallskip A quantum circuit $\Phi$ according to Definition~\ref{def:quantum_circuit} realizes a quantum channel $\cT = \cT_d \circ \cdots \circ \cT_1$, where $\cT_i$ acts as follows on any  $\rho \in \bL({\mathbb{C}^2}^{\otimes |\Delta_{i-1}|})$,
\begin{equation}
    \cT_i (\rho) =  \tr_{\Delta'_i \setminus \Delta_i}\Big[\cT_{g_{i,1}}[A'_{i1}; A_{i1}] \otimes \dots \otimes \cT_{g_{i, s_i}}[A'_{is_i}; A_{is_i}] \Big] \left(\rho \right),
\end{equation}
where $\cT_{g_{i, j}}[A'_{ij}; A_{ij}]$ denotes the quantum channel $\cT_{g_{i,j}}$, applied to the register $A_{ij}$, whose output is stored in the register $A'_{ij}$.

\paragraph{Notation.} In the following, we shall denote the set of locations of $\Phi$ by $\mathrm{Loc}(\Phi)$, and its size by $|\Phi| := |\mathrm{Loc}(\Phi)|$.

\subsection{Noise in quantum circuits} \label{sec:noise-qcirc}
We shall work with two kinds of circuit noise, namely, general circuit-noise and circuit-level stochastic noise, which are described below.

\paragraph{General circuit noise.}
Let $\delta \in [0, 1]$ be an error rate. When general circuit noise with an error rate $\delta$ acts on a quantum circuit $\Phi$, we have that
\begin{enumerate}
    \item Purely classical elements of $\Phi$, i.e. the gates with only classical input-output, are implemented perfectly without any error.

    \item  A gate $g \in \mathbf{A}$ with the hybrid classical-quantum input as in Eq.~(\ref{eq:realization-qgate}) is replaced  by an arbitrary $\tilde{\cT}$ such that,
    \begin{equation} \label{eq:realization-qgate-noisy}
      \tilde{\cT}_g = \sum_{\mathbf{u} \in \{0, 1\}^l } \ket{\mathbf{u}}\bra{\mathbf{u}}(\cdot) \ket{\mathbf{u}}\bra{\mathbf{u}} \otimes \tilde{\cT}^\mathbf{u},
\end{equation}
where for all $\mathbf{u} \in \{0, 1\}^{l} $, $\tilde{\cT}^\mathbf{u}$ is an arbitrary quantum channel with the same input and output space as $\cT^\mathbf{u}$ (in particular, classical registers remain classical), and  $\dnorm{\Tilde{\cT}^\mathbf{u} - \cT^\mathbf{u} } \leq \delta$.  Note that a noisy idle gate realizes a channel $\cT_g$, such that $\dnorm{\cT_g - \cI} \leq \delta$.     
\end{enumerate}

We note that the above noise model is general as it includes, as special case, incoherent noise models such as depolarizing, amplitude and phase damping~\cite{tomita2014low} as well the coherent noise, that is, a small unknown unitary rotation~\cite{bravyi2018correcting,greenbaum2017modeling}.

\smallskip The quantum circuit $\Phi$, with the noise rate $\delta$, realizes a quantum channel in the following set,
\begin{multline}
    \trans(\Phi, \delta) := \Big\{ \tilde{\cT}_d \circ \cdots \circ \tilde{\cT}_1 :  \tilde{\cT}_i =  \tr_{\Delta'_i \setminus \Delta_i} \circ \Big[\tilde{\cT}_{g_{i, 1}}[A'_{i1}; A_{i1}] \otimes \cdots \\
  \otimes \tilde{\cT}_{g_{i, s_i}}[A'_{is_i}; A_{is_i}] \Big],
     \text{ with $\tilde{\cT}_{g_{i,j}}$ being a $\delta$-noisy version of $\cT_{g_{i,j}}$ as in Eq.~(\ref{eq:realization-qgate-noisy}). } \Big\}.
\end{multline}

\paragraph{Circuit-level stochastic noise.} We will also consider a stochastic noise model with parameter $\delta$ as follows: in Eq.~(\ref{eq:realization-qgate-noisy}), we take $\tilde{\cT}_u$ as follows,  
\begin{equation} \label{eq:simple-noise}
    \tilde{\cT}^{\mathbf{u}} = (1 - \delta) \cT^{\mathbf{u}} + \delta \: \cZ^{\mathbf{u}},
\end{equation}
for some arbitrary channel $\cZ$, with the same input and output systems as $\cT^{\mathbf{u}}$. Therefore, we  get
\begin{equation} \label{eq:noisy-g}
    \tilde{\cT}_g =  (1 - \delta) \cT_g + \delta \: \cZ_g,
\end{equation}
for some quantum channel $\cZ_g$ with the same input and output systems as $\cT_g$.
\begin{remark} \label{rem:noisy-g}
From Eq.~(\ref{eq:noisy-g}), note that $\dnorm{\tilde{\cT}_g - \cT_g} \leq 2\delta$. Therefore, the channel $\tilde{\cT}_{\Phi}$ (under circuit-level stochastic noise with parameter $\delta$) belongs to $\trans(\Phi, 2\delta)$ for any circuit $\Phi$. 
\end{remark}

\subsection{Fault-tolerant scheme} \label{sec:ft-implementation}

To implement quantum circuits reliably in the presence of noise, we consider a fault-tolerant scheme as given in Def.~\ref{def:ft_scheme}. We provide an explicit construction of the fault-tolerant scheme according to Def.~\ref{def:ft_scheme} in Appendix~\ref{sec:cons-ft-scheme}.

\begin{definition}[Fault-tolerant scheme~\cite{kitaev}\footnote{It is same as the \textit{polynomial error correction} in~\cite{kitaev}.}] \label{def:ft_scheme}
For some constant $c > 1$, let $\delta < \frac{1}{c}$ be a fixed noise rate, considering general circuit noise on the elements of a gate set $\mathbf{A}$. A fault-tolerant scheme for the gate set $\mathbf{A}$ is a collection of the following objects, 
\begin{enumerate}
\item[$(a)$] A sequence of codes $D_k = (L_k, \mu_k), \: k = 1, 2, \dots$ of type $(n_k, 1)$, that is, encoding $1$ logical qubit in $n_k$ physical qubits, where $n_k$ scales polynomially with $k$. Here, $L_k = U_k(\mathbb{C}^2 \otimes F_k) \subseteq N_k$, where $U_k$ is a unitary, $F_k \subseteq (\mathbb{C}^2)^{n_k - 1}, N_k := (\mathbb{C}^2)^{ \otimes n_k}$ and $\mu_k(\cdot) := U_k ( \cdot \otimes \ident_{F}) U_k^{\dagger}$.  The code $D_1$ is the trivial code, i.e. a code of type (1, 1), encoding one logical qubit in one physical qubit,  with $F_1$ being a one-dimensional Hilbert space and $\mu_1$ being the identity channel on $\bL(\mathbb{C}^2)$.

\item[$(b)$] For every $g \in \mathbf{A}$, a   family of circuits $(\Psi_{g,k}, k= 1,2, \dots)$, where the size of $\Psi_{g, k}$ scales  polynomially in $k$, such that $\tilde{\cT}_{\Psi_{g, k}} \in \trans(\Psi_{g, k}, \delta)$ represents $\cT_g$ in the code $D_k$, with accuracy $\epsilon_1 = \epsilon_2 = \epsilon_3 = (c\delta)^k$.

\item[$(c)$] A family of quantum circuits $(\Gamma_{k,l}, k, l = 1, 2, \dots, \text{ and } k \neq l )$ in  $\mathbf{A}$, referred to as interfaces, where the size of $\Gamma_{k,l}$ scales polynomially with $k, l$, such that any  $\tilde{\cT}_{\Gamma_{k,l}} \in \trans(\Gamma_{k,l}, \delta)$ represents the identity channel in the codes $D_k, D_l$ with accuracy $ \epsilon_1 = \epsilon_2 = (c \delta)^{\min\{k, l\}}$ and $\epsilon_3 = (c \delta)^{l}$.
\end{enumerate}
\end{definition}
For the sake of clarity, we specify systems $M, L, N$ and $M', L', N'$ in the commutative diagram, given in Fig.~\ref{fig:strong-rep}), for parts $(b)$ and $(c)$ of Def.~\ref{def:ft_scheme}.

\begin{enumerate}
    \item[$(1)$] In part $(b)$, if $g$ is a unitary gate, acting on $q$ qubits, then we have  $M = M' = (\mathbb{C}^2)^{\otimes q}$,  $L = L' = L_k^{\otimes q}$, and $N = N' = N_k^{\otimes q}$. 
    
    \item[$(2)$] In part (b), if $g$ is a state preparation gate, with $b$ bit input and $q$ qubit output, $\bL(M) = \bL(L) = \bL(N) =  \mathbb{B}^{\otimes b}$, and $M' = (\mathbb{C}^2)^{\otimes q}, L' = L_k^{\otimes q}, N' = N_k^{\otimes q}$. By a slight abuse of notation, let $D_1$ also be the trivial code for a classical bit $\mathbb{B}$, encoding a classical bit to a classical bit. Then, in fact, $\tilde{\cT}_{\Psi_{g, k}}$ represents $\cT_g$ in the code $D_1, D_k$, with accuracy $\epsilon_1 = \epsilon_2 = \epsilon_3 = (c\delta)^k$.
    
    \item[$(3)$]  In part (b), if $g$ is a measurement gate, with $q$ qubit input and $b$ bit classical output, $M = (\mathbb{C}^2)^{\otimes q}, L = L_k^{\otimes q}, N = N_k^{\otimes q}$, and $\bL(M') = \bL(L') = \bL(N') =  \mathbb{B}^{\otimes b}$. Similarly to point $(2)$ above, $\tilde{\cT}_{\Psi_{g, k}}$ represents $\cT_g$ in the code $D_k, D_1$, with accuracy $\epsilon_1 = \epsilon_2 = \epsilon_3 = (c\delta)^k$.

    \item[$(4)$] In part (c), for the interface $\Gamma_{k, l}$, we have $M = \mathbb{C}^2, L = L_k, N = N_k$, and $M' = \mathbb{C}^2, L' = L_l, N' = N_l$.
\end{enumerate}

\begin{remark} \label{rem:controlled-unit}
   A classical controlled unitary gate $g$, acting on $b$-bits and $q$-qubits, corresponds to a collection of unitary gates $\{g_\mathbf{u}\}_{\mathbf{u} \in \{ 0, 1\}^{b}}$. The corresponding quantum channel $\trans(g): \mathbb{B}^{ \otimes b} \otimes \bL((\mathbb{C}^2)^{\otimes q}) \to \mathbb{B}^{ \otimes b} \otimes \bL((\mathbb{C}^2)^{\otimes q})$ is as follows,
\begin{equation}
    \trans(g) = \sum_{\mathbf{u} \in \{0, 1\}^b}  \proj{\mathbf{u}}(\cdot) \proj{\mathbf{u}}\otimes \trans(g_\mathbf{u}).
\end{equation}

The corresponding $\Psi_{g, k}$ is given by the collection $\{\Psi_{g_{\mathbf{u}}, k} \}_\mathbf{u}$. Furthermore, the quantum channel $\trans(\Psi_{g, k}) : \mathbb{B}^{ \otimes b} \otimes \bL(N_k^{\otimes q}) \to \mathbb{B}^{ \otimes b} \otimes \bL(N_k^{\otimes q})$ is as follows,
\begin{equation}
    \trans(\Psi_{g, k}) = \sum_{\mathbf{u} \in \{0, 1\}^b} \proj{\mathbf{u}} (\cdot) \proj{\mathbf{u}}\otimes \trans(\Psi_{g_{\mathbf{u}}, k}).
\end{equation}
\end{remark}

\paragraph{Fault-tolerant scheme for circuit-level stochastic noise.} For circuit-level stochastic noise model given in Section~\ref{sec:noise-qcirc}, we consider the following definition of fault-tolerant scheme, where exact representation is used in contrast to approximate representation used in Def.~\ref{def:ft_scheme}. We provide an explicit construction of the fault-tolerant scheme according to Def.~\ref{def:ft_scheme} in Appendix~\ref{sec:cons-ft-scheme}.

\begin{definition}[Fault-tolerant scheme for circuit-level stochastic noise] \label{def:ft_scheme_stoc}
 For some constant $c > 1$, let $\delta < \frac{1}{c}$ be a fixed noise rate, considering circuit-level stochastic noise on the elements of a gate set $\mathbf{A}$. A fault-tolerant scheme for $\mathbf{A}$ is a collection of the following objects, 

\begin{enumerate}
    \item[$(a)$] A sequence of codes $D_k = (L_k, \mu_k), \: k = 1, 2, \dots$ of type $(n_k, 1)$ as in Part $(a)$ of Def.~\ref{def:ft_scheme}.

    \item[$(b)$]  For every $g \in \mathbf{A}$, a   family of circuits $(\Psi_{g,k}, k= 1,2, \dots)$, such that the noisy realization $\tilde{\cT}_{\Psi_{g, k}}$  is given by,
    \begin{equation}
      \tilde{\cT}_{\Psi_{g, k}} = (1 - \epsilon) \: \overline{\cT}_{\Psi_{g, k}}   + \epsilon \: \overline{\cZ}_{\Psi_{g, k}},
    \end{equation}
    where $\epsilon \leq (c\delta)^k$,  $\overline{\cT}_{\Psi_{g, k}}$ represents $\cT_g$ in the code $D_k$ and $\overline{\cZ}_{\Psi_{g, k}}$ is some quantum channel.

    \item[$(c)$] A family of quantum circuits $(\Gamma_{k,l}, k, l = 1, 2, \dots, \text{ and } k \neq l )$ in  $\mathbf{A}$, such that the noisy realization $\tilde{\cT}_{\Gamma_{k,l}}$ is given by,
    \begin{equation}
      \tilde{\cT}_{\Gamma_{k,l}} = (1 - \epsilon) \: \overline{\cT}_{\Gamma_{k,l}}   + \epsilon \: \overline{\cZ}_{\Gamma_{k,l}},   
    \end{equation}
    where $\epsilon \leq (c\delta)^l$,  and $\overline{\cT}_{\Gamma_{k,l}}$ is such that $\cJ_l^* \circ \overline{\cT}_{\Gamma_{k,l}}$ is a quantum channel, and  
     \begin{equation}
      \overline{\cT}_{\Gamma_{k,l}} = (1 - \epsilon') \: \overline{\cT}'_{\Gamma_{k,l}}   + \epsilon' \: \overline{\cZ}'_{\Gamma_{k,l}},   
    \end{equation}
  where $\epsilon' \leq (c \delta)^{\min\{k, l\}}$ and $\overline{\cT}'_{\Gamma_{k,l}}$ represents identity in the code  $D_k,D_l$.  Here, $\overline{\cZ}_{\Gamma_{k,l}}, \overline{\cZ}'_{\Gamma_{k,l}}$ are some quantum channels. 
\end{enumerate}
\end{definition}

\section{Transformation rules for logic gates} \label{sec:ft-tol-scheme}

This section sets the stage for the next section, where the fault-tolerant realizations of quantum circuits with quantum input/output is given. In this section, we obtain transformation rules, using the fault-tolerant scheme in Def.~\ref{def:ft_scheme}, analogous to the ones given in~\cite{Aliferis2006quantum, christandl2022fault}. These transformation rules will be later useful in Section~\ref{sec:ft-real} for fault-tolerant realizations of quantum circuits.


\subsection{Transformation rules} \label{sec:approx-to-exact}

In the fault-tolerant construction of~\cite{Aliferis2006quantum}, the transformation rules involving data and syndrome spaces are crucial for the realization of quantum circuits fault-tolerantly (see also~\cite[Lemma II.6]{christandl2022fault}). In this section, we obtain analogous relations in Corollaries~\ref{cor:tprime-unitary},~\ref{cor:tprime-stprep}, and~\ref{cor:tprime-measurement} for Kitaev's framework. To do so, we need an exact representation in codes rather than an approximate representation, as given in the fault-tolerant scheme of Def.~\ref{def:ft_scheme}. Below, we state Lemma 10.3 from~\cite{kitaev}, which gives two ways of obtaining exact commutativity in Fig.~\ref{fig:strong-rep} from  an approximate representation.

\begin{lemma} \normalfont{(\cite[Lemma 10.3]{kitaev})} \label{lem:rep-approx}
Let $\cP: \bL(M) \to \bL(M')$ and $\tilde{\cT}: \bL(N) \to \bL(N')$ be quantum channels and let $D = (L, \mu)$ and $D' = (L', \mu')$ be codes of type $(N, M)$ and $(N', M')$, respectively.

\begin{enumerate}
    \item [(a)] $\tilde{\cT}$ represents $\cP$ in codes $D, D'$ with accuracy $\epsilon_1 = \epsilon_2 = \epsilon_3 = O(\epsilon)$ if and only if there exist superoperators $\cT$, $\cR$, and $\cS$ (not necessarily channels) such that commutative diagram in Fig.~\ref{fig:strong-rep} holds exactly and the superoperator $\cT$ satisfies $\dnorm{\cT - \tilde{\cT}} \leq O(\epsilon).$ 

    \item[(b)] If $\tilde{\cT}$ represents $\cP$ in codes $D, D'$ with accuracy $\epsilon_1 = \epsilon_2 = \epsilon_3 = O(\epsilon)$, then there exists a channel $\cT$ representing $\cP$ such that $\dnorm{\cT - \tilde{\cT}} \leq |M'| O(\sqrt{\epsilon})$.
\end{enumerate}
\end{lemma}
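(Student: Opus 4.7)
For part (a), the ``if'' direction is immediate by the triangle inequality applied to the three diamond-norm inequalities. For ``only if'', I start from the given approximate channels $\tilde{\cR}, \tilde{\cS}$ satisfying (\ref{eq:comm-1-del})--(\ref{eq:comm-3-del}) with error $O(\epsilon)$. Combining (\ref{eq:comm-2-del}) and (\ref{eq:comm-3-del}), and using that $\cJ'$ is an isometric embedding (so $(\cJ')^*$ is diamond-norm contractive), yields $\dnorm{\tilde{\cS} \circ \cJ - \tilde{\cR}} = O(\epsilon)$; then (\ref{eq:comm-1-del}) gives $\dnorm{\Delta} = O(\epsilon)$ for $\Delta := \cP \circ \mu^* - (\mu')^* \circ \tilde{\cS} \circ \cJ$. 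I correct $\tilde{\cS}$ by setting
\[
\cS := \tilde{\cS} + \hat{\mu}' \circ \Delta \circ \cJ^*, \quad \cR := \cS \circ \cJ, \quad \cT := \cJ' \circ \cS,
\]
where $\hat{\mu}'$ denotes the channel $\gamma \mapsto |F'|^{-1} U'(\gamma \otimes \ident_{F'}){U'}^\dagger$, which is a right inverse of $(\mu')^*$, and $\cJ^*$ is the natural left inverse of $\cJ$. Direct verification using $(\mu')^* \circ \hat{\mu}' = \cI$ and $\cJ^* \circ \cJ = \cI$ shows that all three equalities (\ref{eq:comm-1})--(\ref{eq:comm-3}) hold exactly, while $\dnorm{\cT - \tilde{\cT}} \leq \dnorm{\cJ' \circ \tilde{\cS} - \tilde{\cT}} + \dnorm{\cJ' \circ \hat{\mu}' \circ \Delta \circ \cJ^*} = O(\epsilon)$. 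The additive correction can destroy positivity and trace preservation, so $\cS, \cR, \cT$ are merely superoperators, in line with the statement.

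For part (b), I need a genuine CP, trace-preserving $\cT$. My plan is to invoke the Stinespring continuity theorem (Theorem~\ref{thm:cont-stine}). Since $(\mu')^* \circ \tilde{\cR}$ and $\cP \circ \mu^*$ are $O(\epsilon)$-close channels, they admit Stinespring isometries $V_1, V_2$ with $\opn{V_1 - V_2} = O(\sqrt{\epsilon})$. I build $V_2$ explicitly by composing a Stinespring of $\cP$ with an encoding Stinespring for $\mu'$; aligning $V_1$ to this structure and reading off a corrected Stinespring for $\cR$ produces a genuine channel $\cR$ with $(\mu')^* \circ \cR = \cP \circ \mu^*$ exactly and $\dnorm{\cR - \tilde{\cR}} = |M'|\, O(\sqrt{\epsilon})$, the $|M'|$ factor arising from matching purification spaces for $\cP$. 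I then extend $\cR$ to a channel $\cT : \bL(N) \to \bL(N')$ via
\[
\cT(\rho) := \cJ' \circ \cR \circ \cJ^*(\rho) + \mathrm{Tr}\bigl[(\cI_N - \cJ \circ \cJ^*)(\rho)\bigr]\, \sigma,
\]
for an arbitrary fixed $\sigma \in \bD(L')$, which is manifestly CP and TP; the three commutation equalities hold by construction (with $\cS := (\cJ')^* \circ \cT$), and $\dnorm{\cT - \tilde{\cT}} \leq |M'|\, O(\sqrt{\epsilon})$ follows from $\dnorm{\cR - \tilde{\cR}}$ combined with the approximate commutation relations.

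The main obstacle is part (b): any naive linear correction of an approximately representing channel destroys complete positivity, which is why the part-(a) construction cannot be directly upgraded. The Stinespring framework is the right tool because linear adjustments of isometries still yield bona fide channels through $V \mapsto \mathrm{Tr}_E(V\,\cdot\,V^\dagger)$. The $\sqrt{\epsilon}$ loss is intrinsic to the one-sided bound in Theorem~\ref{thm:cont-stine}, and the prefactor $|M'|$ is the cost of tracking auxiliary dimensions when lifting the corrected isometry back to the diamond norm on the full input space. Once a representing channel $\cR$ is in hand, extending to the full $\bL(N)$ by the dephasing-and-replace construction above is routine and introduces no further loss.
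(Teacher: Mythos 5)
The paper does not actually prove this lemma --- it is stated as a citation to Kitaev's Lemma~10.3 --- so there is no in-paper argument to compare against directly. The closest thing in the paper is Lemma~\ref{thm:main-ft}, which is a sharpening of part~(b) that removes the $|M'|$ prefactor, proved via Stinespring continuity. Your part~(b) strategy is structurally the same as that lemma: you take the approximately representing channel $\tilde{\cR}$, apply Theorem~\ref{thm:cont-stine} to the pair $(\mu')^*\circ\tilde{\cR}$ and $\cP\circ\mu^*$, rotate purifications so that $(\mu')^*\circ\cR = \cP\circ\mu^*$ holds exactly, and finally extend $\cR$ to a full channel on $\bL(N)$ by a project-and-replace trick. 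If you carried this out carefully --- in particular, by also fixing a Stinespring dilation $X$ of $\tilde{\cR}$ itself and aligning it to the dilation of $(\mu')^*\circ\tilde{\cR}$, as the paper does --- you would actually obtain the dimension-free bound $O(\sqrt{\epsilon})$ rather than $|M'|\,O(\sqrt{\epsilon})$; your attribution of the $|M'|$ factor to ``matching purification spaces'' does not correspond to a genuine loss in the Stinespring argument, so the provenance of that factor in your argument is unclear (it is present in Kitaev's original statement, but your argument does not reproduce the mechanism that produces it, nor do you need it).

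The substantive gap is the ``if'' direction of part~(a). You assert it is ``immediate by the triangle inequality,'' but this skips the real issue: the superoperators $\cR$ and $\cS$ satisfying the exact diagram need not be completely positive or trace preserving, whereas Definition~\ref{def:rep_codes} requires the approximately commuting maps to be quantum channels. Starting from the exact $\cR,\cS$ and passing to the nearby $\cR' = (\cJ')^*\circ\tilde{\cT}\circ\cJ$ and $\cS' = (\cJ')^*\circ\tilde{\cT}$ produces CP but generally non-trace-preserving maps, because $(\cJ')^*$ is only trace non-increasing. So either one must argue that the accuracy version of the definition can be read with general superoperators (and justify why the downstream uses of ``represents with accuracy'' are unaffected), or one must supply an additional step that upgrades $\cR',\cS'$ to genuine channels with an $O(\epsilon)$ correction. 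As written, the ``if'' direction is asserted without proof and is not trivial. Your ``only if'' construction, by contrast, is sound: the additive correction $\cS := \tilde{\cS} + \hat\mu'\circ\Delta\circ\cJ^*$ with $\Delta := \cP\circ\mu^* - (\mu')^*\circ\tilde{\cS}\circ\cJ$ does close the diagram exactly, the norm bookkeeping using $(\cJ')^*\circ\cJ'=\cI$, $\cJ^*\circ\cJ=\cI$, and $(\mu')^*\circ\hat\mu'=\cI$ is correct, and it is legitimate (indeed required by the statement) that the resulting $\cR,\cS,\cT$ are only superoperators.
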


Here we consider part (b) of Lemma~\ref{lem:rep-approx} (however, we note that part (a) is later used in Section~\ref{sec:cons-ft-scheme}). For the transformation rules, we only care about an exact commutativity in terms of the channel $\cR$ in Fig.~\ref{fig:strong-rep} (see Eq.~(\ref{eq:code-perf-mapp}) below). Note that part (b) of Lemma~\ref{lem:rep-approx} implies the following:

\smallskip For $\tilde{\cT}:\bL(N) \to \bL(N')$ and $\cP:\bL(M) \to \bL(M')$ according to part (b) of Lemma~\ref{lem:rep-approx}, there exists a channel $\cR:\bL(L) \to \bL(L')$, such that the left part of the commutative diagram of Figure~\ref{fig:strong-rep} commutes exactly for $\cR$ and $\cP$, that is,
\begin{equation} \label{eq:code-perf-mapp}
      {\mu'}^* \circ \cR = \cP \circ \mu^*, 
  \end{equation}
 and the right part of the diagram with $\cR$ and $\tilde{\cT}$ commutes approximately with accuracy $|M'| O(\sqrt{\epsilon})$, that is, 
\begin{equation} \label{eq:code-imperf-mapp}
    \dnorm{ \tilde{\cT} \circ \cJ -\cJ' \circ \cR} \leq |M'| O(\sqrt{\epsilon}).
\end{equation} 
In the following lemma, we strengthen Eq.~(\ref{eq:code-imperf-mapp}) by removing the dependence on $|M'|$. 
\begin{lemma} \label{thm:main-ft}
    Let $\cP: \bL(M) \to \bL(M')$ and $\tilde{\cT}: \bL(N) \to \bL(N')$ be quantum channels and let $D = (L, \mu)$ and $D' = (L', \mu')$ be codes of type $(N, M)$ and $(N', M')$, respectively. If  $\tilde{\cT}$ weakly represents (see Remark~\ref{rem:weak-rep}) $\cP$ with accuracy parameters $\epsilon_1 = \epsilon_2 = \epsilon$, then there exists a quantum channel $\cR:\bL(L) \to \bL(L')$ satisfying,
    \begin{equation} \label{eq:code-perf-map}
      {\mu'}^* \circ \cR = \cP \circ \mu^*, 
  \end{equation}
 and, 
\begin{equation} \label{eq:code-imperf-map}
    \dnorm{ \tilde{\cT} \circ \cJ -\cJ' \circ \cR} \leq 3 \sqrt{\epsilon}.
\end{equation} 
\end{lemma}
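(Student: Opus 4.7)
My plan is to construct the desired $\cR$ from a channel $\tilde{\cR}$ provided by the weak-representation hypothesis, using Stinespring dilations and the continuity theorem (Theorem~\ref{thm:cont-stine}). By assumption there is a quantum channel $\tilde{\cR}: \bL(L) \to \bL(L')$ with $\dnorm{(\mu')^* \circ \tilde{\cR} - \cP \circ \mu^*} \leq \epsilon$ and $\dnorm{\cJ' \circ \tilde{\cR} - \tilde{\cT} \circ \cJ} \leq \epsilon$. Recall also that $\mu'(X) = U'(X \otimes \ident_{F'})U'^\dagger$ for a unitary $U': M' \otimes F' \to L'$, so that $(\mu')^*(\sigma) = \tr_{F'}(U'^\dagger \sigma U')$.

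First I fix a Stinespring isometry $W: L \to L' \otimes E$ for $\tilde{\cR}$ and note that, because $(\mu')^*$ is itself the conjugation by $U'^\dagger$ followed by a partial trace over $F'$, the isometry
\[V := (U'^\dagger \otimes \ident_E)\, W : L \to M' \otimes F' \otimes E\]
is automatically a Stinespring dilation of $(\mu')^* \circ \tilde{\cR}$. I then invoke Theorem~\ref{thm:cont-stine} with this fixed $V$: since $\dnorm{(\mu')^* \circ \tilde{\cR} - \cP \circ \mu^*} \leq \epsilon$, one obtains (after possibly enlarging $E$) a Stinespring dilation $V_*: L \to M' \otimes F' \otimes E$ of $\cP \circ \mu^*$ satisfying $\opn{V - V_*} \leq \sqrt{\epsilon}$. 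From $V_*$ I define
\[W_* := (U' \otimes \ident_E)\, V_* : L \to L' \otimes E, \qquad \cR(\rho) := \tr_E\!\big(W_* \rho W_*^\dagger\big).\]
Using $U'^\dagger U' = \ident_{M' \otimes F'}$ yields at once $(\mu')^* \circ \cR = \tr_{F' \otimes E}\big(V_*(\cdot) V_*^\dagger\big) = \cP \circ \mu^*$, which is Eq.~(\ref{eq:code-perf-map}). For the second claim, since $W$ maps into $L' \otimes E$ and $U' U'^\dagger = \ident_{L'}$, we have $W = (U' \otimes \ident_E)\, V$, and as $U' \otimes \ident_E$ is unitary, $\opn{W - W_*} = \opn{V - V_*} \leq \sqrt{\epsilon}$. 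Applying Theorem~\ref{thm:cont-stine} in the reverse direction then gives $\dnorm{\tilde{\cR} - \cR} \leq 2\sqrt{\epsilon}$, and the triangle inequality together with $\dnorm{\cJ' \circ \cX} \leq \dnorm{\cX}$ and $\epsilon \leq \sqrt{\epsilon}$ (the statement being trivial when $\epsilon > 1$) yields
\[\dnorm{\cJ' \circ \cR - \tilde{\cT} \circ \cJ} \leq \dnorm{\cJ' \circ (\cR - \tilde{\cR})} + \dnorm{\cJ' \circ \tilde{\cR} - \tilde{\cT} \circ \cJ} \leq 2\sqrt{\epsilon} + \epsilon \leq 3\sqrt{\epsilon}.\]

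The step I expect to be most delicate is the application of Theorem~\ref{thm:cont-stine} with the first dilation \emph{prescribed} to be $V$ on the environment $F' \otimes E$: the theorem as stated gives only an infimum over pairs of dilations. The slightly stronger one-sided version I need is a routine consequence, however, of the uniqueness of Stinespring dilations up to an isometry on the environment. Namely, if $(V^{(0)}, V_*^{(0)})$ achieves the bound of Theorem~\ref{thm:cont-stine} for $(\mu')^* \circ \tilde{\cR}$ and $\cP \circ \mu^*$ on some common environment, then after padding $E$ so that the environments match one writes $V = (\ident_{M'} \otimes X) V^{(0)}$ for an isometry $X$, and the dilation $V_* := (\ident_{M'} \otimes X) V_*^{(0)}$ of $\cP \circ \mu^*$ satisfies $\opn{V - V_*} = \opn{V^{(0)} - V_*^{(0)}} \leq \sqrt{\epsilon}$. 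Working at the level of isometries in this way is precisely what removes the $|M'|$ factor present in Kitaev's Lemma~\ref{lem:rep-approx}(b), which arises from the Choi-matrix-based construction of $\cR$ used there.
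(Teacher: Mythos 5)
Your proof is correct and takes essentially the same route as the paper: both proofs obtain a nearby Stinespring dilation of $\cP \circ \mu^*$ on an environment of the form $F' \otimes (\text{aux})$, pull it back through $U'$ to get $\cR$, and bound $\dnorm{\tilde\cR-\cR}$ by $2\sqrt\epsilon$ via Theorem~\ref{thm:cont-stine}. The only organizational difference is that you fix a dilation of $\tilde\cR$ first and then argue for a one-sided version of the continuity theorem, whereas the paper takes the pair $(V,W)$ from the theorem first and introduces the isometry $Y$ afterwards to bridge to a dilation of $\tilde\cR$; the two are equivalent uses of the uniqueness of Stinespring dilations up to an isometry on the environment, and both correctly remove the $|M'|$ factor from Lemma~\ref{lem:rep-approx}(b).
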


\begin{proof}
We will use $\mu^* = \tr_F \circ \: \cU^\dagger$, where $\cU^\dagger := U^\dagger(\cdot)U$, for the unitary $U: M \otimes F \to L$. Similarly, $(\mu')^* = \tr_{F'} \circ \: (\cU')^\dagger$ for the unitary $U': M' \otimes F' \to L$.
    
\smallskip As $\tilde{\cT}$ represents $\cP$ in  $D, D'$ with $\epsilon_1 = \epsilon_2 = \epsilon$, there exists a channel $\tilde{\cR}:\bL(L) \to \bL(L^\prime)$, satisfying 
\begin{align} 
    \dnorm{(\mu')^* \circ \tilde{\cR}   -  \cP \circ \mu^*} \leq \epsilon  \label{eq:strong-rep-qinout-1} \\
    \dnorm{ \tilde{\cT} \circ \cJ  - \cJ' \circ \tilde{\cR}} \leq \epsilon. \label{eq:strong-rep-qinout-2}
\end{align}
Using Theorem~\ref{thm:cont-stine}, there exist Stinespring dilations $V, W:  L \to M' \otimes G $ of $(\mu')^*\circ \tilde{\cR}$ and $\cP \circ \mu^*$, respectively such that,
\begin{equation} \label{eq:V-W-op}
    \opn{V - W} \leq \sqrt{\epsilon}.
\end{equation}
Let $X: L \to L' \otimes K$ be a Stinespring dilation of $\tilde{\cR}$.  Then, the following isometry is also a Stinespring dilation of $(\mu')^* \circ \tilde{\cR} = \tr_{F'} \circ \: (\cU')^\dagger \circ \tilde{\cR}$,
\begin{equation}
  ((U')^\dagger \otimes \ident_K) X  : L \to (M' \otimes F') \otimes K.  
\end{equation}
Therefore, there exists an isometry $Y:G \to F' \otimes K$ connecting the two purifications of  $(\mu')^* \circ \tilde{\cR}$, that is,
\begin{equation} \label{eq:X-V}
  ((U')^\dagger \otimes \ident_K) X = (\ident \otimes Y) V.   
\end{equation}
We now consider the isometry 
 \begin{equation} \label{eq:Z-def}
  Z : L \to  L' \otimes K, Z = (U' \otimes \ident_K)(\ident_{M'} \otimes Y)W.    
 \end{equation}
 We take $$\cR(\cdot) := \tr_K(Z (\cdot) Z^\dagger).$$ 
 As $(\ident_{M'} \otimes Y) W:  L \to (M' \otimes F') \otimes K$ is also a Stinespring dilation of $\cP \circ \mu^*$, we have 
\begin{align}
(\mu')^* \circ \tr_K(Z (\cdot)Z^\dagger) &=  \tr_{F'} \circ \tr_K \circ \big( (\ident \otimes Y) W (\cdot) W^\dagger (\ident \otimes Y^\dagger)\big) \nonumber\\
&= \cP \circ \mu^*.
\end{align}
Moreover, we have
 \begin{align}
 \opn{X  -  Z }  &=  \opn{(U' \otimes \ident_K)(\ident \otimes Y) V -  (U'  \otimes \ident_K)(\ident \otimes Y)W} \nonumber \\
  & = \opn{ V  -  W }  \leq  \sqrt{\epsilon}, \label{eq:X-Z}
 \end{align}
where the first equality uses Eqs.~(\ref{eq:X-V}), and (\ref{eq:Z-def}), and the second equality uses Eq.~(\ref{eq:V-W-op}).

\smallskip Now, using $ \tilde{\cR} =\tr_K(X(\cdot)X^\dagger)$, $\cR =\tr_K(Z(\cdot)Z^\dagger)$ and Theorem~\ref{thm:cont-stine}, we get 
\begin{equation} \label{eq:R-T'-e}
    \dnorm{ \tilde{\cR} - \cR } \leq 2 \sqrt{\epsilon}. 
\end{equation}
From Eq.~(\ref{eq:strong-rep-qinout-2}) and~(\ref{eq:R-T'-e}), we get 
\begin{align} \label{eq:T-R}
    \dnorm{\tilde{\cT} \circ \cJ -  \cJ' \circ \cR} \leq 3\sqrt{\epsilon}.
\end{align}
\end{proof}
We now consider $\cR: \bL(L) \to \bL(L')$, and $\cP: \bL(M) \to \bL(M')$ satisfying Eq.~(\ref{eq:code-perf-map}). We suppose that $\cP$ corresponds to a unitary, a pure state preparation, or a measurement gate (respectively, in Corollary~\ref{cor:tprime-unitary}, \ref{cor:tprime-stprep} and \ref{cor:tprime-measurement}), and we provide transformation rules for the quantum channel $\cR$, showing how it acts on systems $M$ and $F$ of $L = U (M \otimes F)$.
\begin{corollary}[Unitary gate] \label{cor:tprime-unitary}
  Let $\cP: \bL(M) \to \bL(M)$ be a unitary. Consider $\cR: \bL(L) \to \bL(L)$, satisfying Eq.~(\ref{eq:code-perf-map}). Then,  $\cR$ is as follows (see also Fig.~\ref{fig:Tprime-ind}):
\begin{equation}
    \cU^\dagger \circ \cR  =  (\cP \otimes \cF) \circ \cU^\dagger, 
\end{equation} 
where $\cU := U(\cdot)U^\dagger$, and $\cF: \bL(F) \to \bL(F)$ is some quantum channel. 
\end{corollary}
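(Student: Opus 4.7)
The plan is to pass the condition $\mu^* \circ \cR = \cP \circ \mu^*$ through the unitary $U$ that defines the code, reducing the problem to a factorization statement about a channel on the bipartite space $M \otimes F$. Using $\mu^* = \tr_F \circ \cU^\dagger$ and setting $\cR' := \cU^\dagger \circ \cR \circ \cU$, the hypothesis becomes $\tr_F \circ \cR' = \cP \circ \tr_F$. Because $\cP$ is a unitary channel, its inverse $\cP^{-1}$ is also CPTP, so I can pre-compose with $\cP^{-1} \otimes \cI_F$ and define $\tilde{\cR} := (\cP^{-1} \otimes \cI_F) \circ \cR'$, a channel on $\bL(M \otimes F)$ satisfying the clean constraint $\tr_F \circ \tilde{\cR} = \tr_F$, i.e.\ one that preserves the $M$-marginal of every input.

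The heart of the argument is then a factorization lemma: every CPTP map on $\bL(M \otimes F)$ whose composition with $\tr_F$ equals $\tr_F$ must have the product form $\cI_M \otimes \cF$ for some channel $\cF$ on $F$. The slickest route uses the Choi--Jamiolkowski isomorphism. For a fixed $\tau_F \in \bD(F)$, I would feed $\tilde{\cR}$ the input $\Phi^+_{M \hat M} \otimes \tau_F$, where $\hat M$ is a reference copy of $M$ and $\Phi^+$ denotes the maximally entangled state. The marginal-preservation property forces the output on $M \hat M$ to remain $\Phi^+$; since this is pure, the joint output on $M \hat M F$ must factor as $\Phi^+_{M \hat M} \otimes \omega_F(\tau_F)$ for some state $\omega_F(\tau_F)$. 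Reading this output as the Choi state of the channel $\sigma \mapsto \tilde{\cR}(\sigma \otimes \tau_F)$ identifies the latter as $\sigma \mapsto \sigma \otimes \omega_F(\tau_F)$. Setting $\cF(\tau_F) := \omega_F(\tau_F)$ and extending by linearity, one gets $\tilde{\cR} = \cI_M \otimes \cF$ on the spanning set of product operators, hence everywhere; complete positivity and trace preservation of $\cF$ are inherited from $\tilde{\cR}$.

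Combining these steps yields $\cR' = (\cP \otimes \cI_F) \circ (\cI_M \otimes \cF) = \cP \otimes \cF$, and composing on the right with $\cU^\dagger$ while using $\cU \circ \cU^\dagger = \cI_L$ on $\bL(L)$ (valid because $U$ is unitary onto $L$) delivers the desired identity $\cU^\dagger \circ \cR = (\cP \otimes \cF) \circ \cU^\dagger$. The main obstacle I anticipate is the factorization lemma of the second paragraph: although it is a standard ``no-signaling'' style fact, the argument has to combine the purification principle (a mixed tripartite state with a pure bipartite marginal is a product state) with the Choi--Jamiolkowski correspondence, and one must be careful that the resulting $\cF$ is well-defined independently of $\tau_F$ and extends to a linear CPTP map on all of $\bL(F)$; that is where I would spend most of the effort.
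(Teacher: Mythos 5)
Your proof is correct, and it takes a genuinely different route from the paper. Both proofs start from the same reduction, namely rewriting the hypothesis $\mu^* \circ \cR = \cP \circ \mu^*$ as the marginal-preservation constraint $\tr_F \circ (\cU^\dagger \circ \cR \circ \cU) = \cP \circ \tr_F$ on $\bL(M \otimes F)$, and both use the unitarity of $\cP$ in an essential way. But the mechanism differs. The paper invokes the uniqueness of Stinespring dilations: it notes that $P \otimes \ident_F$ is one dilation of the channel $\tr_F \circ \cU^\dagger \circ \cR \circ \cU$ (with environment $F$), that $(U^\dagger \otimes \ident_G)\,X\,U$ is another (with environment $F \otimes G$, where $X$ dilates $\cR$), and deduces an isometry $Y: F \to F \otimes G$ with $P \otimes Y = (U^\dagger \otimes \ident_G)\,X\,U$; tracing out $G$ then yields $\cF = \tr_G(Y(\cdot)Y^\dagger)$. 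You instead first peel off $\cP$ by post-composing with the CPTP inverse $\cP^{-1} \otimes \cI_F$ (note this is post-, not pre-composition as written), reducing to the case $\cP = \cI$, and then prove a clean factorization lemma for marginal-preserving channels via the Choi--Jamio{\l}kowski isomorphism combined with the ``pure bipartite marginal forces a product state'' principle. Your route isolates a reusable no-signaling-type lemma and avoids explicit bookkeeping of dilation spaces, at the cost of invoking two separate standard facts (Choi correspondence and the purity/product principle) rather than the single uniqueness theorem for Stinespring representations; in terms of generality and rigor the two are on equal footing.
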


\begin{proof}
Using Eq.~(\ref{eq:code-perf-map}), we have that 
  \begin{equation}
      \tr_F \circ \: {\cU^\dagger} \circ \cR  = \cP \circ \tr_F \circ \: {\cU^\dagger} 
  \end{equation}
 Therefore, we have 
  \begin{align} 
      \tr_F \circ \: {\cU^\dagger} \circ \cR \circ \cU  & = \cP \circ \tr_F   \\
      & =  \tr_F  \circ (\cP \otimes \cI_{F}) \label{eq:P-unitary}
  \end{align}
As $\cP$ is a unitary channel, we have $\cP = P (\cdot) P^\dagger$ for a unitary $P$. From Eq.~\ref{eq:P-unitary}, it follows that $P \otimes \ident_F $ is a Stinespring dilation of the quantum channel $\tr_F \circ \: {\cU^\dagger} \circ \cR \circ \cU $.

\smallskip Let $X: L \to L \otimes G$ be a Stinespring dilation of the quantum channel $\cR$. It follows that $(U^\dagger \otimes \ident_G) X (U \otimes \ident_G)$ is also a Stinespring dilation of $ \tr_F \circ \:{\cU^\dagger} \circ \cR \circ \cU $. Therefore, there exists an isometry $Y: F \to F \otimes G$ such that 
\begin{equation} \label{eq:P-Y}
  P \otimes Y =  (U^\dagger \otimes \ident_G) X (U \otimes \ident_G).
\end{equation}
Taking the quantum channels corresponding to the isometries on both sides of Eq.~(\ref{eq:P-Y}) and then tracing out system $G$, gives
\begin{equation}
     {\cU^\dagger} \circ \cR \circ \cU  =   \cP \otimes \cF,
\end{equation}
where $\cF = \tr_G( Y (\cdot) Y^\dagger)$.
\end{proof}

\begin{figure}[!b]
\centering
\begin{tikzpicture}
\draw
(0, 0) node (a){}
(2, 0) node[draw, minimum width =   15mm, minimum height = 15mm] (b) {$\cR$}
(4, 0) node[draw, minimum width =   15mm, minimum height = 15mm] (c) {$\cU^\dagger$}
(6.0, 0) node[] (d){$=$}
;
\draw
(a) to (b.west)
(b.east) to  (c.west)
(c.east) to ++ (0.5, 0)
;
\draw
(6.4, 0) node (e){}
(6.4, 1) node (g){}
 (8.2, 0) node[draw, minimum width =   15mm, minimum height = 15mm] (f) {$\cU^\dagger$}
(10.2, 0.6) node[draw, minimum width =   8mm, minimum height = 8mm] (g) {$\cP$}
(10.2, -0.6) node[draw, minimum width =   8mm, minimum height = 8mm] (h) {$\cF$}
;
\draw
(e) to (f.west)
(g.west -| f.east) to (g.west)
(h.west -| f.east) to (h.west)
(g.east) to ++(0.5, 0)
(h.east) to ++(0.5, 0)
;
\end{tikzpicture}
\caption{ $\cR$ for a unitary $\cP$.}
\label{fig:Tprime-ind}
\end{figure}

\begin{corollary}[Preparation of a pure state] \label{cor:tprime-stprep}
Let $\cP: \emptyset \to \bL(L)$ be a state preparation channel, with no input and output in $\bL(L)$. Consider $\cR: \emptyset \to \bL(L)$, satisfying Eq.~(\ref{eq:code-perf-map}). Let $\cP(\cdot)$ denote the state prepared by $\cP$. If $\cP(\cdot) =  \proj{\psi}$, then the corresponding $\cR$ is as follows (see also Fig.~\ref{fig:Tprime-prep})
\begin{equation} \label{eq:pure-state-prep}
   {\cU^\dagger} \circ \cR(\cdot) =  \proj{\psi} \otimes \gamma,
\end{equation}
for some $\gamma \in \bD(F)$ and $\cU := U(\cdot)U^\dagger$.
\end{corollary}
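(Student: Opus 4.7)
The plan is to reduce the statement to the well-known fact that a bipartite density operator whose marginal is a pure state must be a product state on that marginal.

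First, I would rewrite the hypothesis $\mu^* \circ \cR = \cP \circ \mu^*$ in the zero-input setting. Since the input system is trivial, $\mu^*$ on the input side acts on the one-dimensional scalar system and disappears, so the identity collapses to
\begin{equation*}
  \mu^* \circ \cR = \cP,
\end{equation*}
i.e., $\tr_F \bigl(U^\dagger \cR(1) U\bigr) = \proj{\psi}$. Equivalently, letting $\sigma := (\cU^\dagger \circ \cR)(1) \in \bD(M \otimes F)$, we have $\tr_F(\sigma) = \proj{\psi}$.

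Next, I would invoke the standard lemma: if a bipartite state $\sigma_{MF}$ satisfies $\tr_F(\sigma_{MF}) = \proj{\psi}$ for a pure $\ket{\psi}\in M$, then $\sigma_{MF} = \proj{\psi}\otimes \gamma$ for some $\gamma\in \bD(F)$. This is immediate from the spectral/purification viewpoint: any purification of $\proj{\psi}$ is of the form $\ket{\psi}\otimes \ket{\varphi}$ in a larger space, and tracing down to $F$ preserves this product structure. Alternatively, one can argue via the Cauchy--Schwarz inequality $\bra{\psi^\perp} \sigma_{MF} \ket{\psi^\perp} \geq 0$ combined with the marginal condition to force the off-diagonal blocks in the $\{\ket{\psi}, \ket{\psi^\perp}\}$ decomposition of $M$ to vanish.

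Applying this to $\sigma = (\cU^\dagger \circ \cR)(1)$ yields exactly $(\cU^\dagger \circ \cR)(1) = \proj{\psi}\otimes \gamma$, which is Eq.~(\ref{eq:pure-state-prep}). I do not expect any substantial obstacle here: unlike the unitary case in Corollary~\ref{cor:tprime-unitary}, no Stinespring dilation argument is needed because the purity of $\ket{\psi}$ collapses all possible correlations between $M$ and $F$ at the level of the joint state itself.
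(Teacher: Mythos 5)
Your proposal is correct and follows essentially the same route as the paper: reduce Eq.~(\ref{eq:code-perf-map}) to $\mu^*\circ\cR=\cP$ (since the input system is trivial), observe that this means $\tr_F\bigl((\cU^\dagger\circ\cR)(1)\bigr)=\proj{\psi}$, and conclude by the standard fact that a bipartite state with a pure marginal is a product state. The paper leaves the last step as a one-line "therefore," whereas you spell out why a pure marginal forces the product structure; this is exactly the implicit lemma the paper is using.
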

\begin{proof}
From Theorem~\ref{thm:main-ft}, we have that
\begin{equation}
 {\mu^*} \circ \cR = \cP.
\end{equation}
Since $\cP$ prepares $\proj{\psi}$, this implies $\tr_{F}({\cU^\dagger} \circ \cR(\cdot)) =  \proj{\psi}$. Therefore, ${\cU^\dagger} \circ \cR(\cdot) = \proj{\psi} \otimes \gamma$.
\end{proof}
 
\begin{figure}[!t]
\centering
\begin{tikzpicture}
\draw
(0, 0) node (a){}
(2, 0) node[draw, minimum width =   15mm, minimum height = 15mm] (b) {$\cR$}
(4, 0) node[draw, minimum width =   15mm, minimum height = 15mm] (c) {$\cU^\dagger$}
(6.0, 0) node[] (d){$=$}
;
\draw
(b.east) to  (c.west)
(c.east) to ++ (0.5, 0)
;
\draw
(7.5, 0.6) node[draw, minimum width =   8mm, minimum height = 8mm] (g) {$\cP$}
(7.5, -0.8) node[above] (h) {$\gamma$}
;
\draw
(g.east) to ++(0.5, 0)
(h.east) to ++(0.5, 0)
(h.west) to ++(-0.5, 0)
;
\draw[double, double distance = 0.45mm]
(a) to (b.west)
(g.west) to ++(-0.5, 0)
;

\end{tikzpicture}
\caption{ $\cR$ for the state preparation channel $\cP$.}
\label{fig:Tprime-prep}
\end{figure}
Finally, we consider $\cP:\bL(M) \to \mathbb{B}_d$ is a measurement channel that outputs a $d$-dimensional classical system $\mathbb{B}_d$. As classical systems are encoded into the trivial code, Eq.~(\ref{eq:code-perf-map}) gives $\cR = \cP \circ \mu^*$, which implies the following.

\begin{corollary}\label{cor:tprime-measurement}
Let $M' = \mathbb{B}_d$ be a $d$-dimensional classical system.  Let $\cP:\bL(M) \to \mathbb{B}_d$ and $\cR:\bL(L) \to \mathbb{B}_d$ be measurement channels such that $\cR = \cP \circ \mu^*$. Then, $\cR$ is given by (see also Fig.~\ref{fig:R-measurement})
\begin{equation}
    \cR = (\cP  \otimes \tr_F) \circ \cU^\dagger,
\end{equation}
where $\cU := U(\cdot)U^\dagger$.
\end{corollary}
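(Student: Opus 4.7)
The plan is to observe that the proof is essentially a one-line unpacking of definitions, with the only subtlety being a notational identification between composition and tensor product when one factor is the partial trace.

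First I would recall that for the code $(L,\mu)$ the decoder is given by $\mu^* = \tr_F \circ \, \cU^\dagger$, where $\cU^\dagger : \bL(L) \to \bL(M \otimes F)$ is the unitary channel associated with $U : M \otimes F \to L$. Substituting this into the hypothesis $\cR = \cP \circ \mu^*$ immediately yields
\begin{equation*}
\cR = \cP \circ \tr_F \circ \, \cU^\dagger.
\end{equation*}

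Next I would justify the rewriting $\cP \circ \tr_F = \cP \otimes \tr_F$ as maps from $\bL(M \otimes F)$ to $\mathbb{B}_d$. Since $\tr_F : \bL(F) \to \mathbb{C}$ has one-dimensional codomain, the tensor product $\cP \otimes \tr_F$ is canonically identified with the composition $\cP \circ \tr_F$ after identifying $\bL(M') \otimes \mathbb{C} \cong \bL(M')$; this is just how partial trace factors through the tensor structure. Combining this with the previous display gives the claimed identity $\cR = (\cP \otimes \tr_F) \circ \cU^\dagger$, matching Fig.~\ref{fig:R-measurement}.

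There is no real obstacle here: unlike Corollaries~\ref{cor:tprime-unitary} and \ref{cor:tprime-stprep}, which required a Stinespring dilation argument to extract nontrivial structure on the syndrome factor $F$, here the classical output forces the syndrome to be traced out and the identity follows directly from Eq.~(\ref{eq:code-perf-map}) together with the definition of $\mu^*$. I would note in passing that this matches the intuition that a measurement gate cannot use syndrome information on the output side (since the output is trivially encoded), so the only structural constraint on $\cR$ is that it decodes the input and then measures.
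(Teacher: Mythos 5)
Your proof is correct and matches the paper's own argument exactly: substitute $\mu^* = \tr_F \circ \cU^\dagger$ into the hypothesis $\cR = \cP \circ \mu^*$ and identify $\cP \circ \tr_F$ with $\cP \otimes \tr_F$. The extra justification you give for the tensor-versus-composition identification is a reasonable expository addition but not a departure from the paper's route.
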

\begin{proof}
We have 
\begin{align*}
    \cR &= \cP \circ \mu^*  \\
        &= \cP \circ \tr_F \circ \: \cU^\dagger \\
        &= (\cP \otimes \tr_F) \circ \cU^\dagger.
\end{align*}
\end{proof}
    
\begin{figure}[!htbp]
\centering
\begin{tikzpicture}
\draw
(0, 0) node (a){}
(1, 0) node[draw, minimum width =   15mm, minimum height = 15mm] (c) {$\cR$}
(3.0, 0) node[] (d){$=$}
;
\draw
(c.east) to ++ (0.5, 0)
(c.west) to ++ (-0.5, 0)
;
\draw
(3.4, 0) node (e){}
(3.4, 1) node (g){}
(5.2, 0) node[draw, minimum width =   15mm, minimum height = 15mm] (f) {$\cU^\dagger$}
(7.2, 0.6) node[draw, minimum width =   8mm, minimum height = 8mm] (g) {$\cP$}
(7.2, -0.6) node[draw, minimum width =   8mm, minimum height = 8mm] (h) {$\tr_F$}
;
\draw
(e) to (f.west)
(g.west -| f.east) to (g.west)
(h.west -| f.east) to (h.west)
(g.east) to ++(0.5, 0)
;
\end{tikzpicture}
\caption{ $\cR$ for a measurement $\cP$.}
\label{fig:R-measurement}
\end{figure}

\section{ A Fault-tolerant realization of quantum input/output} \label{sec:ft-real}

\textbf{Notation.} 
For the quantum circuit $\Phi$, we use $|\Phi| := |\mathrm{Loc}(\Phi)|$, and  $\Phi_k$ denotes the circuit obtained by replacing each location $g_{i, j}$ in $\Phi$ by the corresponding $\Psi_{g_{i, j}, k}$ from Def.~\ref{def:ft_scheme} or Def.~\ref{def:ft_scheme_stoc}. We use $\cT_\Phi$ to denote the quantum channel realized by $\Phi$, and $\tilde{\cT}_\Phi$ to denote an arbitrary element in $\trans(\Phi, \delta)$.

\medskip In the remark below, we note that a quantum circuit containing classical control gates can be seen as collection of quantum circuits that do not contain any classical control gates. Therefore, for the sake of clarity, from now on, we choose to work with quantum circuits, without any classical-controlled gate. Whatever we do in the following sections can be easily generalized to include classical control unitary gates, using Eq.~(\ref{eq:con-uni}).

\begin{remark} \label{rem:no-bit-input}
Consider a circuit $\Phi$, containing  classical control gates. A classical control gate $g$, with $b$ control bits, corresponds to a collection of gates $\{g_\mathbf{u}\}_{\mathbf{u} \in \{ 0, 1\}^{b}}$ (see also Remark~\ref{rem:controlled-unit}), without classical control. Therefore, $\Phi$ corresponds to a collection of quantum circuits $\{\Phi^{\mathbf{v}}\}_\mathbf{v}$, without any classical control. Here, $\mathbf{v} \in \{ 0, 1\}^t$, where $t$ is the number of all control bits in $\Phi$.    The quantum channel corresponding to $\Phi$ is then given by,
\begin{equation} \label{eq:con-uni}
    \cT_\Phi = \sum_{\mathbf{v}}  \proj{\mathbf{v}} (\cdot) \proj{\mathbf{v}}\otimes  \cT_{\Phi^\mathbf{v}}.
\end{equation}
\end{remark}

\subsection{Fault-tolerant realization for general circuit noise} \label{sec:qin-out-gen}
In this section, we consider general circuit noise, and provide fault-tolerant realizations of quantum circuits with quantum input and output, quantum input and classical output, and classical input and quantum output.  Throughout this section, we shall consider constants $c , \delta$ from Def.~\ref{def:ft_scheme}.
\subsubsection{Quantum circuits with quantum input and output} \label{sec:q-in-out}
Below, we will first prove Lemma~\ref{lem:Phi-code-imp} for the quantum circuit $\Phi_k$, Lemma~\ref{lem:interface-1} for the interface $\Gamma_{1, k}$, and Lemma~\ref{lem:interface-2} for the interface $\Gamma_{k, 1}$. We will then combine them together to prove Theorem~\ref{thm:main-ft2}, regarding fault-tolerant realization of quantum input/output.

\begin{lemma} \label{lem:Phi-code-imp}
Consider a quantum circuit $\Phi$,  with $n$ qubit input, $n'$ qubit output, and composed of  unitary, state preparation, and quantum measurement gates. For an integer $k$, consider the code $D_k = (L_k, \mu_k)$, $L_k = U_k(\mathbb{C}^2 \otimes F_k) \subseteq N_k$, and the circuits $\Psi_{g, k}$ for each $g \in \Phi$ according to Def.~\ref{def:ft_scheme}. Then, for the encoded circuit $\Phi_k$, any $\tilde{\cT}_{\Phi_k} \in \trans(\Phi_k, \delta)$ satisfies
\begin{equation} \label{eq:tilde-R-1}
    \dnorm{ \tilde{\cT}_{\Phi_k} \circ \cJ_k^{\otimes n}  - \cJ_k^{\otimes n'} \circ \cU_k^{\otimes n'} \circ (\cT_{\Phi} \otimes \cF) \circ {\cU_k^\dagger}^{\otimes n} } \leq 3 |\Phi| \sqrt{ (c \delta)^k}, 
\end{equation}
where $\cU_k := U_k (\cdot) U_k^\dagger$,  $\cF: \bL(F_k^{\otimes n}) \to \bL(F_k^{\otimes n'})$ is a quantum channel, $\cJ_k: \bL(L_k) \to \bL(N_k)$ is the natural embedding, and $|\Phi|$ is the size of $\Phi$.
\end{lemma}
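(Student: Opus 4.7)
The plan is to apply Lemma~\ref{thm:main-ft} together with the transformation rules of Corollaries~\ref{cor:tprime-unitary}, \ref{cor:tprime-stprep} and \ref{cor:tprime-measurement} gate by gate, and then telescope. Enumerate the locations of $\Phi$ in circuit order as $g_1,\dots,g_{|\Phi|}$, so that $\tilde{\cT}_{\Phi_k}=\tilde{\cT}_{|\Phi|}\circ\cdots\circ\tilde{\cT}_1$, where $\tilde{\cT}_i$ denotes the noisy realization of $\Psi_{g_i,k}$ padded with the identity on the $N_k$-spaces of the qubits that $g_i$ does not touch. By Def.~\ref{def:ft_scheme}, each $\tilde{\cT}_i$ represents $\cT_{g_i}$ in $D_k$ with accuracy $(c\delta)^k$, so Lemma~\ref{thm:main-ft} yields a quantum channel $\cR_i$ satisfying ${(\mu_k^*)^{\otimes q_i'}}\circ \cR_i=\cT_{g_i}\circ{(\mu_k^*)^{\otimes q_i}}$ exactly and
\begin{equation}
\dnorm{\tilde{\cT}_i\circ \cJ_k^{\otimes q_i}-\cJ_k^{\otimes q_i'}\circ \cR_i}\leq 3\sqrt{(c\delta)^k},
\end{equation}
where $q_i,q_i'$ denote the numbers of qubits on which $g_i$ acts as input and output.

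The transformation rules then force each $\cR_i$ to factorise nicely across the decomposition $L_k\cong \cU_k(\mathbb{C}^2\otimes F_k)$. Corollary~\ref{cor:tprime-unitary} gives $\cU_k^{\otimes q_i'}\circ \cR_i\circ(\cU_k^\dagger)^{\otimes q_i}=\cT_{g_i}\otimes \cF_i$ for a syndrome channel $\cF_i$ on $F_k^{\otimes q_i}\to F_k^{\otimes q_i'}$ when $g_i$ is unitary; Corollary~\ref{cor:tprime-stprep} gives an analogous identity in which the syndrome side is a state preparation $\gamma_i\in\bD(F_k^{\otimes q_i'})$ when $g_i$ is a pure-state preparation; and Corollary~\ref{cor:tprime-measurement} gives one in which the syndrome side is $\tr_{F_k^{\otimes q_i}}$ when $g_i$ is a measurement. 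Padding each such identity with the trivial action $\cU_k\circ \cU_k^\dagger=\cI$ on the $L_k$-spaces of the idle qubits at that time step and composing in circuit order, the data factors multiply to $\cT_\Phi$ while the syndrome factors compose into a single quantum channel $\cF:\bL(F_k^{\otimes n})\to\bL(F_k^{\otimes n'})$, yielding the exact identity
\begin{equation}
\cJ_k^{\otimes n'}\circ \cR_{|\Phi|}\circ\cdots\circ \cR_1=\cJ_k^{\otimes n'}\circ \cU_k^{\otimes n'}\circ(\cT_\Phi\otimes\cF)\circ(\cU_k^\dagger)^{\otimes n}.
\end{equation}

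The last step is a standard telescoping argument. Using that the diamond norm is non-increasing under composition and tensorisation with quantum channels, I would insert the approximation of Lemma~\ref{thm:main-ft} into the product $\tilde{\cT}_{|\Phi|}\circ\cdots\circ\tilde{\cT}_1\circ \cJ_k^{\otimes n}$ one factor at a time, trading each $\tilde{\cT}_i\circ \cJ_k^{\otimes q_i}$ for $\cJ_k^{\otimes q_i'}\circ \cR_i$ at an additive cost of $3\sqrt{(c\delta)^k}$; the triangle inequality over the $|\Phi|$ replacements then produces the bound $3|\Phi|\sqrt{(c\delta)^k}$, which together with the preceding identity gives the claim. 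The main obstacle I anticipate is the bookkeeping for gates that act on proper subsets of qubits: one must verify that padding $\cR_i$ by the identity channel on the $L_k$-spaces of the untouched qubits is consistent with the transformation rules under conjugation by $\cU_k$, so that the pieces $\cF_i$ concatenate into a well-defined channel $\cF$ without spurious interference between unrelated syndrome registers and so that the telescoping is not obstructed by a change of the underlying tensor factors from one layer to the next.
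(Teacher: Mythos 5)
Your proposal is correct and follows essentially the same route as the paper: apply Lemma~\ref{thm:main-ft} per location to obtain channels $\cR_i$ with exact code-commutativity and a $3\sqrt{(c\delta)^k}$ defect, use Corollaries~\ref{cor:tprime-unitary}, \ref{cor:tprime-stprep} and \ref{cor:tprime-measurement} to show each $\cR_i$ factorises as a data channel tensored with a syndrome channel under conjugation by $\cU_k$, and then telescope over the $|\Phi|$ locations. The bookkeeping concern you flag at the end is legitimate but harmless here, for the reason you suspect: since every qubit at every layer of $\Phi$ carries an explicit gate (idle gates included, per Remark~\ref{rem:idle}), padding a single location with the identity on the untouched $L_k$-factors and then conjugating by $\cU_k^{\otimes}$ simply tensors the identity onto both the $\mathbb{C}^2$-side and the $F_k$-side, so the per-location identities $\cU_k^{\dagger\otimes q_i'}\circ\cR_i\circ\cU_k^{\otimes q_i}=\cT_{g_i}\otimes\cF_i$ concatenate cleanly into $\cT_\Phi\otimes\cF$.
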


\begin{proof}
Consider the set of locations of $\Phi$, that is, $\mathrm{Loc}(\Phi)$. Recall that $\Phi_k$ is obtained by replacing $g_{i, j}$ by $\Psi_{g_{i, j}, k}$ for all $g_{i, j} \in \mathrm{Loc}(\Phi)$. By definition, any $\tilde{\cT}_{\Psi_{g_{i, j}, k}} \in \trans(\Psi_{g_{i, j},k}, \delta)$ represents  $\cT_{g_{i, j}}$ with accuracy $\epsilon_1 = \epsilon_2 = \epsilon_3 = (c \delta)^k$ in $D_k = (L_k, \mu_k)$. We take $\cR_{\Psi_{g_{i, j}, k}}$ according to Lemma~\ref{thm:main-ft}, satisfying,
   \begin{equation} \label{eq:code-perf-map-1}
      {\mu^*}^{\otimes q_{\mathrm{out}}} \circ \cR_{\Psi_{g_{i, j}, k}} = \cT_{g_{i,j}} \circ {\mu^*}^{\otimes q_{\mathrm{in}}}, 
  \end{equation}
 and furthermore, 
\begin{equation} \label{eq:code-imperf-map-1}
    \dnorm{ \tilde{\cT}_{\Psi_{g_{i, j}}} \circ \cJ^{\otimes q_{\mathrm{in}}} - \cJ^{\otimes q_{\mathrm{out}}} \circ \cR_{\Psi_{g_{i, j}, k}}} \leq 3 \sqrt{(c\delta)^k},
\end{equation} 
where $q_{\mathrm{in}}$ and $q_{\mathrm{out}}$ are the number of input and output qubits of the gate $g_{i,j}$. 

We now define a quantum channel $\cR$ by composing $\cR_{\Psi_{g_{i, j}, k}}, \forall g_{i, j} \in \mathrm{Loc}(\Phi)$ in the structure of the circuit $\Phi$. Using Eq.~(\ref{eq:code-imperf-map-1}) and repeatedly applying the triangle inequality, we get,
\begin{align} \label{eq:tilde-R-3}
    \dnorm{ \tilde{\cT}_{\Phi_k} \circ \cJ_k^{\otimes n} -  \cJ_k^{\otimes n'} \circ \cR} \leq 3  |\Phi| \sqrt{ (c \delta)^k}. 
\end{align}
We will now show that $\cR   = \cU_k^{\otimes n'} \circ  (\cT_\Phi \otimes \cF) \circ {\cU_k^\dagger}^{\otimes n}$, for some quantum channel $\cF: \bL(F_k^{\otimes n}) \to \bL(F_k^{\otimes n'})$. Using Corollaries~\ref{cor:tprime-unitary},~\ref{cor:tprime-stprep} and~\ref{cor:tprime-measurement}, we can express $\cR_{\Psi_{g_{i, j}, k}}$ satisfying Eq.~(\ref{eq:code-perf-map-1}), as below
\begin{equation} \label{eq:r-tilde-prod}
    {\cU_k^\dagger}^{\otimes q_{\mathrm{out}}} \circ \cR_{\Psi_{g_{i, j}, k}}  =   (\cT_{g_{i, j}} \otimes \cF_{g_{i, j},k}) \circ {\cU_k^\dagger}^{\otimes q_{\mathrm{in}}}
\end{equation}
where $\cF_{g_{i, j}, k}:  \bL(F_k^{\otimes q_{\mathrm{in}}}) \to \bL(F_k^{\otimes q_{\mathrm{out}}})$ is a quantum channel. The following cases are in order.
\begin{enumerate}
    \item  When $g_{i, j}$ is a unitary, we have $q_{\mathrm{in}} = q_{\mathrm{out}}$ and Eq.~(\ref{eq:r-tilde-prod}) follows directly from Corollary~\ref{cor:tprime-unitary}.

    \item If $g_{i, j}$ is a state preparation gate with $b$ bits input ($q_{\mathrm{in}} = 0$), preparing a pure state for each pure classical input. From Corollary~\ref{cor:tprime-stprep}, $\cF_{g_{i, j}, k}:  \mathbb{B}^{\otimes b} \to \bL(F_k^{\otimes  q_{\mathrm{out}}})$ is a channel that prepares a state in $\bL(F_k^{\otimes q_{\mathrm{out}}})$. 
    
    \item If $g_{i, j}$ is a measurement gate, we have $q_{\mathrm{out}} = 0$. From Corollary~\ref{cor:tprime-measurement}, we have 
\begin{equation}
    \cR_{\Psi_{g_{i, j}, k}} 
     = (\cT_{g_{i, j}} \otimes (\tr_{F_k})^{\otimes q_{\mathrm{in}}})  \circ \:{\cU_k^\dagger}^{\otimes q_{\mathrm{in}}}.
\end{equation}
Therefore, we have 
\begin{equation} \label{eq:measurement-F}
    \cF_{g_{i, j},k}: \bL(F_k^{\otimes q_{\mathrm{in}}}) \to \mathbb{C}, \cF_{g_{i, j},k} = (\tr_{F_k})^{\otimes q_{\mathrm{in}}}
\end{equation}    
\end{enumerate}
Using Eq.~(\ref{eq:r-tilde-prod}), sequentially moving $\cU_k^\dagger$ through each $\cR_{\Psi_{g_{i, j}, k}}$, we get
\begin{align*}
    {\cU_k^\dagger}^{\otimes n'} \circ \cR 
    & = (\cT_\Phi \otimes  \cF) \circ  {\cU_k^\dagger}^{\otimes n},
\end{align*}
where $\cF :\bL(F_k^{\otimes n}) \to \bL(F_k^{\otimes n'})$ is a quantum channel obtained by composing channels $\cF_{g_{i, j}, k}$ in the structure of the quantum circuit $\Phi$.  Therefore, we have that $\cR   = \cU_k^{\otimes n'} \circ  (\cT_\Phi \otimes \cF) \circ {\cU_k^\dagger}^{\otimes n}$. Finally, substituting $\cR$ in Eq.~(\ref{eq:tilde-R-3}), we get  Eq.~(\ref{eq:tilde-R-1}). 
\end{proof}

\medskip
\begin{remark} \label{rem:stat-prep-measurement}
 Note that when $\Phi$ is a state preparation circuit, that is, a circuit with no quantum input $(n = 0)$, the corresponding $\cF: \bL(F_k^{\otimes n}) \to \bL(F_k^{\otimes n'})$ from Lemma~\ref{lem:Phi-code-imp} is also a state preparation map. When $\Phi$ is a measurement circuit, that is, a circuit with no quantum output $(n' = 0)$, from Eq.~(\ref{eq:measurement-F}), it follows that  
$\cF = (\tr_{F_k})^{\otimes n}$.
\end{remark}

\medskip We now give Lemmas~\ref{lem:interface-1} and~\ref{lem:interface-2} for interfaces $\Gamma_{1, k}$ and $\Gamma_{k, 1}$ from Def.~\ref{def:ft_scheme}.
\begin{lemma} \label{lem:interface-1}
 For any $\tilde{\cT}_{\Gamma_{1, k}} \in \trans(\Gamma_{1, k}, \delta)$, there exists a quantum channel $\cS_{\Gamma_{1, k}} : \bL(\mathbb{C}^{2}) \to \bL(L_k)$, such that
    \begin{align} \label{eq:comm-1k-3}
        \dnorm{ \cJ_k \circ \cS_{\Gamma_{1, k}}  - \tilde{\cT}_{\Gamma_{1, k}}} & \leq (c \delta)^k,
    \end{align}
and, 
    \begin{equation} \label{eq:u_k-s}
    \dnorm{\tr_{F_k} \circ \: \cU_k^\dagger \circ \cS_{\Gamma_{1, k}}  - \cI}  \leq 3c \delta.
    \end{equation}
\end{lemma}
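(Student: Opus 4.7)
My plan is to derive both inequalities directly from the representation properties guaranteed by Def.~\ref{def:ft_scheme}(c). For the interface $\Gamma_{1,k}$, that scheme gives $\epsilon_1 = \epsilon_2 = (c\delta)^{\min\{1,k\}} = c\delta$ and $\epsilon_3 = (c\delta)^k$. Unfolding Def.~\ref{def:rep_codes} for $\tilde{\cT}_{\Gamma_{1,k}}$ representing the identity in codes $D_1, D_k$, I obtain two channels $\cR_{\Gamma_{1,k}}, \cS_{\Gamma_{1,k}} : \bL(\mathbb{C}^2) \to \bL(L_k)$ fitting the commutative diagram of Fig.~\ref{fig:strong-rep} with accuracies $(c\delta, c\delta, (c\delta)^k)$; recall also that $\mu_1^* = \cI$ and $\cJ_1 = \cI$ since $D_1$ is trivial. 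I will simply take the $\cS_{\Gamma_{1,k}}$ of the lemma to be this $\cS$, so that Eq.~(\ref{eq:comm-1k-3}) is an immediate restatement of Eq.~(\ref{eq:comm-3-del}).

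For Eq.~(\ref{eq:u_k-s}), the strategy is to transfer the relation $\mu_k^* \circ \cR_{\Gamma_{1,k}} \approx \cI$ (Eq.~(\ref{eq:comm-1-del})) from $\cR_{\Gamma_{1,k}}$ to $\cS_{\Gamma_{1,k}}$, at the cost of controlling $\dnorm{\cR_{\Gamma_{1,k}} - \cS_{\Gamma_{1,k}}}$. Combining $\dnorm{\cJ_k \circ \cR_{\Gamma_{1,k}} - \tilde{\cT}_{\Gamma_{1,k}}} \leq c\delta$ (Eq.~(\ref{eq:comm-2-del})) with the already-proven Eq.~(\ref{eq:comm-1k-3}), and using $(c\delta)^k \leq c\delta$, the triangle inequality yields
\begin{equation*}
\dnorm{\cJ_k \circ (\cR_{\Gamma_{1,k}} - \cS_{\Gamma_{1,k}})} \leq 2c\delta.
\end{equation*}

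The one step that needs a moment of care, and which I expect is the main obstacle, is the observation that $\cJ_k$ may be stripped off the left here without loss. Since $J_k^\dagger J_k = \ident_{L_k}$, for any ancilla $G$ the map $\ident_G \otimes J_k$ is an isometry, so $\|(\cI_G \otimes \cJ_k)(\omega)\|_1 = \|\omega\|_1$ for every $\omega \in \bL(G \otimes L_k)$; this is legitimate precisely because both $\cR_{\Gamma_{1,k}}$ and $\cS_{\Gamma_{1,k}}$ land in $\bL(L_k)$. Hence $\dnorm{\cR_{\Gamma_{1,k}} - \cS_{\Gamma_{1,k}}} = \dnorm{\cJ_k \circ (\cR_{\Gamma_{1,k}} - \cS_{\Gamma_{1,k}})} \leq 2c\delta$. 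Composing with the quantum channel $\mu_k^* = \tr_{F_k} \circ \cU_k^\dagger$ (which is a diamond-norm contraction) and invoking the triangle inequality together with Eq.~(\ref{eq:comm-1-del}) yields
\begin{equation*}
\dnorm{\tr_{F_k} \circ \cU_k^\dagger \circ \cS_{\Gamma_{1,k}} - \cI} \leq \dnorm{\mu_k^* \circ \cR_{\Gamma_{1,k}} - \cI} + \dnorm{\cR_{\Gamma_{1,k}} - \cS_{\Gamma_{1,k}}} \leq 3c\delta,
\end{equation*}
which is the desired bound. Apart from the isometric-embedding cancellation just described, everything in the argument is routine bookkeeping via the triangle inequality.
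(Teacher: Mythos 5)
Your proposal is correct and follows essentially the same route as the paper: pull $\cS_{\Gamma_{1,k}}$ and $\cR_{\Gamma_{1,k}}$ out of Def.~\ref{def:rep_codes} via Def.~\ref{def:ft_scheme}(c), combine Eqs.~(\ref{eq:comm-2-del}) and (\ref{eq:comm-3-del}) to bound $\dnorm{\cR_{\Gamma_{1,k}}-\cS_{\Gamma_{1,k}}}$, then contract with $\mu_k^*$. The one refinement you add — spelling out that $\cJ_k$ can be stripped because $\ident_G \otimes J_k$ is an isometry and hence preserves the trace norm — is left implicit in the paper's jump from its Eqs.~(\ref{eq:comm-1k-3-inproof}) and (\ref{eq:comm-1k-2}) to $\dnorm{\cR_{\Gamma_{1,k}}-\cS_{\Gamma_{1,k}}}\leq 2c\delta$, so you've made a tacit step explicit rather than changed the argument.
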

\begin{proof}
 We know that any $\tilde{\cT}_{\Gamma_{1, k}} \in \trans(\Gamma_{1, k}, \delta)$ represents the identity channel in codes $D_1, D_k$ (Recall that $D_1$ is the trivial code), with accuracy $\epsilon_1 = \epsilon_2 = c\delta$, and $\epsilon_3 = (c\delta)^k$.  Here we have $M = L = N = \mathbb{C}^2$ and $M' = \mathbb{C}^2, L' = L_k, N' = N_k$ in Fig.~\ref{fig:strong-rep}. Therefore, there exists $\cS_{\Gamma_{1, k}}: \bL(\mathbb{C}^2) \to \bL(L_k)$ such that (see Eq.~(\ref{eq:comm-3-del}))
\begin{align}  \label{eq:comm-1k-3-inproof}
\dnorm{ \cJ_k \circ \cS_{\Gamma_{1, k}}  - \tilde{\cT}_{\Gamma_{1, k}}} & \leq (c \delta)^k. 
\end{align}
Furthermore, there exists a quantum channel  $\cR_{\Gamma_{1, k}}: \bL(\mathbb{C}^2) \to \bL(L_k)$, such that (see Eqs.~(\ref{eq:comm-2-del}) and (\ref{eq:comm-3-del}))
\begin{align}
 \dnorm{\mu^*_k \circ \cR_{\Gamma_{1, k}} - \cI} & \leq c \delta. \label{eq:comm-1k-1} \\
 \dnorm{\cJ_k \circ \cR_{\Gamma_{1, k}} - \tilde{\cT}_{\Gamma_{1, k}}} & \leq c \delta. \label{eq:comm-1k-2}   
\end{align}
Eqs.~(\ref{eq:comm-1k-3-inproof}) and~(\ref{eq:comm-1k-2}) imply,
\begin{equation} \label{eq:R-S}
   \dnorm{\cR_{\Gamma_{1, k}} - \cS_{\Gamma_{1, k}}} \leq 2c\delta. 
\end{equation}
From Eqs.~(\ref{eq:comm-1k-1}) and~(\ref{eq:R-S}), we get
\begin{equation} \label{eq:R'_1-id}
    \dnorm{\mu^*_k \circ \cS_{\Gamma_{1, k}}  - \cI}  \leq 3c \delta,
\end{equation}
which is the same as Eq.~(\ref{eq:u_k-s}), by using $\mu^*_k = \tr_{F_k} \circ \: \cU_k^\dagger$.
\end{proof}

\medskip
\begin{lemma} \label{lem:interface-2}
Any $\tilde{\cT}_{\Gamma_{k, 1}} \in \trans(\Gamma_{k, 1}, \delta)$ satisfies,
\begin{equation} \label{eq:Q-J-mu}
   \dnorm{\tilde{\cT}_{\Gamma_{k, 1}} \circ \cJ_k \circ \cU_k - \cI \otimes \tr_{F_k}} \leq 2c\delta. 
\end{equation}
\end{lemma}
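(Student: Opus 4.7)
The plan is to exploit Part $(c)$ of Def.~\ref{def:ft_scheme} applied to $\Gamma_{k,1}$, specialized to the fact that $D_1$ is the trivial code. Since $D_1$ is of type $(1,1)$ with $F_1$ one-dimensional, we have $\cJ_1 = \cI_{\mathbb{C}^2}$ and $\mu_1^* = \cI_{\mathbb{C}^2}$. For $\Gamma_{k,1}$ we have $l=1$, so $\min\{k,l\}=1$ and Def.~\ref{def:ft_scheme}$(c)$ gives $\epsilon_1 = \epsilon_2 = c\delta$. Hence there exists a channel $\cR:\bL(L_k) \to \bL(\mathbb{C}^2)$ such that Eqs.~(\ref{eq:comm-1-del}) and (\ref{eq:comm-2-del}) reduce to
\begin{align}
\dnorm{\cR - \mu_k^*} &\leq c\delta, \\
\dnorm{\tilde{\cT}_{\Gamma_{k,1}}\circ \cJ_k - \cR} &\leq c\delta.
\end{align}

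Next, I would apply the triangle inequality to eliminate the intermediate channel $\cR$, obtaining
\begin{equation*}
\dnorm{\tilde{\cT}_{\Gamma_{k,1}}\circ \cJ_k - \mu_k^*} \leq 2c\delta.
\end{equation*}
Finally, I would precompose with the unitary channel $\cU_k$, which has diamond norm $1$, so precomposition with it does not increase the diamond norm of the difference superoperator. Using $\mu_k^* = \tr_{F_k}\circ \cU_k^\dagger$ and $\cU_k^\dagger \circ \cU_k = \cI_{\mathbb{C}^2\otimes F_k}$, we get $\mu_k^* \circ \cU_k = \tr_{F_k} = \cI_{\mathbb{C}^2}\otimes \tr_{F_k}$, yielding exactly the claimed bound.

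There is no real obstacle here; the statement is essentially unpacking Def.~\ref{def:ft_scheme}$(c)$ in the case where the target is the trivial code. The only point to be careful about is to invoke the standard submultiplicativity fact $\dnorm{\cA\circ \cV}\le \dnorm{\cA}$ for a channel $\cV$ (so that precomposition with $\cU_k$ preserves the bound), and to keep the identifications $\cJ_1 = \cI$, $\mu_1^*=\cI$ consistent with the typing in the commutative diagram of Fig.~\ref{fig:strong-rep}.
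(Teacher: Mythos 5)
Your proposal is correct and takes essentially the same route as the paper: both extract a channel $\cR$ from part $(c)$ of Def.~\ref{def:ft_scheme} satisfying $\dnorm{\cR-\mu_k^*}\le c\delta$ and $\dnorm{\cR-\tilde{\cT}_{\Gamma_{k,1}}\circ\cJ_k}\le c\delta$, apply the triangle inequality, and then use $\mu_k^*=\tr_{F_k}\circ\cU_k^\dagger$ together with precomposition by $\cU_k$. The only cosmetic difference is that you spell out the identifications $\cJ_1=\mu_1^*=\cI$ and the submultiplicativity step explicitly, which the paper leaves implicit.
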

\begin{proof}
    We know that any $\tilde{\cT}_{\Gamma_{k, 1}} \in \trans(\Gamma_{k, 1}, \delta)$ represents the identity channel in codes $D_k, D_1$ with accuracy $\epsilon_1 = \epsilon_2 = c\delta$. Here we have $M = \mathbb{C}^2, L = L_k, N = N_k$, and $M' = L' = N' = \mathbb{C}^2$ in Fig.~\ref{fig:strong-rep}. Therefore, for any $\tilde{\cT}_{\Gamma_{k, 1}} \in \trans(\Gamma_{k, 1}, \delta)$, there exists a channel $\cR_{\Gamma_{k, 1}}: \bL(L_k) \to \bL(\mathbb{C}^2)$, such that
\begin{align}
    \dnorm{\cR_{\Gamma_{k, 1}} -  \mu^*_k} \leq c\delta, \label{eq:R-2-mu} \\
    \dnorm{\cR_{\Gamma_{k, 1}} -  \tilde{\cT}_{\Gamma_{k, 1}} \circ \cJ_k} \leq c\delta,
\end{align}
which implies the following,
\begin{equation} 
   \dnorm{\tilde{\cT}_{\Gamma_{k, 1}} \circ \cJ_k - \mu^*_k} \leq 2c\delta, 
\end{equation}
from which we get Eq.~(\ref{eq:Q-J-mu}) by using $\mu^*_k = \tr_{F_k} \circ \: \cU_k^\dagger$. 
\end{proof}

\medskip Using Lemma~\ref{lem:Phi-code-imp}, \ref{lem:interface-1}, and \ref{lem:interface-2}, we now prove the following theorem, concerning the fault-tolerant realization of quantum circuits with quantum input and output (see also Fig.~\ref{fig:ft-picture-intro}).

\begin{theorem} \label{thm:main-ft2}
Consider a quantum circuit $\Phi$,  with $n$ qubit input, $n'$ qubit output, and composed of unitary, state preparation, and measurement gates. For any integer $k$, there exists another circuit $\overline{\Phi}$ of size $|\Phi| \cdot \mathrm{poly}(k)$, with the same input-output systems as $\Phi$, such that for any $\tilde{\cT}_{\overline{\Phi}} \in \trans(\overline{\Phi}, \delta)$, we have  
  \begin{equation} \label{eq:ft-io}
      \dnorm{\tilde{\cT}_{\overline{\Phi}}  - (\otimes_{i = 1}^{n'} \cW_i) \circ (\cT_{\Phi} \otimes \cF) \circ (\otimes_{i = 1}^n \cN_i) } \leq 4 |\Phi| \sqrt{(c \delta)^k},
  \end{equation}
  for some channels $\cN_i : \bL(\mathbb{C}^2) \to \bL(\mathbb{C}^2 \otimes F_k), i \in [n]$, $\cF: \bL(F_k^{\otimes n}) \to \bL(F_k^{\otimes n'})$, and $\cW_i: \bL(\mathbb{C}^2 \otimes F_k) \to \bL(\mathbb{C}^2), i \in [n']$, with $F_k$ being an auxiliary Hilbert space. The channels $\cN_i, \forall i \in [n]$ and $\cW_i, \forall i \in [n']$ satisfy,
  \begin{align}
    \dnorm{\tr_{F_k} \circ\: \cN_i - \cI_{\mathbb{C}^2}} \leq 3 c\delta. \label{eq:N_i}\\
    \dnorm{ \cW_i - \cI_{\mathbb{C}^2} \otimes \tr_{F_k}} \leq 2 c\delta. \label{eq:W_i}
\end{align}
\end{theorem}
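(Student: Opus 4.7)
The plan is to build $\overline{\Phi}$ by wrapping the encoded circuit $\Phi_k$ with interfaces on every input and output wire, as already suggested in Eq.~(\ref{eq:circuit-phi-bar-intro}) of the introduction, namely
\[
\overline{\Phi} \;:=\; \bigl(\textstyle\bigotimes_{i=1}^{n'} \Gamma^{i}_{k,1}\bigr) \circ \Phi_{k} \circ \bigl(\textstyle\bigotimes_{i=1}^{n} \Gamma^{i}_{1,k}\bigr).
\]
The size estimate is immediate: $\Phi_{k}$ contains $|\Phi|$ logical boxes each of size $\mathrm{poly}(k)$, and we add $n+n' \le 2|\Phi|$ interfaces each of size $\mathrm{poly}(k)$, giving $|\overline{\Phi}| = |\Phi|\cdot\mathrm{poly}(k)$. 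For any noisy realization $\tilde{\cT}_{\overline{\Phi}} \in \trans(\overline{\Phi},\delta)$ we then have a factorization
\[
\tilde{\cT}_{\overline{\Phi}} \;=\; \bigl(\textstyle\bigotimes_{i=1}^{n'} \tilde{\cT}_{\Gamma^{i}_{k,1}}\bigr) \circ \tilde{\cT}_{\Phi_{k}} \circ \bigl(\textstyle\bigotimes_{i=1}^{n} \tilde{\cT}_{\Gamma^{i}_{1,k}}\bigr),
\]
into pieces to which the three preceding lemmas apply individually.

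The next step is to define the boundary channels that appear on the right-hand side of Eq.~(\ref{eq:ft-io}). I would set
\[
\cN_{i} \;:=\; \cU_{k}^{\dagger} \circ \cS_{\Gamma^{i}_{1,k}}, \qquad
\cW_{i} \;:=\; \tilde{\cT}_{\Gamma^{i}_{k,1}} \circ \cJ_{k} \circ \cU_{k},
\]
where $\cS_{\Gamma^{i}_{1,k}}$ is the channel furnished by Lemma~\ref{lem:interface-1}. With these choices, the bounds~(\ref{eq:N_i}) and~(\ref{eq:W_i}) are immediate: Eq.~(\ref{eq:u_k-s}) of Lemma~\ref{lem:interface-1} gives $\dnorm{\tr_{F_{k}} \circ \cN_{i} - \cI_{\mathbb{C}^{2}}} \le 3c\delta$, and Lemma~\ref{lem:interface-2} gives $\dnorm{\cW_{i} - \cI_{\mathbb{C}^{2}}\otimes\tr_{F_{k}}} \le 2c\delta$. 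The channel $\cF: \bL(F_{k}^{\otimes n}) \to \bL(F_{k}^{\otimes n'})$ is the one supplied by Lemma~\ref{lem:Phi-code-imp}.

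To prove the main bound~(\ref{eq:ft-io}), I would deploy the triangle inequality in two stages, using that composition with channels does not increase the diamond norm. First, replace each input interface by its approximant from Lemma~\ref{lem:interface-1}, paying $(c\delta)^{k}$ per wire; this turns $\tilde{\cT}_{\overline{\Phi}}$ into
\[
\bigl(\textstyle\bigotimes_{i=1}^{n'} \tilde{\cT}_{\Gamma^{i}_{k,1}}\bigr) \circ \tilde{\cT}_{\Phi_{k}} \circ \cJ_{k}^{\otimes n} \circ \bigl(\textstyle\bigotimes_{i=1}^{n} \cS_{\Gamma^{i}_{1,k}}\bigr)
\]
up to error $n(c\delta)^{k}$. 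Second, apply Lemma~\ref{lem:Phi-code-imp} to rewrite the middle segment $\tilde{\cT}_{\Phi_{k}} \circ \cJ_{k}^{\otimes n}$ as $\cJ_{k}^{\otimes n'} \circ \cU_{k}^{\otimes n'} \circ (\cT_{\Phi}\otimes \cF) \circ (\cU_{k}^{\dagger})^{\otimes n}$ up to error $3|\Phi|\sqrt{(c\delta)^{k}}$. After these two replacements the factors $(\cU_{k}^{\dagger})^{\otimes n}\circ\bigotimes_{i}\cS_{\Gamma^{i}_{1,k}}$ telescope into $\bigotimes_{i}\cN_{i}$, while $\bigotimes_{i}\tilde{\cT}_{\Gamma^{i}_{k,1}}\circ\cJ_{k}^{\otimes n'}\circ \cU_{k}^{\otimes n'}$ telescopes into $\bigotimes_{i}\cW_{i}$, yielding exactly the target channel.

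The total error is at most $n(c\delta)^{k} + 3|\Phi|\sqrt{(c\delta)^{k}}$; using $n \le |\Phi|$ and $(c\delta)^{k} \le \sqrt{(c\delta)^{k}}$ for $c\delta < 1$, this is bounded by $4|\Phi|\sqrt{(c\delta)^{k}}$, matching~(\ref{eq:ft-io}). The argument is essentially a bookkeeping exercise once the three lemmas of Section~\ref{sec:q-in-out} are in place; the only conceptual subtlety is that the channel $\cF$ arising from Lemma~\ref{lem:Phi-code-imp} is genuinely a single global channel on $F_{k}^{\otimes n}\to F_{k}^{\otimes n'}$ rather than a tensor product, which is what allows the syndrome environments at distinct wires to become correlated. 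I expect the cleanest obstacle to be verifying that the two telescoping identifications are literally equalities (not just approximate) so that no extra diamond-norm cost appears at the telescoping step; this follows because $\cU_{k}$ is a unitary channel and $\cJ_{k}$ is a genuine isometric embedding, so the substitutions are exact.
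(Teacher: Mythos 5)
Your proposal is correct and follows essentially the same path as the paper's own proof: same construction of $\overline{\Phi}$, same choices $\cN_i := \cU_k^\dagger \circ \cS_{\Gamma^i_{1,k}}$ and $\cW_i := \tilde{\cT}_{\Gamma^i_{k,1}} \circ \cJ_k \circ \cU_k$, the same exact telescoping via $\cU_k \circ \cU_k^\dagger = \cI$, and the same triangle-inequality accounting giving $n(c\delta)^k + 3|\Phi|\sqrt{(c\delta)^k} \le 4|\Phi|\sqrt{(c\delta)^k}$. The only cosmetic difference is that you substitute the interfaces first and then invoke Lemma~\ref{lem:Phi-code-imp}, whereas the paper performs these two replacements in the opposite order; both orderings are valid because each substitution occurs under pre/post-composition with genuine channels, so the diamond-norm cost is the same.
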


\begin{proof}
Consider the circuit $ \Phi_k$ as in Lemma~\ref{lem:Phi-code-imp}. Consider interfaces $\Gamma_{1,k}$ and $\Gamma_{k,1}$ from Def.~\ref{def:ft_scheme}. We take the quantum circuit $\overline{\Phi}$ 
as follows,
\begin{equation} \label{eq:circuit-Phi-prime}
    \overline{\Phi} := (\otimes_{i=1}^{n'} \Gamma^i_{k, 1}) \circ \Phi_k \circ (\otimes_{i=1}^{n} \Gamma^i_{1, k}).
\end{equation}
By construction, $\overline{\Phi}$ has the same number of input and output qubits as $\Phi$. Any $\tilde{\cT}_{\overline{\Phi}} \in \trans(\overline{\Phi}, \delta)$ is given by,
\begin{equation}
    \tilde{\cT}_{\overline{\Phi}}  = (\otimes_{i = 1}^{n'} \tilde{\cT}_{\Gamma^i_{k, 1}})  \circ \tilde{\cT}_{\Phi_k} \circ   (\otimes_{i = 1}^{n} \tilde{\cT}_{\Gamma^i_{1, k}}),
\end{equation}
where $\tilde{\cT}_{\Gamma^i_{1, k}} \in \trans(\Gamma^i_{1, k}, \delta)$, $\tilde{\cT}_{ \Phi_k} \in \trans(\Phi_k, \delta)$ and $\tilde{\cT}_{\Gamma^i_{k, 1}} \in \trans(\Gamma^i_{k, 1}, \delta)$. 

\medskip Consider now the quantum channel $\cF: \bL(F_k^{\otimes n}) \to \bL(F_k^{\otimes n'})$ corresponding to the circuit $\Phi_k$ from Lemma~\ref{lem:Phi-code-imp} and let $ \cR := \cU_k^{\otimes n'} \circ  (\cT_\Phi \otimes \cF) \circ {\cU_k^\dagger}^{\otimes n}$. Then, from Eq.~(\ref{eq:tilde-R-1}), we have
\begin{equation} \label{eq:int-t-r}
\dnorm{\tilde{\cT}_{\Phi_k} \circ \cJ_k^{\otimes n}  - \cJ_k^{\otimes n'} \circ \cR } \leq 3 |\Phi| \sqrt{ (c \delta)^k}.
\end{equation}
Consider $\cS_{\Gamma^i_{1, k}}$ corresponding to $\tilde{\cT}_{\Gamma^i_{1, k}}$ from Lemma~\ref{lem:interface-1}, and the following composed channel, which has the same input and output systems as $\tilde{\cT}_{\overline{\Phi}}$, 
\begin{equation}
  (\otimes_{i = 1}^{n'} (\tilde{\cT}_{\Gamma^i_{k, 1}}  \circ \cJ_k))  \circ \cR\circ (\otimes_{i = 1}^{n} \cS_{\Gamma^i_{1, k}}).  
\end{equation}

The following is a crucial step in the proof: by inserting identity $\cU_k^{\otimes n} \circ {\cU_k^\dagger}^{\otimes n}$ before and after $\cR$, we get 
\begin{align}
  &  (\otimes_{i = 1}^{n'} (\tilde{\cT}_{\Gamma^i_{k, 1}}  \circ \cJ_k))  \circ \cR\circ (\otimes_{i = 1}^{n} \cS_{\Gamma^i_{1, k}}) \nonumber \\
& =  (\otimes_{i = 1}^{n'} (\tilde{\cT}_{\Gamma^i_{k, 1}}  \circ \cJ_k \circ \cU_k)) \circ  ({\cU_k^\dagger}^{\otimes n'} \circ {\cR} \circ  \cU_k^{\otimes n}) \circ (\otimes_{i = 1}^{n} \cU_k^\dagger \circ \cS_{\Gamma^i_{1, k}}) \nonumber\\
& =  (\otimes_{i = 1}^{n'} \cW_i) \circ  (\cT_{\Phi} \otimes \cF)  \circ (\otimes_{i = 1}^{n} \cN_i), 
\end{align}
where in the last equality 
\begin{align}
 \cN_i &:= \cU_k^\dagger \circ \cS_{\Gamma^i_{1, k}}, \label{eq:n-i-def} \\
  \cW_i &:= \tilde{\cT}_{\Gamma^i_{k, 1}}  \circ \cJ_k \circ \cU_k. \label{eq:w-i-def}
\end{align}
From Lemmas~\ref{lem:interface-1} and~\ref{lem:interface-2}, we know that  $\cN_i$ and $\cW_i$ satisfy Eq.~(\ref{eq:N_i}) and Eq.~(\ref{eq:W_i}), respectively.  Finally, we have that
\begin{align}
 &   \dnorm { \tilde{\cT}_{\overline{\Phi}} -  (\otimes_{i = 1}^{n'} (\tilde{\cT}_{\Gamma^i_{k, 1}}  \circ \cJ_k))  \circ \cR\circ (\otimes_{i = 1}^{n} \cS_{\Gamma^i_{1, k}})} \nonumber\\
& = \dnorm { \tilde{\cT}_{\overline{\Phi}} -  (\otimes_{i = 1}^{n'} \tilde{\cT}_{\Gamma^i_{k, 1}})   \circ ( \cJ_k^{\otimes n'} \circ \cR )\circ (\otimes_{i = 1}^{n} \cS_{\Gamma^i_{1, k}})} \nonumber\\
& \leq \dnorm { \tilde{\cT}_{\overline{\Phi}} -  (\otimes_{i = 1}^{n'} \tilde{\cT}_{\Gamma^i_{k, 1}})   \circ (  \tilde{\cT}_{\Phi_k} \circ \cJ_k^{\otimes n}) \circ (\otimes_{i = 1}^{n} \cS_{\Gamma^i_{1, k}})} + 3 |\Phi| \sqrt{ (c \delta)^k} \nonumber\\
& \leq \dnorm { \otimes_{i = 1}^{n} \tilde{\cT}_{\Gamma^i_{1, k}} -  \otimes_{i = 1}^{n} \cJ_k \circ \cS_{\Gamma^i_{1, k}}} + 3 |\Phi| \sqrt{ (c \delta)^k} \nonumber\\
& \leq 4 |\Phi| \sqrt{ (c \delta)^k},
\end{align}
where the first inequality uses Eq.~(\ref{eq:int-t-r}), the second inequality uses monotonicity of the diamond norm under quantum channels, and the last inequality uses Eq.~(\ref{eq:comm-1k-3}), and $ |\Phi| \geq n$.
\end{proof}

\subsubsection{Quantum circuits with only quantum input or quantum output} \label{sec:q-c-in-out}

\medskip In this section, we consider circuits with quantum input and classical output, and classical input and quantum output. We provide their fault-tolerant realization, which is independent of auxiliary systems $F_k$'s. To characterize the noise channel channels appearing in fault-tolerant realization of these circuits, we consider the following definition of adversarial channels.

\begin{definition}[Adversarial channels] \label{def:adver-can}
   A quantum channel   $\cV: \bL(M^{\otimes n})\to \bL(M^{\otimes n})$ is said to be an adversarial channel with parameter $\delta$ if for any $t > 0$, there exists a superoperator $\cV_t$ of weight $t$, that is, $\cV_t \in \mathscr{E}(n, t) (\cdot) \mathscr{E}(n, t)^*$ (see Def.~\ref{def:weight}), such that 
   \begin{equation} \label{eq:d-v-vt}
    \dnorm{\cV - \cV_t} \leq  \sum_{j > t}^n \binom{n}{j} \delta^j.
\end{equation}
\end{definition}

\smallskip An adversarial channel can be approximated by a superoperator of weight $O(n \delta)$, as shown later in the proof of Theorem~\ref{thm:ft-comm-code-general}. In the following Remark, we give an example of adversarial channel according to the above definition. 
\begin{remark} \label{rem:ex-advers}
 Consider quantum channel $\cV:= (\otimes_{i = 1}^n \cV_i)$, where $\cV_i: \bL(M) \to \bL(M), \dnorm{\cV_i - \cI_i} \leq \delta, \forall i \in [n]$. It can be seen that $\cV$ is an adversarial channel according to Def.~\ref{def:adver-can} by considering the following expansion,
\begin{align}
(\otimes_{i = 1}^n \cV_i) &= \otimes_{i = 1}^n (\cI_i + (\cV_i - \cI_i))  \\
&= \sum_{A \subseteq [n] } (\otimes_{i \in [n]\setminus A} \cI_i) \otimes (\otimes_{i \in A} (\cV_i - \cI_i)),
\end{align}
and taking $\cV_t$ as follows,
\begin{equation}
    \cV_t := \sum_{A \subseteq [n]: |A| \leq t} (\otimes_{i \in [n]\setminus A} \cI_i) \otimes (\otimes_{i \in A} (\cV_i - \cI_i)).
\end{equation}
\end{remark}

\medskip The following corollary gives a fault-tolerant realization of quantum circuits with quantum input/classical output, up to a product channel as in Remark~\ref{rem:ex-advers}, acting on input qubits (see also Part (a) of Fig.~\ref{fig:ft-cl-qu-intro}). 
\begin{corollary} \label{cor:enc}
Consider the quantum circuit $\Phi$ from Theorem~\ref{thm:main-ft2}, with $n' = 0$. Then, there exists a quantum circuit $\overline{\Phi}$, with the same input-output systems as $\Phi$ and of size $|\Phi| \cdot \mathrm{poly}(k)$, such that for any $\tilde{\cT}_{\overline{\Phi}} \in \trans(\overline{\Phi}, \delta)$, we have 
\begin{equation} \label{eq:phi-bar-N}
    \dnorm{\tilde{\cT}_{\overline{\Phi}} -  \cT_{\Phi} \circ \overline{\cN}} \leq 4 |\Phi| \sqrt{(c\delta)^k},
\end{equation}
where $\overline{\cN}: \bL((\mathbb{C}^2)^{\otimes n}) \to \bL((\mathbb{C}^2)^{\otimes n})$ is  given by, $\overline{\cN} = (\otimes_{i = 1}^n \overline{\cN}_i)$, such that the $\overline{\cN}_i: \bL(\mathbb{C}^2) \to \bL(\mathbb{C}^2), i \in [n]$ satisfy
 \begin{equation} \label{eq:N_i-1}
   \dnorm{\overline{\cN}_i - \cI_{\mathbb{C}^2}} \leq 3c\delta.  
 \end{equation}
\end{corollary}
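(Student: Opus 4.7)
The plan is to specialize Theorem~\ref{thm:main-ft2} to the case $n' = 0$ and exploit the fact that, with no quantum output, the auxiliary environment channel $\cF$ appearing in that theorem is forced to be a product of partial traces. I would take $\overline{\Phi}$ to be precisely the circuit built in Theorem~\ref{thm:main-ft2}, namely $\overline{\Phi} = \Phi_k \circ (\otimes_{i=1}^n \Gamma^i_{1,k})$ (there are no interfaces $\Gamma^i_{k,1}$ on the output side since the output is classical). By construction this has size $|\Phi|\cdot\mathrm{poly}(k)$ and the same input/output as $\Phi$, as required.

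Applied to any $\tilde{\cT}_{\overline{\Phi}} \in \trans(\overline{\Phi}, \delta)$, Theorem~\ref{thm:main-ft2} produces channels $\cN_i : \bL(\mathbb{C}^2) \to \bL(\mathbb{C}^2 \otimes F_k)$ and $\cF : \bL(F_k^{\otimes n}) \to \bL(F_k^{\otimes 0}) \cong \mathbb{C}$ with
$$
\dnorm{\tilde{\cT}_{\overline{\Phi}} - (\cT_\Phi \otimes \cF) \circ (\otimes_{i=1}^n \cN_i)} \leq 4 |\Phi| \sqrt{(c\delta)^k},
$$
each $\cN_i$ satisfying Eq.~(\ref{eq:N_i}). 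Since the output system of $\cF$ is one-dimensional and $\cF$ is a quantum channel (CPTP), it must equal the trace, i.e.\ $\cF = (\tr_{F_k})^{\otimes n}$; this is exactly what is recorded in Remark~\ref{rem:stat-prep-measurement} for measurement circuits, and it can also be read off directly from the construction of $\cF$ in the proof of Lemma~\ref{lem:Phi-code-imp} via Eq.~(\ref{eq:measurement-F}).

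With this identification, the target channel decouples qubit-by-qubit on the input: defining $\overline{\cN}_i := \tr_{F_k} \circ \cN_i : \bL(\mathbb{C}^2) \to \bL(\mathbb{C}^2)$ and $\overline{\cN} := \otimes_{i=1}^n \overline{\cN}_i$, one obtains
$$
(\cT_\Phi \otimes (\tr_{F_k})^{\otimes n}) \circ (\otimes_{i=1}^n \cN_i) = \cT_\Phi \circ (\otimes_{i=1}^n \overline{\cN}_i) = \cT_\Phi \circ \overline{\cN},
$$
which yields Eq.~(\ref{eq:phi-bar-N}). The bound Eq.~(\ref{eq:N_i-1}) is then immediate from Eq.~(\ref{eq:N_i}), since $\dnorm{\overline{\cN}_i - \cI_{\mathbb{C}^2}} = \dnorm{\tr_{F_k} \circ \cN_i - \cI_{\mathbb{C}^2}} \leq 3c\delta$. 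There is no substantive obstacle here; the corollary is essentially a direct rewriting of Theorem~\ref{thm:main-ft2} once one observes that absence of quantum output collapses the environment channel to a tensor of traces, so that the correlating system $F_k^{\otimes n}$ can be discarded and the input noise expressed as an honest product channel on the data qubits.
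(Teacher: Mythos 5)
Your proposal is correct and follows essentially the same route as the paper: take $\overline{\Phi} = \Phi_k \circ (\otimes_{i=1}^n \Gamma^i_{1,k})$, observe via Remark~\ref{rem:stat-prep-measurement} (equivalently, Eq.~(\ref{eq:measurement-F})) that $\cF = (\tr_{F_k})^{\otimes n}$ when $n'=0$, set $\overline{\cN}_i := \tr_{F_k}\circ\cN_i$, and invoke Theorem~\ref{thm:main-ft2} together with Eq.~(\ref{eq:N_i}). Your side observation that $\cF$ must be the trace simply because any CPTP map to a one-dimensional output space is the trace is a slightly more elementary justification than the paper's appeal to the explicit form in Eq.~(\ref{eq:measurement-F}), but it reaches the identical conclusion.
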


\begin{proof}
We consider the circuit $\overline{\Phi}$ as in Eq.~(\ref{eq:circuit-Phi-prime}). Since there is no quantum output, we have
\begin{equation} \label{eq:circuit-Phi-prime-1}
    \overline{\Phi} = \Phi_k \circ (\otimes_{i=1}^{n} \Gamma^i_{1, k}).
\end{equation}
As quantum circuit $\Phi$ has only classical output, the quantum channel $\cF$ in Lemma~\ref{lem:Phi-code-imp} for $\Phi_k$, corresponds to tracing out the system $F_k^{\otimes n}$ (see Eq.~(\ref{eq:measurement-F}) and also Remark~\ref{rem:stat-prep-measurement}). Substituting $\cF = (\tr_{F_k})^{\otimes n}$ and $n' = 0$ in Eq.~(\ref{eq:ft-io}), we get 
\begin{equation}
 \dnorm{ \cT_{\overline{\Phi}} - \cT_{\Phi}  \circ (\otimes_{i = 1}^n (\tr_{F_k} \circ \cN_i)) } \leq 4 |\Phi| \sqrt{(c \delta)^k}.
\end{equation}
We take 
\begin{equation} \label{eq:bar-N-i}
   \overline{\cN}_i := \tr_{F_k} \circ \cN_i. 
\end{equation}
Using Eq.~(\ref{eq:N_i}), we get Eq.~(\ref{eq:N_i-1}). Therefore, Eq.~(\ref{eq:phi-bar-N}) holds for $\overline{\cN} = \otimes_{i = 1}^n \overline{\cN}_i$, where $\overline{\cN}_i, \forall i \in [n]$ satisfies Eq.~(\ref{eq:N_i-1}).  
\end{proof}

\medskip The following corollary gives a fault-tolerant realization of quantum circuits with classical input/quantum output, up to an adversarial channel according to Def.~\ref{def:adver-can}, acting on output qubits (see also Part (b) of Fig.~\ref{fig:ft-cl-qu-intro}).

\begin{corollary} \label{cor:dec}
Consider the quantum circuit $\Phi$ from Theorem~\ref{thm:main-ft2}, with $n= 0$. Then, there exists a quantum circuit $\overline{\Phi}$, with the same input-output systems as $\Phi$ and of size $|\Phi| \cdot \mathrm{poly}(k)$, such that for any $\tilde{\cT}_{\overline{\Phi}} \in \trans(\overline{\Phi}, \delta)$, we have
\begin{equation} \label{eq:CQ}
    \dnorm{\tilde{\cT}_{\overline{\Phi}} -  \overline{\cW} \circ \cT_{\Phi}} \leq 4 |\Phi| \sqrt{(c\delta)^k},
\end{equation}
where $\overline{\cW}: \bL((\mathbb{C}^2)^{\otimes n'}) \to \bL((\mathbb{C}^2)^{\otimes n'})$ is an adversarial channel with parameter $2c\delta$ according to Def.~\ref{def:adver-can}.
\end{corollary}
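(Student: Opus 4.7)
\medskip\noindent\textbf{Proof proposal for Corollary~\ref{cor:dec}.} The plan is to start from Theorem~\ref{thm:main-ft2} specialized to the case $n=0$, and then massage the resulting noise channel into the adversarial form required by Definition~\ref{def:adver-can}. Concretely, I would take the circuit $\overline{\Phi}$ of Eq.~(\ref{eq:circuit-Phi-prime}), which with $n=0$ reads $\overline{\Phi} = (\otimes_{i=1}^{n'} \Gamma^i_{k,1}) \circ \Phi_k$ and has the same input/output systems as $\Phi$ and size $|\Phi|\cdot\mathrm{poly}(k)$. Theorem~\ref{thm:main-ft2} then gives, for any $\tilde{\cT}_{\overline{\Phi}}\in\trans(\overline{\Phi},\delta)$,
\begin{equation*}
    \dnorm{\tilde{\cT}_{\overline{\Phi}} - (\otimes_{i=1}^{n'}\cW_i)\circ(\cT_\Phi\otimes\cF)} \leq 4|\Phi|\sqrt{(c\delta)^k},
\end{equation*}
with $\cW_i$ satisfying Eq.~(\ref{eq:W_i}). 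By Remark~\ref{rem:stat-prep-measurement}, the absence of quantum input forces $\cF$ to be a state preparation map, producing some fixed state $\eta\in\bD(F_k^{\otimes n'})$. Consequently $(\otimes_i \cW_i)\circ(\cT_\Phi\otimes\cF) = \overline{\cW}\circ\cT_\Phi$ with the definition
\begin{equation*}
    \overline{\cW}(\rho) := (\otimes_{i=1}^{n'}\cW_i)(\rho\otimes\eta),
\end{equation*}
which is manifestly a quantum channel on $\bL((\mathbb{C}^2)^{\otimes n'})$. Substituting this identification into the bound above immediately yields Eq.~(\ref{eq:CQ}); the real content of the corollary is showing that $\overline{\cW}$ is adversarial with parameter $2c\delta$.

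\medskip\noindent To obtain the adversarial structure, I would use Eq.~(\ref{eq:W_i}) to write $\cW_i = (\cI_{\mathbb{C}^2}\otimes\tr_{F_k}) + \cE_i$, where $\cE_i := \cW_i - \cI_{\mathbb{C}^2}\otimes\tr_{F_k}$ satisfies $\dnorm{\cE_i}\leq 2c\delta$. Multilinearly expanding the tensor product and grouping terms according to the set $A\subseteq[n']$ of indices at which the $\cE_i$ factor is chosen, we get
\begin{equation*}
    \overline{\cW} = \sum_{A\subseteq [n']} \bigl(\otimes_{i\in[n']\setminus A}\cI_i\bigr)\otimes \cV_A,
\end{equation*}
where $\cV_A(\sigma) := \bigl(\otimes_{i\in A}\cE_i\bigr)(\sigma\otimes\tilde\eta_A)$ with $\tilde\eta_A := \tr_{F_{k,i},\,i\notin A}(\eta)$ is a completely bounded map supported on the qubits in $A$. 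By multiplicativity of the diamond norm under tensor products, $\dnorm{\cV_A}\leq \prod_{i\in A}\dnorm{\cE_i}\leq(2c\delta)^{|A|}$.

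\medskip\noindent Finally, for any $t>0$, define $\cV_t := \sum_{A:|A|\leq t}(\otimes_{i\notin A}\cI_i)\otimes\cV_A$, which is a superoperator of weight $t$ in the sense of Definition~\ref{def:weight}. The triangle inequality and the above estimate give
\begin{equation*}
    \dnorm{\overline{\cW} - \cV_t} \leq \sum_{A:|A|>t}\dnorm{\cV_A} \leq \sum_{j>t}^{n'}\binom{n'}{j}(2c\delta)^j,
\end{equation*}
which is exactly the adversarial condition Eq.~(\ref{eq:d-v-vt}) with parameter $2c\delta$. There is no genuine obstacle here: all the structural work was done in Theorem~\ref{thm:main-ft2}, and the only mildly delicate point is ensuring that the potentially correlated environment state $\eta$ does not spoil the per-subset locality of $\cV_A$; this is handled by partially tracing $\eta$ down to $\tilde\eta_A$ for each subset, which is harmless because the $\cI_i$ factors on $i\notin A$ trace out the corresponding $F_{k,i}$ systems anyway.
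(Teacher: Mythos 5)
Your proposal is correct and follows essentially the same route as the paper's proof: specialize Theorem~\ref{thm:main-ft2} to $n=0$, use Remark~\ref{rem:stat-prep-measurement} to identify $\cF$ as preparation of a fixed state $\eta$, define $\overline{\cW}=(\otimes_i\cW_i)((\cdot)\otimes\eta)$, multilinearly expand each $\cW_i = \cI\otimes\tr_{F_k} + \cE_i$, group by the faulty index set $A$, bound $\dnorm{\cV_A}\leq(2c\delta)^{|A|}$, and truncate to weight $t$. Your $\cE_i$, $\cV_A$ and $\tilde\eta_A$ are precisely the paper's $(\cW_i-\cI_i\otimes\tr_{F_k^i})$, $\cW_A$ and $\eta_A$, and the key observation that tracing $\eta$ down to $\tilde\eta_A$ makes each $\cV_A$ genuinely supported on $A$ is exactly Eq.~(\ref{eq:W-bar-exp-2}) of the paper.
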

\begin{proof}
We consider the circuit $\overline{\Phi}$ as in Eq.~(\ref{eq:circuit-Phi-prime}). Since there is no quantum input, we have
\begin{equation}
     \overline{\Phi} = (\otimes_{i=1}^{n'} \Gamma^i_{k, 1}) \circ \Phi_k.
  \end{equation}
For $\Phi_k$, the quantum channel $\cF$ in Lemma~\ref{lem:Phi-code-imp} is a state preparation channel, with output in $\bL(F_k^{\otimes n'})$. Let $\eta$ be the quantum state prepared by $\cF$ ($\eta$ can depend on the classical input). Then, substituting $\cF$ by $\eta$, and $n = 0$ in Eq.~(\ref{eq:ft-io}), we get
\begin{equation} \label{eq:bar-w-1}
  \dnorm{\tilde{\cT}_{\overline{\Phi}}  - (\otimes_{i = 1}^{n'} \cW_i) \circ (\cT_{\Phi} \otimes \eta) } \leq 4 |\Phi| \sqrt{(c \delta)^k}.  
\end{equation}
We take  
\begin{equation} \label{eq:bar-w}
    \overline{\cW} := (\otimes_{i = 1}^{n'} \cW_i) \big((\cdot) \otimes \eta \big),
\end{equation}
We now show that $\overline{\cW}$ is an adversarial channel, by showing that there exists a superoperator $\overline{\cW}_t$ of weight $t$ such that Eq.~(\ref{eq:d-v-vt}) is satisfied for $\overline{\cW}$ and $\overline{\cW}_t$. 

\medskip For a subset $A \subseteq [n'] = \{ 1, \dots, n'\}$, we define a superoperator $\cW_A : \bL((\mathbb{C}^2)^{\otimes |A|}) \to \bL((\mathbb{C}^2)^{\otimes |A|})$ as follows:
\begin{equation}
    \cW_A := (\otimes_{i = 1}^{|A|} (\cW_i - \cI_i \otimes \tr_{F_k^i})) \big(\cdot \otimes \: \eta_A \big),
\end{equation}
where $\eta_A := (\otimes_{i \in [n'] \setminus A } \tr_{F_k^i}) (\eta)$. Using Eq.~(\ref{eq:W_i}), we get
\begin{equation} \label{eq:W-A-norm}
\dnorm{\cW_A} \leq (2 c\delta)^{|A|}.
\end{equation}
We now consider the following expansion,
\begin{align} 
  \otimes_{i = 1}^{n'} \cW_i &=   \otimes_{i = 1}^{n'} (\cI_i \otimes \tr_{F_k^i} +  (\cW_i - \cI_i \otimes \tr_{F_k^i})) \nonumber\\
  &= \sum_{A \subseteq [n']} (\otimes_{i \in [n'] \setminus A} \: \cI_i \otimes \tr_{F_k^i} ) \otimes (\otimes_{i \in A} (\cW_i- \cI_i \otimes \tr_{F_k^i})) \label{eq:W-bar-exp-1}.
\end{align}
Note that
\begin{equation} \label{eq:W-bar-exp-2}
    \left((\otimes_{i \in [n'] \setminus A} \: \cI_i \otimes \tr_{F_k^i} ) \otimes (\otimes_{i \in A} (\cW_i- \cI_i \otimes \tr_{F_k^i}))\right) \left( \cdot \otimes \eta \right) =     \left((\otimes_{i \in [n'] \setminus A} \: \cI_i ) \otimes \cW_A \right).
\end{equation}
From Eqs.~(\ref{eq:W-bar-exp-1}) and~(\ref{eq:W-bar-exp-2}), we have
\begin{align}
   \overline{\cW} = \sum_{A \subseteq [n']} \left((\otimes_{i \in [n'] \setminus A} \: \cI_i ) \otimes \cW_A \right).
\end{align}
We take 
\begin{equation}
    \overline{\cW}_t := \sum_{A \subseteq [n']: |A| \leq t} \left((\otimes_{i \in [n'] \setminus A} \: \cI_i ) \otimes \cW_A \right).
\end{equation}
It's clear that $\overline{\cW}_t$ has weight $t$. Finally, we have
\begin{align}
    \dnorm{\overline{\cW} - \overline{\cW}_t} &= \dnorm{\sum_{A \subseteq [n']: |A| > t} \left((\otimes_{i \in [n'] \setminus A} \: \cI_i ) \otimes \cW_A \right)} \nonumber \\
    &\leq \sum_{A \subseteq [n']: |A| > t}  \dnorm{\cW_A} \nonumber \\
    &= \sum_{j > t} \binom{n}{j} (2c \delta)^j \label{eq:W-bar-advers},
\end{align}
where the second line uses the triangle inequality and the last line uses Eq.~(\ref{eq:W-A-norm}). Therefore, from Eq.~(\ref{eq:W-bar-advers}), $\overline{\cW}$ is an adversarial channel with parameter $2c\delta$.
\end{proof}

\subsection{Fault-tolerant realization for circuit-level stochastic noise}
In this section, we give fault-tolerant realizations of quantum circuits, considering circuit-level stochastic noise from Section~\ref{sec:noise-qcirc}. Throughout this section, we shall consider constants $c , \delta$ from Def.~\ref{def:ft_scheme_stoc}.

\smallskip From Remark~\ref{rem:noisy-g}, the channel $\tilde{\cT}_\Phi$ realized by a quantum circuit $\Phi$, with circuit-level stochastic noise of parameter $\delta/2$, is also in $\trans(\Phi, \delta)$. Therefore,  the results from Section~\ref{sec:qin-out-gen}, based on Def.~\ref{def:ft_scheme}, are also valid for circuit-level stochastic noise. However, Parts $(b)$ and $(c)$ in Def.~\ref{def:ft_scheme_stoc} are more explicit than Parts $(b)$ and $(c)$ in Def.~\ref{def:ft_scheme}. This allows us to make stronger statements than Section~\ref{sec:q-in-out}. In particular, there are two major improvements; firstly the error scaling in Eq.~(\ref{eq:tilde-R-1}) and Eq.~(\ref{eq:ft-io}) improves to $O((c\delta)^k)$ from $O(\sqrt{(c\delta)^k})$ (see Lemma~\ref{lem:Phi-code-imp-stoc} and Theorem~\ref{thm:main-ft2-stoc} below), and secondly, the channels $\overline{\cN}$ and $\overline{\cW}$ from Corollaries~\ref{cor:enc} and \ref{cor:dec}, respectively, are stochastic channels instead of adversarial ones (see Corollaries~\ref{cor:enc-stc} and \ref{cor:dec-stc} below). More precisely, the channel $\overline{\cN}$ is an independent stochastic channel (see Def.~\ref{def:stoc} below) and $\overline{\cW}$ is a local stochastic channel (see Def.~\ref{def:local-stoc} below).

\begin{definition}[Independent stochastic channels] \label{def:stoc}
    A quantum channel $\cV: \bL(M^{\otimes n}) \to \bL(M^{\otimes n})$ is said to be an independent stochastic channel with parameter $\delta$ if 
 \begin{equation}
 \cV = (\otimes_{i = 1}^n \cV_i),
\end{equation}
 for quantum channels $\cV_i: \bL(M) \to \bL(M), i \in [n]$ such that
    \begin{equation}
        \cV_i = (1- \epsilon) \: \cI_M + \epsilon \: \cZ_i,
    \end{equation}
    where $\epsilon \leq \delta$ and $\cZ_i:\bL(M) \to \bL(M)$ is some quantum channel.
\end{definition}

\medskip
\begin{definition}[Local stochastic channels~\cite{gottesman2013fault, fawzi2018efficient}] \label{def:local-stoc}
 A quantum channel $\cV: \bL(M^{\otimes n}) \to \bL(M^{\otimes n})$ is said to be  a local stochastic channel with parameter $\delta$ if 
\begin{equation} \label{eq:local-stc-1}
 \cV = \sum_{A \subseteq [n]} \big( \otimes_{i \in [n] \setminus A} \cI_i \big) \otimes \cV_A,  
\end{equation}
for completely positive maps $\cV_A: \bL(M^{\otimes |A|}) \to \bL(M^{\otimes |A|})$ that only act on the set $A \subseteq [n]$ and that satisfy the following: for every $T \subseteq [n]$,
\begin{equation} \label{eq:local-stc-2}
    \dnorm{\sum_{A : T \subseteq A} \big( \otimes_{i \in [n] \setminus A} \cI_i \big) \otimes \cV_A } \leq \delta^{|T|}.
\end{equation}
\end{definition}
We note that independent stochastic channels with parameter $\delta$ are also local stochastic channels with parameter $\delta$. In general local stochastic channels can generate correlation among the qubits. 

\subsubsection{Quantum circuits with quantum input and output}
In this section, we provide  a fault-tolerant realization for quantum circuits with quantum input and output against circuit-level stochastic noise. To do so, we will first provide Lemmas~\ref{lem:Phi-code-imp-stoc}, \ref{lem:interface-1-stc}, and \ref{lem:interface-2-stc}, which are analogous to Lemmas~\ref{lem:Phi-code-imp}, \ref{lem:interface-1}, and \ref{lem:interface-2}, respectively. These Lemmas then can be combined to prove the Theorem~\ref{thm:main-ft2-stoc}, which is analogous to Theorem~\ref{thm:main-ft2}. Here, we will skip the proofs due to the similarity with Lemmas from Section~\ref{sec:q-in-out}. 

%
\begin{lemma} \label{lem:Phi-code-imp-stoc}
Consider the quantum circuit $\Phi$ as in Lemma~\ref{lem:Phi-code-imp}. For an integer $k$, let $\Phi_k$ be the realization of $\Phi$ in the code $D_k$ and let $\tilde{\cT}_{\Phi_k}$ be the channel corresponding to the noisy realization of $\Phi_k$, under circuit-level stochastic noise with parameter $\delta$. Then, we have  
\begin{equation} \label{eq:tilde-R-1-stoc}
    \dnorm{  \tilde{\cT}_{\Phi_k} \circ \cJ_k^{\otimes n}  - \cJ_k^{\otimes n'} \circ \cU_k^{\otimes n'} \circ (\cT_{\Phi} \otimes \cF) \circ {\cU_k^\dagger}^{\otimes n}} \leq 2 |\Phi|  (c \delta)^k, 
\end{equation}
where $\cU_k := U_k (\cdot) U_k^\dagger$,  $\cF: \bL(F_k^{\otimes n}) \to \bL(F_k^{\otimes n'})$ is a quantum channel, $\cJ_k: \bL(L_k) \to \bL(N_k)$ is the natural embedding.
\end{lemma}
\medskip
\begin{lemma} \label{lem:interface-1-stc}
Let $\tilde{\cT}_{\Gamma_{1, k}}$ be the realization of $\Gamma_{1, k}$ under circuit-level stochastic noise with parameter $\delta$. Then, there exists a quantum channel $\cS_{\Gamma_{1, k}}: \bL(\mathbb{C}^2) \to \bL(L_k)$ such that 
\begin{equation} \label{eq:S-T-stoc}
    \dnorm{ \cJ_k \circ \cS_{\Gamma_{1, k}} - \tilde{\cT}_{\Gamma_{1, k}}} \leq 2 (c\delta)^k,
\end{equation}
and furthermore,
\begin{equation} \label{eq:s-gamma-1-k}
  \cS_{\Gamma_{1, k}} = (1 - \epsilon)  \: \cS^1_{\Gamma_{1, k}} + \epsilon \: \cS^2_{\Gamma_{1, k}},
\end{equation}
where $\epsilon \leq c\delta$, and $\cS^1_{\Gamma_{1, k}}, \cS^2_{\Gamma_{1, k}} : \bL(\mathbb{C}^2) \to \bL(L_k)$ are quantum channels. The channel $\cS^1_{\Gamma_{1, k}}$ satisfies  (using $\cU_k := U_k (\cdot) U_k^\dagger$)
\begin{equation} \label{eq:s-gamma-1-k-I}
    \tr_{F_k} \circ \: \cU_k^\dagger \circ \cS^1_{\Gamma_{1, k}} = \cI_{\mathbb{C}^2}.
\end{equation}
\end{lemma}
\begin{lemma} \label{lem:interface-2-stc}
Let $\tilde{\cT}_{\Gamma_{k, 1}}$ be the realization of $\Gamma_{k, 1}$ under circuit-level stochastic noise with parameter $\delta$.  Then, we have
\begin{equation} \label{eq:tilde-T-I}
    \tilde{\cT}_{\Gamma_{k, 1}} \circ \cJ_k \circ \cU_k =  (1- \epsilon) \: \cI_{\mathbb{C}^2} \otimes \tr_{F_k} + \epsilon \: \cZ_{\Gamma_{k, 1}},
\end{equation}
for $\epsilon \leq 2c\delta$, and some channel $\cZ_{\Gamma_{k, 1}} : \bL(\mathbb{C}^2 \otimes F_k) \to \bL(\mathbb{C}^2)$.
\end{lemma}
%
\begin{theorem} \label{thm:main-ft2-stoc}
Consider the quantum circuit $\Phi$ as in Theorem~\ref{thm:main-ft2}. For any integer $k$, there exists another circuit $\overline{\Phi}$ of size $|\Phi| \cdot \mathrm{poly}(k)$, with the same input-output systems as $\Phi$, such the channel  $\tilde{\cT}_{\overline{\Phi}}$, realized by $\overline{\Phi}$ with circuit-level stochastic channel with parameter $\delta$, satisfies
  \begin{equation} \label{eq:ft-io-stoc}
      \dnorm{\tilde{\cT}_{\overline{\Phi}}  - (\otimes_{i = 1}^{n'} \cW_i) \circ (\cT_{\Phi} \otimes \cF) \circ (\otimes_{i = 1}^n \cN_i) } \leq 4 |\Phi| (c \delta)^k,
  \end{equation}
  where $\cN_i : \bL(\mathbb{C}^2) \to \bL(\mathbb{C}^2 \otimes F_k), i \in [n]$, $\cF: \bL(F_k^{\otimes n}) \to \bL(F_k^{\otimes n'})$, and $\cW_i: \bL(\mathbb{C}^2 \otimes F_k) \to \bL(\mathbb{C}^2), i \in [n']$ are quantum channels, for an auxiliary Hilbert space $F_k$. The channels $\cN_i$ and $\cW_i$ satisfy,
  \begin{align}
    \tr_{F_k} \circ\: \cN_i = (1- \epsilon) \: \cI_{\mathbb{C}^2} + \epsilon \: \cZ_{\Gamma^i_{1, k}}, \label{eq:N_i-stoc}\\
    \cW_i = (1- \epsilon) \: \cI_{\mathbb{C}^2} \otimes \tr_{F_k} + \epsilon \: \cZ_{\Gamma^i_{k, 1}},\label{eq:W_i-stoc}
\end{align}
where $\epsilon \leq 2c\delta$ and  $\cZ_{\Gamma^i_{1, k}}: \bL(\mathbb{C}^2) \to \bL(\mathbb{C}^2)$ and $\cZ_{\Gamma^i_{k, 1}}: \bL(\mathbb{C}^2 \otimes F_k) \to\bL(\mathbb{C}^2)$ are some channels.
\end{theorem}

By taking the quantum circuit $
 \overline{\Phi} := (\otimes_{i=1}^{n'} \Gamma^i_{k, 1}) \circ \Phi_k \circ (\otimes_{i=1}^{n} \Gamma^i_{1, k})$, and using Lemma~\ref{lem:Phi-code-imp-stoc} for $\Phi_k$ , Lemma~\ref{lem:interface-1-stc} for  $\Gamma_{1, k}$, and Lemma~\ref{lem:interface-2-stc} for $\Gamma_{k, 1}$, the proof of Theorem~\ref{thm:main-ft2-stoc} can be done similarly to Theorem~\ref{thm:main-ft2}. We note that using Eq.~(\ref{eq:n-i-def}) and Lemma~\ref{lem:interface-1-stc}, one can get Eq.~(\ref{eq:N_i-stoc}). Furthermore, using Eq.~(\ref{eq:w-i-def}) and Lemma~\ref{lem:interface-2-stc}, one can get Eq.~(\ref{eq:W_i-stoc}). 
\subsubsection{Quantum circuits with only quantum input or quantum output}

Corollary~\ref{cor:enc-stc} below considers  fault-tolerant realization of quantum circuits with classical input and quantum output against circuit-level stochastic noise and is analogous to Corollary~\ref{cor:enc}.

\begin{corollary}\label{cor:enc-stc}
Consider the quantum circuit $\Phi$ from Theorem~\ref{thm:main-ft2-stoc}, with $n' = 0$. Then, there exists a quantum circuit $\overline{\Phi}$, with the same input-output systems as $\Phi$ and of size $|\Phi| \cdot \mathrm{poly}(k)$, such that corresponding channel $\tilde{\cT}_{\overline{\Phi}}$, under circuit-level stochastic channel with parameter $\delta$ satisfies,
\begin{equation} \label{eq:phi-bar-N-stoc}
    \dnorm{\tilde{\cT}_{\overline{\Phi}} -  \cT_{\Phi} \circ \overline{\cN}} \leq 4 |\Phi| (c\delta)^k,
\end{equation}
where $\overline{\cN}: \bL((\mathbb{C}^2)^{\otimes n}) \to \bL((\mathbb{C}^2)^{\otimes n})$ is an independent stochastic channel with parameter $2c\delta$.
\end{corollary} 

\begin{proof}
Similarly to Corollary~\ref{cor:enc}, substituting $n' = 0$ and $\cF = (\tr_{F_k})^{\otimes n}$ in Eq.~(\ref{eq:ft-io-stoc}) , we get that Eq.~(\ref{eq:phi-bar-N-stoc}) is satisfied for $\overline{\cN} = \otimes_{i = 1}^n \overline{\cN}_i$, where $\overline{\cN}_i = \tr_{F_k} \circ \cN_i$. From Eq.~(\ref{eq:N_i-stoc}), we get
\begin{equation}
\overline{\cN}_i =  (1- \epsilon) \: \cI_{\mathbb{C}^2} + \epsilon \: \cZ_{\Gamma^i_{1, k}},
\end{equation}
for $\epsilon \leq 2c\delta$. Therefore,  $\overline{\cN}$ is an independent stochastic channel with parameter $ 2c\delta$.
\end{proof}
Corollary~\ref{cor:dec-stc} below considers  fault-tolerant realization of quantum circuits with quantum input and classical output against circuit-level stochastic noise and is analogous to Corollary~\ref{cor:dec}.

\begin{corollary} \label{cor:dec-stc}
Consider the quantum circuit $\Phi$ from Theorem~\ref{thm:main-ft2}, with $n= 0$. Then, there exists a quantum circuit $\overline{\Phi}$, with the same input-output systems as $\Phi$ and of size $|\Phi| \cdot \mathrm{poly}(k)$, such that the corresponding channel $\tilde{\cT}_{\overline{\Phi}}$, under circuit-level stochastic channel with parameter $\delta$ satisfies,
\begin{equation} \label{eq:CQ-stc}
    \dnorm{\tilde{\cT}_{\overline{\Phi}} -  \overline{\cW} \circ \cT_{\Phi}} \leq 4 |\Phi| (c\delta)^k,
\end{equation}
where $\overline{\cW}: \bL((\mathbb{C}^2)^{\otimes n'}) \to \bL((\mathbb{C}^2)^{\otimes n'})$ is a local stochastic channel with parameter $2c\delta$.
\end{corollary}

\begin{proof}
Similarly to Corollary~\ref{cor:dec}, substituting $n = 0$ and a state preparation map for $\cF$ in Eq.~(\ref{eq:ft-io-stoc}), we get that Eq.~(\ref{eq:CQ-stc}) is satisfied for $\overline{\cW}$ such that 
\begin{equation} \label{eq:w-bar-stc}
    \overline{\cW} = (\otimes_{i = 1}^{n'} \cW_i) \big((\cdot) \otimes \eta \big),
\end{equation}
 where $\eta \in \bL(F_k^{\otimes n'})$ is a quantum state, and $\cW_i, i \in [n']$ is given by, 
 \begin{equation} \label{eq:w-i-stoc}
     \cW_i = (1- \epsilon) \: \cI_{\mathbb{C}^2} \otimes \tr_{F_k} + \epsilon \: \cZ_{\Gamma^i_{k, 1}},
 \end{equation}
for $\epsilon \leq 2c \delta$ and some  channel $\cZ_{\Gamma^i_{k, 1}} : \bL(\mathbb{C}^2 \otimes F_k) \to \bL(\mathbb{C}^2)$. 

\smallskip We now show that the channel $\overline{\cW}$ from Eq.~(\ref{eq:w-bar-stc}) is a local stochastic channel, by verifying that $\overline{\cW}$ satisfies Eqs.~(\ref{eq:local-stc-1}) and~(\ref{eq:local-stc-2}). We will use the following notation: for a subset $A \subseteq [n']$, we define
\begin{equation}
    \cZ_A := (\otimes_{i = 1}^{|A|} \cZ_{\Gamma^i_{k, 1}}) \big(\cdot \otimes \: \eta_A \big),
\end{equation}
where $\eta_A := (\otimes_{i \in [n'] \setminus A } \tr_{F_k^i}) (\eta)$. Note that $\cZ_A : \bL((\mathbb{C}^2)^{\otimes |A|}) \to \bL((\mathbb{C}^2)^{\otimes |A|})$ is a quantum channel. Now using Eq.~(\ref{eq:w-i-stoc}), we can expand $\otimes_{i = 1}^{n'} \cW_i$ as follows,
\begin{equation} \label{eq:W-bar-exp-1-stc}
  \otimes_{i = 1}^{n'} \cW_i = \sum_{A \subseteq [n']} (1-\epsilon)^{n' - |A|} \epsilon^{|A|} (\otimes_{i \in [n'] \setminus A} \: \cI_i \otimes \tr_{F_k} ) \otimes (\otimes_{i \in A} \cZ_{\Gamma^i_{k, 1}}).
\end{equation}
Note that 
\begin{equation} \label{eq:W-bar-exp-2-stc}
    (\otimes_{i \in [n'] \setminus A} \: \cI_i \otimes \tr_{F_k} ) \otimes (\otimes_{i \in A} \cZ_{\Gamma^i_{k, 1}}) (\cdot \otimes \eta) =  \left((\otimes_{i \in [n'] \setminus A} \: \cI_i ) \otimes \cZ_A \right).
\end{equation}
We take $\cV_A := (1-\epsilon)^{n' - |A|} \epsilon^{|A|} \cZ_A$, which is clearly a completely positive map. From Eqs.~(\ref{eq:W-bar-exp-1-stc}) and~(\ref{eq:W-bar-exp-2-stc}), we have
\begin{align}
    \overline{\cW} &= \sum_{A \subseteq [n']} (1-\epsilon)^{n' - |A|} \epsilon^{|A|} \left((\otimes_{i \in [n'] \setminus A} \: \cI_i ) \otimes \cZ_A \right) \nonumber\\
    &= \sum_{A \subseteq [n']}  (\otimes_{i \in [n'] \setminus A} \: \cI_i ) \otimes \cV_A. \label{eq:bar-w-decomp}
\end{align}
Furthermore,
\begin{align}
  \dnorm{\sum_{A : T \subseteq A} \big( \otimes_{i \in [n'] \setminus A} \cI_i \big) \otimes \cV_A } &\leq \sum_{A : T \subseteq A} (1-\epsilon)^{n' - |A|} \epsilon^{|A|} \nonumber\\
  &=  \sum_{j = 0}^{n' - |T|} \binom{n'- |T|}{j} (1-\epsilon)^{n' - |T| - j} \epsilon^{|T| + j} \nonumber\\
  & = \epsilon^{|T|}. \label{eq:bar-w-loc-stoc}
\end{align} 
Therefore, from Eq.~(\ref{eq:bar-w-decomp}) and Eq.~(\ref{eq:bar-w-loc-stoc}), $\overline{\cW}$ is a local stochastic channel.
\end{proof}

\section{Fault-tolerant communication} \label{sec:ft-comm}
One of the goals of quantum Shannon theory is to design codes for reliable quantum communication between the sender and the receiver connected by a noisy channel~\cite{wilde2013quantum, holevo1998capacity, devetak2005private}. In~\cite{christandl2022fault,belzig2023fault}, it is noted that the standard quantum Shannon theory is not fit for practical applications as it only considers the noise in the communication channel and disregards the noise in encoding and decoding circuits of the communication code. This necessitates the construction of fault-tolerant communication codes, consisting of fault-tolerant encoders and decoders. In~\cite{christandl2022fault,belzig2023fault}, assuming a circuit-level Pauli noise, it has been shown that fault-tolerant encoders and decoders can be constructed from standard non fault-tolerant encoders and decoders. Albeit the main focus  of~\cite{christandl2022fault} is proving achievability of channel capacity in the fault-tolerant setting, a specific fault-tolerant communication code has also been provided using communication codes that have linear minimum distance~\cite[Section~V.C]{christandl2022fault}. In this section, we shall focus on the latter part and
go beyond the results in~\cite[Section~V.C]{christandl2022fault} in the following ways:
\begin{enumerate}
    \item We consider more realistic circuit-level noise such as general circuit noise and circuit-level stochastic noise from Sec.~\ref{sec:noise-qcirc}.

    \item Using the fault-tolerant realizations given in Section~\ref{sec:ft-real}, we provide tools to construct fault-tolerant communication codes using a notion of \emph{robust} communication codes, where robustness is defined in terms of adversarial channel as in Def.~\ref{def:adver-can} and local stochastic channel as in Def.~\ref{def:local-stoc}. Therefore, a general method is provided for importing existing communication codes to the fault-tolerant scenario. 

    \item For general circuit noise, we show that fault-tolerant communication codes can be obtained from communication codes with a linear minimum distance; therefore, extending the results from~\cite[Section~V.C]{christandl2022fault} to general circuit noise. For circuit-level stochastic noise, we show that that fault-tolerant communication codes can be obtained from communication codes with a sub-linear minimum distance; therefore, improving significantly the specific coding construction from~\cite[Section~V.C]{christandl2022fault}.  
\end{enumerate}

\medskip To avoid confusion with the communication codes, we shall refer to the codes $D_k, k= 1, 2, \dots$ used in the fault-tolerant scheme from Def.~\ref{def:ft_scheme} and~\ref{def:ft_scheme_stoc}, as the computational codes.

\subsection{Fault-tolerant codes for quantum communication}
We shall work with the same definition of fault-tolerant communication codes, as given in~\cite[Def.~V.6]{christandl2022fault}. Before giving the definition of  fault-tolerant communication codes, we firstly recall standard one shot communication codes in the following definition.

\begin{definition}[Communication codes] \label{def:comm-code}
 Let $\Lambda: \bL(H_A) \to \bL(H_B)$ be a communication channel. For an integer $m$ and $\zeta \in [0, 1]$, an $(m, \zeta)$ communication code for $\Lambda$ is defined by encoding and decoding channels $\cE: \bL((\mathbb{C}^2)^{\otimes m}) \to \bL(H_A)$ and $\cD: \bL(H_B)  \to \bL((\mathbb{C}^2)^{\otimes m})$, respectively, such that 
 \begin{equation}
    \dnorm{ \mathcal{D} \circ \Lambda \circ \mathcal{E} - \cI_{(\mathbb{C}^2)^{\otimes m}}} \leq \zeta.
\end{equation}
\end{definition}

\medskip In the following, we shall consider spaces $H_{A}$, and $H_B$ in $\Lambda: \bL(H_A) \to \bL(H_B)$ to be multi-qubit systems, that is, $H_{A} = (\mathbb{C}^2)^{\otimes n_A}$, and $H_{B} = (\mathbb{C}^2)^{\otimes n_B}$.

\medskip For fault-tolerant communication, the goal is to transfer logical information encoded in a computational code through the communication channel. Naturally, to define fault-tolerant communication codes, one needs to consider encoding and decoding circuits in contrast to encoding and decoding channels considered in Def.~\ref{def:comm-code}.

\begin{definition}[Fault-tolerant communication code] \label{def:fault-tol-comm}
A fault-tolerant code on a communication channel $\Lambda: \bL(H_A) \to \bL(H_B)$, where  $H_{A} = (\mathbb{C}^2)^{\otimes n_A}, \: H_{B} = (\mathbb{C}^2)^{\otimes n_B}$, using the computational code $D_k$ consists of the following steps:
\begin{enumerate}
    \item The sender holds an $m$ qubit logical state encoded in the computational code $D_k$. 

    \item The sender applies an encoding circuit $\Phi_A$ on the logical qubits. $\Phi_A$ takes $m$ logical qubits as input and outputs a quantum system $H_A$. Note that $\Phi_A$ realizes a quantum channel $\cT_{\Phi_A}: \bL(N_k^{\otimes m}) \to \bL(H_A)$. 
    
    \item  The channel $\Lambda: \bL(H_A) \to \bL(H_B)$ is applied on the output of $\Phi_A$, and the receiver gets the channel output.

    \item  Finally, the receiver decodes the channel output using a decoding circuit $\Phi_B$, which takes a quantum system $H_B$ as input and outputs $m$ logical qubits encoded in the computational code $D_k$. Note that $\Phi_B$ realizes a quantum channel $\cT_{\Phi_B}: \bL(H_B) \to \bL(N_k^{\otimes m})$.  
\end{enumerate} 
\end{definition}
To quantify the communication error of a fault-tolerant communication code, we will use the same measure as in~\cite[Def.~V.6]{christandl2022fault}, which is defined as follows.

\medskip Let $\Phi^1$ be a state preparation circuit (classical input and quantum output), and $\Phi^2$ be a measurement circuit (quantum input and classical output), such that their sizes are upper bounded by $l > 0$, that is, $|\Phi^1|, |\Phi^2| \leq l$.  Let $\Phi^1_k$ and $\Phi^2_k$ be the  realizations of $\Phi^1$ and $\Phi^2$, respectively, in the code $D_k$. For a quantum circuit $\Phi$, consider the corresponding quantum channel $\cT_\Phi$, and also $\tilde{\cT}_\Phi$ corresponding to the noisy realization of $\Phi$ under a circuit noise with error rate $\delta$ (e.g. general circuit noise or circuit-level stochastic noise from Section~\ref{sec:noise-qcirc}). Then the communication error of a fault-tolerant communication code according to Def.~\ref{def:fault-tol-comm} is given by, 
\begin{equation} \label{eq:comm-err}
  \zeta^{l, \delta} := \max_{\substack{\Phi^1, \Phi^2 \\ |\Phi^1|, |\Phi^2|  \leq l}} \: \max_{\tilde{\cT}_{\Phi^1_k}, \tilde{\cT}_{\Phi_A}, \tilde{\cT}_{\Phi_B}, \tilde{\cT}_{\Phi^2_k}} \dnorm{\cT_{\Phi^2} \circ \cT_{\Phi^1} -  \tilde{\cT}_{\Phi^2_k}  \circ \tilde{\cT}_{\Phi_B} \circ \Lambda \circ \tilde{\cT}_{\Phi_A} \circ \tilde{\cT}_{\Phi^1_k}},
\end{equation}
where the first maximization is over all noisy realizations of $\tilde{\cT}_{\Phi^1_k}, \tilde{\cT}_{\Phi_A}, \tilde{\cT}_{\Phi_B}, \tilde{\cT}_{\Phi^2_k}$, with error rate $\delta$, and the second maximization is over all the preparation and measurement circuits of size $l$. We know from Solovay-Kitaev theorem~\cite{kitaev, nielsen2001quantum, dawson2005solovay} that for any state preparation channel $\cP: \mathbb{B}^p \to \bL((\mathbb{C}^2)^{\otimes m})$ and any measurement channel $\cM: \bL((\mathbb{C}^2)^{\otimes m}) \to \mathbb{B}^{p'}$, there exist circuits of size no more than an $l(m, \zeta) > 0$, realizing $\cP$ and $\cM$, up to an accuracy $\zeta > 0$. Therefore, it suffices to consider $l := l(m, \zeta)$ in Eq.~(\ref{eq:comm-err}), for sufficiently small $\zeta$.

\medskip We now give a definition of $(m, \zeta, \delta)$ fault-tolerant communication codes for $\Lambda: \bL(H_A) \to \bL(H_B)$.

\begin{definition}[$(m, \zeta, \delta)$ fault-tolerant communication codes] \label{def:ft-comm-code}
    Consider a fault-tolerant communication code according to Def.~\ref{def:fault-tol-comm} on $\Lambda: \bL(H_A) \to \bL(H_B)$. It is said to be an $(m, \zeta, \delta)$ code if the communication error according to Eq.~(\ref{eq:comm-err}) under a circuit noise of error rate $\delta$ and $l = l(m, \zeta)$ is upper bounded by $\zeta$, that is, $\zeta^{l, \delta} \leq \zeta$.
\end{definition}

\subsection{Construction of fault-tolerant communication codes for general circuit noise}

In this section, we consider general circuit noise from Section~\ref{sec:noise-qcirc}, and show that fault-tolerant communication codes can be constructed using a notion of adversarial robust communication codes. Furthermore, we show that adversarial robust communication codes can be obtained from communication codes correcting  arbitrary errors of a linear weight; therefore, providing a construction of fault-tolerant codes. We note that there are many classes of known codes that correct arbitrary errors of a linear weight, notably, \emph{asymptotically good} quantum codes, which have linear minimum distance and communication rate~\cite{ashikhmin2001asymptotically, chen2001asymptotically, li2009family, matsumoto2002improvement, brown2013short, panteleev2022asymptotically, leverrier2022quantum}.

\smallskip Below, we define adversarial robust communication codes, using the adversarial channel from Def.~\ref{def:adver-can}.

\begin{definition}[$(m, \zeta, \delta)$ adversarial robust communication codes] \label{def:robust-can-code}
Consider a communication code for $\Lambda: \bL(H_A) \to \bL(H_B), \: H_{A} = (\mathbb{C}^2)^{\otimes n_A}, \: H_{B} = (\mathbb{C}^2)^{\otimes n_B}$ according to Def.~\ref{def:comm-code}, defined by $\cE: \bL((\mathbb{C}^2)^{\otimes m}) \to \bL(H_A)$ and $\cD: \bL(H_B)  \to \bL((\mathbb{C}^2)^{\otimes m})$. We say it is an $(m, \zeta, \delta)$ adversarial robust code if for any adversarial channels $\cV_A:\bL(H_A) \to \bL(H_A)$ and $\cV_B:\bL(H_B) \to \bL(H_B)$, with parameter $\delta$ (see Def.~\ref{def:adver-can}), the channels $\cE$ and $\cD$ give an $(m, \zeta)$ communication code for the combined channel $\cV_B \circ \Lambda \circ \cV_A$.
\end{definition}

\smallskip  
We say that $\cE:\bL((\mathbb{C}^2)^{\otimes m}) \to \bL(H)$ and $\cD: \bL(H) \to \bL((\mathbb{C}^2)^{\otimes m})$ corrects arbitrary errors of weight $t$ if 
for any superoperator $\cV_{t}: \bL(H) \to \bL(H)$ of weight $t$, we have
 \begin{equation} \label{eq:corr-cond}
     \cD \circ \cV_{t} \circ \cE = K(\cV_{t}) \: \cI_{(\mathbb{C}^2)^{\otimes m}}, 
 \end{equation}
where $K(\cV_{t})$ is a constant depending only on $\cV_t$. We note that $K(\cV_t) = 1$, when $\cV_t$ is quantum channel. In general, we have   $K(\cV_t) \: \ident_{(\mathbb{C}^2)^{\otimes m}} = (\cE^* \circ \cV_t^*)(\ident_{(\mathbb{C}^2)^{\otimes n}})$.

\medskip In the following theorem, we show that fault-tolerant communication codes can be obtained from adversarial robust communication codes.

\begin{theorem} \label{thm:robust-ft-code}
Consider $c, \delta \geq 0$ from Def.~\ref{def:ft_scheme} and let $\delta' := 3c\delta$. Let $(m, \zeta, \delta')$ be an adversarial robust communication code for $\Lambda: \bL(H_A) \to \bL(H_B), \: H_{A} = (\mathbb{C}^2)^{\otimes n_A}, \: H_{B} = (\mathbb{C}^2)^{\otimes n_B}$, according to Def.~\ref{def:robust-can-code}. Then there exists $k_0 > 0$, such that for any computational code $D_k, k \geq k_0$, one can obtain an $(m, 2\zeta, \delta)$ fault-tolerant communication code for $\Lambda$.
\end{theorem}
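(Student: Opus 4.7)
The plan is to choose $\Phi_A$ and $\Phi_B$ so that the full sender-side circuit $\Phi_A \circ \Phi^1_k$ and receiver-side circuit $\Phi^2_k \circ \Phi_B$ are exactly the canonical fault-tolerant realizations from Eq.~(\ref{eq:circuit-Phi-prime}), and then to apply Corollaries~\ref{cor:dec} and~\ref{cor:enc} to push all the circuit noise onto the channel $\Lambda$ as an adversarial perturbation, which is absorbed by the robustness hypothesis on $(\cE, \cD)$.

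First I would fix circuits $\Phi_{\cE}$ and $\Phi_{\cD}$ in the gate set $\mathbf{A}$ realizing $\cE$ and $\cD$ (by Solovay--Kitaev, up to an arbitrarily small error absorbed into the $\zeta$ budget); their sizes depend only on $m$ and $\Lambda$, not on $k$. Then I would define
\begin{equation*}
\Phi_A := (\otimes_{i=1}^{n_A} \Gamma^i_{k,1}) \circ \Phi_{\cE,k}, \qquad \Phi_B := \Phi_{\cD,k} \circ (\otimes_{i=1}^{n_B} \Gamma^i_{1,k}),
\end{equation*}
where $\Phi_{\cE,k}, \Phi_{\cD,k}$ are the gate-by-gate $D_k$-encoded versions and the $\Gamma$'s are the interfaces from the fault-tolerant scheme. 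Because $D_k$-encoding composes gate-wise with circuit composition, for any preparation circuit $\Phi^1$ and any measurement circuit $\Phi^2$ (each of size $\leq l := l(m,\zeta)$), one verifies the key identities
\begin{equation*}
\Phi_A \circ \Phi^1_k = \overline{\Phi_{\cE} \circ \Phi^1}, \qquad \Phi^2_k \circ \Phi_B = \overline{\Phi^2 \circ \Phi_{\cD}},
\end{equation*}
the right-hand sides being exactly the canonical fault-tolerant realizations of Eq.~(\ref{eq:circuit-Phi-prime}) applied to a classical-input/$n_A$-qubit-output and an $n_B$-qubit-input/classical-output circuit, respectively. Crucially, $\Phi_A$ and $\Phi_B$ themselves do not depend on $\Phi^1, \Phi^2$ and hence form well-defined components of the fault-tolerant code.

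Next I would apply Corollary~\ref{cor:dec} on the sender side and Corollary~\ref{cor:enc} on the receiver side to obtain, for any noisy realizations,
\begin{align*}
\dnorm{\tilde{\cT}_{\Phi_A} \circ \tilde{\cT}_{\Phi^1_k} - \overline{\cW}_A \circ \cE \circ \cT_{\Phi^1}} &\leq 4(l + |\Phi_{\cE}|) \sqrt{(c\delta)^k}, \\
\dnorm{\tilde{\cT}_{\Phi^2_k} \circ \tilde{\cT}_{\Phi_B} - \cT_{\Phi^2} \circ \cD \circ \overline{\cN}_B} &\leq 4(l + |\Phi_{\cD}|) \sqrt{(c\delta)^k},
\end{align*}
with $\overline{\cW}_A$ adversarial on $H_A$ of parameter $2c\delta$ and $\overline{\cN}_B = \otimes_{i=1}^{n_B} \overline{\cN}_i$ a product of qubit channels satisfying $\dnorm{\overline{\cN}_i - \cI_{\mathbb{C}^2}} \leq 3c\delta$. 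By Remark~\ref{rem:ex-advers}, $\overline{\cN}_B$ is adversarial of parameter $3c\delta = \delta'$, and $\overline{\cW}_A$ has parameter $2c\delta \leq \delta'$, so both fit within the robustness radius of $(\cE, \cD)$. The adversarial-robustness hypothesis then gives
\begin{equation*}
\dnorm{\cD \circ \overline{\cN}_B \circ \Lambda \circ \overline{\cW}_A \circ \cE - \cI_{(\mathbb{C}^2)^{\otimes m}}} \leq \zeta.
\end{equation*}
Sandwiching this between $\cT_{\Phi^1}$ and $\cT_{\Phi^2}$, inserting $\Lambda$ between the two corollary bounds, and chaining via the triangle inequality and monotonicity of $\dnorm{\cdot}$ under composition with channels yields a total communication error bounded by $\zeta + O((l + |\Phi_{\cE}| + |\Phi_{\cD}|)\sqrt{(c\delta)^k})$, which becomes $\leq 2\zeta$ once $k \geq k_0$ is chosen logarithmically large in $l(m,\zeta)/\zeta$.

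The hard part is really only matching constants: ensuring that the $2c\delta$ and $3c\delta$ adversarial parameters produced by the two corollaries both lie within the $\delta' = 3c\delta$ robustness radius of the code. That is precisely what the choice $\delta' = 3c\delta$ in the hypothesis is engineered for, and once this is arranged the remainder of the argument is triangle-inequality bookkeeping, valid uniformly in the circuits $\Phi^1, \Phi^2$ and in their noisy realizations.
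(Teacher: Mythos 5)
Your proposal is correct and follows essentially the same route as the paper: the paper likewise fixes circuits $\Phi_E,\Phi_D$ realizing $\cE,\cD$ up to a small error, sets $\Phi_A := (\otimes_i \Gamma^i_{k,1})\circ\Phi_{E,k}$ and $\Phi_B := \Phi_{D,k}\circ(\otimes_i\Gamma^i_{1,k})$, applies Corollary~\ref{cor:dec} to the composite state-preparation circuit $\Phi_A\circ\Phi^1_k$ and Corollary~\ref{cor:enc} to the composite measurement circuit $\Phi^2_k\circ\Phi_B$, uses the same monotonicity-of-adversarial-parameter observation to bring the resulting $2c\delta$ and $3c\delta$ channels under the $\delta'=3c\delta$ robustness radius, and then chains with the triangle inequality and the choice of $k_0$. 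The only cosmetic difference is bookkeeping: the paper explicitly budgets $\zeta/4$ for each of the Solovay--Kitaev approximations to $\cE$ and $\cD$, whereas you absorb that error as ``arbitrarily small,'' but both accountings are valid and give the same $2\zeta$ bound.
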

\begin{proof}
Consider an $(m, \zeta, \delta')$ adversarial robust code on $\Lambda$ given by the channels $\cE: \bL((\mathbb{C}^2)^{\otimes m})  \to \bL(H_A)$ and $\cD: \bL(H_B)  \to \bL((\mathbb{C}^2)^{\otimes m})$. Therefore, we have
\begin{equation} \label{eq:can-cod-stan}
    \dnorm{ \mathcal{D} \circ \cV_B \circ \Lambda \circ \cV_A \circ \mathcal{E} - \cI_{(\mathbb{C}^2)^{\otimes m}}} \leq \zeta,
\end{equation}
for any adversarial channels $\cV_A:\bL(H_A) \to \bL(H_A)$, and $\cV_B:\bL(H_B) \to \bL(H_B)$, with parameter $\delta'$.

\medskip\noindent Consider quantum circuits $\Phi_E$ and $\Phi_D$, such that 
  \begin{align}
     \dnorm{ \cT_{\Phi_E} - \cE} &\leq \frac{\zeta}{4}, \label{eq:phi-E} \\
     \dnorm{ \cT_{\Phi_D} - \cD} &\leq \frac{\zeta}{4}, \label{eq:phi-D}
  \end{align} 
 and let $\Phi_{E, k}$ and $\Phi_{D, k}$ be their realizations in the code $D_k$. We define quantum circuits $\Phi_A$ and $\Phi_B$ in Def.~\ref{def:fault-tol-comm} as follows, 
 \begin{align}
     \Phi_A := (\otimes_{i=1}^{n_A} \Gamma^i_{k, 1}) \circ \Phi_{E,k}. \\
     \Phi_B :=  \Phi_{D,k} \circ (\otimes_{i=1}^{n_B} \Gamma^i_{1, k}).
 \end{align}

Consider now a state preparation circuit $\Phi^1$, with $m$ qubit output and a measurement circuit $\Phi^2$, with $m$ qubit input, such that $\max(|\Phi^1|, |\Phi^2|) \leq l(m, \zeta)$. Let $\Phi^1_k$ and $\Phi^2_k$ be their realizations in the code $D_k$.

\medskip Applying Corollary~\ref{cor:dec} on the state preparation circuit $\Phi_A \circ \Phi^1_k$, we get for any $\tilde{\cT}_{\Phi_A} \in \trans(\Phi_A, \delta)$ and $\tilde{\cT}_{\Phi_k^1} \in \trans(\Phi_k^1, \delta)$
\begin{equation} \label{eq:phi^1}
    \dnorm{\tilde{\cT}_{\Phi_A} \circ \tilde{\cT}_{\Phi^1_k} -  \overline{\cW} \circ  \cT_{\Phi_E} \circ \cT_{\Phi^1}} \leq 4 ( |\Phi_E| + l(m, \zeta)) \sqrt{(c\delta)^k},
\end{equation}
where $\overline{\cW}$ is an adversarial channel with parameter $\delta'$.

\medskip Applying Corollary~\ref{cor:enc} on the measurement circuit $\Phi^2_k \circ \Phi_B $, we get for any $\tilde{\cT}_{\Phi_B} \in \trans(\Phi_B, \delta)$ and $\tilde{\cT}_{\Phi_k^2} \in \trans(\Phi_k^2, \delta)$, 
\begin{equation} \label{eq:phi^2}
    \dnorm{ \tilde{\cT}_{\Phi^2_k} \circ \tilde{\cT}_{\Phi_B}  - \cT_{\Phi^2} \circ \cT_{\Phi_D} \circ \overline{\cN}} \leq 4 ( |\Phi_D| + l(m, \zeta)) \sqrt{(c\delta)^k},
\end{equation}
where $\overline{\cN}$ is an adversarial channel with parameter $\delta'$.

\medskip \noindent We take $k_0$ sufficiently big such that, 
\begin{equation} \label{eq:k-o}
    4( |\Phi_E| + |\Phi_D| +2l(m, \zeta))  \sqrt{(c\delta)^{k_0}}\leq \frac{\zeta}{2}
\end{equation}
For all $k \geq k_0$, the communication error (see Eq.~(\ref{eq:comm-err})) is given by,
\begin{align}
\zeta^{l, \delta} &=   \dnorm{\cT_{\Phi^2} \circ \cT_{\Phi^1} - \tilde{\cT}_{\Phi^2_k} \circ \tilde{\cT}_{\Phi_B} \circ \Lambda \circ \tilde{\cT}_{\Phi_A} \circ \tilde{\cT}_{\Phi^1_k} } \nonumber \\
 &  \leq  \frac{\zeta}{2} +  \dnorm{ \cT_{\Phi^2} \circ \cT_{\Phi^1} - \cT_{\Phi^2} \circ \cT_{\Phi_D} \circ \overline{\cN} \circ \Lambda \circ  \overline{\cW} \circ  \cT_{\Phi_E} \circ \cT_{\Phi^1}}  \nonumber \\
& \leq  \zeta +  \dnorm{ \cT_{\Phi^2} \circ \cT_{\Phi^1} 
- \cT_{\Phi^2} \circ \cD \circ \overline{\cN} \circ \Lambda \circ  \overline{\cW}  \circ  \cE \circ \cT_{\Phi^1}} \nonumber \\
& \leq 2 \zeta,
\end{align}
where the first inequality follows from Eq.~(\ref{eq:phi^1}),~(\ref{eq:phi^2}), and~(\ref{eq:k-o}), the second inequality follows from Eq.~(\ref{eq:phi-E}) and~(\ref{eq:phi-D}), and the last inequality follows from Eq.~(\ref{eq:can-cod-stan}).  Therefore, from Def.~\ref{def:ft-comm-code}, $\Phi_A$ and $\Phi_B$ give an $(m, 2\zeta, \delta)$ fault-tolerant communication code. 
\end{proof}

\medskip In the following theorem, we consider $\cE$ and $\cD$ correcting arbitrary errors of weight $t = \alpha \: n$ according to  Eq.~(\ref{eq:corr-cond}), and the communication channel $\Lambda_\nu: \bL(H) \to \bL(H), \: H = (\mathbb{C}^2)^{\otimes n}$ to be an adversarial channel with parameter $\nu \in [0, 1]$. Using $\cE$ and $\cD$, we provide reliable fault-tolerant communication code on $\Lambda_\nu$, with a communication error decreasing exponentially with the code length.
\begin{theorem} \label{thm:ft-comm-code-general}
   Consider $c, \delta > 0$ from Def.~\ref{def:ft_scheme} and let $\delta' = 3c\delta$. Let $\Lambda_\nu: \bL(H) \to \bL(H), \: H = (\mathbb{C}^2)^{\otimes n}$ be an adversarial channel with parameter $\nu \in [0, 1]$. Suppose channels $\cE:\bL((\mathbb{C}^2)^{\otimes m}) \to \bL(H)$ and $\cD: \bL(H) \to \bL((\mathbb{C}^2)^{\otimes m})$ can correct arbitrary errors of weight  $t = \alpha n$ for an $\alpha \in [0, 1]$, such that $\alpha \geq 5(2\delta' + \nu)$. Then there exists $k_0 > 0$, such that for any computational code $D_k, k \geq k_0$, one can obtain an $(m, 2\zeta, \delta)$ fault-tolerant communication code for $\Lambda_\nu$, where $\zeta :=  14 \exp(-\frac{n \min\{\delta', \nu\}}{3})$.
\end{theorem}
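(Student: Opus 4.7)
The plan is to invoke Theorem~\ref{thm:robust-ft-code} with parameter $\delta' = 3c\delta$, which reduces the task to showing that the pair $(\cE,\cD)$ already constitutes an $(m,\zeta,\delta')$ \emph{adversarial robust} communication code for $\Lambda_\nu$ in the sense of Def.~\ref{def:robust-can-code}. Concretely, I need to verify that for every adversarial $\cV_A,\cV_B:\bL(H)\to\bL(H)$ of parameter $\delta'$ one has
\begin{equation*}
  \dnorm{\cD \circ \cV_B \circ \Lambda_\nu \circ \cV_A \circ \cE - \cI_{(\mathbb{C}^2)^{\otimes m}}} \leq \zeta.
\end{equation*}
Once this is established, Theorem~\ref{thm:robust-ft-code} produces an $(m,2\zeta,\delta)$ fault-tolerant code for all $k$ large enough.

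The key step is to truncate each of the three adversarial channels via Def.~\ref{def:adver-can}: write $\cV_A = \cV_A^{t_A} + \mathcal{E}_A$, $\cV_B = \cV_B^{t_B} + \mathcal{E}_B$, $\Lambda_\nu = \Lambda_\nu^{t_\nu} + \mathcal{E}_\nu$, where the first summands have weights $t_A, t_B, t_\nu$ and the tails obey $\dnorm{\mathcal{E}_*} \leq \sum_{j>t_*}\binom{n}{j}(\delta_*)^j$. I would choose $t_A = t_B = 5n\delta'$ and $t_\nu = 5n\nu$ so that $t_A + t_B + t_\nu = (10\delta' + 5\nu)n \leq \alpha n$ by hypothesis. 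A Chernoff-style estimate of the form $\sum_{j>5n\delta}\binom{n}{j}\delta^j \leq e^{n\delta}\,\Pr[\mathrm{Bin}(n,\delta)>5n\delta] \leq e^{-n\delta/3}$ then controls each tail by $e^{-n\delta'/3}$ or $e^{-n\nu/3}$, each at most $e^{-n\min\{\delta',\nu\}/3}$.

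Next, I would expand $\cV_B \circ \Lambda_\nu \circ \cV_A$ into the ``main'' term $\cZ := \cV_B^{t_B} \circ \Lambda_\nu^{t_\nu} \circ \cV_A^{t_A}$ of total weight at most $\alpha n$, plus seven cross-terms each containing at least one tail factor $\mathcal{E}_*$. Submultiplicativity of $\dnorm{\cdot}$ together with $\dnorm{\cV_A},\dnorm{\cV_B},\dnorm{\Lambda_\nu} \leq 1$ bounds the cross-terms by a constant multiple of $e^{-n\min\{\delta',\nu\}/3}$. The correction hypothesis~(\ref{eq:corr-cond}) applied to the weight-$\alpha n$ superoperator $\cZ$ gives $\cD \circ \cZ \circ \cE = K\, \cI_{(\mathbb{C}^2)^{\otimes m}}$ for some scalar $K = K(\cZ)$. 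To see $|K-1|$ is small, I would pick any state $\rho$ and compute $K = \tr(\cZ \circ \cE(\rho))/\tr(\rho)$ (using that $\cD$ is trace preserving), then compare with $\tr(\cV_B \circ \Lambda_\nu \circ \cV_A \circ \cE(\rho)) = 1$; this yields $|K-1| \leq \dnorm{\cZ - \cV_B \circ \Lambda_\nu \circ \cV_A}$, which was already bounded by the cross-term estimate. Collecting the truncation error, the cross-term error, and $|K-1|$ should produce the advertised bound $\zeta = 14\exp(-n\min\{\delta',\nu\}/3)$.

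The main obstacle I anticipate is the accounting of constants: the thresholds $5n\delta'$ and $5n\nu$ must simultaneously (i) saturate the slack in $\alpha \geq 5(2\delta' + \nu)$, and (ii) yield the clean exponent $1/3$ in the tail bound. This pins down the combinatorial estimate $\sum_{j>5n\delta}\binom{n}{j}\delta^j \leq e^{-n\delta/3}$, which requires the precise form of Chernoff I used above (the naive bound $(en\delta/t)^t$ does not produce the correct base of the exponential). A secondary subtlety is that the truncated objects $\cV_*^{t_*}$ are not necessarily completely positive, so the argument for $|K-1|$ must proceed via trace functionals (as sketched) rather than by comparing $\cD \circ \cZ \circ \cE$ and $\cI$ in diamond norm directly, since the latter comparison already presupposes $K \approx 1$.
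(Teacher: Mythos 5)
Your proposal follows essentially the same route as the paper: reduce to adversarial robustness via Theorem~\ref{thm:robust-ft-code}, truncate each of the three adversarial channels at weight $5n\delta_*$, control the tails with a Chernoff estimate giving exponent $n\delta_*/3$, compose the truncations into a superoperator of weight $\leq 5(2\delta'+\nu)n \leq \alpha n$, apply the correctability hypothesis~(\ref{eq:corr-cond}), and bound $|K-1|$ by a trace comparison using that $\cD$ is trace-preserving. The two minor cosmetic differences are that you expand $\cV_B\circ\Lambda_\nu\circ\cV_A$ into $2^3$ terms rather than the paper's telescoping sum (which keeps the final constant at $7$ with less slack), and your intermediate identity $\sum_{j>t}\binom{n}{j}\delta^j = e^{n\delta}\Pr[\mathrm{Bin}(n,\delta)>t]$ is not exact (the paper's exact version uses the tilted distribution $\mathrm{Bin}(n,\delta/(1+\delta))$ with prefactor $(1+\delta)^n$); neither affects the substance of the argument, and you have correctly flagged the parts that need careful bookkeeping.
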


\begin{proof}
We will show that $\cE$ and $\cD$ define an $(m, \zeta)$ communication code for $ \overline{\Lambda}_\nu  := \cV_B \circ \Lambda_\nu \circ \cV_A$ for any adversarial channels $\cV_A: \bL(H) \to \bL(H)$ and $\cV_B: \bL(H) \to \bL(H)$ with parameter $\delta'$, i.e., they define an $(m, \zeta, \delta')$ adversarial robust communication code on $\Lambda_\nu$. Then, from Theorem~\ref{thm:robust-ft-code}, it follows that $\cE$ and $\cD$ define an $(m, 2\zeta, \delta)$ fault-tolerant code. 

\medskip Firstly, we show that any adversarial channel $\cV: \bL(H) \to \bL(H)$ with parameter $\delta$ can be approximated by a superoperator of weight $5 n \delta$.     From Def.~\ref{def:adver-can}, we know that there exists a superoperator $\cV_t$ of weight $t$ such that 
\begin{align}
    \dnorm{\cV - \cV_t} &\leq  \sum_{j > t}^n \binom{n}{j} \delta^j \nonumber \\
    & = (1 + \delta)^n \sum_{j > t}^n \binom{n}{j} \left(\frac{\delta}{1 + \delta}\right)^j \left(\frac{1}{1 + \delta}\right)^{n-j}. \label{eq:exp-v-vt}
\end{align}
Consider $n$ $i.i.d.$ random variables $X_1, \dots, X_n$ such that $X_i \in \{0, 1\}$ and $\mathrm{Pr}(X_i = 1) = \frac{\delta}{1 + \delta}, \forall i \in [n]$, and let $X = \sum_{i \in [n]} X_i$. Then, we have that 
\begin{equation} \label{eq:p-xt}
 \mathrm{Pr}(X > t) =  \sum_{j > t}^n \binom{n}{j} \left(\frac{\delta}{1 + \delta}\right)^j \left(\frac{1}{1 + \delta}\right)^{n-j}.  
\end{equation}
From Eqs.~(\ref{eq:exp-v-vt}) and (\ref{eq:p-xt}) and using $(1 + \delta)^n \leq e^{n\delta}$, we get 
\begin{equation} \label{eq:sim-v-vt}
 \dnorm{\cV - \cV_t} \leq e^{n\delta}  \mathrm{Pr}(X > t).
\end{equation}
For a $\beta > 0$, we take $t = (1 + \beta) \frac{n \delta}{1 + \delta}$. Then, using the multiplicative Chernoff bound, we get
\begin{equation} \label{eq:p-xt-1}
   \mathrm{Pr}(X \geq (1 + \beta) \frac{n \delta}{1 + \delta}) \leq \exp(-\frac{\beta^2 n \delta }{(1 + \delta) (2 + \beta) })  \leq \exp(-\frac{\beta^2 n \delta }{2(2 + \beta) }).
\end{equation}
We choose $\beta = 4$. Then from Eqs.~(\ref{eq:sim-v-vt}) and (\ref{eq:p-xt-1}), we get that 
\begin{equation} \label{eq:sim-v-vt-1}
 \dnorm{\cV - \cV_t} \leq \exp(-\frac{n \delta}{3}).
\end{equation}
Therefore, there exists a superoperator $\cV' = \cV_t$ of weight $t = \frac{5n\delta}{1 + \delta} \leq 5n\delta$ satisfying Eq.~(\ref{eq:sim-v-vt-1}).  We now consider superoperators $\cV'_A, \Lambda'_\nu, \cV'_B$ corresponding to the adversarial channels $\cV_A, \Lambda_\nu, \break \cV_B$, respectively, where $\cV'_A$, and  $\cV'_B$ are of weight $5n\delta'$ and $\Lambda'_\nu$ is of weight $5n\nu$. Therefore, the combined channel $\widetilde{\Lambda}_\nu := \cV'_B \circ \Lambda'_\nu \circ \cV'_A$ is of weight $5 n(2\delta' + \nu)$. We have that
\begin{align}
    &\dnorm{\overline{\Lambda}_\nu - \widetilde{\Lambda}_\nu} = \dnorm{\cV_B \circ \Lambda_\nu \circ \cV_A - \cV'_B \circ \Lambda'_\nu \circ \cV'_A} \nonumber\\
    & \leq \dnorm{\cV_B - \cV'_B} + \dnorm{\cV'_B} \: \dnorm{\Lambda_\nu \circ \cV_A - \Lambda'_\nu \circ \cV'_A} \nonumber\\
    & \leq \exp(-\frac{n \delta'}{3}) + (1 + \exp(-\frac{n \delta'}{3}))  \: \left( \dnorm{\cV_A - \cV'_A} +  \dnorm{\cV_A} \dnorm{\Lambda_\nu - \Lambda'_\nu} \right) \nonumber \\
    & \leq \exp(-\frac{n \delta'}{3}) + (1 + \exp(-\frac{n \delta'}{3})) \exp(-\frac{n \delta'}{3}) + (1 + \exp(-\frac{n \delta'}{3}))^2 \exp(-\frac{n \nu}{3}) \nonumber \\
    & \leq 7 \exp(-\frac{n \min\{\delta', \nu\}}{3}), \label{eq:Lambda-b-t}
\end{align}
where the first inequality follows by inserting $-\cV'_B \circ \Lambda_\nu \circ \cV_A + \cV'_B \circ \Lambda_\nu \circ \cV_A = 0$, using triangle inequality and $\dnorm{\cW \circ \cW'} \leq \dnorm{\cW} \dnorm{\cW'}$ for any superoperators $\cW, \cW'$~\cite{kitaev}. The second inequality uses Eq.~(\ref{eq:sim-v-vt-1}), and $\dnorm{\cV'_B} \leq \dnorm{\cV_B} + \dnorm{\cV_B - \cV'_B} \leq 1 + \exp(-\frac{n \delta'}{3})$.

\smallskip As $\alpha \geq 5(2\delta' + \nu)$, we have that $\cD \circ \widetilde{\Lambda}_\nu  \circ \cE = K(\tilde{\Lambda}_\nu) \: \cI$, for a constant $K(\tilde{\Lambda}_\nu)$. We note that 
\begin{align}
    |1 - K(\tilde{\Lambda}_\nu)| &= |\tr\left( \cD \circ \overline{\Lambda}_\nu  \circ \cE (\rho) -  (\cD \circ \widetilde{\Lambda}_\nu  \circ \cE)(\rho)\right) |  \nonumber \\
    & \leq 7 \exp(-\frac{n \min\{\delta', \nu\}}{3}), \label{eq:K-V}
\end{align}
where $\rho$ is any quantum state, and the second line uses $  \dnorm{ \cD \circ \overline{\Lambda}_\nu \circ \cE - \cD \circ \widetilde{\Lambda}_\nu \circ \cE}  \break \leq 7 \exp(-\frac{n \min\{\delta', \nu\}}{3})$, which follows from Eq.~(\ref{eq:Lambda-b-t}).

\smallskip\noindent We now have 
\begin{align}
    \dnorm{\cD \circ \overline{\Lambda}_\nu  \circ \cE - \cI} &\leq \dnorm{\cD \circ \widetilde{\Lambda}_\nu  \circ \cE - \cI} + 7 \exp(-\frac{n \min\{\delta', \nu\}}{3}) \nonumber \\
    &\leq |1 - K(\tilde{\Lambda}_\nu)| + 7 \exp(-\frac{n \min\{\delta', \nu\}}{3}) \nonumber \\
    &\leq 14 \exp(-\frac{n \min\{\delta', \nu\}}{3}), \label{eq:robust-code-cons}
\end{align}
where the first inequality uses Eq.~(\ref{eq:Lambda-b-t}), the second inequality uses $\cD \circ \widetilde{\Lambda}_\nu  \circ \cE = K(\tilde{\Lambda}_\nu) \: \cI$, and the last inequality uses Eq.~(\ref{eq:K-V}). From Eq.~(\ref{eq:robust-code-cons}), $\cE$ and $\cD$ define an $(m, \zeta)$ code on $\overline{\Lambda}_\nu$; therefore they define an $(m, \zeta, \delta')$ adversarial robust code on $\Lambda_\nu$.
\end{proof}

\subsection{Construction of fault-tolerant communication codes for circuit-level stochastic noise}

In this section, we consider circuit-level stochastic noise from Section~\ref{sec:noise-qcirc}, and show that fault-tolerant communication codes can be constructed using a notion of local stochastic robust communication codes. Furthermore, we show that fault-tolerant communication codes can be obtained from communication codes on local stochastic channels. This permits the use of sub-linear minimum distance codes such as toric and expander codes~\cite{kitaev, kitaev2003fault,  bravyi1998quantum, dennis2002topological, tillich2013quantum, fawzi2018efficient} for fault-tolerant communication. We illustrate this in detail in Section~\ref{sec:toric-code} for the toric code by showing that it is a  reliable communication code on local stochastic channels with sufficiently small parameter.

\smallskip Below, we define local stochastic robust communication codes (in an analogous way to Def.~\ref{def:robust-can-code}), using local stochastic channels from Def.~\ref{def:local-stoc}.

\begin{definition}[$(m, \zeta, \delta)$ local stochastic robust communication codes]  \label{def:robust-can-code-stc}
Consider a communication code for $\Lambda: \bL(H_A) \to \bL(H_B), \: H_{A} = (\mathbb{C}^2)^{\otimes n_A}, \: H_{B} = (\mathbb{C}^2)^{\otimes n_B}$ according to Def.~\ref{def:comm-code}, defined by $\cE: \bL((\mathbb{C}^2)^{\otimes m}) \to \bL(H_A)$ and $\cD: \bL(H_B)  \to \bL((\mathbb{C}^2)^{\otimes m})$. We say it is an $(m, \zeta, \delta)$ local stochastic robust code if for any local stochastic channels $\cV_A:\bL(H_A) \to \bL(H_A)$ and $\cV_B:\bL(H_B) \to \bL(H_B)$, with parameter $\delta$ (see Def.~\ref{def:local-stoc}), the channels $\cE$ and $\cD$ give an $(m, \zeta)$ communication code for the combined channel $\cV_B \circ \Lambda \circ \cV_A$.  
\end{definition}

\medskip In the following theorem, we show that fault-tolerant communication codes can be obtained from local stochastic robust communication codes.

\begin{theorem} \label{thm:robust-ft-code-stc}
Consider $c, \delta \geq 0$ from Def.~\ref{def:ft_scheme_stoc} and let $\delta' := 3c\delta$. Let $(m, \zeta, \delta')$ be a local stochastic robust communication code for $\Lambda: \bL(H_A) \to \bL(H_B), \: H_{A} = (\mathbb{C}^2)^{\otimes n_A}, \: H_{B} = (\mathbb{C}^2)^{\otimes n_B}$, according to Def.~\ref{def:robust-can-code-stc}. Then there exists $k_0 > 0$, such that for any computational code $D_k, k \geq k_0$, one can obtain an $(m, 2\zeta, \delta)$ fault-tolerant communication code for $\Lambda$.
\end{theorem}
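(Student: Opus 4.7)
The plan is to essentially mirror the proof of Theorem~\ref{thm:robust-ft-code}, replacing the ``adversarial'' fault-tolerant realizations (Corollaries~\ref{cor:enc} and \ref{cor:dec}) with their circuit-level stochastic counterparts (Corollaries~\ref{cor:enc-stc} and \ref{cor:dec-stc}), which produce (independent/local) stochastic noise channels of parameter $2c\delta$ instead of adversarial channels. Since any local stochastic channel with parameter $\rho$ is also local stochastic with any parameter $\rho' \geq \rho$ (the bound in Eq.~(\ref{eq:local-stc-2}) is monotone in $\delta$ because $\rho^{|T|} \leq \rho'^{|T|}$), and since independent stochastic channels are a special case of local stochastic channels, the noise channels produced by the encoder/decoder circuits will satisfy the hypothesis of the $(m, \zeta, \delta')$ local stochastic robust code, with $\delta' = 3c\delta \geq 2c\delta$.

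First, starting from the robust code $\cE: \bL((\mathbb{C}^2)^{\otimes m}) \to \bL(H_A)$ and $\cD: \bL(H_B) \to \bL((\mathbb{C}^2)^{\otimes m})$ satisfying $\dnorm{\cD \circ \cV_B \circ \Lambda \circ \cV_A \circ \cE - \cI} \leq \zeta$ for all local stochastic $\cV_A, \cV_B$ with parameter $\delta'$, I would approximate them by circuits $\Phi_E, \Phi_D$ up to accuracy $\zeta/4$ (using Solovay--Kitaev). I would then define the encoding and decoding circuits exactly as before,
\begin{align*}
\Phi_A &:= (\otimes_{i=1}^{n_A} \Gamma^i_{k, 1}) \circ \Phi_{E,k}, \\
\Phi_B &:= \Phi_{D,k} \circ (\otimes_{i=1}^{n_B} \Gamma^i_{1, k}).
\end{align*}

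Next, for an arbitrary preparation circuit $\Phi^1$ and measurement circuit $\Phi^2$ with $|\Phi^1|, |\Phi^2| \leq l(m,\zeta)$, I would apply Corollary~\ref{cor:dec-stc} to the composite state-preparation circuit $\Phi_A \circ \Phi^1_k$ to obtain
$\dnorm{\tilde{\cT}_{\Phi_A} \circ \tilde{\cT}_{\Phi^1_k} - \overline{\cW} \circ \cT_{\Phi_E} \circ \cT_{\Phi^1}} \leq 4(|\Phi_E| + l(m,\zeta))(c\delta)^k$,
where $\overline{\cW}$ is local stochastic with parameter $2c\delta \leq \delta'$, and symmetrically apply Corollary~\ref{cor:enc-stc} to $\Phi^2_k \circ \Phi_B$ to get an independent stochastic $\overline{\cN}$ of parameter $2c\delta$ (hence local stochastic with parameter $\delta'$). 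Choosing $k_0$ so that $4(|\Phi_E| + |\Phi_D| + 2l(m,\zeta))(c\delta)^{k_0} \leq \zeta/2$, the triangle inequality together with the approximation $\dnorm{\cT_{\Phi_E} - \cE}, \dnorm{\cT_{\Phi_D} - \cD} \leq \zeta/4$ and the robust-code guarantee on $\cD \circ \overline{\cN} \circ \Lambda \circ \overline{\cW} \circ \cE$ yields $\zeta^{l,\delta} \leq 2\zeta$, as required.

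No genuine obstacle is expected; the proof is essentially the stochastic-noise translation of Theorem~\ref{thm:robust-ft-code}. The only mildly delicate point is verifying the parameter bookkeeping: one must check that the $2c\delta$-parameter (independent/local) stochastic channels produced by Corollaries~\ref{cor:dec-stc} and \ref{cor:enc-stc} qualify as local stochastic channels with parameter $\delta' = 3c\delta$ so that the hypothesis of the $(m,\zeta,\delta')$ local stochastic robust code in Def.~\ref{def:robust-can-code-stc} can be invoked. Since the defining inequality Eq.~(\ref{eq:local-stc-2}) is monotone nondecreasing in $\delta$, this inclusion is immediate, and the rest of the argument proceeds as in Theorem~\ref{thm:robust-ft-code}.
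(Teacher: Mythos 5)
Your proposal is correct and follows exactly the route the paper intends: the paper explicitly states that the proof of Theorem~\ref{thm:robust-ft-code-stc} "is essentially the same as Theorem~\ref{thm:robust-ft-code}" with Corollaries~\ref{cor:enc-stc}, \ref{cor:dec-stc} substituted for Corollaries~\ref{cor:enc}, \ref{cor:dec}, and with the same choice of $k_0$ (compare Eq.~(\ref{eq:k-0-stc})). You have simply spelled out what the paper leaves implicit, and your handling of the one delicate point — that the $2c\delta$-parameter independent/local stochastic channels are also local stochastic with parameter $\delta'=3c\delta$, by monotonicity of $\delta^{|T|}$ in Eq.~(\ref{eq:local-stc-2}) and by the paper's observation that independent stochastic implies local stochastic — is exactly the bookkeeping needed.
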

We skip the proof of Theorem~\ref{thm:robust-ft-code-stc} as it is essentially the same as Theorem~\ref{thm:robust-ft-code}. It can be proved similarly by replacing adversarial channels with local stochastic channels in Theorem~\ref{thm:robust-ft-code} and using Corollary~\ref{cor:enc-stc} instead of~\ref{cor:enc}, and Corollary~\ref{cor:dec-stc} instead of~\ref{cor:dec}. We note that the value of $k_0$ is such that 
\begin{equation} \label{eq:k-0-stc}
    4 ( |\Phi_E| + |\Phi_D| +2l(m, \zeta))  (c\delta)^{k_0}\leq \frac{\zeta}{2},
\end{equation}
where $|\Phi_E|$, $|\Phi_D|$ and $2l(m, \zeta)$ has the same meaning as in Eq.~(\ref{eq:k-o}).

\medskip In the following theorem, we consider the communication channel to be a local stochastic channel $\Lambda_\nu$, with parameter $\nu \in [0, 1]$, and show that fault-tolerant communication code on $\Lambda_\nu$ can be obtained using communication codes for local stochastic channels. Moreover, in  Appendix~\ref{sec:toric-code}, we show that the toric codes, whose minimum distance scales as the square root of the code length (contrary to the linear scaling required in Theorem~\ref{thm:ft-comm-code-general}) are reliable communication codes for the local stochastic noise; therefore, can be used as a fault-tolerant communication code. 

%
\begin{theorem} \label{thm:ft-code-stc}
   Consider $c, \delta > 0$ from Def.~\ref{def:ft_scheme} and let $\delta' = 3c\delta$. Let $\Lambda_\nu: \bL(H) \to \bL(H), \: H = (\mathbb{C}^2)^{\otimes n}$ be a local stochastic channel with parameter $\nu \in [0, 1]$. Suppose  $\cE:\bL((\mathbb{C}^2)^{\otimes m}) \to \bL(H)$ and $\cD: \bL(H) \to \bL((\mathbb{C}^2)^{\otimes m})$ define an $(m, \zeta)$ communication code for all local stochastic channels with parameter $\alpha$. Then there exists $k_0 > 0$, such that for any computational code $D_k, k \geq k_0$, one can obtain an $(m, 2\zeta, \delta)$ fault-tolerant communication code for $\Lambda_\nu$ if $\alpha \geq  2\delta' + \nu$.
\end{theorem}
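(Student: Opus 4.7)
The plan is to deduce the theorem from Theorem~\ref{thm:robust-ft-code-stc} by verifying that the channels $\cE$ and $\cD$ already form an $(m,\zeta,\delta')$ local stochastic robust communication code on $\Lambda_\nu$ in the sense of Def.~\ref{def:robust-can-code-stc}. Unpacking this, for arbitrary local stochastic channels $\cV_A,\cV_B:\bL(H)\to\bL(H)$ of parameter $\delta'$, I need $\dnorm{\cD\circ\cV_B\circ\Lambda_\nu\circ\cV_A\circ\cE-\cI_{(\mathbb{C}^2)^{\otimes m}}}\leq\zeta$. Since the hypothesis on $\cE,\cD$ gives correctability against every local stochastic channel of parameter $\alpha$, it suffices to show that the composite $\cV_B\circ\Lambda_\nu\circ\cV_A$ is itself local stochastic with parameter at most $\alpha$, using $\alpha\geq 2\delta'+\nu$.

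The key step is a composition lemma that I would prove first: if $\cV,\cV'$ are local stochastic on $\bL(H)$ with parameters $\eta,\eta'$, then $\cV'\circ\cV$ is local stochastic with parameter $\eta+\eta'$. Writing $\cV=\sum_A\cI_{[n]\setminus A}\otimes\cV_A$ and $\cV'=\sum_B\cI_{[n]\setminus B}\otimes\cV'_B$, the composition regroups as $\sum_C\cI_{[n]\setminus C}\otimes\cW_C$ with $\cW_C=\sum_{A\cup B=C}\cV'_B\circ\cV_A$ a CP map supported on $C$. To verify Eq.~(\ref{eq:local-stc-2}) at a subset $T$, I would use the pointwise indicator inequality $[T\subseteq A\cup B]\leq \sum_{T_1\sqcup T_2=T}[T_1\subseteq A][T_2\subseteq B]$, obtained from $[i\in A\cup B]\leq[i\in A]+[i\in B]$ and expanding the product over $i\in T$. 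Combined with positivity of each $\cI_{[n]\setminus(A\cup B)}\otimes\cV'_B\circ\cV_A$, this yields the CP-ordering
\begin{equation*}
\sum_{C\supseteq T}\cI_{[n]\setminus C}\otimes\cW_C \;\leq\; \sum_{T_1\sqcup T_2=T}\Big(\sum_{B\supseteq T_2}\cI_{[n]\setminus B}\otimes\cV'_B\Big)\circ\Big(\sum_{A\supseteq T_1}\cI_{[n]\setminus A}\otimes\cV_A\Big).
\end{equation*}
Since the diamond norm is monotone on CP-ordered superoperators and sub-multiplicative under composition, the right-hand side is bounded in diamond norm by $\sum_{T_1\sqcup T_2=T}\eta^{|T_1|}(\eta')^{|T_2|}=(\eta+\eta')^{|T|}$ by the binomial theorem, finishing the lemma.

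Applying the composition lemma twice shows that $\cV_B\circ\Lambda_\nu\circ\cV_A$ is local stochastic with parameter $\delta'+\nu+\delta'=2\delta'+\nu\leq\alpha$, so the assumption on $\cE,\cD$ delivers $\dnorm{\cD\circ\cV_B\circ\Lambda_\nu\circ\cV_A\circ\cE-\cI}\leq\zeta$ for every choice of $\cV_A,\cV_B$. Consequently $\cE,\cD$ define an $(m,\zeta,\delta')$ local stochastic robust code on $\Lambda_\nu$, and Theorem~\ref{thm:robust-ft-code-stc} upgrades this to an $(m,2\zeta,\delta)$ fault-tolerant communication code for every $k\geq k_0$ with $k_0$ as in Eq.~(\ref{eq:k-0-stc}). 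The main obstacle I anticipate is the CP-positivity bookkeeping in the composition lemma: the naive triangle inequality applied to the individual $\cW_{A,B}$ grows too fast in $|T|$, so one must exploit the indicator union bound to recover a genuinely additive parameter $\eta+\eta'$, and then translate the resulting operator inequality into a diamond-norm bound via monotonicity rather than a direct norm estimate.
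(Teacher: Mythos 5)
Your proof is correct and follows the paper's route exactly: establish the composition lemma for local stochastic channels (Lemma~\ref{lem:local-stochastic-composition}), apply it twice to show that $\cV_B\circ\Lambda_\nu\circ\cV_A$ is local stochastic with parameter $2\delta'+\nu\leq\alpha$, and then invoke Theorem~\ref{thm:robust-ft-code-stc}. In fact your treatment of the composition lemma is slightly cleaner than the paper's, which writes an equality between the sum over $\{(A_1,A_2):T\subseteq A_1\cup A_2\}$ and the double sum indexed by $T_1\subseteq T$ even though the union in Eq.~(\ref{eq:inclusion-T}) is not disjoint (pairs with $T\cap A_1\cap A_2\neq\emptyset$ are counted multiple times); your explicit indicator inequality $[T\subseteq A\cup B]\leq\sum_{T_1\sqcup T_2=T}[T_1\subseteq A][T_2\subseteq B]$, combined with CP-monotonicity of the diamond norm, records the step as the one-sided bound it really is and arrives at the same conclusion.
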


\begin{proof}
Lemma~\ref{lem:local-stochastic-composition} below shows that  $\overline{\Lambda}_\nu := \cV' \circ \Lambda_\nu \circ \cV$, for any local stochastic channels $\cV: \bL(H) \to \bL(H)$ and $\cV': \bL(H) \to \bL(H)$ with parameter $\delta'$, is a local stochastic channel with parameter $2\delta' + \nu$. Therefore, for $\alpha \geq 2\delta' + \nu$, $\cE$ and $\cD$ define an $(m, \zeta)$ communication code for $\overline{\Lambda}_\nu$. This means that  $\cE$ and $\cD$ define an $(m, \zeta, \delta')$ robust quantum communication code for $\Lambda_\nu$; therefore, an $(m, 2\zeta, \delta)$ fault-tolerant code can be obtained using Theorem~\ref{thm:robust-ft-code-stc}.
\end{proof}

\begin{lemma}
\label{lem:local-stochastic-composition}
Let $\cA$ and $\cB$ be local stochastic channels with parameters $\nu_{\cA}$ and $\nu_{\cB}$, respectively. Then $\cA \circ \cB$ is local stochastic with parameter $\nu_{\cA} + \nu_{\cB}$.
\end{lemma}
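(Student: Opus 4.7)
The plan is to decompose $\cA$ and $\cB$ according to Def.~\ref{def:local-stoc}, compose them term by term, regroup the product sum by the union of supports to obtain a local stochastic decomposition of $\cA \circ \cB$, and then verify the bound in Eq.~(\ref{eq:local-stc-2}) by a careful partitioning plus a ``monotonicity under CP sub-sums'' argument for the diamond norm.

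First, write $\cA = \sum_A \cA'_A$ and $\cB = \sum_B \cB'_B$, where $\cA'_A := (\otimes_{i \in [n] \setminus A} \cI_i) \otimes \cA_A$ and similarly for $\cB'_B$, with $\cA_A$ and $\cB_B$ being CP maps as in Def.~\ref{def:local-stoc}. Composing, $\cA \circ \cB = \sum_{A, B} \cA'_A \circ \cB'_B$. Each summand acts as identity outside $A \cup B$, so grouping by $C := A \cup B$ yields the candidate local stochastic decomposition $\cA \circ \cB = \sum_{C \subseteq [n]} (\otimes_{k \notin C} \cI_k) \otimes \cC_C$, where $\cC_C$ collects the actions on $C$ of all pairs $(A, B)$ with $A \cup B = C$; each $\cC_C$ is manifestly CP as a sum of compositions of CP maps.

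Next, fix $T \subseteq [n]$; the goal is to show
\begin{equation*}
\Big\| \sum_{C : T \subseteq C} (\otimes_{k \notin C} \cI_k) \otimes \cC_C \Big\|_\diamond = \Big\| \sum_{A, B : T \subseteq A \cup B} \cA'_A \circ \cB'_B \Big\|_\diamond \leq (\nu_\cA + \nu_\cB)^{|T|}.
\end{equation*}
I will partition the index set $\{(A, B) : T \subseteq A \cup B\}$ by the value $S := T \cap A$: for each $S \subseteq T$, the condition $T \subseteq A \cup B$ together with $T \cap A = S$ is equivalent to $T \setminus S \subseteq B$. This gives a disjoint union, and the inner sum factorizes:
\begin{equation*}
\sum_{A, B : T \subseteq A \cup B} \cA'_A \circ \cB'_B = \sum_{S \subseteq T} \Big(\sum_{A : T \cap A = S} \cA'_A\Big) \circ \Big(\sum_{B : T \setminus S \subseteq B} \cB'_B\Big).
\end{equation*}
By the triangle inequality and submultiplicativity of $\|\cdot\|_\diamond$ under composition, it suffices to bound each factor. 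The second factor is precisely the expression in Eq.~(\ref{eq:local-stc-2}) for $\cB$ at $T \setminus S$, hence is bounded by $\nu_\cB^{|T| - |S|}$.

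The subtle step is the first factor: $\sum_{A : T \cap A = S} \cA'_A$ is only a partial sub-sum of $\sum_{A \supseteq S} \cA'_A$, the latter being the quantity controlled by $\nu_\cA^{|S|}$ via Eq.~(\ref{eq:local-stc-2}). The key observation, which I anticipate as the main technical point, is that for completely positive maps, sub-sums have smaller diamond norm: if $\cV_1, \cV_2$ are CP and $\cV = \cV_1 + \cV_2$, then for any CP map $\cW$, $\|\cW\|_\diamond = \|\cW^*(\ident)\|_\infty$, and $\cV_1^*(\ident) \preceq \cV^*(\ident)$ in the PSD order, giving $\|\cV_1\|_\diamond \leq \|\cV\|_\diamond$. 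Since each $\cA'_A$ is CP, this applies, and $\|\sum_{A : T \cap A = S} \cA'_A\|_\diamond \leq \|\sum_{A \supseteq S} \cA'_A\|_\diamond \leq \nu_\cA^{|S|}$. Combining these estimates and applying the binomial theorem,
\begin{equation*}
\Big\| \sum_{C : T \subseteq C} (\otimes_{k \notin C} \cI_k) \otimes \cC_C \Big\|_\diamond \leq \sum_{S \subseteq T} \nu_\cA^{|S|}\, \nu_\cB^{|T| - |S|} = (\nu_\cA + \nu_\cB)^{|T|},
\end{equation*}
which is exactly the local stochastic condition with parameter $\nu_\cA + \nu_\cB$.
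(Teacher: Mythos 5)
Your proof is correct, and it takes essentially the same route as the paper: group $\cA'_A \circ \cB'_B$ by the union $A \cup B$, then bound the sum over pairs covering $T$ by splitting $T$ into a part handled by $A$ and a part handled by $B$. The one genuine difference is how the splitting is indexed. The paper indexes by a chosen $T_1 \subseteq T$ and requires only $T_1 \subseteq A_1$, $T\setminus T_1 \subseteq A_2$; this is a cover, not a partition, and in fact a pair $(A_1, A_2)$ is counted $2^{|T\cap A_1 \cap A_2|}$ times, so the paper's middle ``$=$'' in its chain of (in)equalities should really be a ``$\leq$''. That over-counting is harmless precisely because the summands are CP, but the paper leaves this implicit. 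You instead index by the exact value $S = T\cap A$, which does give a disjoint partition, and then pay the dual price: the inner sum over $A$ with $T\cap A = S$ is only a sub-sum of the one controlled by Eq.~(\ref{eq:local-stc-2}), so you invoke the (correct and cleanly stated) observation that CP sub-sums have smaller diamond norm via $\dnorm{\cW} = \opn{\cW^*(\ident)}$. So both proofs lean on complete positivity at the same point in the argument; yours makes the reliance explicit through the sub-sum monotonicity lemma, while the paper absorbs it into the over-counting. Your version is, if anything, the tighter write-up.
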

\begin{proof}
Consider the decomposition of $\cA$ and $\cB$ according to Eq.~(\ref{eq:local-stc-1}),
    \begin{align}
     \cA  = \sum_{A \subseteq [n]} \big( \otimes_{i \in [n] \setminus A} \cI_i \big) \otimes {\cA}_A,  \\
     \cB = \sum_{A \subseteq [n]} \big( \otimes_{i \in [n] \setminus A} \cI_i \big) \otimes {\cB}_A, 
    \end{align}
   where ${\cA}_A$ and  ${\cB}_A$ are completely positive maps.
 Note that, for any $A_1, A_2 \subseteq [n]$, we have 
 \begin{equation} \label{eq:W-AUA'}
    \left( ( \otimes_{i \in [n] \setminus A_1} \cI_i ) \otimes {\cA}_{A_1} \right) \circ \left(( \otimes_{i \in [n] \setminus A_2} \cI_i ) \otimes \cB_{A_2}  \right)  =   \big( \otimes_{i \in [n] \setminus (A_1  \cup A_2)} \cI_i \big) \otimes \cW_{A_1 \cup A_2}, 
 \end{equation}
for a completely positive map $\cW_{A_1 \cup A_2}$ acting on qubits in $A_1 \cup A_2$. This implies that 
\begin{equation} \label{eq:W-A-set}
  \cA \circ \cB = \sum_{A \subseteq [n]} \big( \otimes_{i \in [n] \setminus A} \cI_i \big) \otimes \cW_A,  
\end{equation}
where  $\cW_A = \sum_{\substack{A_1, A_2 \subseteq [n] \\ A_1  \cup A_2 = A}} \cW_{A_1 \cup A_2}$, with $\cW_{A_1 \cup A_2}$ being according to Eq.~(\ref{eq:W-AUA'}). 

\smallskip\noindent For any $T \subseteq
[n]$, we have the following equality,
\begin{multline} \label{eq:inclusion-T}
   \{ (A_1, A_2) : T \subseteq A_1 \cup A_2 \}  = \cup_{T_1 \subseteq T} \: \{ (A_1, A_2) : A_1: T_1 \subseteq A_1 \text{ and } A_2: (T \setminus T_1) \subseteq A_2\}.
\end{multline}
The inclusion $\subseteq$ can be seen by noting that for $(A_1, A_2) :  T \subseteq A_1 \cup A_2$, and for $T_1 = T \cap A_1$, we have $T \setminus T_1 \subseteq A_2$.  The inclusion $\supseteq$ can be seen by noting that for $A_1: T_1 \subseteq A_1 \text{ and } A_2: (T \setminus T_1) \subseteq A_2$, we have $T \subseteq A_1 \cup A_2$.

\medskip\noindent For any fixed $T$, we now have that
\begin{align}
    & \dnorm{\sum_{A: T \subseteq A}  \big( \otimes_{i \in [n] \setminus A} \cI_i \big) \otimes \cW_A} \nonumber \\
    & = \dnorm{   \sum_{\substack{A_1, A_2 \\ T \subseteq A_1  \cup A_2}} \left(( \otimes_{i \in [n] \setminus A_1} \cI_i) \otimes {\cA}_{A_1} \right) \circ \left(( \otimes_{i \in [n] \setminus A_2} \cI_i ) \otimes \cB_{A_2} \right)} \nonumber \\
    & = \dnorm{\sum_{T_1 \subseteq T} (\sum_{A_1: T_1 \subseteq A_1} (\otimes_{i \in [n] \setminus A} \cI_i ) \otimes {\cA}_{A_1}) \circ (\sum_{A_2: (T \setminus T_1) \subseteq A_2} ( \otimes_{i \in [n] \setminus A} \cI_i \big) \otimes \cB_{A_2} )} \nonumber \\
   & \leq \sum_{T_1 \subseteq T} \nu_{\cA}^{|T_1|} {\nu_{\cB}}^{|T| - |T_1|} 
    \leq (\nu_{\cA} + \nu_{\cB})^{|T|}, 
\end{align}
where the first equality follows Eqs.~(\ref{eq:W-AUA'}), and (\ref{eq:W-A-set}), the second equality follows form Eq.~(\ref{eq:inclusion-T}), and the first inequality follows from Eq.~(\ref{eq:local-stc-2}) corresponding to channels $\cA$ and $\cB$. 
\end{proof}

\section*{Data availability}
This manuscript has no associated data.

\section*{Conflict of Interest}
Authors declare no competing interests.

\section*{Acknowledgements}
We would like to thank Paula Belzig, Alexander M\"uller-Hermes, Robert K\"onig, Thomas Theurer and Freek Witteveen for useful discussions.
MC and AG acknowledge financial support from the European Research Council (ERC Grant Agreement No.~818761), VILLUM FONDEN via the QMATH Centre of Excellence (Grant No.~10059) and the Novo Nordisk Foundation (grant NNF20OC0059939 ‘Quantum for Life’).  Part of this work was completed while MC was Turing Chair for Quantum Software, associated to the QuSoft research center in Amsterdam, acknowledging financial support by the Dutch National Growth Fund (NGF), as part of the Quantum Delta NL visitor programme.
OF acknowledges financial support from the European Research Council (ERC Grant, Agreement No.~851716) and from a government grant managed by the Agence Nationale de la Recherche under the Plan France 2030 with the reference ANR-22-PETQ-0006. We also thank the National Center for Competence in Research SwissMAP of the Swiss National Science Foundation and the Section of Mathematics at the University of Geneva for their hospitality.

\printbibliography

\begin{appendix}
\section{Shor error correction} \label{app:shorEC}
 Shor error correction~\cite{shor1996fault, tansuwannont2023adaptive} for a stabilizer code (one-to-one) $C$, correcting $t$ errors, consists of the following two steps,
\begin{enumerate}
    \item[1.] \textbf{Syndrome Extraction:}  The error syndrome is extracted by doing stabilizer generator measurements, using a fault-tolerant method based on ancilla qubits prepared in cat states (see Fig.~\ref{fig:shor-error-corr}). This method provides a transversal way to extract syndrome; therefore, ensuring that errors do not propagate drastically during the implementation.
    
    However, an error during the generator measurement can still lead to recording a wrong syndrome. To remedy this, multiple rounds of stabilizer generator measurements are performed. More precisely, measurements are performed until $t + 1$ successive rounds of generator measurements agree, that is, for some $i$, $h_{i} = \cdots = h_{i + t}$, where $h_i$ denote the bit string containing outcome of all the generator measurements on some $i$th round of generator measurement. Assuming that no more than $t$ errors occurred, there exists at least one round $ i \leq i' \leq i + t$ of the generator measurement, where no error occurred. Therefore, correct syndrome is extracted for the round $i'$. Also it suffices to do at most $(t + 1)^2$ rounds of generator measurements to find a set of $t + 1$ agreeing rounds.

\begin{figure}
\centering
\begin{quantikz}
\lstick[wires = 4]{Data qubits} &\qw &  \ctrl{4} &\qw & \qw & \qw  & \qw  \\
&\qw & \qw &\ctrl{4} & \qw & \qw  & \qw  \\
&\qw & \qw &\qw & \ctrl{4}  & \qw  & \qw  \\
&\qw & \qw &\qw & \qw  & \ctrl{4}  & \qw  \\[0.5cm]
\lstick[wires = 4]{Ancilla qubits \\[0.2cm] $\frac{\ket{0000} + \ket{1111}} {\sqrt{2}}$} &\gate{H} & \targ{} &\qw & \qw  & \qw  & \meter{} & \cw  \\
&\gate{H} & \qw &\targ{} & \qw  & \qw  & \meter{} & \cw \\
&\gate{H} & \qw &\qw & \targ{}  & \qw & \meter{} & \cw  \\
&\gate{H} & \qw &\qw & \qw  & \targ{} & \meter{} & \cw  \\
\end{quantikz}
\caption{Measurement of a weight four Pauli $Z$ stabilizer generator according to Shor error correction.}
\label{fig:shor-error-corr}
\end{figure} 
\item[2.] \textbf{Error correction:} A classical decoder takes as input the extracted syndrome and outputs a Pauli error correction, whose syndrome matches the extracted syndrome. 
    
\end{enumerate}

\section{Example of a representation} \label{app:ex-rep}
In this section, we give an example of representation of quantum channels in codes from Section~\ref{sec:rep}.

\paragraph{Calderbank-Shor-Steane (CSS) codes.} A CSS code is a stabilizer code with a generating set $G = G_X \cup G_Z$ such that any element in $G_X$ is given by a tensor product of Pauli $X$ operators and any element in $G_Z$ is given by a tensor product of Pauli $Z$ operators. Any CSS code of type $(n, 1)$ has the property that the logical $\CNOT$ gate between two code words is equal to transverse $\CNOT$, that is, 
\begin{equation} \label{eq:log-CNOT}
    \CNOT (\ket{\overline{x}_1,  \overline{x}_2})  = \ket{\overline{x}_1,  \overline{x_1 \oplus x_2}}, \forall x_1, x_2 \in \{0, 1\},
\end{equation}
where $\ket{\overline{\psi}}$ denotes the encoded state corresponding to $\ket{\psi}$.

\paragraph{Representation.} 
We shall consider a CSS code  $C$ of type $(n, 1)$ that corrects weight $t = 2$ errors. We consider the many-to-one code $D := \mathrm{Der}(C, \mathscr{E}(n, 1))$. Recall that $\mathscr{E}(n, 1)$ corresponds to the span of Pauli operators of weight one. Recall that $D = (L, \mu)$, where $L = \mathscr{E}(n, 1) \overline{L} \subseteq N:= (\mathbb{C}^2)^{\otimes n}$, where  $\overline{L}$ is the code space of $C$ (see Eq.~(\ref{eq:codespace-L})), and $\mu^* = \tr_F \circ \: \cU^\dagger, \: \cU^\dagger = U(\cdot)U^\dagger$, where $U$ is according to~(\ref{eq:U-unitary}).

\medskip Let $\cP: \bL((\mathbb{C}^2)^{\otimes 2}) \to \bL((\mathbb{C}^2)^{\otimes 2}), \cP = \CNOT(\cdot)\CNOT$ be the quantum channel realized by the $\CNOT$ gate.  Let $\overline{\CNOT}$ be the transverse $\CNOT$ gate between two groups of $n$ qubits. We take the error correction gadget $\mathrm{EC}$ corresponding to the Shor error correction (see also Appendix~\ref{app:shorEC}). Consider the circuit $\overline{\CNOT} \circ \mathrm{EC}^{\otimes 2}$~\cite{aharonov1997fault, Aliferis2006quantum}, and let $\cT$ be the corresponding quantum channel. We now show that $\cT$ represents $\cP$ in $D$.

\smallskip Using that $\mathrm{EC}$ maps any input to a code word of $C$, it follows that $\overline{\CNOT} \circ \mathrm{EC}^{\otimes 2}$ outputs states in the code  $C$. Therefore, it follows that $(\cJ^*)^{\otimes 2} \circ \cT$ is a quantum channel. Therefore, by taking $\cS := (\cJ^*)^{\otimes 2} \circ \cT$, Eq.~(\ref{eq:comm-3}) is satisfied, that is,
\begin{equation}
  (\cJ)^{\otimes 2} \circ \cS = \cT.
\end{equation}
Consider now input to $\overline{\CNOT} \circ \mathrm{EC}^{\otimes 2}$ in code space $L \otimes L$. Firstly, $\mathrm{EC}^{\otimes 2}$ corrects the weight one error on each of the incoming encoded states (recall that a state in $L$ is obtained by applying a weight one error on a state in $\overline{L}$) and then the $\overline{\CNOT}$ performs the logical $\CNOT$ between the code words of $C$. Therefore, using Eq.~(\ref{eq:log-CNOT}), we get
\begin{equation}
    (\mu^*)^{\otimes 2} \circ (\cJ^*)^{\otimes 2} \circ \cT \circ \cJ^{\otimes 2} = \cP \circ (\mu^*)^{\otimes 2}.
\end{equation}
By taking $ \cR = (\cJ^*)^{\otimes 2} \circ \cT \circ \cJ^{\otimes 2}$, Eqs.~(\ref{eq:comm-1}) and~(\ref{eq:comm-2}) are satisfied, that is, 
\begin{align}
    (\mu^*)^{\otimes 2} \circ \cR = \cP \circ (\mu^*)^{\otimes 2}. \\
    \cJ^{\otimes 2} \circ \cR = \cT \circ (\cJ^*)^{\otimes 2}.
\end{align}
Therefore, $\cT$ represents $\cP$ in $\cD$.


\section{Construction of a fault-tolerant scheme} \label{sec:cons-ft-scheme}
In this section, we construct a fault-tolerant scheme for general circuit noise according to Definition~\ref{def:ft_scheme}. Our construction relies on the fault-tolerant construction based on so called rectangles (Recs), proposed in~\cite{aharonov1997fault} and also given later in~\cite{Aliferis2006quantum}, from a slightly different perspective. We note that construction with Recs works for codes correcting at least two errors.  The fault-tolerant construction based on extended rectangles (Exrecs), proposed in~\cite{Aliferis2006quantum}, works for codes correcting $1$ error. We have made the choice to work with Recs for the purpose of simplicity of presentation so as to avoid complications that arise with overlapping exRecs.

\subsection{Concatenated codes}
Here, we describe the concatenated codes and the many-to-one codes obtained from concatenated codes. We take these many-to-one concatenated codes to be the sequence of quantum codes in Point $(a)$ of Def.~\ref{def:ft_scheme}.
\begin{definition}[Concatenated code~\cite{aharonov1997fault, Aliferis2006quantum}] \label{def:concate-code}
Consider a one-to-one code $C$ of type (n, 1), defined using an isometry $V : \mathbb{C}^2 \to (\mathbb{C}^2)^{ \otimes n}$. A concatenated code $C_r, r= 0, 1, \dots$, where $C_0$ is the trivial code defined by the identity $I : \mathbb{C}^2 \to \mathbb{C}^2$ and $C_1 = C$, is defined recursively by the  following isometry,
\begin{equation}
    V_r : \mathbb{C}^2 \to (\mathbb{C}^2)^{ \otimes (n^r) } ; \: V_r =   (V_{r-1})^{\otimes n} V_1.
\end{equation}
We refer to $C$ as the base code. The code space $\overline{L}_r$ of the concatenated code is given by the image of $V_r$.
\end{definition}
A concatenated code recursively encodes a single qubit using the base code $C$, where at a given level of recursion, each physical qubit of the code is replaced by a block of $n$ qubits. For example, $C_2$ contains $n$ blocks of qubits, where each block contains $n$ qubits. Similarly, an \emph{$r$-block}, that is, the set of physical qubits of the code $C_r$, contains $n$ $(r-1)$-blocks, with the $0$-block being a single qubit. 

\paragraph{Stabilizer code correcting two errors.} We take the base code $C$ to be a stabilizer code of type $(n, 1)$ that is defined by a set of stabilizer generators $\{ S_1, \dots, S_{n-1}\}$ and it corrects any Pauli error of weight $t = 2$.  By linearity, it follows that $C$ corrects errors from $\mathscr{E}(n, 2)$. 

\smallskip Consider the code space $\overline{L}$ of $C$ and consider the derived code $\mathrm{Der}(C, \mathscr{E}(n, 1))$, with  the codespace $L := \mathscr{E}(n, 1) \overline{L} = \bigoplus_\mathbf{s} H_\mathbf{s}$, where the direct sum is over the syndromes $\mathbf{s} \in \{ 0, 1 \}^{n-1}$, indicating at most one error (see also Section~\ref{sec:stab-CSS-codes}). Recall that $\mathrm{Der}(C, \mathscr{E}(n, 1))$ is a many-to-one quantum code $(L, \mu)$ with $L = U (\mathbb{C}^2 \otimes F)$, where
 $$F = \mathrm{span}\{ \ket{\mathbf{s}}: \mathbf{s} \in \{ 0, 1 \}^{n-1} \text{ indicating at most one error}\},$$ 
 and the unitary $U : \mathbb{C}^2 \otimes F \to L $ acts as follows for any Pauli error $E$ of weight at most $1$,
 \begin{equation} \label{eq:unitary-U}
        U^\dagger E\ket{\overline{\psi}} = \ket{\psi} \otimes \ket{\mathbf{s}}, 
 \end{equation}
 where $\ket{\overline{\psi}} = V \ket{\psi}$ is the encoded version of $\ket{\psi} \in \mathbb{C}^2$ and $\mathbf{s}$ is the syndrome corresponding to $E$. We now extend the unitary $U$ to the full space. Let $L_{\perp} = \bigoplus_{\mathbf{s}'} H_{\mathbf{s}'}$, where $\{\mathbf{s}'\}$ is the set of syndromes indicating two or more errors and define $$F_\perp :=  \mathrm{span}\{ \ket{\mathbf{s}'} : \mathbf{s}' \in \{0, 1\}^{n -1} \text{ indicating two or more errors} \}.$$
We have that $F \bigoplus F_\perp = (\mathbb{C}^2)^{ \otimes n-1}$. For each syndrome $\mathbf{s}'$ indicating two or more errors, we fix an arbitrary Pauli operator $E(\mathbf{s}')$, which has the syndrome $\mathbf{s}'$ and define the unitary $U_\perp: \mathbb{C}^2 \otimes F_\perp \to L_{\perp}$ as follows, 
 \begin{equation} 
        U_\perp^\dagger E(\mathbf{s}')\ket{\overline{\psi}} = \ket{\psi} \otimes \ket{\mathbf{s}'}.
 \end{equation}
We will later use an extension of $U$ to the whole space $(\mathbb{C}^2)^{\otimes n}$ as follows,
\begin{equation} \label{eq:u-tilde}
    \tilde{U}: \mathbb{C}^2 \otimes (\mathbb{C}^2)^{ \otimes n-1} \to (\mathbb{C}^2)^{\otimes n}, \: \tilde{U} = U \textstyle\bigoplus U_\perp.
\end{equation}
Using $\tilde{U}$, we define
\begin{equation} \label{eq:mu-tilde}
    \tilde{\mu} : \bL(\mathbb{C}^2) \to \bL((\mathbb{C}^2)^{\otimes n}),  \: \tilde{\mu} := \tilde{U} (\cdot \otimes \ident_{(\mathbb{C}^2)^{ \otimes n-1}})\tilde{U}^\dagger.
\end{equation}
Note that $\tilde{\mu}^* = \tr_{[n-1]} \circ \: \tilde{\cU}^\dagger$,\: $\tilde{\cU} := \tilde{U}(\cdot) \tilde{U}^\dagger$.

\smallskip Consider now the concatenated code $C_r, r= 0, 1, 2, \dots$ obtained using the stabilizer code $C$ as the base code.  To define correctable errors for $C_r$, we need a notion of sparse errors.
\begin{definition}[$(r,t)$  sparse error~\cite{aharonov1997fault}] \label{def:sparse-err}
 Let $E_r$ be a Pauli error on an $r$-block of the concatenated code $C_r$. We have that $E_r = E_{r-1}^1 \otimes \cdots \otimes E_{r-1}^n$, where $E_{r-1}^i, i \in [n]$ is a Pauli error acting on an $(r-1)$-block.
 
 The error $E_r$ is said to be $(r,t)$ sparse if it contains no more than $t$ $(r-1)$ blocks, such that the corresponding $E_{r-1}^i$ are not $(r - 1, t)$ sparse. A $(0,t)$ sparse error corresponds to the identity operator. 
\end{definition}
The correctable set of errors for $C_r$ is given by the span of Pauli errors that are $(r, 2)$ sparse~\cite{aharonov1997fault}. Here, we consider the set of errors that are $(r, 1)$ sparse, that is, 
\begin{equation} \label{eq:sparse-err}
 \mathscr{E}_r := \mathrm{span}\{ E_r : E_r \text{ is $(r, 1)$ sparse} \},    
\end{equation}
and the corresponding derived code $D_r := \mathrm{Der}(C_r, \mathscr{E}_r)$. We consider $(r, 1)$ sparse errors (rather than full $(r, 2)$ sparse) as then the derived code $D_r$ can still correct some additional errors. For example, $D_1$ can correct one additional error in the sense that if we have a code state of $D_1$ with one additional Pauli error applied on it, it is possible to recover the encoded information since the base code corrects two errors. This property is needed for fault-tolerant constructions with rectangles~\cite{aharonov1997fault, Aliferis2006quantum}.

\paragraph{Many to one concatenated codes.}  For an $r = 0, 1, \cdots$, we define the many-to-one code $D_r = (L_r, \mu_r)$, where
\begin{equation}
  L_r = \mathscr{E}_r \overline{L}_r \subseteq N_r := (\mathbb{C}^2)^{\otimes n^r}.
  \end{equation}
The map $\mu_r^*: \bL(L_r) \to \bL(\mathbb{C}^2)$ is obtained using the map  $\tilde{\mu}_r^*: \bL(N_r) \to \bL(\mathbb{C}^2)$, which is recursively defined 
 as,
\begin{equation} \label{eq:mu-tilde-r}
    \tilde{\mu}_r^* :=  \tilde{\mu}^*\circ 
 (\tilde{\mu}^*_{r-1})^{\otimes n} ,
\end{equation}
where $\tilde{\mu} : \bL(\mathbb{C}^2) \to \bL((\mathbb{C}^2)^{\otimes n})$ is according to Eq.~(\ref{eq:mu-tilde}). Note that $\tilde{\mu}^*_r $ is the same as the ideal $r$-decoder from~\cite[Page 43]{Aliferis2006quantum}.
We take the quantum channel $\mu^*_r$ to be the restriction of $\tilde{\mu}^*_r$ on the code space $L_r$, that is, 
\begin{equation} \label{eq:mu*_r}
  \mu^*_r = \tilde{\mu}^*_r \circ \cJ_r,
\end{equation}
where $\cJ_r : \bL(L_r) \to \bL(N_r)$ is the natural embedding. By applying $\tilde{\mu}^*$ recursively, it can be seen that for any $(r, 1)$ sparse error $E_r$, we have
\begin{equation} \label{eq:mu-r-st}
  \tilde{\mu}^*_r (E_r \proj{\overline{\psi}_r}E_r^\dagger) = \proj{\psi}, \: \forall \ket{\overline{\psi}_r} = V_r \ket{\psi}. 
\end{equation}
Since $\mu_r^*$ is defined using $\tilde{\mu}_r^*$, it would be useful to have the definition of representation in Def.~\ref{def:rep_codes} directly in terms of $\tilde{\mu}_r^*$ instead of $\mu_r^*$. This is given in the following remark.  
\begin{remark} \label{rem:alt-rep}
Consider many-to-one concatenated codes $D_r$ and $D_{r'}$. A quantum channel $\cT: \bL(N_r^{\otimes q})  \to \bL(N_{r'}^{\otimes q'})$ represents $\cP: \bL((\mathbb{C}^2)^{\otimes q}) \to  \bL((\mathbb{C}^2)^{\otimes q'})$ in $D_r, D_{r'}$ if $(\cJ_{r'}^*)^{\otimes q'} \circ \cT : \bL(N_r^{\otimes q})  \to \bL(L_{r'}^{\otimes q'}) $ is a quantum channel and the following equality holds,
\begin{equation} \label{eq:weak-rep}
    (\tilde{\mu}^*_{r'})^{\otimes q'} \circ \cT \circ \cJ_r^{\otimes q} =   \cP \circ (\tilde{\mu}^*_{r})^{\otimes q} \circ \cJ_r^{\otimes q}.
\end{equation}
 We note that the definition of representation according to Def.~\ref{def:rep_codes} is satisfied by taking $\cR =  (\cJ^*_{r'})^{\otimes q'} \circ \cT \circ \cJ_r^{\otimes q} $, and $\cS = (\cJ^*_{r'})^{\otimes q'} \circ \cT$.
\end{remark}
 We take the many-to-one concatenated codes $D_r, r = 0, 1, \dots $, where $r$ is the level of concatenation, to be the sequence of codes in Point (a) of Def.~\ref{def:ft_scheme}. Note that $D_r$ is a code of type $(n^r, 1)$, with $M_r = \mathbb{C}^2$, $L_r \subseteq N_r$, where $N_r = (\mathbb{C}^2)^{\otimes n^r}$. 
 In the following theorem, we show that part $(b)$ and $(c)$ of Def.~\ref{def:ft_scheme} is satisfied for codes $D_r, r= 1, 2, \dots $, considering parameter $k$ in Def.~\ref{def:ft_scheme} such that $r \sim \log k$. 
\begin{theorem} \label{thm:cons-ft-scheme}
    Consider the sequence of many-to-one concatenated codes $D_r, r = 0, 1, \dots $. For a universal gate set $\mathbf{A}$, containing unitary gates including an identity gate, state preparation and measurement gates, there exist a family of circuits $\Psi_{g, r}$, such that, any $\tilde{\cT}_{\Psi_{g, r}} \in \trans(\Psi_{g, r}, \delta)$ represents $\cT_g$ in codes $D_r$, with accuracy $\epsilon_1 = \epsilon_2 = \epsilon_3 = O((c\delta)^{2^r})$. Furthermore, there exist a family of circuits $\Gamma_{r, s}$ such any $\tilde{\cT}_{\Gamma_{r, s}} \in \trans(\Gamma_{r, s}, \delta)$ represents the identity channel in codes $D_r, D_s$, with accuracy $\epsilon_1 = \epsilon_2 =O((c\delta)^{\min\{r, s\}})$, and $\epsilon_3 = O((c\delta)^{2^s})$.
\end{theorem}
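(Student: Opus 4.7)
The plan is to import the concatenated-code fault-tolerant construction based on rectangles from~\cite{aharonov1997fault, Aliferis2006quantum} and translate its error analysis into Kitaev's representation framework via Lemma~\ref{lem:rep-approx}(a) together with Remark~\ref{rem:alt-rep}. For each gate $g \in \mathbf{A}$ I would define $\Psi_{g,r}$ recursively: the base $\Psi_{g,1}$ is the rectangle built from the two-error-correcting stabilizer code $C$ (a transversal or state-injected realization of $g$ followed by an error-correction gadget on every output block), and $\Psi_{g,r}$ is obtained by substituting each location in $\Psi_{g,1}$ with $\Psi_{\cdot,r-1}$. The standard malignant-pair counting then shows that a noisy level-$r$ rectangle acts as the logical $g$ unless at least two level-$(r-1)$ sub-rectangles are simultaneously faulty, giving the doubling recursion $\delta_r \leq c\,\delta_{r-1}^2$ and hence $\delta_r \leq (c\delta)^{2^r}$ uniformly over $\tilde\cT_{\Psi_{g,r}} \in \trans(\Psi_{g,r},\delta)$. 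Because the noise model is diamond-norm based, the AGP argument is run in diamond norm: coherent errors are absorbed into the fault pattern of a location via Stinespring dilation, so the Pauli-weight bookkeeping of~\cite{aharonov1997fault, Aliferis2006quantum} extends without change.

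To cast the conclusion as a representation in codes, Lemma~\ref{lem:rep-approx}(a) reduces the task to producing a superoperator $\cT$ within $(c\delta)^{2^r}$ of $\tilde\cT_{\Psi_{g,r}}$ for which the diagram in Fig.~\ref{fig:strong-rep} commutes exactly. The natural choice takes $\cT$ to be the ``corrected'' superoperator obtained from $\tilde\cT_{\Psi_{g,r}}$ by projecting every output branch with a faulty sub-rectangle onto its closest ideal counterpart; Eq.~(\ref{eq:mu-r-st}) together with the transformation rules of Corollaries~\ref{cor:tprime-unitary}, \ref{cor:tprime-stprep}, and \ref{cor:tprime-measurement} then gives an exact $\cR$ for the left square of Fig.~\ref{fig:strong-rep} and an exact $\cS$ for the right square. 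Since the terminal EC gadget of the rectangle guarantees the output lies in $\bL(L_r^{\otimes q'})$ up to the same $(c\delta)^{2^r}$ slack, all three accuracies coincide and yield $\epsilon_1 = \epsilon_2 = \epsilon_3 = O((c\delta)^{2^r})$.

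For the interfaces I treat the two cases symmetrically. When $r < s$, $\Gamma_{r,s}$ is an iterated fault-tolerant encoder that lifts an $r$-block to an $s$-block by encoding one concatenation level at a time, with a level-$i$ EC gadget inserted after each encoding step (for $i = r+1, \dots, s$). When $r > s$, $\Gamma_{r,s}$ is instead a sequence of fault-tolerant decoders that peel off one concatenation layer per step, with a level-$i$ EC gadget interleaved. In both constructions the terminal EC sits at level $s$ and, by the rectangle analysis of paragraph one, deposits the output in $\bL(L_s)$ up to $(c\delta)^{2^s}$, yielding $\epsilon_3 = O((c\delta)^{2^s})$. For $\epsilon_1$ and $\epsilon_2$ the commutativity of Fig.~\ref{fig:strong-rep} is broken whenever an intermediate gadget at some level $i \leq \min\{r,s\}$ fails; a union bound over the $\min\{r,s\}$ intermediate gadgets---each failing with probability $O(c\delta)$ at its own level---yields $\epsilon_1 = \epsilon_2 = O((c\delta)^{\min\{r,s\}})$.

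The main obstacle I anticipate is the asymmetry between $\epsilon_1, \epsilon_2$ and $\epsilon_3$: the output-side parameter $\epsilon_3$ depends only on the output code $D_s$, whereas the commutativity parameters are limited by the weaker of the two ideal decoders $\tilde\mu^*_r, \tilde\mu^*_s$, and one must organise the gadget chain so that both scalings are realised by the same circuit. A secondary difficulty is the step from Pauli-weight to diamond-norm bookkeeping noted above, which depends on the continuity result of Theorem~\ref{thm:cont-stine} (already used inside Lemma~\ref{thm:main-ft}) to replace an arbitrary coherent faulty location by a Pauli-type worst-case fault without loss in the final bound.
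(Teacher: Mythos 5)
Your overall construction matches the paper: $\Psi_{g,r}$ is the $r$-Rec built from $\Psi_{g,1} = \Phi_g \circ \mathrm{EC}^{\otimes q}$ by recursive substitution, the doubling recursion $\delta_r \le c\,\delta_{r-1}^2$ gives $\delta_r = O((c\delta)^{2^r})$, Lemma~\ref{lem:rep-approx}(a) converts a nearby exact representation into the claimed accuracies, and $\Gamma_{r,s}$ is a chain of single-level encoders/decoders with EC gadgets. The architecture is right, but a few of your steps would not go through as stated.

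First, you invoke Stinespring dilation and Theorem~\ref{thm:cont-stine} to ``replace an arbitrary coherent faulty location by a Pauli-type worst-case fault,'' asserting that the Pauli-weight bookkeeping of AGP extends unchanged. The paper does not, and cannot, do this. Instead it expands each noisy gate linearly as $\tilde{\cT}_{g_{ij}} = \cT_{g_{ij}} + (\tilde{\cT}_{g_{ij}} - \cT_{g_{ij}})$ and works with fault paths whose faulty locations carry the \emph{difference} superoperator, which is not a channel at all; the counting is then a diamond-norm submultiplicativity argument ($\dnorm{\cT_{\Delta_{g,r}}} \le \delta^{|\Delta_{g,r}|}$) rather than a Pauli twirl. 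Replacing a coherent fault by a worst-case Pauli fault is exactly the sort of step that fails for coherent noise, because errors accumulate in amplitude rather than probability; the whole reason the general-noise bound in Lemma~\ref{lem:rep-gates} ends up at $O((c\delta)^{2^r})$ only after accounting for all fault paths is that no such reduction exists.

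Second, to get an exact commuting diagram (the $\cR$ and $\cS$ required by Lemma~\ref{lem:rep-approx}(a)), it is not enough to define $\cT$ as ``the sum over good fault paths'' and then appeal to the correctness of good rectangles, because each term $\cT_{\Delta_{g,r}}$ in that sum has faulty locations realizing $\tilde{\cT}_{g_{ij}} - \cT_{g_{ij}}$, which is not a quantum channel, so Lemma~\ref{lem:good-cor} does not apply to it. The paper bridges this gap with the inclusion--exclusion identity $\cT_{\Delta_{g,r}} = \sum_{\Delta' \subseteq \Delta_{g,r}} (-1)^{|\Delta_{g,r}| - |\Delta'|}\,\tilde{\cT}_{\Delta'}$, where $\tilde{\cT}_{\Delta'}$ \emph{is} a channel and every good $\Delta'$ represents $\cT_g$ exactly; the coefficients telescope in Eq.~(\ref{eq:sum-exp-2}) to give ${\mu^*}^{\otimes q'} \circ \cR = \cT_g \circ {\mu^*}^{\otimes q}$. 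Your ``project onto the closest ideal counterpart'' is not a concrete construction and would not produce the required exact $\cR, \cS$ in the non-CP setting.

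Third, your accuracy estimate for the interface, ``each intermediate gadget failing with probability $O(c\delta)$, union bound gives $O((c\delta)^{\min\{r,s\}})$,'' does not reflect how the composition behaves. The paper composes $\Gamma_{r,r+1} \circ \Gamma_{r+1,r+2} \circ \cdots$ (or its decoding mirror), each $\Gamma_{i,i+1}$ represents identity with accuracy $O((c\delta)^{2^i})$ by Lemmas~\ref{lem:del-rep-r}/\ref{lem:del-rep-r+1}, and the composition lemma (Lemma~\ref{lem:union-bound-rep}) \emph{adds} the accuracies. The geometric-type sum is dominated by the smallest level $l = \min\{r,s\}$, giving $\epsilon_1 = \epsilon_2 = O((c\delta)^{2^l})$ (Lemma~\ref{lem:interface}); this is stronger than, and obtained quite differently from, a per-level $O(c\delta)$ union bound. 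You should also argue $\epsilon_3$ separately via the terminal EC gadget, as the paper does, rather than bundling it with the same union bound.
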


\begin{proof}
    Proof of this theorem follows from Lemma~\ref{lem:rep-gates}, and Lemma~\ref{lem:interface}.
\end{proof}


\subsubsection{Recursive simulation}
In this section, we provide $\Psi_{g, r}$ for gates $g \in \mathbf{A}$ according to Point (b) of Def.~\ref{def:ft_scheme} for $D_r$.

\smallskip For a gate $g \in \mathbf{A}$, with $q$ qubits as input and $q'$ qubits as output, we consider a quantum circuit $\Phi_g$, that takes as input $q$ blocks of qubits, each containing $n$ qubits and outputs $q'$ blocks of qubits,  each containing $n$ qubits. Let $\mathrm{EC}$ be an error correction gadget, for example, gadget corresponding to the Shor error correction~\cite{shor1996fault, tansuwannont2023adaptive} (see also Appendix~\ref{app:shorEC}). We obtain the quantum circuit $\Psi_{g, r}$, using the quantum circuit $\Phi_g \circ (\mathrm{EC})^{\otimes q}$, as described in the following paragraph.

\paragraph{r-Rec.} We take $\Psi_{g, 1} := \Phi_g \circ (\mathrm{EC})^{\otimes q}$. Consider the set of locations $\mathrm{Loc}(\Psi_{g, 1}) := \{ g_{i, j}\}_{i, j}$, and the size $|\Psi_{g, 1}| := |\mathrm{Loc}(\Psi_{g, 1})|$ of the circuit $\Psi_{g, 1}$. The quantum circuit $\Psi_{g, r}$ is obtained by replacing each $g_{i, j} \in \mathrm{Loc}(\Psi_{g, 1})$ by the corresponding circuit $\Psi_{g_{i, j}, r-1}$. The quantum circuit $\Psi_{g, r}$ is referred to as an $r$-rectangle or simply an $r$-Rec corresponding to the gate $g$. 

\smallskip Note that $\Psi_{g, r}$ contains $|\Psi_{g, 1}|$ number of ($r-1$)-Recs. Furthermore, $\Psi_{g, r}$ takes as input $q$ blocks of $n^r$-qubits, and outputs $q'$ blocks of $n^r$-qubits. Note that for the identity gate $g = I$, the corresponding $1$-Rec is $\Psi_{I, 1} = \mathrm{EC}$. In the following, we denote $r$-Rec for the identity gate as $\Psi_{I, r} := \mathrm{EC}_r$, where $\mathrm{EC}_0 = I$.  

\paragraph{Faulty $r$-Rec.} \label{para:faulty-path} We say a gate $g \in \mathbf{A}$ is \emph{faulty}, if it realizes an arbitrary channel instead of the intended channel $\cT_g$. A faulty $r$-Rec $\Psi_{g, r}$ is defined by a \emph{fault path}, which is a subset of locations $\Delta_{g,r} \subseteq \mathrm{Loc}(\Psi_{g, r})$, such that the gates applied at those locations are faulty.  We denote a faulty $r$-Rec by $(\Psi_{g, r}, \Delta_{g, r})$, where $\Delta_{g, r}$ is the set of faulty locations. The $(r-1)$-Recs contained in $(\Psi_{g, r}, \Delta_{g, r})$ are also faulty, and they are given by,  $(\Psi_{g_{i, j}, r-1}, \Delta_{g_{i, j}, r-1})$, where $\Delta_{g_{i, j}, r-1} = \Delta_{g, r} \cap \mathrm{Loc}(\Psi_{g_{i, j}, r-1}), \: \forall g_{i, j} \in \mathrm{Loc}(\Psi_{g, 1})$.  

\smallskip An important property of a faulty $r$-Rec is its goodness, which is defined as below.

\begin{definition}[A good faulty $r$-Rec~\cite{Aliferis2006quantum}] \label{def:g-r-rec}
 A faulty $1$-Rec corresponding to $\Psi_{g, 1}$ is said to be good if it contains no more than one faulty gate in it.  A faulty $r$-Rec $\Psi_{g, r}$ is said to be good if it contains no more than one bad (not good) $(r-1)$-Recs in it. 
\end{definition}
Another important property of a faulty $r$-Rec is its correctness~\cite{Aliferis2006quantum}, which is defined as follows.
\begin{definition}[A correct $r$-rec~\cite{Aliferis2006quantum}] \label{def:cor-rec}
A faulty $r$-Rec $(\Psi_{g, r}, \Delta_{g, r})$, corresponding to a gate $g$, with $q$ qubit input and $q'$ qubit output, is said to be correct if the following holds,
\begin{equation} \label{eq:correctness}
    (\tilde{\mu}_r^*)^{\otimes q'} \circ \cT_{\Delta_{g, r}} \circ \cJ_r^{\otimes q} = \cT_g \circ   (\mu_r^*)^{\otimes q},
\end{equation}
where $\cT_{\Delta_{g, r}}$ is the channel realized by $(\Psi_{g, r}, \Delta_{g, r})$, and $\tilde{\mu}_r^*$ is according to~(\ref{eq:mu-tilde-r}).
\end{definition}

\smallskip We note that Eq.~(\ref{eq:correctness}) is equivalent to Eq.~(\ref{eq:weak-rep}), considering $\cT =\cT_{\Delta_{g, r}}$ and $\cP = \cT_g$. It is shown in~\cite{aharonov1997fault, Aliferis2006quantum} that it is possible to construct $r$-Recs, for a universal gate set $\mathbf{A}$, in a way that the following properties hold~\cite[Rec-Cor, Page 44]{Aliferis2006quantum}.

\begin{lemma}{\normalfont(\cite[Rec-Cor]{Aliferis2006quantum}).} \label{lem:cor-rec}
A good $r$-Rec $(\Psi_{g, r}, \Delta_{g, r})$ is correct.  
\end{lemma}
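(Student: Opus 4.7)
I would prove this by induction on the concatenation level $r$, closely following the rectangle-based analysis of~\cite{aharonov1997fault, Aliferis2006quantum} but expressing everything in terms of the derived-code decoders $\tilde\mu^*_r$ so that the conclusion directly matches Def.~\ref{def:cor-rec}.

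\textbf{Base case ($r=1$).} Here a good $1$-Rec $(\Psi_{g,1}, \Delta_{g,1}) = (\Phi_g \circ \mathrm{EC}^{\otimes q},\Delta_{g,1})$ contains at most one faulty gate, and the input lives in $\bL(L_1^{\otimes q})$, so by construction of $L_1 = \mathscr{E}_1 \overline{L}_1$ the state can be written as a mixture of code states with at most one Pauli error per block. I would split into two sub-cases according to where the unique fault sits. (i) If the fault is inside one of the EC gadgets: I invoke the standard gadget property of the Shor-style EC (Appendix~\ref{app:shorEC}) that a faulty EC with one internal fault, acting on a block carrying at most one incoming error, produces a state whose residual Pauli error has weight at most two and whose ideal decoding $\tilde\mu^*$ equals the ideal decoding of the input, so that the subsequent ideal $\Phi_g$ followed by $\tilde\mu_1^*$ reproduces $\cT_g \circ (\tilde\mu_1^*)^{\otimes q}$. (ii) If the fault is inside $\Phi_g$: each EC is ideal, so after EC each block carries at most zero residual error at the logical level (i.e.\ $\tilde\mu^*$ acts as a clean decoder); transversality of $\Phi_g$ then ensures the single fault produces at most one Pauli error per output block, which is still $(1,1)$-sparse and hence is annihilated by $\tilde\mu^*_1$, yielding the correctness equation~(\ref{eq:correctness}). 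The fact that the base stabilizer code $C$ corrects weight-$2$ Pauli errors is exactly what makes both sub-cases go through.

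\textbf{Inductive step.} Assume the claim holds at level $r-1$. A good $r$-Rec $(\Psi_{g,r},\Delta_{g,r})$ contains at most one bad $(r-1)$-sub-Rec; all other sub-Recs are good and hence correct by the induction hypothesis. Using the recursive identity
\begin{equation}
\tilde\mu^*_r \;=\; \tilde\mu^* \circ (\tilde\mu^*_{r-1})^{\otimes n},
\end{equation}
I would push the inner decoders $\tilde\mu^*_{r-1}$ through each good $(r-1)$-Rec using the IH~(\ref{eq:correctness}) at level $r-1$. This replaces every good $(r-1)$-sub-Rec by its ideal logical gate acting on the level-$(r-1)$ decoded state, and replaces the at most one bad $(r-1)$-sub-Rec by an arbitrary faulty gate acting at level $r-1$. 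What remains is precisely the layout of a $1$-Rec $\Phi_g\circ \mathrm{EC}^{\otimes q}$ at the level-$(r-1)$-decoded wires, containing at most one faulty gate; that is a good $1$-Rec. Applying the base case with the outer decoder $\tilde\mu^*$ then yields~(\ref{eq:correctness}) at level $r$.

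\textbf{Main obstacle.} The genuinely delicate part is the base case, and within it the sub-case where the single fault lies in EC. One must check two non-trivial gadget properties at once: that a single internal EC fault combined with one pre-existing input error keeps the outgoing error within the $(1,2)$-sparse correctable set of $C$, \emph{and} that the ``would-be'' decoded logical state agrees with the decoding of the input (so that EC acts as a logical identity, not merely as an error-reducer). Once this is packaged as a clean property of the Shor EC gadget, the rest of the induction is essentially bookkeeping via Eq.~(\ref{eq:mu-tilde-r}) and the definition of goodness.
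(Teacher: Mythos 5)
The paper does not prove this lemma at all: it imports Rec--Cor verbatim from~\cite[Page~44]{Aliferis2006quantum}, together with the companion Rec--Val (Lemma~\ref{lem:valid-rec}), and uses both only as black boxes in Lemma~\ref{lem:good-cor}. So there is no in-paper proof to compare against; what follows is an assessment of your sketch on its own terms.

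Your overall skeleton---induction on $r$, a gadget-level base case $r=1$, and an inductive step that pushes $\tilde\mu^*_{r-1}$ through the $(r-1)$-sub-Recs via the recursion $\tilde\mu^*_r = \tilde\mu^*\circ(\tilde\mu^*_{r-1})^{\otimes n}$---is the standard Aharonov--Ben-Or/AGP architecture and is the right shape. You also correctly flag that the $1$-Rec gadget properties of the Shor EC are where the genuine content lies. However, there is a concrete gap in the inductive step that is independent of the base case. The correctness hypothesis at level $r-1$, Eq.~(\ref{eq:correctness}), carries a $\cJ_{r-1}^{\otimes q}$ on the right-hand side: it only says anything about good $(r-1)$-Recs \emph{whose input lies in $L_{r-1}^{\otimes q}$}. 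But the input to your $r$-Rec is only restricted to $L_r^{\otimes q}$, and because $\mathscr{E}_r$ consists of $(r,1)$-sparse errors, each $r$-block can contain exactly one $(r-1)$-sub-block whose state lies \emph{outside} $L_{r-1}$. The first $(r-1)$-Rec on that wire may well be good, yet the IH gives you no control over it, because its precondition fails. So when you ``replace every good $(r-1)$-sub-Rec by its ideal logical gate,'' you are silently replacing this one too, which is not justified. You therefore need a separate argument that this out-of-control wire contributes at most one additional error at the outer level (not more, since the Shor EC gadget design bounds propagation, and the derived code $D_1 = \mathrm{Der}(C,\mathscr{E}(n,1))$ leaves exactly one error of headroom relative to the $t=2$ distance of $C$), and that this error can be absorbed by the outer $1$-Rec together with the one bad $(r-1)$-Rec. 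This is exactly the kind of bookkeeping the paper performs explicitly in Lemma~\ref{lem:n-EC} for the interface $\Gamma_{r+1,r}$, where the sub-block $i_0$ carrying a non-sparse error is traced and shown to produce a single-qubit channel $\cW_{i_0}$ at the outer level; your proof needs an analogous step. You also invoke Rec--Val implicitly (``the output stays in $L_{r-1}$'') without naming it; in a careful proof, Rec--Cor and Rec--Val must be established jointly by induction, since each uses the other. Finally, a smaller issue: you appeal to ``transversality of $\Phi_g$'' in the base case, but the paper makes no such assumption on the gate gadgets; the needed property is the standard ``one fault yields at most one error per output block,'' which is a design property of $\Phi_g\circ\mathrm{EC}^{\otimes q}$ rather than transversality per se.
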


\begin{lemma} {\normalfont(\cite[Rec-Val]{Aliferis2006quantum}).} \label{lem:valid-rec}
 The channel $\cT_{\Delta_{g, r}}$ corresponding to a good $r$-Rec $(\Psi_{g, r}, \Delta_{g, r})$ maps any state in $\bL(N_r)$ to a state in the subspace $\bL(L_r) \subseteq \bL(N_r)$, that is, $(\cJ_r^*)^{\otimes q'} \circ \cT_{\Delta_{g, r}}$ is a quantum channel, with $q'$ being the number of output qubits of $g$.
\end{lemma}
Lemmas~\ref{lem:cor-rec} and~\ref{lem:valid-rec} imply the following.
\begin{lemma} \label{lem:good-cor}
    For a good $r$-Rec $(\Psi_{g, r}, \Delta_{g, r})$, the corresponding quantum channel $\cT_{\Delta_{g, r}}$ represents $\cT_g$ in the code $D_r$. 
\end{lemma}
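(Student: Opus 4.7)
The plan is to verify the representation conditions of Definition~\ref{def:rep_codes} via the equivalent characterization in Remark~\ref{rem:alt-rep}, which is tailored to concatenated codes and bypasses the need to construct the witnesses $\cR$ and $\cS$ from scratch. That remark reduces the task of showing that $\cT_{\Delta_{g,r}}$ represents $\cT_g$ in $D_r$ to establishing two items: (i) that $(\cJ_r^*)^{\otimes q'} \circ \cT_{\Delta_{g,r}}$ is a quantum channel, and (ii) the equality
\begin{equation*}
(\tilde{\mu}_r^*)^{\otimes q'} \circ \cT_{\Delta_{g,r}} \circ \cJ_r^{\otimes q} \;=\; \cT_g \circ (\tilde{\mu}_r^*)^{\otimes q} \circ \cJ_r^{\otimes q}.
\end{equation*}
The witnesses then read off directly from the remark as $\cR = (\cJ_r^*)^{\otimes q'} \circ \cT_{\Delta_{g,r}} \circ \cJ_r^{\otimes q}$ and $\cS = (\cJ_r^*)^{\otimes q'} \circ \cT_{\Delta_{g,r}}$.

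For (i), I would invoke Lemma~\ref{lem:valid-rec} (Rec-Val) directly: goodness of the $r$-Rec forces $\cT_{\Delta_{g,r}}$ to map into $\bL(L_r^{\otimes q'})$, which is exactly the statement that $(\cJ_r^*)^{\otimes q'} \circ \cT_{\Delta_{g,r}}$ is trace preserving and completely positive. For (ii), I would start from the correctness identity of Lemma~\ref{lem:cor-rec} (Rec-Cor), namely Eq.~(\ref{eq:correctness}), and restrict both sides to inputs in the code subspace by post-composing on the right with $\cJ_r^{\otimes q}$; this lands exactly on the required identity. Equivalently, the relation $\mu_r^* = \tilde{\mu}_r^* \circ \cJ_r$ from Eq.~(\ref{eq:mu*_r}) shows that the correctness equation, read as a statement about the code space, is precisely Eq.~(\ref{eq:weak-rep}) with $\cP = \cT_g$.

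The argument is essentially bookkeeping once the two preceding lemmas are in hand, and no substantive obstacle arises. The only subtlety worth underlining is keeping straight the distinction between the full-space decoder $\tilde{\mu}_r^*$ (which appears in Definition~\ref{def:cor-rec}) and the code-space decoder $\mu_r^*$ (which appears in Definition~\ref{def:rep_codes}); their relationship via Eq.~(\ref{eq:mu*_r}) is exactly what makes Remark~\ref{rem:alt-rep} the right bridge between the two formulations, and the lemma follows.
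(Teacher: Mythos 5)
Your proof is correct and follows essentially the same path as the paper: invoke Rec-Val (Lemma~\ref{lem:valid-rec}) for the validity condition, Rec-Cor (Lemma~\ref{lem:cor-rec}) for Eq.~(\ref{eq:weak-rep}), and conclude via Remark~\ref{rem:alt-rep}. The only tiny imprecision is the phrase ``restrict both sides \ldots by post-composing on the right with $\cJ_r^{\otimes q}$'' — the left-hand side of Eq.~(\ref{eq:correctness}) already carries $\cJ_r^{\otimes q}$, so only the right-hand side needs the restriction — but this does not affect the validity of the argument.
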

\begin{proof}
 The correctness condition in Lemma~\ref{lem:cor-rec} gives Eq.~(\ref{eq:weak-rep}). From Lemma~\ref{lem:valid-rec}, we have that $(\cJ_r^*)^{\otimes q'} \circ \cT_{\Psi_{g, r}}$ is a quantum channel. Therefore, conditions in Remark~\ref{rem:alt-rep} are satisfied.  
\end{proof}
The following lemma proves that an $r$-Rec $\Psi_{g, r}$ satisfies $(b)$ of Def.~\ref{def:ft_scheme}.


\begin{lemma} \label{lem:rep-gates}
Consider an $r$-Rec $\Psi_{g, r}$ corresponding to $g \in \mathbf{A}$, with $q, q'$ input and output qubits, respectively. There exists a constant $c > 1$  such that for all $\delta \in [0, \frac{1}{c})$, any $\tilde{\cT} \in \trans(\Psi_{g, r}, \delta)$ represents $\cT_g$ in the code $D_r$ with accuracy $\epsilon_1 = \epsilon_2 = \epsilon_3 = O((c \delta)^{2^r})$.
\end{lemma}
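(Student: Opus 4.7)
The plan is to perform a fault-path expansion of $\tilde{\cT}_{\Psi_{g,r}}$, isolate the ``good'' contributions into a superoperator $\cT_G$ that exactly makes the diagram of Fig.~\ref{fig:strong-rep} commute, bound the remaining ``bad'' contribution $\cT_B$ by a recursive counting argument, and finally invoke Lemma~\ref{lem:rep-approx}(a) to convert the pair $(\cT_G, \cT_B)$ into an approximate representation with accuracy $O((c\delta)^{2^r})$.

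Concretely, for each location $i \in \mathrm{Loc}(\Psi_{g,r})$, write $\tilde{\cT}_{g_i} = \cT_{g_i} + \cE_i$ with $\dnorm{\cE_i} \leq \delta$, and expand multilinearly to obtain
\begin{equation*}
\tilde{\cT}_{\Psi_{g,r}} = \sum_{\Delta \subseteq \mathrm{Loc}(\Psi_{g,r})} \cT^\Delta,
\end{equation*}
where $\cT^\Delta$ substitutes $\cE_i$ at $i \in \Delta$ and $\cT_{g_i}$ elsewhere, with $\dnorm{\cT^\Delta} \leq \delta^{|\Delta|}$ by submultiplicativity of the diamond norm. Partition the index set according to whether $\Delta$ is good in the sense of Def.~\ref{def:g-r-rec}, and set $\cT_G := \sum_{\Delta \text{ good}} \cT^\Delta$ and $\cT_B := \tilde{\cT}_{\Psi_{g,r}} - \cT_G$.

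To show that $\cT_G$ exactly makes Fig.~\ref{fig:strong-rep} commute, I would leverage Lemmas~\ref{lem:cor-rec} and \ref{lem:valid-rec}: for any good fault path and any choice of \emph{quantum channels} at its faulty locations, the resulting channel both has image in $\bL(L_r^{\otimes q'})$ and satisfies the correctness identity (\ref{eq:correctness}). Both conditions are affine in the channels placed at the faulty locations, so by taking differences of two valid choices supported at a single location at a time and inducting on $|\Delta|$ over the poset of good fault paths (a M\"obius-style extraction), the conclusions propagate to arbitrary superoperators at those locations. In particular, for each good $\Delta$ the image of $\cT^\Delta$ lies in $\bL(L_r^{\otimes q'})$, and $(\tilde{\mu}_r^*)^{\otimes q'} \circ \cT^\Delta \circ \cJ_r^{\otimes q}$ equals $\cT_g \circ (\tilde{\mu}_r^*)^{\otimes q}$ when $\Delta = \emptyset$ and vanishes otherwise. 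Summing, $\cT_G$ satisfies (\ref{eq:weak-rep}) with $\cP = \cT_g$ exactly, so by Remark~\ref{rem:alt-rep} the diagram commutes exactly for $\cT_G$ with $\cR := (\cJ_r^*)^{\otimes q'} \circ \cT_G \circ \cJ_r^{\otimes q}$ and $\cS := (\cJ_r^*)^{\otimes q'} \circ \cT_G$.

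To bound $\dnorm{\cT_B}$, I would induct on $r$. At $r=1$, bad means $|\Delta| \geq 2$, so a union bound gives $\dnorm{\cT_B} \leq \sum_{k \geq 2} \binom{|\Psi_{g,1}|}{k} \delta^k \leq O(\delta^2)$ for $\delta$ small. For the inductive step, re-expand $\tilde{\cT}_{\Psi_{g,r}}$ at the $(r-1)$-level using the induction hypothesis $\tilde{\cT}_{\Psi_{g_{i,j}, r-1}} = \cT_G^{(i,j)} + \cT_B^{(i,j)}$ with $\dnorm{\cT_B^{(i,j)}} \leq (c\delta)^{2^{r-1}}$ and $\dnorm{\cT_G^{(i,j)}} \leq 2$ (triangle inequality, since $\tilde{\cT}_{\Psi_{g_{i,j}, r-1}}$ is a channel). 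A bad $r$-level fault path corresponds to at least two bad sub-Recs, yielding
\begin{equation*}
\dnorm{\cT_B} \leq \sum_{k \geq 2} \binom{|\Psi_{g,1}|}{k} (c\delta)^{k \cdot 2^{r-1}} \cdot 2^{|\Psi_{g,1}|-k} = O\bigl((c\delta)^{2^r}\bigr),
\end{equation*}
with the constants absorbed by choosing $c$ large enough to close the recursion. Applying Lemma~\ref{lem:rep-approx}(a) with $\cT = \cT_G$ and $\dnorm{\tilde{\cT}_{\Psi_{g,r}} - \cT_G} = \dnorm{\cT_B} \leq O((c\delta)^{2^r})$ then yields the claimed representation with accuracy $\epsilon_1 = \epsilon_2 = \epsilon_3 = O((c\delta)^{2^r})$. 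The main obstacle is the linearity argument for the good part: one must carefully justify that Lemmas~\ref{lem:cor-rec}--\ref{lem:valid-rec}, stated for channel-valued faulty gates, continue to hold term-by-term for the superoperator-valued perturbations $\cE_i$ arising in the fault-path expansion.
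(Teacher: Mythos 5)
Your proposal follows essentially the same route as the paper: a fault-path expansion isolating the good contributions as the exact representation $\cT_G$, a recursive bound on the remaining bad contribution, and an appeal to Lemma~\ref{lem:rep-approx}(a). The ``M\"obius-style extraction'' you describe is precisely the paper's resolution of the obstacle you flag at the end: each superoperator-valued term $\cT^\Delta$ is rewritten as the alternating sum $\sum_{\Delta'\subseteq\Delta}(-1)^{|\Delta|-|\Delta'|}\tilde{\cT}_{\Delta'}$ over channel-valued good fault paths, so that Lemmas~\ref{lem:cor-rec} and~\ref{lem:valid-rec} can be applied termwise and the inclusion-exclusion signs cancel to yield the exact commutativity.
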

\begin{proof}
 We will find maps $\cT, \cR$, and $\cS$ for $\tilde{\cT}_{\Psi_{g, r}} \in \trans(\Psi_{g, r}, \delta)$, such that the commutative diagram in Fig.~\ref{fig:strong-rep} holds exactly with respect to the code $D_r$ and $\cP = \cT_g$, and the superoperator $\cT $ satisfies, $\dnorm{\cT - \tilde{\cT}} \leq O((c \delta)^{2^r})$. Then, it follows from Part $(a)$ of Lemma~\ref{lem:rep-approx} that $\tilde{\cT}$ represents $\cT_g$ in the code $D_r$ with accuracy $\epsilon_1 = \epsilon_2 = \epsilon_3 = O((c \delta)^{2^r})$.

\smallskip  Consider the set of locations $\mathrm{Loc}(\Psi_{g,r}) = \{ g_{i, j} \}_{i , j}$ of the circuit $\Psi_{g,r}$. Consider an $\tilde{\cT}_{\Psi_{g, r}} \in \trans(\Psi_{g, r}, \delta)$. Note that $\tilde{\cT}_{\Psi_{g, r}}$ is obtained by concatenating $\{\tilde{\cT}_{g_{i, j}}\}$ in the structure of the quantum circuit $\Psi_{g, r}$, where each gate  $g_{i, j}$ is replaced by an $\tilde{\cT}_{g_{i, j}} \in \trans(g_{i, j}, \delta)$. We now write $\tilde{\cT}_{g_{i, j}}$  for all $g_{i, j} \in \mathrm{Loc}(\Psi_{g, r})$ as follows~(see also \cite[Lemma 10.10]{kitaev}):
\begin{equation}
    \tilde{\cT}_{g_{i, j}} = \cT_{g_{i, j}} + (\tilde{\cT}_{g_{i, j}} - \cT_{g_{i, j}}).
\end{equation}
In this way, we can write $\tilde{\cT}_{\Psi_{g, r}}$ as sum of terms, where each term contains either $\cT_{g_{i, j}}$ (perfect channel) or $(\tilde{\cT}_{g_{i, j}} - \cT_{g_{i, j}})$ (faulty map) at a location $(i, j)$. We note that $(\tilde{\cT}_{g_{i, j}} - \cT_{g_{i, j}})$ is not necessarily a quantum channel contrary to the definition of a faulty gate in Paragraph ``Faulty $r$-Rec" in Section~\ref{para:faulty-path}. Let $\cT_{\Delta_{g, r}}$ be the superoperator realized by the faulty $r$-Rec $(\Psi_{g, r}, \Delta_{g, r})$, where faulty locations $g_{i, j} \in \Delta_{g, r}$ realize $(\tilde{\cT}_{g_{i, j}} - \cT_{g_{i, j}})$ instead of $\cT_{g_{i, j}}$. Then, we have
\begin{equation} \label{eq:tilde-T-g-1}
  \tilde{\cT}_{\Psi_{g, r}}   = \sum_{ \Delta_{g, r} \subseteq \mathrm{Loc}(\Psi_{g, r})}  \cT_{\Delta_{g, r}}.
\end{equation}
Note that
\begin{equation} \label{eq:dnorm-g,r}
    \dnorm{ \cT_{\Delta_{g, r}}} \leq \delta^{|\Delta_{g, r}|}. 
\end{equation}
We now expand $\cT_{\Delta_{g, r}}$ in terms of good and bad fault paths,  
 \begin{equation} \label{eq:tilde-T-g}
  \tilde{\cT}_{\Psi_{g, r}}   = \sum_{ \text{good }\Delta_{g, r} }  \cT_{\Delta_{g, r}} + \sum_{\text{bad }\Delta_{g, r}}  \cT_{\Delta_{g, r}}.
\end{equation}
where a good $\Delta_{g, r}$ refers to $(\Psi_{g, r}, \Delta_{g, r})$ being good and a bad $\Delta_{g, r}$ refers to $(\Psi_{g, r}, \Delta_{g, r})$ being bad (not good) according to Def.~\ref{def:g-r-rec}. We take $\cT$ as follows,
\begin{equation} \label{eq:T-map}
\cT   =  \sum_{ \text{good }\Delta_{g, r}}  \cT_{\Delta_{g, r}}.
\end{equation}
We now prove that $\cT$ is $O((c \delta)^{2^r})$ close to $ \tilde{\cT}_{\Psi_{g, r}}$ in the diamond norm for a constant $c > 0$. The proof is similar to~\cite[Lemma 11]{aharonov1997fault}. Using Eqs.~(\ref{eq:tilde-T-g}) and~(\ref{eq:T-map}), and the triangle inequality, we get
\begin{equation} \label{eq:norm-diff-bound}
    \dnorm{\tilde{\cT}_{\Psi_{g, r}} - \cT} \leq  \sum_{ \text{bad }\Delta_{g, r}}  \dnorm{\cT_{\Delta_{g, r}}}.
\end{equation}
Let $\tau_{g, r} := \sum_{ \text{bad }\Delta_{g, r}}  \dnorm{\cT_{\Delta_{g, r}}}$. We upper bound $\tau_{g, r}$, recursively.  From Def.~\ref{def:g-r-rec}, we get
 \begin{equation} \label{eq:g-1-bad}
     \{ \Delta_{g, 1} \subseteq \mathrm{Loc}(\Psi_{g, 1}) : \Delta_{g, 1} \text{ is bad}  \} =    \{ \Delta_{g, 1} \subseteq \mathrm{Loc}(\Psi_{g, 1}) : |\Delta_{g, 1}| > 1\}
 \end{equation}
 Let $g_{\max} \in \mathbf{A}$ be the gate with the largest $1$-Rec  $\Psi_{g_{\max}, 1}$ in the gate set $\mathbf{A}$ and take $c :=  e \binom{|\Psi_{g_{\max}, 1}|}{2}$ ($e$ being the Euler's constant). Then, we have that
\begin{align} 
    \tau_{g, 1} &\leq \sum_{ \substack{\Delta_{g, 1} \subseteq \mathrm{Loc}(\Psi_{g, 1}) \\ |\Delta_{g, 1}| > 1}}  \delta^{|\Delta_{g, 1}|} \nonumber \\
    & = \sum_{ j > 1 } \binom{|\Psi_{g_{\max}, 1}|}{j} \delta^j \nonumber \\
    & \leq (1 + \delta)^{|\Psi_{g_{\max}, 1}|} \sum_{j > 1} \binom{|\Psi_{g_{\max}, 1}|}{j} \big(\frac{\delta}{ 1 + \delta} \big)^j  \big(\frac{1}{ 1 + \delta} \big)^{|\Psi_{g_{\max}, 1}| -j}\nonumber \\
    & \leq e^{|\Psi_{g_{\max}, 1}| \delta} \binom{|\Psi_{g_{\max}, 1}|}{2} \delta^2 \leq e \binom{|\Psi_{g_{\max}, 1}|}{2} \delta^2 = \frac{(c\delta)^2}{c}, \label{eq:t-g,1}
\end{align}
where the first inequality uses Eqs.~(\ref{eq:dnorm-g,r}) and (\ref{eq:g-1-bad}), the third inequality uses $(1 + \delta)^x \leq e^{\delta x}$, and $\sum_{j > t} \binom{n}{j} (1-\delta)^{n -j} \delta^j  \leq \binom{n}{t+1} \delta^{t+1}$, and the last inequality uses $|\Psi_{g_{\max}, 1}| \delta \leq 1$ (using $c\delta \leq 1$).

\medskip We have from Def.~\ref{def:g-r-rec},
 \begin{multline} \label{eq:g-r-bad}
     \{ \Delta_{g, r} \subseteq \mathrm{Loc}(\Psi_{g, r}) : \Delta_{g, r} \text{ is bad}  \} \\=    \{ \Delta_{g, r} \subseteq \mathrm{Loc}(\Psi_{g, r}) : (\Psi_{g, r} ,\Delta_{g, r}) \text{ contains more than one bad faulty $(r-1)$-Rec} \}
 \end{multline}
Note that $\dnorm{\cT_{\Delta_{g, r}}}  \leq \prod_{g_{i, j} \in  \mathrm{Loc}(\Psi_{g, 1})} \dnorm{\cT_{\Delta_{g_{i,j}, r-1}}}$, where $\Delta_{g_{i,j}, r-1} = \Delta_{g, r} \cap  \mathrm{Loc}(\Psi_{g_{i, j}, r-1})$. Therefore, using Eq.~(\ref{eq:g-r-bad}), we get
\begin{align}
    \tau_{g, r} &\leq  \sum_{ \text{bad }\Delta_{g, r}} \: \prod_{g_{i, j} \in  \mathrm{Loc}(\Psi_{g, 1})} \dnorm{\cT_{\Delta_{g_{i,j}, r-1}}} \nonumber\\
    & \leq \sum_{\substack{\Delta_{g, 1} \subseteq \mathrm{Loc}(\Psi_{g, 1}) \\ |\Delta_{g, 1}| > 1}} \: \prod_{ g_{i, j} \in \Delta_{g, 1}} \tau_{g_{i, j}, r-1} \label{eq:t-g-r}
\end{align}
For $r = 2$,  Eq.~(\ref{eq:t-g-r}) gives $\tau_{g, 2} \leq \frac{(c \delta)^{2^2}}{c}$. Applying Eq.~(\ref{eq:t-g-r}) recursively, we get
\begin{equation} \label{eq:bad-fault-path-r}
    \tau_{g, r} \leq \frac{(c \delta)^{2^r}}{c}.
\end{equation}
Therefore, from Eqs.~(\ref{eq:norm-diff-bound}) and (\ref{eq:bad-fault-path-r}), we have $\dnorm{\tilde{\cT}_{\Psi_{g, r}} - \cT} \leq O((c \delta)^{2^r})$. We now show that there exist maps $\cR$ and $\cS$ such that the commutative diagram in Fig.~\ref{fig:strong-rep} holds exactly, with $\cT$ from Eq.~(\ref{eq:T-map}),
and $\cP = \cT_g$. Similar to $\cT_{\Delta_{g, r}}$, we need another notation; for a fault path $\Delta_{g, r} \subseteq \mathrm{Loc}(\Psi_{g, r})$, we define $\tilde{\cT}_{\Delta_{g, r}}$ to be the channel realized by $(\Psi_{g, r}, \Delta_{g, r})$, where each faulty location $g_{i, j} \in \Delta_{g, r}$ realizes $\tilde{\cT}_{g_{i, j}} \in \trans(g_{i, j}, \delta)$ instead of $\cT_{g_{i, j}}$.  It can be seen that
\begin{equation}
    \cT_{\Delta_{g, r}} = \sum_{\Delta'_{g, r} \subseteq \Delta_{g, r}} (-1)^{|\Delta_{g, r}| - |\Delta'_{g, r}|} \: \tilde{\cT}_{\Delta'_{g, r}}.
\end{equation}
Therefore, From Eq.~(\ref{eq:T-map}),
\begin{equation} \label{eq:T-map-1}
\cT   =  \sum_{ \text{good }\Delta_{g, r}}  \: \sum_{\Delta'_{g, r} \subseteq \Delta_{g, r}} (-1)^{|\Delta_{g, r}| - |\Delta'_{g, r}|} \: \tilde{\cT}_{\Delta'_{g, r}}.
\end{equation}
Note that any subset $\Delta'_{g, r} \subseteq \Delta_{g, r}$ of a good $\Delta'_{g, r}$ is also good. Using Lemma~\ref{lem:good-cor}, $\tilde{\cT}_{\Delta'_{g, r}}$ corresponding to a good $\Delta'_{g, r}$ represents $\cT_g$. Therefore, there exist $\cR_{\Delta'_{g, r}}$ and $\cS_{\Delta'_{g, r}}$ such that the commutative diagram in Fig.~\ref{fig:strong-rep} holds exactly, with $\cT_{\Delta'_{g, r}}$ and $\cP = \cT_g$. We simply take,
\begin{align}
    \cR = \sum_{ \text{good }\Delta_{g, r}}  \: 
 \sum_{\Delta'_{g, r} \subseteq \Delta_{g, r}} (-1)^{|\Delta_{g, r}| - |\Delta'_{g, r}|} \: \cR_{\Delta'_{g, r}}. \label{eq:R}\\
      \cS =  \sum_{ \text{good }\Delta_{g, r}}  \:  \sum_{\Delta'_{g, r} \subseteq \Delta_{g, r}} (-1)^{|\Delta_{g, r}| - |\Delta'_{g, r}|} \: \cS_{\Delta'_{g, r}} \label{eq:S}.
\end{align}
Using the fact that Eq.~(\ref{eq:comm-2}) and  Eq.~(\ref{eq:comm-3}) are satisfied individually for $\cT_{\Delta'_{g, r}}$, $\cR_{\Delta'_{g, r}}$, and $\cS_{\Delta'_{g, r}}$, it follows that they are also satisfied for $\cT$, $\cR$ and $\cS$. We now show that Eq.~(\ref{eq:comm-1}), that is, ${\mu_r^*}^{\otimes q'} \circ \cR = \cT_g  \circ {\mu_r^*}^{\otimes q}$. Recall that $q, q'$ are the number of input and output qubits of $g$.

\medskip \noindent For a good $\Delta_{g, r}$, we have  
\begin{align}
    {\mu_r^*}^{\otimes q'} \circ \cR &= \sum_{ \text{good }\Delta_{g, r}}  \: 
 \sum_{\Delta'_{g, r} \subseteq \Delta_{g, r} } (-1)^{|\Delta_{g, r}| - |\Delta'_{g, r}|} {\mu_r^*}^{\otimes q'} \circ \cR_{\Delta'_{g, r}} \nonumber \\
    &= \Big(\sum_{ \text{good }\Delta_{g, r}}  \:  \sum_{\Delta'_{g, r} \subseteq \Delta_{g, r}} (-1)^{|\Delta_{g, r}| - |\Delta'_{g, r}|} \Big) \cT_g \circ {\mu_r^*}^{\otimes q}, \label{eq:R-g-r}
\end{align}
where in the second equality, we have used ${\mu_r^*}^{\otimes q'} \circ \cR_{\Delta'_{g, r}} = \cT_g \circ {\mu_r^*}^{\otimes q}$. We have that 
\begin{equation} \label{eq:sum-exp-1}
    \sum_{\Delta'_{g, r} \subseteq \Delta_{g, r}} (-1)^{|\Delta_{g, r}| - |\Delta'_{g, r}|} = \sum_{j = 0}^{|\Delta_{g, r}|} \binom{|\Delta_{g, r}|}{j} (-1)^{|\Delta_{g, r}| - j},
\end{equation}
which implies
\begin{equation} \label{eq:sum-exp-2}
  \sum_{\Delta'_{g, r} \subseteq \Delta_{g, r}} (-1)^{|\Delta_{g, r}| - |\Delta'_{g, r}|}   = \begin{cases}
        1, & \text{ if $|\Delta_{g, r}| = 0$} \\
        0, & \text{ otherwise. }
    \end{cases}
\end{equation}
Therefore, From Eqs.~(\ref{eq:R-g-r}) and~\ref{eq:sum-exp-2}, we get 
 ${\mu_r^*}^{\otimes q'} \circ \cR = \cT_g  \circ {\mu_r^*}^{\otimes q}$.
\end{proof}


\subsection{Interface}
In this section, we provide the interfaces $\Gamma_{r, s}, r, s = 0, 1, \dots$ for $D_r, D_s$, as in Part $(c)$ of Def.~\ref{def:ft_scheme}. To this end, we will first construct $\Gamma_{r, r+1}$ and $\Gamma_{r+1, r}$, $r = 0, 1, \dots$, and then compose them to obtain $\Gamma_{r, s}$.

\subsubsection*{Interface $\Gamma_{r, r+1}$}
 Consider a quantum circuit $\Gamma$ in the gate set $\mathbf{A}$, with single qubit input and $n$-qubit output, such that it realises the following quantum channel, 
 \begin{equation}
     \cT_{\Gamma} = U (\cdot \otimes \proj{0}^{\otimes n-1})U^\dagger,
 \end{equation}
where $U$ is according to Eq.~(\ref{eq:unitary-U}). Noting that zero syndrome $\mathbf{s} = 0^{n-1}$ corresponds to $E = I$ in Eq.~(\ref{eq:unitary-U}), it follows $\cT_{\Gamma}$ maps any input state $\ket{\psi} \in \mathbb{C}^2$ to the corresponding encoded state in the base code $C$.

\smallskip  We define $\Gamma_{0, 1} := \mathrm{EC} \circ \Gamma$. It can be easily seen that $\Gamma_{0, 1}$ represents the identity in the code $D_0, D_1$. The quantum circuit $\Gamma_{r, r+1}$ is obtained by replacing each $g_{{i, j}} \in \mathrm{Loc}(\Gamma_{0, 1})$ by the corresponding $r$-Rec $\Psi_{g_{i,j}, r}$. Note that 
\begin{equation} \label{eq:gamma-r-r+1}
  \Gamma_{r, r+1} := \mathrm{EC}_{r+1} \circ \Gamma_{r}, 
\end{equation}
 where $\Gamma_{r}$ is obtained by replacing each location in the quantum circuit $\Gamma$ by the corresponding $r$-Rec. We denote by $(\Gamma_{r, r+1}, \Delta_{r, r+1})$, the interface $\Gamma_{r, r+1}$ with faults at locations in  $\Delta_{r, r+1} \subseteq \mathrm{Loc}(\Gamma_{r, r+1})$. 
 
 \paragraph{A good $(\Gamma_{r, r+1}, \Delta_{r, r+1})$.} We say $(\Gamma_{r, r+1}, \Delta_{r, r+1})$ is good if all the $r$-Recs contained in it are good according to Def.~\ref{def:g-r-rec}, that is, for all $g_{{i, j}} \in \mathrm{Loc}(\Gamma_{0, 1})$, the corresponding $(\Psi_{g_{i, j}, r}, \Delta_{g_{i,j}, r})$, where $\Delta_{g_{i,j}, r} := \Delta_{r, r+1} \cap \mathrm{Loc}(\Psi_{g_{i, j}, r})$, is good, otherwise it is bad (not good).
 
\medskip We have the following lemma for the interface $\Gamma_{r, r+1}$.
\begin{lemma} \label{lem:del-rep-r}
There exists a constant $c > 1$  such that for all $\delta \in [0, \frac{1}{c})$, any $\tilde{\cT}_{\Gamma_{r, r + 1}} \in \trans(\Gamma_{r, r + 1}, \delta)$ represents the identity channel in codes $D_r, D_{r+1}$ with accuracy $\epsilon_1 = \epsilon_2 =  O(|\Gamma_{0, 1}| \delta_r)$ and $\epsilon_3 = O(\delta_{r + 1})$, where $\delta_r := (c \delta)^{2^r}$.
\end{lemma}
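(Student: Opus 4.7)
The strategy is to adapt the fault-path analysis of Lemma~\ref{lem:rep-gates} to $\Gamma_{r,r+1}$, but---because the asymmetric accuracies $\epsilon_1=\epsilon_2=O(|\Gamma_{0,1}|\delta_r)$ and $\epsilon_3=O(\delta_{r+1})$ cannot both be read off from a single approximating superoperator---to construct $\cR$ and $\cS$ separately from two different truncations of the fault-path expansion. Starting with $\tilde{\cT}_{\Gamma_{r,r+1}}\in\trans(\Gamma_{r,r+1},\delta)$, I write $\tilde{\cT}_g=\cT_g+(\tilde{\cT}_g-\cT_g)$ at every location to obtain $\tilde{\cT}_{\Gamma_{r,r+1}}=\sum_\Delta \cT_\Delta$ as in Eq.~(\ref{eq:tilde-T-g-1}), with $\dnorm{\cT_\Delta}\leq\delta^{|\Delta|}$. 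The factorization $\Gamma_{r,r+1}=\mathrm{EC}_{r+1}\circ\Gamma_r$ splits every fault path as $\Delta=\Delta_\Gamma\sqcup\Delta_\mathrm{EC}$.

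For the channel $\cR$ I use the strong goodness of Section~\ref{sec:cons-ft-scheme} (every $r$-Rec in $\Gamma_{r,r+1}$ is good) and set $\cT^{(R)}:=\sum_{\Delta\text{ good}}\cT_\Delta$. A union bound over the $|\Gamma_{0,1}|$ constituent $r$-Recs together with the single-Rec estimate $\tau_{\cdot,r}\leq \delta_r/c$ from Eq.~(\ref{eq:bad-fault-path-r}) gives $\dnorm{\cT^{(R)}-\tilde{\cT}_{\Gamma_{r,r+1}}}=O(|\Gamma_{0,1}|\delta_r)$. For each good $\Delta$, Lemma~\ref{lem:good-cor} says that every $r$-Rec in $\Gamma_r$ represents its logical gate in $D_r$, and $\mathrm{EC}_{r+1}$ (trivially good as an $(r+1)$-Rec since it contains no bad $r$-Recs) represents the identity in $D_{r+1}$; composing across the matching codes $D_r$, $D_r^{\otimes n}$, $D_{r+1}$ via Lemma~\ref{lem:union-bound-rep} (using $L_r^{\otimes n}\subseteq L_{r+1}$, which follows from the definition of $(r+1,1)$-sparseness) yields exact representation of $\cI$ in $D_r,D_{r+1}$. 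Reproducing the signed-sum construction of Eqs.~(\ref{eq:R})--(\ref{eq:sum-exp-2}) gives superoperators $\cR^{(R)},\cS^{(R)}$ making the diagram in Fig.~\ref{fig:strong-rep} commute exactly for $\cT^{(R)}$; part~(a) of Lemma~\ref{lem:rep-approx} then upgrades $\cR^{(R)}$ to a channel $\cR$ verifying Eqs.~(\ref{eq:comm-1-del})--(\ref{eq:comm-2-del}) with $\epsilon_1=\epsilon_2=O(|\Gamma_{0,1}|\delta_r)$.

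For the channel $\cS$ I use a weaker notion of goodness: only that $\mathrm{EC}_{r+1}$ is good as an $(r+1)$-Rec (at most one bad $r$-Rec inside it), allowing arbitrary faults in $\Gamma_r$. Setting $\cT^{(S)}:=\bigl(\sum_{\Delta_\mathrm{EC}\text{ good as }(r+1)\text{-Rec}}\cT_{\Delta_\mathrm{EC}}\bigr)\circ\tilde{\cT}_{\Gamma_r}$, the recursive bad-path estimate of Lemma~\ref{lem:rep-gates} applied at level $r+1$ to the single Rec $\mathrm{EC}_{r+1}$ gives $\tau_{I,r+1}\leq\delta_{r+1}/c$, hence $\dnorm{\cT^{(S)}-\tilde{\cT}_{\Gamma_{r,r+1}}}=O(\delta_{r+1})$. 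Each good $\cT_{\Delta_\mathrm{EC}}$ expands by inclusion-exclusion into $\tilde{\cT}_{\Delta'_\mathrm{EC}}$ for $\Delta'_\mathrm{EC}\subseteq\Delta_\mathrm{EC}$---all still good, since removing faults only decreases the number of bad $r$-Recs---so Lemma~\ref{lem:valid-rec} places the image of $\cT^{(S)}$ in $\bL(L_{r+1})$. To convert $\cT^{(S)}$ into a bona fide channel, fix $\sigma\in\bD(L_{r+1})$ and define
\begin{equation*}
\cS(X):=\cJ_{r+1}^*\,\tilde{\cT}_{\Gamma_{r,r+1}}(X)+\tr\!\bigl((\ident-\Pi_{r+1})\tilde{\cT}_{\Gamma_{r,r+1}}(X)\bigr)\,\sigma,
\end{equation*}
with $\Pi_{r+1}$ the projector onto $L_{r+1}$. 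Both pieces are CP (the second is a positive functional tensored with $\sigma\geq 0$), and $\cS$ is TP by a direct trace check. Using $(\ident-\cJ_{r+1}\cJ_{r+1}^*)\circ\cT^{(S)}=0$, one obtains $\dnorm{(\ident-\cJ_{r+1}\cJ_{r+1}^*)\circ\tilde{\cT}_{\Gamma_{r,r+1}}}=O(\delta_{r+1})$, from which a short calculation yields $\dnorm{\cJ_{r+1}\circ\cS-\tilde{\cT}_{\Gamma_{r,r+1}}}=O(\delta_{r+1})$, giving $\epsilon_3=O(\delta_{r+1})$. The principal difficulty lies not in any one of these estimates but in separating the two roles of $\cR$ and $\cS$ and correctly identifying which level of goodness is required for each; the channel completion is a standard technical step.
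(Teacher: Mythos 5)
Your overall structure (two separate truncations of the fault-path expansion, one controlling $\epsilon_1=\epsilon_2$ and one controlling $\epsilon_3$) is sound and matches the paper, and your explicit completion of $\cS$ to a channel via $\cS(X)=\cJ_{r+1}^*\tilde{\cT}(X)+\tr((\ident-\Pi_{r+1})\tilde{\cT}(X))\,\sigma$ is a perfectly valid, slightly more concrete version of the paper's step (the paper simply cites the existence of $\cS_{\mathrm{EC}_{r+1}}$ from Lemma~\ref{lem:rep-gates} and sets $\cS_{\Gamma_{r,r+1}}:=\cS_{\mathrm{EC}_{r+1}}\circ\tilde{\cT}_{\Gamma_r}$).

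However, the core step of your argument for $\cR$ --- showing that a good fault path yields an exact representation of $\cI$ in $D_r,D_{r+1}$ --- has a genuine gap. You invoke Lemma~\ref{lem:union-bound-rep} to compose the representation provided by $\Gamma_r$ with the one provided by $\mathrm{EC}_{r+1}$, but the intermediate codes do not match: $\Gamma_r$ represents $\cT_\Gamma:\bL(\mathbb{C}^2)\to\bL((\mathbb{C}^2)^{\otimes n})$ with output code $D_r^{\otimes n}$ (which encodes $n$ logical qubits via $(\mu_r^*)^{\otimes n}$), whereas $\mathrm{EC}_{r+1}$ represents $\cI:\bL(\mathbb{C}^2)\to\bL(\mathbb{C}^2)$ with input code $D_{r+1}$ (which encodes $1$ logical qubit via $\mu_{r+1}^*$). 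Lemma~\ref{lem:union-bound-rep} requires the intermediate code $(L',\mu')$ to be the same for both factors, and here the codes differ both in code space and in decoder. The claimed inclusion $L_r^{\otimes n}\subseteq L_{r+1}$ that you use to bridge them is in fact false: already for $r=0$, $L_0^{\otimes n}=(\mathbb{C}^2)^{\otimes n}$ has dimension $2^n$, while $L_1=\mathscr{E}(n,1)\overline{L}_1$ has dimension $2(1+3n)$, so $L_0^{\otimes n}\not\subseteq L_1$ for any reasonable base code size.

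The paper closes this gap differently. It does not compose representations through an intermediate code at all. Instead it regards the whole good $(\Gamma_{r,r+1},\Delta_{r,r+1})$ as a correct $r$-Rec simulation (via Lemma~\ref{lem:cor-rec}) of the level-one circuit $\Gamma_{0,1}=\mathrm{EC}\circ\Gamma$, which gives $(\tilde{\mu}_r^*)^{\otimes n}\circ\cT_{\Delta_{r,r+1}}\circ\cJ_r=\cT_{\Gamma_{0,1}}\circ\tilde{\mu}_r^*\circ\cJ_r$. It then combines this with the fact that $\cT_{\Gamma_{0,1}}$ represents $\cI$ in $D_0,D_1$ (so $\tilde{\mu}^*\circ\cT_{\Gamma_{0,1}}=\cI$) and the recursive decoder identity $\tilde{\mu}^*_{r+1}=\tilde{\mu}^*\circ(\tilde{\mu}^*_r)^{\otimes n}$ from Eq.~(\ref{eq:mu-tilde-r}) to conclude $\tilde{\mu}^*_{r+1}\circ\cT_{\Delta_{r,r+1}}\circ\cJ_r=\tilde{\mu}^*_r\circ\cJ_r$, which (together with Lemma~\ref{lem:valid-rec} for the channel property) is exactly what Remark~\ref{rem:alt-rep} requires. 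The point is that the ``concatenation'' in a concatenated code is a nested, not a sequential, composition of representations, so the composability lemma for sequential channels does not apply and one must instead unroll the decoder one level.
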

\begin{proof}
 Firstly, using Remark~\ref{rem:alt-rep}, we show that $\cT_{\Delta_{r, r+1}}$ corresponding to a good $(\Gamma_{r, r+1}, \Delta_{r, r+1})$ represents the identity channel in $D_r, D_{r+1}$.  

\smallskip Using Eq.~(\ref{eq:gamma-r-r+1}), $(\Gamma_r, \Delta_r)$ and $(\mathrm{EC}_{r+1}, \Delta_{r+1})$ (see also Eq.~(\ref{eq:gamma-r-r+1})), where $\Delta_r =  \Delta_{r, r+1}  \cap \mathrm{Loc}(\Gamma_r)$, and $\Delta_{r+1} =  \Delta_{r, r+1}  \cap \mathrm{Loc}(\mathrm{EC}_{r+1})$.  
Since  $(\Gamma_{r, r+1}, \Delta_{r, r+1})$ is good, it follows that the all the rectangles in $(\Gamma_r, \Delta_r)$ and $(\mathrm{EC}_{r+1}, \Delta_{r+1})$ are good. Let $\cT_{\Delta_r}$ and $\cT_{\Delta_{r+1}}$ be realizations of $(\Gamma_r, \Delta_r)$, and $(\mathrm{EC}_{r+1}, \Delta_{r+1})$, respectively. Note that $\cT_{\Delta_{r, r+1}} =  \cT_{\Delta_{r+1}} \circ \cT_{\Delta_r}$.

\smallskip Using Lemma~\ref{lem:good-cor} for  $(\mathrm{EC}_{r+1}, \Delta_{r+1})$, it follows that $\cJ^*_{r + 1} \circ \cT_{\Delta_{r+1}} : \bL(N_{r+1}) \to \bL(L_{r+1})$ is a quantum channel. Therefore, $\cJ_{r + 1}^* \circ \cT_{\Delta_{r, r+1}} : \bL(N_{r}) \to \bL(L_{r + 1})$ is a quantum channel. It remains to show Eq.~(\ref{eq:weak-rep}), that is, 
 \begin{equation}
     \tilde{\mu}^*_{r + 1} \circ \cT_{\Delta_{r, r+1}} \circ \cJ_r =  \tilde{\mu}^*_{r}  \circ \cJ_r,
 \end{equation}
Applying Lemma~\ref{lem:good-cor} for all the rectangles in $(\Gamma_{r, r+1}, \Delta_{r, r+1})$ and using the composability of representation according to Lemma~\ref{lem:union-bound-rep}, it follows that $\cT_{\Gamma_{r, r+1}}$ represents $\cT_{\Gamma_{0, 1}}$ in $D_r$. Therefore, we have
\begin{equation} \label{eq:1-T}
    (\tilde{\mu}_r^*)^{\otimes n} \circ \cT_{\Delta_{r, r+1}} \circ \cJ_r = \cT_{\Gamma_{0, 1}} \circ \tilde{\mu}^*_r \circ \cJ_r.
\end{equation} 
Furthermore, as $\cT_{\Gamma_{0, 1}}$ represents identity channel in $D_0, D_1$, we get 
\begin{equation} \label{eq:2-T}
    \tilde{\mu}^* \circ \cT_{\Gamma_{0, 1}} = \cI.
\end{equation}
We now have,
\begin{align}
\tilde{\mu}^*_{r + 1} \circ \cT_{\Gamma_{r, r+1}} \circ \cJ_r &= \tilde{\mu}^* \circ (\tilde{\mu}_r^*)^{\otimes n} \circ \cT_{\Gamma_{r, r+1}} \circ \cJ_r \nonumber \\
&= \tilde{\mu}^* \circ \cT_{\Gamma_{0, 1}} \circ \tilde{\mu}^* \circ \cJ_r \nonumber \\
& = \tilde{\mu}^*_r \circ \cJ_r, 
\end{align}
where the first equality uses Eq.~(\ref{eq:mu-tilde-r}), the second equality uses Eq.~(\ref{eq:1-T}), and the last equality uses Eq.~(\ref{eq:2-T}).

\smallskip The remaining proof can be done similarly to Lemma~\ref{lem:rep-gates}, using that a good $(\Gamma_{r, r+1}, \Delta_{r, r + 1})$ represents identity channel in $D_r, D_{r+1}$.  Similarly to Lemma~\ref{lem:rep-gates}, we can show that there exist maps $\cT, \cR$, and $\cS$ for $\tilde{\cT}_{\Gamma_{r, r+1}} \in \trans(\Gamma_{r, r+1}, \delta)$, such that the commutative diagram in Fig.~\ref{fig:strong-rep} holds exactly for codes $D_r, D_{r+1}$ and $\cP = \cI$, and the superoperator $\cT $ satisfies, $\dnorm{\cT - \tilde{\cT}} \leq O(|\Gamma_{0, 1}|\delta_r)$. Then, using part $(a)$ of Lemma~\ref{lem:rep-approx}, it follows that $\tilde{\cT}_{\Gamma_{r, r + 1}}$ represents $\cI$ in codes $D_r, D_{r+1}$, with accuracy $\epsilon_1 = \epsilon_2 = \epsilon_3 = O(|\Gamma_{0, 1}|\delta_r)$.

\smallskip To obtain $\cT$, we expand $\tilde{\cT}_{\Gamma_{r, r + 1}}$ in terms of good $\Delta_{r, r+1}$ and bad $\Delta_{r, r+1}$ as in Eq.~(\ref{eq:tilde-T-g}), with $\cT_{\Delta_{r, r+1}}$ being the superoperator realized by $(\Gamma_{r, r+1}, \Delta_{r, r+1})$, where any faulty location $g_{i, j} \in \Delta_{r, r+1}$ realizes $\tilde{\cT}_{g_{i, j}} - \cT_{g_{i, j}}$ instead of $\cT_{g_{i, j}}$. We take $\cT$ to be the sum of terms corresponding to the good $\Delta_{r, r + 1}$ as in Eq.~(\ref{eq:T-map}).

\medskip Finally, we can show that $\epsilon_3 = O(\delta_{r+1})$. Using Eq.~(\ref{eq:gamma-r-r+1}), we write $\tilde{\cT}_{\Gamma_{r, r+1}} = \tilde{\cT}_{\mathrm{EC}_{r+1}} \circ \tilde{\cT}_{\Gamma_{r}}$, where $\tilde{\cT}_{\mathrm{EC}_{r+1}} \in \trans(\mathrm{EC}_{r+1}, \delta)$ and $\tilde{\cT}_{\Gamma_{r}} \in \trans(\Gamma_{r}, \delta)$. From Lemma~\ref{lem:rep-gates}, $\tilde{\cT}_{\mathrm{EC}_{r+1}}$ represents the identity channel in the code $D_{r+1}$ with accuracy $O(\delta_{r+1})$. Therefore, there exists an $\cS_ {\mathrm{EC}_{r+1}}: \bL(N_{r+1}) \to \bL(L_{r+1})$
such that $\dnorm{\cJ_{r + 1} \circ \cS_{\mathrm{EC}_{r+1}} - \tilde{\cT}_{\mathrm{EC}_{r+1}}} \leq  O( \delta_{r+1})$. We take $\cS_{\Gamma_{r, r+1}} = \cS_{\mathrm{EC}_{r+1}} \circ \tilde{\cT}_{\Gamma_{r}}$, which is a quantum channel from $\bL(N_{r})$ to $\bL(L_{r+1})$. It follows that
\begin{align}
    \dnorm{\cJ_{r + 1} \circ \cS_{\Gamma_{r, r+1}} - \cT_{\Gamma_{r, r+1}}} \leq \dnorm{\cJ_{r + 1} \circ \cS_{\mathrm{EC}_{r+1}} - \tilde{\cT}_{\mathrm{EC}_{r+1}}}  \leq O( \delta_{r+1}).
\end{align} 
\end{proof}

\subsubsection*{Interface $\Gamma_{r+1, r}$}
Consider a quantum circuit $\Gamma$ in the gate set $\mathbf{A}$, with $n$ qubit input and single qubit output, such that it realises the quantum channel $\tilde{\mu}^*$,
\begin{equation} \label{eq:gamma-dec}
    \cT_{\Gamma} := \tilde{\mu}^*.
\end{equation}
We compose $\Gamma$ with the quantum circuit $\mathrm{EC}$, and define $\Gamma_{1, 0} :=  \Gamma \circ \: (\otimes_{i = 1}^n \mathrm{EC}_0)$ (Recall that $\mathrm{EC}_0$ refers to the identity gate). Let $\Gamma_{r+1, r}$ be the quantum circuit obtained by replacing each location $g_{i, j}$ in $\Gamma_{1, 0}$ by the corresponding $\Psi_{g_{i, j}, r}$. We have
\begin{equation} \label{eq:gamma-r+1,r}
 \Gamma_{r+1, r} := \Gamma_r  \circ (\otimes_{i = 1}^n \mathrm{EC}^i_r),  
\end{equation}
where $\Gamma_r$ is obtained by replacing each location $g_{i, j}$ in $\Gamma$ by the corresponding $\Psi_{g_{i, j}, r}$.  We denote by $(\Gamma_{r+1, r}, \Delta_{r+1, r})$, the interface $\Gamma_{r + 1, r}$ with faults at location $\Delta_{r+1, r} \subseteq \mathrm{Loc}(\Gamma_{r + 1, r})$.  Similarly to the interface $\Gamma_{r, r+1}$ in the previous section, we say $(\Gamma_{r+1, r}, \Delta_{r+1, r})$ is good if all the $r$-Recs contained in it are good according to Def.~\ref{def:g-r-rec}, otherwise it is bad.

\medskip For the quantum circuit $\Gamma_{r + 1, r}$, we prove Lemma~\ref{lem:del-rep-r+1}, which is analogous to Lemma~\ref{lem:del-rep-r} for the interface $\Gamma_{r, r+1}$. Note that Lemma~\ref{lem:del-rep-r+1} considers representation in terms of codes $D_{r+1}, D_r$.  As good $(\mathrm{EC}^i_r, \Delta_r^i)$ represents identity in $D_r$, the channel $\cT$ corresponding to good $(\otimes_{i = 1}^n \mathrm{EC}^i_r, \: \cup_{i=1}^n \Delta^i_r)$ maps any state in $L_{r+1}$ to a state in $(\otimes_{i=1}^n L_r) \subseteq L_{r + 1}$. In the following lemma, we show that this mapping is faithful in the sense that the logical information encoded in the input in $L_{r + 1}$ is preserved. In another word, good $(\otimes_{i = 1}^n \mathrm{EC}^i_r, \: \cup_{i=1}^n \Delta^i_r)$ is a correct $r + 1$-Rec corresponding to the identity gate, according to Def.~\ref{def:cor-rec}. This property will be later needed in the proof of Lemma~\ref{lem:del-rep-r+1}.

\begin{lemma} \label{lem:n-EC}
Consider the faulty circuit $(\otimes_{i = 1}^n \mathrm{EC}^i_r, \: \cup_{i=1}^n \Delta^i_r)$, where $\Delta^i_r$ denotes the set of faulty locations in $\mathrm{EC}^i_r$, such that $(\mathrm{EC}^i_r, \: \Delta^i_r)$ is good for all $i \in [n]$. Then, the channel $\cT$ realized by $(\otimes_{i = 1}^n \mathrm{EC}^i_r, \: \cup_{i=1}^n \Delta^i_r)$   satisfies 
\begin{equation}
  \tilde{\mu}^*_{r+1} \circ \cT \circ \cJ_{r + 1} = \tilde{\mu}^*_{r+1} \circ \cJ_{r+1}.
\end{equation}
\end{lemma}
\begin{proof}
Consider an $(r+ 1, 1)$ sparse Pauli error $E_{r + 1} := E_r^1 \otimes \cdots \otimes E_r^n \in \mathscr{E}_{r + 1}$, where Pauli errors $E_i^r$ act on $r$-blocks. From Def.~\ref{def:sparse-err}, it follows that except for an $i_0 \in [n]$, all $E_i^r \in \cE_r$ (i.e., $(r, 1)$-sparse).

\medskip Consider now a state $\ket{\psi} \in \mathbb{C}^2$, and let $\ket{\overline{\psi}_{r+1}} = V_{r+1} \ket{\psi}$ be the encoding of $\ket{\psi}$ in code $C_{r+1}$. Note that 
\begin{equation}
    V_{r + 1} \ket{\psi} = V_r^{\otimes n} V_1 \ket{\psi}
    = V_r^{\otimes n} \ket{\overline{\psi}_1}.
\end{equation}
Therefore, $\ket{\overline{\psi}_{r+1}}$ is also the encoding of $\ket{\overline{\psi}_1}$ in the code $C_r$. Consider now $E_{r+1} \ket{\overline{\psi}_{r+1}} \in L_{r+1}$. We have 
\begin{align} \label{eq:i-not}
    E_{r+1} \ket{\overline{\psi}_{r+1}} = (\otimes_{i = 1}^n E^i_r V^i_r) \ket{\overline{\psi}_1} \in (N_r)_{i_0} \otimes_{i \in [n]\setminus {i_0}} L_r
\end{align}
For any $i \in [n]$, $\cT_{\Delta^i_r}$ corresponding to a good $(\mathrm{EC}^i_r, \: \Delta^i_r)$ represents identity in $D_r$. Therefore, from Remark~\ref{rem:alt-rep}, the channel $\cT_{\Delta^i_r}$ satisfies, 
\begin{align} \label{eq:EC-i-r}
    \tilde{\mu}_r^* \circ \cT_{\Delta^i_r} \circ \cJ_r &= \tilde{\mu}_r^* \circ \cJ_r. 
\end{align}
Let $\overline{\rho}_{r+1} := E_{r+1} \proj{\overline{\psi}_{r+1}} E_{r+1}$ and let $(\mu_r^*)_i$ denote $\mu_r$ corresponding to $i \in [n]$. Then, we have 
\begin{align}
    (\tilde{\mu}^*_r)^{\otimes n} \circ \cT (\overline{\rho}_{r+1}) &= \left( (\tilde{\mu}_r^*)_{i_0} \otimes (\otimes_{i \in [n]\setminus {i_0}}(\tilde{\mu}_r^*)_{i})  \circ (\otimes_{i = 1}^n \cT_{\Delta^i_r})\right)(\overline{\rho}_{r+1})  \nonumber \\
    & = \big(( (\tilde{\mu}^*_r)_{i_0} \circ \cT_{\Delta^{i_0}_r}) \otimes (\otimes_{i \in [n]\setminus {i_0}} (\tilde{\mu}_r^*)_{i})\big) (\overline{\rho}_{r+1}) \nonumber\\
    &=  \big( \cW_{i_0} \otimes (\otimes_{i \neq i_0} \cI_{i}) \big)  (\proj{\overline{\psi}_1}),  \label{eq:w-i-0}
\end{align}
where the second equality uses  $\overline{\rho}_{r+1} \in \bL((N_r)_{i_0}) \otimes  \bL(\otimes_{i \in [n]\setminus {i_0}} L_r)$ (see Eq.~(\ref{eq:i-not})) and Eq.~(\ref{eq:EC-i-r}). The last equality uses that $(\tilde{\mu}_r^*)_i, i \in [n]$ map the state $(\overline{\rho}_{r+1})$ to $\proj{\overline{\psi}_1}$, with the $i_0$th qubit being in error on which channel $\cW_{i_0} :=  (\tilde{\mu}^*_r)_{i_0} \circ \cT_{\Delta^{i_0}_r} \circ (E_r^{i_0} V_r^{i_0})(\cdot) (E_r^{i_0} V_r^{i_0})^\dagger$ acts.

\smallskip We now have
\begin{align}
 \tilde{\mu}^*_{r+1} \circ \cT(\overline{\rho}_{r+1}) 
 &=  \tilde{\mu}^* \circ (\tilde{\mu}^*_r)^{\otimes n} \circ \cT   (\overline{\rho}_{r+1}) \nonumber\\
 & = \tilde{\mu}^* \big((\cW_{i_0} \otimes (\otimes_{i \neq i_0} \cI_{i}))  (\proj{\overline{\psi}_1}) \big) \nonumber\\
 &= \proj{\psi} \nonumber \\
 & = \tilde{\mu}^*_{r+1} (\overline{\rho}_{r+1}), \label{eq:mu-psi}
\end{align}
where the first equality uses Eq.~(\ref{eq:mu-tilde-r}), the second equality uses Eq.~(\ref{eq:w-i-0}), and the third equality uses that  $\big((\cW_{i_0} \otimes (\otimes_{i \neq i_0} \cI_{i}))  (\proj{\overline{\psi}_1})$ is an encoding of $\ket{\psi}$ in the code $D_1$ ($r = 1$).

\medskip We have that Eq.~(\ref{eq:mu-psi}) holds for $\overline{\rho}_{r+1} := E_{r+1} \proj{\overline{\psi}_{r+1}} E_{r+1}$, corresponding to any $\ket{\psi}$ and any $(r+1, 1)$ sparse Pauli error $E_{r+1}$. We may extend Eq.~(\ref{eq:mu-psi}) for any (non-Pauli error) $ \sum_j\alpha_j E_{r+1}^j$, where $\{E_{r+1}^j\}_j$ are $(r+1, 1)$ sparse errors, using $\mu^*_{r+1} = \tr_{[n-1]} \circ \: \tilde{\cU}^\dagger$, where  $\tilde{\cU}^\dagger = \tilde{U}^\dagger (\cdot) \tilde{U}$ is a unitary and linearity of unitary operators. Therefore, Eq.~(\ref{eq:mu-psi}) holds for any state in $D_{r+1}$, implying
\begin{align}
  \tilde{\mu}^*_{r+1} \circ \cT \circ \cJ_{r + 1} = \tilde{\mu}^*_{r+1} \circ \cJ_{r+1}.
\end{align} 
\end{proof}

\begin{lemma} \label{lem:del-rep-r+1}
There exists a constant $c > 1$  such that for all $\delta \in [0, \frac{1}{c})$, any $\tilde{\cT}_{\Gamma_{r+1, r}} \in \trans(\Gamma_{r+1, r}, \delta)$ represents the identity channel in codes $D_{r+1}, D_r$, with accuracy $\epsilon_1 = \epsilon_2 = \epsilon_3 = |\Gamma_{0, 1}| O(\delta_r)$, where $\delta_r := (c \delta)^{2^r}$.
\end{lemma}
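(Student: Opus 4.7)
I plan to mirror the argument of Lemma~\ref{lem:del-rep-r}: (i) show that any \emph{good} faulty interface $(\Gamma_{r+1,r}, \Delta_{r+1,r})$ realizes a channel that exactly represents the identity in codes $D_{r+1}, D_r$; (ii) expand $\tilde{\cT}_{\Gamma_{r+1,r}}$ over fault paths, isolate the good contribution, and bound the bad contribution in diamond norm; and (iii) apply Part~(a) of Lemma~\ref{lem:rep-approx} to obtain the stated accuracies. For Step~(i) I would use the factorization $\cT_{\Delta_{r+1,r}} = \cT^{\Gamma_r} \circ \cT^{\mathrm{EC}}$ suggested by Eq.~(\ref{eq:gamma-r+1,r}): Lemma~\ref{lem:n-EC} provides $\tilde{\mu}^*_{r+1}\circ \cT^{\mathrm{EC}}\circ \cJ_{r+1} = \tilde{\mu}^*_{r+1}\circ \cJ_{r+1}$, while composing the correctness of the good $r$-Recs that make up $\Gamma_r$ (Lemma~\ref{lem:cor-rec}) yields $\tilde{\mu}^*_r \circ \cT^{\Gamma_r} \circ \cJ_r^{\otimes n} = \tilde{\mu}^*\circ (\tilde{\mu}^*_r)^{\otimes n} \circ \cJ_r^{\otimes n}$. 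Because each good parallel EC in $\cT^{\mathrm{EC}}$ maps into $L_r$ by Lemma~\ref{lem:valid-rec}, I can insert $\cJ_r^{\otimes n}\circ (\cJ_r^*)^{\otimes n}$ between $\cT^{\Gamma_r}$ and $\cT^{\mathrm{EC}}$ without change; chaining this insertion with the two identities above and the recursive definition~(\ref{eq:mu-tilde-r}) of $\tilde{\mu}^*_{r+1}$ produces $\tilde{\mu}^*_r \circ \cT_{\Delta_{r+1,r}}\circ \cJ_{r+1} = \tilde{\mu}^*_{r+1}\circ \cJ_{r+1}$, and the validity of the terminal $r$-Rec inside $\Gamma_r$ ensures $\cJ_r^* \circ \cT_{\Delta_{r+1,r}}$ is a channel, so Remark~\ref{rem:alt-rep} applies.

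For Step~(ii) I would split each noisy location as $\tilde{\cT}_{g_{i,j}} = \cT_{g_{i,j}} + (\tilde{\cT}_{g_{i,j}} - \cT_{g_{i,j}})$ to expand $\tilde{\cT}_{\Gamma_{r+1,r}} = \sum_{\Delta_{r+1,r}} \cT_{\Delta_{r+1,r}}$ and set $\cT$ to be the sum over \emph{good} paths only. Since every subset of a good path remains good, the inclusion-exclusion step from Eqs.~(\ref{eq:T-map-1})--(\ref{eq:R-g-r}) produces maps $\cR$ and $\cS$ making the diagram of Fig.~\ref{fig:strong-rep} commute exactly. To bound the bad contribution I would use the fact that every bad $\Delta_{r+1,r}$ contains at least one bad $r$-Rec among the $|\Gamma_{0,1}|$ $r$-Recs forming $\Gamma_{r+1,r}$; a union bound combined with the per-Rec estimate $\tau_{g,r}\leq (c\delta)^{2^r}/c$ from Eq.~(\ref{eq:bad-fault-path-r}) and the trivial product bound $\prod_{R}(1+\delta)^{|R|} = O(1)$ over the remaining $r$-Recs then gives $\sum_{\text{bad}\,\Delta_{r+1,r}} \dnorm{\cT_{\Delta_{r+1,r}}} \leq |\Gamma_{0,1}|\,O(\delta_r)$, so $\dnorm{\tilde{\cT}_{\Gamma_{r+1,r}} - \cT} \leq |\Gamma_{0,1}|\,O(\delta_r)$.

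For the $\epsilon_3$ guarantee I would observe that the last operation in $\Gamma_{r+1,r}$ is the terminal $r$-Rec $R_{\mathrm{last}}$ inside $\Gamma_r$, for which Lemma~\ref{lem:rep-gates} supplies a channel $\cS'$ into $\bL(L_r)$ satisfying $\dnorm{\cJ_r\circ \cS' - \tilde{\cT}_{R_{\mathrm{last}}}} = O(\delta_r)$; composing $\cS'$ with the noisy channel of the preceding portion of $\Gamma_{r+1,r}$ then yields $\cS$ with $\epsilon_3 = O(\delta_r) \leq |\Gamma_{0,1}|\,O(\delta_r)$. Part~(a) of Lemma~\ref{lem:rep-approx} then packages everything into the stated accuracies $\epsilon_1=\epsilon_2=\epsilon_3=|\Gamma_{0,1}|\,O(\delta_r)$. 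The step I expect to be hardest is Step~(i): verifying that good $\Gamma_r$ faithfully decodes the \emph{level-$(r+1)$} logical content of an input to $\Gamma_{r+1,r}$ even though its $n$ input blocks are only individually level-$r$ error-corrected. This is precisely where Lemma~\ref{lem:n-EC}'s careful handling of the single arbitrarily-faulty block permitted by $(r+1,1)$-sparsity becomes indispensable; once this decoding compatibility is in place, the remaining fault-path bookkeeping follows the template already established in Lemma~\ref{lem:del-rep-r}.
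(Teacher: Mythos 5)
Your proposal is correct and follows essentially the same route as the paper's proof: decompose $\Gamma_{r+1,r}$ via Eq.~(\ref{eq:gamma-r+1,r}), use Lemma~\ref{lem:n-EC} for the parallel $\mathrm{EC}$ layer, use Lemma~\ref{lem:good-cor} (via Lemma~\ref{lem:cor-rec}) for $\Gamma_r$, insert $\cJ_r^{\otimes n}\circ(\cJ_r^*)^{\otimes n}$ between the two halves by Lemma~\ref{lem:valid-rec}, appeal to Remark~\ref{rem:alt-rep} to conclude that each good fault path represents the identity in $D_{r+1},D_r$, and then run the good/bad fault-path expansion from Lemmas~\ref{lem:rep-gates} and~\ref{lem:del-rep-r} together with Part~(a) of Lemma~\ref{lem:rep-approx} to obtain $\epsilon_1=\epsilon_2=\epsilon_3=|\Gamma_{0,1}|\,O(\delta_r)$. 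Your separate $\epsilon_3$ argument via the terminal $r$-Rec is a harmless refinement (the statement only claims $|\Gamma_{0,1}|\,O(\delta_r)$, so the paper does not need it), and you correctly identify Lemma~\ref{lem:n-EC} as the key novelty over Lemma~\ref{lem:del-rep-r}.
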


\begin{proof}
Let $\cT_{\Delta_{r+1, r}}$ be the channel realized by $(\Gamma_{r+1, r}, \Delta_{r+1, r})$, where locations in $\Delta_{r+1, r}$ realize arbitrary channels. We show that $\cT_{\Delta_{r+1, r}}$ corresponding to a good $(\Gamma_{r+1, r}, \Delta_{r+1, r})$ represents the identity channel in $D_{r+1}, D_{r}$, using which, the remaining proof can be done similarly to Lemma~\ref{lem:rep-gates}, as sketched in the proof of Lemma~\ref{lem:del-rep-r}.

\smallskip Using Lemma~\ref{lem:valid-rec} for $r$-Recs in $(\Gamma_{r+1, r}, \Delta_{r+1, r})$, it follows that $\cJ_r^* \circ \cT_{\Delta_{r+1, r}}$ is a quantum channel. From Remark~\ref{rem:alt-rep}, it remains to show that  Eq.~(\ref{eq:weak-rep}) holds for $D_r, D_{r+1}$, and $\cP := \cI$, that is,
\begin{equation}
 \tilde{\mu}^*_r \circ  \cT_{\Gamma_{r+1, r}} =  \tilde{\mu}^*_{r+1} \circ \cJ_{r+1}.   
\end{equation}
As $(\Gamma_{r+1, r}, \Delta_{r+1, r})$ is good, the corresponding $(\Gamma_r, \Delta_r)$ and $(\mathrm{EC}_r, \Delta^i_r), i \in [n]$ (see also Eq.~(\ref{eq:gamma-r+1,r})), where $\Delta_r =  \Delta_{r+1, r}  \cap \mathrm{Loc}(\Gamma_r)$ and $\Delta^i_r =  \Delta_{r+1, r}  \cap \mathrm{Loc}(\mathrm{EC}^i_r)$, are also good. Let $\cT_{\Delta_r}$ and $\cT_{\Delta^i_r}$ be channels corresponding to $(\Gamma_r, \Delta_r)$ and $(\mathrm{EC}^i_r, \Delta^i_r)$, respectively. Note that $\cT_{\Delta_{r+1, r}} = \cT_{\Delta_r} \circ \cT$, where $\cT := \otimes_{i = 1}^n \cT_{\Delta^i_r}$.

\medskip Using Lemma~\ref{lem:good-cor} for $r$-Recs in $(\Gamma_r, \Delta_r)$, it follows that $\cT_{\Delta_r}$ represents $\cT_\Gamma = \tilde{\mu}^*$ in $D_r$. Therefore, from Remark~\ref{rem:alt-rep}, we have
\begin{align}
    \tilde{\mu}^*_r \circ  \cT_{\Delta_r} \circ \cJ_r^{\otimes n} &= \cT_\Gamma \circ (\tilde{\mu}^*_{r})^{\otimes n} \circ \cJ_r^{\otimes n}\\
    &= \tilde{\mu}^*_{r+1} \circ \cJ_r^{\otimes n}. \label{eq:gamma-r-1}
\end{align}
Similarly using Lemma~\ref{lem:good-cor} for $r$-Recs in $((\otimes_{i = 1}^n\mathrm{EC}_r), \cup_{i = 1}^n \Delta^i_r)$,  it follows $\cT = \otimes_{i = 1}^n \cT_{\Delta^i_r}$ represents identity in the code $D_r$. Therefore,
\begin{equation}
    (\tilde{\mu}^*_r)^{\otimes n} \circ  \cT \circ \cJ_r^{\otimes n} = (\tilde{\mu}^*_r)^{\otimes n} \circ  \cJ_r^{\otimes n} \label{eq:EC-r-2}
\end{equation}
As the output of $\cT$ is in the subspace $L_r^{\otimes n} \subseteq N_{r}^{\otimes n}$, we get from Eq.~(\ref{eq:gamma-r-1}),
\begin{equation}
     \tilde{\mu}^*_r \circ  \cT_{\Delta_r}  \circ \cT = \tilde{\mu}^*_{r+1} \circ \cT,
\end{equation}
which implies 
\begin{equation}
     \tilde{\mu}^*_r \circ \cT_{\Delta_{r+1, r}} \circ \cJ_{r + 1}= \tilde{\mu}^*_{r+1} \circ \cT \circ \cJ_{r + 1}.
\end{equation}
Applying Lemma~\ref{lem:n-EC} on the RHS, we get
\begin{equation}
  \tilde{\mu}^*_r \circ  \cT_{\Delta_{r+1, r}} \circ \cJ_{r + 1} =  \tilde{\mu}^*_{r+1} \circ \cJ_{r+1}. 
\end{equation}
Therefore, $\cT_{\Delta_{r+1, r}}$ represents identity in $D_{r+1}, D_{r}$. 
\end{proof}

\subsubsection*{Interface $\Gamma_{r, s}$}
We define $\Gamma_{r, s}$ as follows:
\begin{equation} \label{eq:gamma-r-s}
 \Gamma_{r, s} :=    \begin{cases}
    \Gamma_{r, r+1} \circ \Gamma_{r+1, r+2} \circ \cdots \circ \Gamma_{s-1, s} & \text{ if $0 \leq r < s$} \\
    \Gamma_{r, r-1} \circ \Gamma_{r-1, r-2} \circ \cdots \circ \Gamma_{s + 1, s} & \text{ if $r > s \geq 0$}.
\end{cases}   
\end{equation}

\begin{lemma} \label{lem:interface}
There exists a constant $c > 1$  such that for all $\delta \in [0, \frac{1}{c})$, any  $\tilde{\cT} \in \trans(\Gamma_{r, s}, \delta)$ represents the identity channel in  $D_r, D_s$, with accuracy $\epsilon_1 = \epsilon_2 =  O( (c\delta)^{2^l} )$, where $l = \min\{r, s\}$, and $\epsilon_3 = O((c\delta)^{2^s})$.
\end{lemma}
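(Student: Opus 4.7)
The plan is to reduce to the elementary interface lemmas via the composition rule. First I would observe that any noisy realization $\tilde{\cT} \in \trans(\Gamma_{r,s}, \delta)$ factorizes, thanks to the definition of $\Gamma_{r,s}$ in Eq.~\eqref{eq:gamma-r-s} and the fact that noise acts independently at each location, as $\tilde{\cT} = \tilde{\cT}_{J} \circ \cdots \circ \tilde{\cT}_{1}$ where each factor lies in $\trans(\Gamma_{k_i, k_{i\pm 1}}, \delta)$ for consecutive codes along the chain $D_r \to D_{r\pm 1} \to \cdots \to D_s$. By Lemma~\ref{lem:del-rep-r}, each $\tilde{\cT}_i$ of the ``ascending'' type $\Gamma_{k, k+1}$ represents $\cI$ in $D_k, D_{k+1}$ with accuracy $(O(\delta_k), O(\delta_k), O(\delta_{k+1}))$, and by Lemma~\ref{lem:del-rep-r+1}, each of the ``descending'' type $\Gamma_{k+1, k}$ has accuracy $(O(\delta_k), O(\delta_k), O(\delta_k))$, where $\delta_k := (c\delta)^{2^k}$.

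Next I would apply Lemma~\ref{lem:union-bound-rep} iteratively along the chain. In the case $r < s$, the composition gives
\[
\epsilon_1,\epsilon_2 \;=\; O\Big(\sum_{k=r}^{s-1} \delta_k\Big) \;=\; O(\delta_r),
\]
the last equality using that $c\delta < 1$ forces the sequence $\delta_k = (c\delta)^{2^k}$ to decay doubly-exponentially, so the sum is dominated by its first term. Crucially, Lemma~\ref{lem:union-bound-rep} propagates only the $\epsilon_3$ of the \emph{final} factor, so $\epsilon_3 = O(\delta_s)$. Setting $l = \min\{r,s\} = r$ yields exactly $\epsilon_1 = \epsilon_2 = O((c\delta)^{2^l})$ and $\epsilon_3 = O((c\delta)^{2^s})$. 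The case $r > s$ is symmetric: all elementary bounds are of the form $(O(\delta_k), O(\delta_k), O(\delta_k))$, so $\epsilon_1, \epsilon_2 = O(\sum_{k=s}^{r-1} \delta_k) = O(\delta_s)$, now dominated by the smallest index along the descending chain, while $\epsilon_3 = O(\delta_s)$ from the last factor $\Gamma_{s+1,s}$. Since $l = s$ here, both claimed bounds again match.

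The main obstacle is not conceptual but a bookkeeping check: making sure that the $\epsilon_3$ bound of the full chain is controlled solely by the accuracy of the final factor (which Lemma~\ref{lem:union-bound-rep} does provide), and verifying that the intermediate codes line up correctly so that the hypotheses of that lemma apply to each consecutive pair. Once this is handled, the doubly-exponential decay $\delta_{k+1} = \delta_k^2$ makes the geometric sum collapse to its dominant term, and the constants absorbed in the $O(\cdot)$ (bounded by $|\Gamma_{0,1}|$ times the number of levels traversed, which is at most $\max\{r,s\}$) are swallowed by a mild adjustment of the constant $c$.
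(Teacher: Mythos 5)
Your proposal is correct and follows the same route as the paper: factorize $\tilde{\cT}$ along the chain of elementary interfaces, apply Lemmas~\ref{lem:del-rep-r} and~\ref{lem:del-rep-r+1} for the per-step accuracies, and use Lemma~\ref{lem:union-bound-rep} to compose, noting that $\epsilon_3$ is inherited from the final factor while $\epsilon_1,\epsilon_2$ are the geometric sum $\sum_k \delta_k$, dominated by $\delta_l$. One small warning: the parenthetical claim that the constant is ``bounded by $|\Gamma_{0,1}|$ times the number of levels traversed'' contradicts your own (correct) collapse of the geometric sum — had the constant really grown like $\max\{r,s\}$, no fixed adjustment of $c$ could absorb it, since Def.~\ref{def:ft_scheme} demands a single constant uniform over all $k,l$; the point is precisely that the doubly-exponential decay $\delta_{k+1}=\delta_k^2$ makes $\sum_k \delta_k \le \frac{1}{1-(c\delta)}\,\delta_l$ with a bound independent of $|r-s|$.
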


\begin{proof}
Using Lemmas~\ref{lem:del-rep-r}, \ref{lem:del-rep-r+1}, and \ref{lem:union-bound-rep}, we get $\epsilon_3 = O((c\delta)^{2^s})$, and  for a constant $b > 0$, $\epsilon = \epsilon_1 = \epsilon_2$ satisfies,
\begin{align}
    \epsilon &\leq b(c\delta)^{2^l} (1 + (c\delta)^2 + (c\delta)^{2^2} + \cdots + (c\delta)^{2^{|r-s|}}  ) \nonumber \\
    & \leq \frac{b}{1 - (c\delta)^2}  (c\delta)^{2^l} = O((c\delta)^{2^l}),  \label{eq:eps-c-del} 
\end{align}
where the second inequality uses
 $1 + x^2 + x^4 + x^6 + \dots = \frac{1}{1-x^2}$, and the last equality uses that for any $c\delta < 1$.
\end{proof}

\subsection{Fault-tolerance scheme for circuit-level stochastic noise} \label{sec:const-ft-scheme-stc}
In this section, we consider circuit-level stochastic noise from Section~\ref{sec:q_circuit-noise}. We will use the notation $\tilde{\cT}_{\Phi}$ to denote the quantum channel realized by a noisy quantum circuit $\Phi$, where each location $g_{i, j}$ in the circuit realizes $\tilde{\cT}_{g_{i, j}}$ according to circuit-level stochastic noise with noise rate $\delta$. Recall that $\tilde{\cT}_{g_{i, j}}$ applies the channel $\cT_{g_{i, j}}$ corresponding to $g_{i, j}$ with probability $(1-\delta)$, and with probability $\delta$, it applies an error channel (see Eq.~\ref{eq:noisy-g}).

\smallskip The following lemma gives Part $(b)$ of the fault-tolerant scheme from Def.~\ref{def:ft_scheme_stoc}.

\begin{lemma} \label{lem:rep-gates-stc}
Consider $r$-Rec $\Psi_{g, r}$ for $g \in \mathbf{A}$. There exists a constant $c > 1$  such that for all $\delta \in [0, \frac{1}{c})$, the channel $\tilde{\cT}_{\Psi_{g, r}}$ satisfies,
\begin{equation}
    \tilde{\cT}_{\Psi_{g, r}} = (1 - \epsilon) \overline{\cT}_{\Psi_{g, r}} + \epsilon \: \overline{\cZ}_{\Psi_{g, r}},
\end{equation}
where $\epsilon = O((c\delta)^{2^r})$,  $\overline{\cT}_{\Psi_{g, r}}$ represents $\cT_g$ in $D_r$, and $\overline{\cZ}_{\Psi_{g, r}}$ is some channel.
\end{lemma}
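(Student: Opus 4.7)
The plan is to mimic the fault-path decomposition from the proof of Lemma~\ref{lem:rep-gates}, but exploiting the fact that circuit-level stochastic noise already supplies a convex decomposition. Writing $\tilde{\cT}_{g_{i,j}} = (1-\delta)\,\cT_{g_{i,j}} + \delta\, \cZ_{g_{i,j}}$ at every location and expanding the composition that defines $\tilde{\cT}_{\Psi_{g,r}}$, I obtain
\begin{equation}
\tilde{\cT}_{\Psi_{g,r}} \;=\; \sum_{\Delta \subseteq \mathrm{Loc}(\Psi_{g,r})} (1-\delta)^{|\Psi_{g,r}| - |\Delta|}\,\delta^{|\Delta|}\; \cT^{\cZ}_{\Delta},
\end{equation}
where $\cT^{\cZ}_{\Delta}$ is the quantum channel that applies the noise channel $\cZ_{g_{i,j}}$ at every location $g_{i,j}\in \Delta$ and the ideal $\cT_{g_{i,j}}$ at the remaining locations. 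In contrast with the general-noise case, each $\cT^{\cZ}_{\Delta}$ is a bona fide quantum channel and each coefficient is non-negative, so the expansion is literally a convex combination indexed by fault paths.

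Next I would split the sum according to whether $(\Psi_{g,r},\Delta)$ is good in the sense of Def.~\ref{def:g-r-rec} and define
\begin{equation}
\overline{\cT}_{\Psi_{g,r}} := \frac{1}{1-\epsilon}\sum_{\text{good }\Delta} (1-\delta)^{|\Psi_{g,r}|-|\Delta|}\,\delta^{|\Delta|}\,\cT^{\cZ}_{\Delta},\qquad \overline{\cZ}_{\Psi_{g,r}} := \frac{1}{\epsilon}\sum_{\text{bad }\Delta} (1-\delta)^{|\Psi_{g,r}|-|\Delta|}\,\delta^{|\Delta|}\,\cT^{\cZ}_{\Delta},
\end{equation}
with $\epsilon$ the total probability mass of bad fault paths. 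Both are quantum channels by construction. To show that $\overline{\cT}_{\Psi_{g,r}}$ represents $\cT_g$ in $D_r$, I would apply Lemmas~\ref{lem:cor-rec} and~\ref{lem:valid-rec} to every good $\Delta$: each corresponding $\cT^{\cZ}_{\Delta}$ is correct (Eq.~\eqref{eq:correctness}) and valid (maps into $L_r$), so by Lemma~\ref{lem:good-cor} each represents $\cT_g$ in $D_r$. Both conditions of Remark~\ref{rem:alt-rep} are linear in $\cT$ and the validity condition is preserved under convex combinations of CPTP maps, so the average $\overline{\cT}_{\Psi_{g,r}}$ also represents $\cT_g$.

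Finally, I would bound $\epsilon$ by the same recursion as in Lemma~\ref{lem:rep-gates}, but in the cleaner probabilistic form enabled by stochastic independence. Let $\epsilon_r$ denote the probability that an $r$-Rec is bad when each location independently fails with probability $\delta$. Since a bad $r$-Rec contains at least two bad $(r-1)$-subRecs, whose badness events are independent under stochastic noise,
\begin{equation}
\epsilon_r \;\le\; \sum_{j\ge 2}\binom{|\Psi_{g_{\max},1}|}{j} \epsilon_{r-1}^{\,j}\,(1-\epsilon_{r-1})^{|\Psi_{g_{\max},1}|-j} \;\le\; \binom{|\Psi_{g_{\max},1}|}{2}\,\epsilon_{r-1}^{\,2},
\end{equation}
with base case $\epsilon_1 \le \binom{|\Psi_{g_{\max},1}|}{2}\,\delta^{2}$. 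Choosing $c = e\binom{|\Psi_{g_{\max},1}|}{2}$ as in the proof of Lemma~\ref{lem:rep-gates} and iterating the recursion yields $\epsilon_r \le (c\delta)^{2^r}/c$, which is the desired bound. The only subtlety I anticipate is verifying that the representation property genuinely passes to convex combinations; this reduces to linearity of the two conditions of Remark~\ref{rem:alt-rep} together with the fact that a convex combination of CPTP maps is CPTP.
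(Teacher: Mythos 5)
Your proposal is correct and takes essentially the same approach as the paper: decompose $\tilde{\cT}_{\Psi_{g,r}}$ as a convex combination over fault paths weighted by their binomial probabilities, split into good and bad paths, cite goodness $\Rightarrow$ representation (Lemmas~\ref{lem:cor-rec}/\ref{lem:valid-rec}/\ref{lem:good-cor}) for the renormalized good part, and bound the bad-path mass $\epsilon$ by the same binomial recursion that yields $O((c\delta)^{2^r})$. The only difference is cosmetic: you spell out why representability passes to convex combinations (linearity of the conditions in Remark~\ref{rem:alt-rep} and closure of CPTP maps under convex combinations), which the paper leaves implicit, and you carry over the extra factor $e$ in $c$ from the general-noise proof, which is unnecessary here but harmless under the $O(\cdot)$.
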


\begin{proof}
 For an $\Delta_{g, r} \subseteq \mathrm{Loc}(\Psi_{g, r})$, let $\cT_{\Delta_{g, r}}$ be the channel corresponding to the faulty $r$-Rec $(\Psi_{g, r}, \Delta_{g, r})$, such that any location in $g_{i, j} \in \Delta_{g, r}$ realizes some error channel instead of $\cT_{g_{i, j}}$. Then, $\tilde{\cT}_{\Psi_{g, r}}$ can be expanded as
\begin{equation}
    \tilde{\cT}_{\Psi_{g, r}} = \sum_{\Delta_{g, r} \subseteq \mathrm{Loc}(\Psi_{g, r})} (1 - \delta)^{|\Psi_{g, r}| - |\Delta_{g, r}|} \: \delta^{|\Delta_{g, r}|} \: \cT_{\Delta_{g, r}}.
\end{equation}
The probability associated with a fault path $\Delta_{g, r}$ is given by,
\begin{equation} \label{eq:p-delta}
    p (\Delta_{g, r}) = (1 - \delta)^{|\Psi_{g, r}| - |\Delta_{g, r}|} \: \delta^{|\Delta_{g, r}|} 
\end{equation}
Let $\epsilon$ be the probability that a bad fault path $\Delta_{g, r}$ occurs, that is, $\epsilon := \sum_{ \text{bad } \Delta_{g, r}} \: p(\Delta_{g, r})$. Then, from Eq.~(\ref{eq:p-delta}), we have
\begin{equation}
   \tilde{\cT}_{\Psi_{g, r}} = (1 - \epsilon) \overline{\cT}_{\Psi_{g, r}} + \epsilon \: \overline{\cZ}_{\Psi_{g, r}}, 
\end{equation}
where   
\begin{align}
  \overline{\cT}_{\Psi_{g, r}} &:= \sum_{ \text{good } \Delta_{g, r} } \frac{p(\Delta_{g, r})}{1 - \epsilon} \cT_{\Delta_{g, r}} \\  
  \overline{\cZ}_{\Psi_{g, r}} &:= \sum_{ \text{bad } \Delta_{g, r} } \frac{p(\Delta_{g, r})}{\epsilon} \cT_{\Delta_{g, r}}.
\end{align}
Using Lemma~\ref{lem:cor-rec}, $\cT_{\Delta_{g, r}}$ for good fault path $\Delta_{g, r}$ represents $\cT_g$ in  $D_r$; therefore, $\overline{\cT}_{\Psi_{g, r}}$ represents $\cT_g$ in  $D_r$. It remains to show that $\epsilon = O((c \delta)^{2^r})$. It can be seen using recursively the fact that $(\Psi_{g, r}, \Delta_{g, r})$ is bad if it contains more than one bad $(r-1)$ Recs, and choosing $c = \binom{ \max_{g \in \mathbf{A}} \{ \: |\Psi_{g, 1}| \}}{2}$.
\end{proof}
The following lemma gives Part $(c)$ of the fault-tolerant scheme from Def.~\ref{def:ft_scheme_stoc}.

\begin{lemma} 
Consider the circuit $\Gamma_{r, s}$ from Eq.~(\ref{eq:gamma-r-s}). There exists a constant $c > 1$  such that for all $\delta \in [0, \frac{1}{c})$, the noisy realization $\tilde{\cT}_{\Gamma_{r, s}}$ is given by,
\begin{equation}
    \tilde{\cT}_{\Gamma_{r, s}} = (1 - \epsilon) \overline{\cT}_{\Gamma_{r, s}} + \epsilon\: \overline{\cZ}_{\Gamma_{r, s}} ,
\end{equation}
where $\epsilon = O((c\delta)^{2^s})$, and 
$\cJ_s^* \circ \overline{\cT}_{\Gamma_{r, s}}$ is a quantum channel, and
\begin{equation}
     \overline{\cT}_{\Gamma_{r, s}} = (1 - \epsilon') \overline{\cT}'_{\Gamma_{r, s}} + \epsilon'\: \overline{\cZ}'_{\Gamma_{r, s}},
\end{equation}
where $\epsilon' = O((c\delta)^{2^l})$,
with $l = \min\{r, s\}$, and $\overline{\cT}_{\Gamma_{r, s}}$ represents identity channel in $D_r, D_s$. Here, $\overline{\cZ}_{\Gamma_{r, s}}$, and $\overline{\cZ}'_{\Gamma_{r, s}}$ are some channels.
\end{lemma}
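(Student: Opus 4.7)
The strategy parallels Lemma~\ref{lem:rep-gates-stc}, but now tracking two nested levels of ``goodness'' to match the two-level decomposition required by Def.~\ref{def:ft_scheme_stoc}(c). The key observation is that the condition $\cJ_s^* \circ \overline{\cT}_{\Gamma_{r,s}}$ being a quantum channel (i.e., that the image lies in $\bL(L_s)$) is governed solely by the final Rec at the output level $s$, whereas the stronger condition that $\overline{\cT}'_{\Gamma_{r,s}}$ represents the identity requires every Rec at every intermediate level inside $\Gamma_{r,s}$ to be good.

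The plan is to first expand $\tilde{\cT}_{\Gamma_{r,s}} = \sum_{\Delta \subseteq \mathrm{Loc}(\Gamma_{r,s})} p(\Delta)\, \cT_\Delta$ with $p(\Delta) = (1-\delta)^{|\Gamma_{r,s}|-|\Delta|}\delta^{|\Delta|}$. Define event $A$ to be that the final Rec at level $s$ is good in the sense of Def.~\ref{def:g-r-rec} -- this is $\mathrm{EC}_s$ when $s>r$ (since $\Gamma_{s-1,s} = \mathrm{EC}_s \circ \Gamma_{s-1}$), or the trailing $s$-Rec inside the final $\Gamma_s$ block when $s<r$ (since $\Gamma_{s+1,s} = \Gamma_s \circ (\otimes_i \mathrm{EC}_s^i)$). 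Define $B \subseteq A$ to be the event that \emph{every} Rec at every level inside $\Gamma_{r,s}$ is good. Setting $\epsilon := \Pr[A^c]$ and $\epsilon' := \Pr[A\setminus B]/\Pr[A]$, I would take
\begin{align*}
\overline{\cT}_{\Gamma_{r,s}} &:= \frac{1}{\Pr[A]} \sum_{\Delta \in A} p(\Delta)\, \cT_\Delta, &
\overline{\cT}'_{\Gamma_{r,s}} &:= \frac{1}{\Pr[B]} \sum_{\Delta \in B} p(\Delta)\, \cT_\Delta,
\end{align*}
and define $\overline{\cZ}_{\Gamma_{r,s}}$, $\overline{\cZ}'_{\Gamma_{r,s}}$ as the complementary conditional averages. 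The two claimed decompositions then follow directly.

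The second part of the proof verifies the three structural requirements. For the image condition, event $A$ forces the final $s$-Rec to be good, so by Lemma~\ref{lem:valid-rec} each $\cT_\Delta$ with $\Delta \in A$ has image in $\bL(L_s)$, hence so does the convex combination $\overline{\cT}_{\Gamma_{r,s}}$. For the identity-representation claim, when all Recs are good (event $B$) each intermediate layer $\Gamma_{k,k\pm 1}$ has all its Recs good, so by Lemmas~\ref{lem:cor-rec}, \ref{lem:valid-rec}, and \ref{lem:n-EC} each layer represents identity between consecutive codes; composability (Lemma~\ref{lem:union-bound-rep}) then gives that every $\cT_\Delta$ with $\Delta \in B$ represents identity in $D_r, D_s$, and this property is closed under convex combinations since the witnessing superoperators $\cR, \cS$ in Def.~\ref{def:rep_codes} combine linearly. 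For the probability bounds, $\Pr[A^c] \leq O((c\delta)^{2^s})$ follows from the same recursive counting of bad $s$-Recs used in Lemma~\ref{lem:rep-gates-stc}. For $\Pr[B^c]$, a union bound over the $|s-r|$ interface layers $\Gamma_{k,k\pm 1}$, each contributing at most $O((c\delta)^{2^{\min(k,k\pm 1)}})$, summed as a geometric-like series dominated by the smallest level $l = \min\{r,s\}$ exactly as in Eq.~(\ref{eq:eps-c-del}), gives $\Pr[B^c] \leq O((c\delta)^{2^l})$. Dividing by $\Pr[A] \geq 1 - O((c\delta)^{2^s})$ then yields $\epsilon' \leq O((c\delta)^{2^l})$ for sufficiently small $\delta$.

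\textbf{Main obstacle.} The most delicate point is the probability bound for $\Pr[B^c]$. While the telescoping approach is straightforward in spirit, one must correctly aggregate bad-Rec probabilities \emph{within} each layer -- which simultaneously contains Recs at two distinct levels, since $\Gamma_{k,k+1}$ comprises both the $k$-Recs inside $\Gamma_k$ and the outer $(k+1)$-Rec $\mathrm{EC}_{k+1}$ -- and then across all layers, without double-counting and while respecting the hierarchical ``bad $r$-Rec $=$ contains more than one bad $(r-1)$-Rec'' definition of Def.~\ref{def:g-r-rec}. A secondary subtlety is verifying that on the event $B$ the composition really does yield exact identity representation rather than merely an approximate one, but this reduces cleanly to the exact statements of Lemmas~\ref{lem:cor-rec} and~\ref{lem:n-EC} applied layer by layer.
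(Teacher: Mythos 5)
Your proof is correct and takes essentially the same approach as the paper, differing only in packaging: the paper proves the two-level decomposition explicitly for a single step $\Gamma_{r',r'+1}$ (defining $A$ as the fault paths where the output $\mathrm{EC}_{r'+1}$ is a good $(r'+1)$-Rec and $B$ as those where all Recs are good) and then asserts that composing these layer-wise decompositions across $r' = r,\dots,s-1$ yields the claimed bounds via the same geometric-series estimate as Eq.~(\ref{eq:eps-c-del}); you instead define the analogous events $A$ (final $s$-level Rec good, controlling the image condition) and $B$ (all Recs good, controlling identity representation) globally on the full circuit $\Gamma_{r,s}$ and extract the nested decomposition in one pass, which is a cosmetic simplification that avoids tracking how the layer-wise decompositions compose.
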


\begin{proof}
We shall consider $\Gamma_{r, s}, 0 \leq r < s$, that is, $ \Gamma_{r, s} := \Gamma_{r, r+1} \circ \Gamma_{r+1, r+2} \circ \cdots \circ \Gamma_{s-1, s}$. The proof for $\Gamma_{r, s},  r > s \geq 0$ can be done similarly.

\smallskip Consider the quantum circuit $\Gamma_{r, r + 1} := \mathrm{EC}_{r+1} \circ \Gamma_{r}$, according to Eq.~(\ref{eq:gamma-r-r+1}).  Let $\cT_{\Delta_{r, r+1}}$ be the channel realized by $( \Gamma_{r, r+1}, \Delta_{r, r+1})$, where each location in $g_{i, j} \in \Delta$ realizes some error channel instead of $\cT_{g_{i, j}}$. Let $p(\Delta_{r, r+1}) := (1 - \delta)^{|\Gamma_{r, r + 1}| - |\Delta_{r, r + 1}|} \: \delta^{|\Delta_{r, r+1}|} $ be the probability associated with the fault path $\Delta_{r, r+1}$.

\smallskip  Let $\Delta_{r+1} := \Delta_{r, r+1} \cap \mathrm{Loc}(\mathrm{EC}_{r+1})$, and let $\epsilon$ be the probability that $\Delta_{r+1}$ (that is, $(\mathrm{EC}_{r+1}, \Delta_{r+1})$) is a bad $(r+1)$-Rec (meaning that it contains at least two bad (not good) $r$-Recs). We can expand $\tilde{\cT}_{\Gamma_{r, r + 1}}$, the noisy realization of $\Gamma_{r, r+1}$ under the stochastic noise as, 
\begin{equation} \label{eq:r-r+1-exp}
     \tilde{\cT}_{\Gamma_{r, r + 1}} = (1 - \epsilon) \overline{\cT}_{\Gamma_{r, r + 1}} +  \epsilon \: \overline{\cZ}_{\Gamma_{r, r + 1}}, 
\end{equation}
where
\begin{equation} 
    \overline{\cT}_{\Gamma_{r, r + 1}} := \sum_{\Delta_{r, r+1}: \Delta_{r+1} \text{ is good}} \: \frac{p(\Delta_{r, r +1})}{1 - \epsilon} \cT_{\Delta_{r, r +1}} 
\end{equation}
Using $\cJ_{r + 1}^* \circ \cT_{\Delta_{r, r+1}}$ is a quantum channel for a  good $(\mathrm{EC}_{r+1}, \Delta_{r+1})$, we have that $\cJ_{r + 1}^* \circ \overline{\cT}_{\Delta_{r, r+1}}$ is a quantum channel. From Lemma~\ref{lem:rep-gates-stc}, we get $\epsilon = O((c\delta)^{2^{r + 1}})$.

\smallskip Consider the sets $A := \{ \Delta_{r,r+1} : (\mathrm{EC}_{r+1}, \Delta_{r+1}) \text{ is a good } (r+1)\text{-Rec} \}$ and $B :=\{ \Delta_{r, r + 1}: ( \Gamma_{r, r+1}, \Delta_{r, r+1}) \text{ is good}\}$. We have $B \subseteq A$ since $( \Gamma_{r, r+1}, \Delta_{r, r+1})$ is good when all the $r$-Recs in  $( \Gamma_{r, r+1}, \Delta_{r, r+1})$ are good. Consider 
\begin{equation}
 C := \{\Delta_{r, r+1}: \text{$(\Gamma_{r, r+1}, \Delta_{r, r+1})$ is bad but $(\mathrm{EC}_{r+1}, \Delta_{r+1})$ is a good $(r+1)$-Rec}\} = A \setminus B.   
\end{equation}
 The associated probabilities satisfy $p(B) + p(C) = p(A) = 1 - \epsilon$. We expand $\overline{\cT}_{\Gamma_{r, r + 1}}$ as,
\begin{equation} \label{eq:r-r+1-exp-1}
     \overline{\cT}_{\Gamma_{r, r + 1}} = (1 -  \epsilon') \overline{\cT}'_{\Gamma_{r, r + 1}} +  \epsilon' \: \overline{\cZ}'_{\Gamma_{r, r + 1}}, 
\end{equation}
where 
\begin{equation}
\epsilon' = \frac{p(C)}{1 - \epsilon} \text{ and }    \overline{\cT}'_{\Gamma_{r, r + 1}} := \sum_{ \Delta_{r, r+1} \in B} \: \frac{p(\Delta_{r, r +1})}{(1 - \epsilon - p(C))} \cT_{\Delta_{r, r +1}}.
\end{equation}
As shown in Lemma~\ref{lem:del-rep-r}, $\cT_{\Delta_{r, r+1}}$ for a good $\Delta_{r, r +1}$ represents identity in $D_r, D_{r+1}$. Therefore, $\overline{\cT}'_{\Gamma_{r, r + 1}}$ represents identity in $D_r,  D_{r+1}$. Using  Lemma~\ref{lem:rep-gates-stc} and a union bound union bound over the $r$-Recs inside $ \Gamma_{r,r+1}$, we get $ p(C) \leq p(\Delta_{r, r + 1} \text{ is bad}) = O(|\Gamma_{0, 1}| (c\delta)^{2^r})$. This implies that $\epsilon' = O(|\Gamma_{0, 1}| (c\delta)^{2^r})$.

\smallskip Finally, doing the above procedure for $\Gamma_{r', r'+1}, \: r \leq r' < s$, we can find expansions for $\Gamma_{r, s}$ as in Eqs.~(\ref{eq:r-r+1-exp}) and~(\ref{eq:r-r+1-exp-1}), with $\epsilon = O((c\delta)^{2^s})$, and 
$$ \epsilon' \leq b |\Gamma_{0, 1}| (c\delta)^{2^r} (1 + (c\delta)^2 + (c\delta)^{2^2} + \cdots + (c\delta)^{2^{|r-s|}}  ) = O((c\delta)^{2^r}).$$ 
\end{proof}

\section{The toric code as a communication code} \label{sec:toric-code}
In this section, we give an example of a code that can correct local stochastic noise while having a sublinear minimum distance. We consider the family of toric codes~\cite{kitaev, kitaev2003fault, bravyi1998quantum, dennis2002topological} and  show that they define an $(m, \zeta)$ communication code for any local stochastic channel with parameter $\alpha \leq \frac{1}{36}$, such that $\zeta$ goes to zero as the code length increases~\cite{preskil2011toric}. Therefore, from Theorem~\ref{thm:ft-code-stc}, one can obtain a fault-tolerant code using toric code given $\alpha \geq (2\delta' + \nu)$, with parameters $\delta', \nu$ as in Theorem~\ref{thm:ft-code-stc}.
\begin{figure}[!b]
    \centering
\begin{tikzpicture}
    \draw (0, 0) grid (5, 5);
    \fill (canvas cs:x=1cm,y=1.5cm) node[left](){$Z$} circle (2pt);
    \fill (canvas cs:x=1.5cm,y=2.0cm) node[above](){$Z$} circle (2pt);
    \fill (canvas cs:x=2.0cm,y=1.5cm) node[right](){$Z$} circle (2pt);
    \fill (canvas cs:x=1.5cm,y=1.0cm) node[below](){$Z$} circle (2pt);

\fill (canvas cs:x=4.5cm,y=4cm) node[above](){$X$} circle (2pt);
\fill (canvas cs:x=4.0cm,y=4.5cm) node[left](){$X$} circle (2pt);
\fill (canvas cs:x=3.5cm,y=4.0cm) node[below](){$X$} circle (2pt);
\fill (canvas cs:x=4.0cm,y=3.5cm) node[right](){$X$} circle (2pt);

\draw[ thick] (3, 0) to (3, 5) ;
\draw[thick] (0, 3) to (5, 3) ;
\draw[dotted,  thick] (4.5, 0) to (4.5, 5);
\draw[dotted,  thick] (0, 4.5) to (5, 4.5);

\draw (3, 0) node[below](){$\overline{Z}_1$}
(5, 3) node[right](){$\overline{Z}_2$}
(5, 4.5) node[right](){$\overline{X}_1$}
(4.5, 0) node[below](){$\overline{X}_2$}
;
\end{tikzpicture}
\caption{Toric code : the figure shows $X$ and $Z$ type stabilizer generators of the toric code and logical operators $\overline{Z}_1, \overline{X}_1$ and $\overline{Z}_2, \overline{X}_2$ corresponding to two logical qubits.} 
\label{fig:toric-code}
\end{figure}

\paragraph{Encoding.} Consider logical states $\ket{\overline{x_1 x_2}} \in (\mathbb{C}^2)^{\otimes 2L^2}, \: x_1, x_2 \in \{0, 1\}$, of the toric code, with square lattice of length $L$ (see Fig.~\ref{fig:toric-code}). Note that $\ket{\overline{x_1 x_2}}$ is eigenstate of all the $X$ and $Z$ type generators with eigevalue $+1$, and eigenstate of logical operators $\overline{Z}_1$ and $\overline{Z}_2$, with eigenvalue $x_1$ and  $x_2$, respectively. Therefore, encoding 
of the toric code is given by $\cE := V (\cdot) V^\dagger$, where $V: ((\mathbb{C})^2)^{\otimes 2} \to (\mathbb{C}^2)^{\otimes 2L^2}$ is an isometry as follows,
\begin{align}
    V \ket{x_1 x_2} = \ket{\overline{x_1 x_2}}, \: \forall x_1 ,x_2 \in \{0, 1\}^2.
\end{align}

\paragraph{Decoding.} To define the decoding map $\cD: \bL((\mathbb{C}^2)^{\otimes 2L^2}) \to \bL((\mathbb{C}^2)^{\otimes 2})$ for the toric code, we first need to consider its error correction procedure for Pauli errors.  As the toric code is a CSS code, Pauli $Z$ and $X$ errors are corrected  separately.  Let $[2L^2]$ be the set of all edges on the toric code lattice. Consider a Pauli $Z$ error, acting on the edges in $E \subseteq [2L]^2$.  Then the error correction is done in the following two steps,
\begin{enumerate}
    \item[$(i)$] First the measurement corresponding to the $X$ type generators are performed. The generator measurement reveals the boundary $B$ of the $E$. 

    \item[$(ii)$] The classical minimum weight perfect matching (MWPM) algorithm then determines the minimum weight chain $E'$ with the boundary $B$. We then apply Pauli $Z$ operator on the qubits corresponding to the edges in $E'$.
\end{enumerate}
As the chains $E$ and $E'$ have the same boundary, $E + E'$ is a cycle, i.e, it has no boundary; therefore consists of loops.  This means that the state after the correction is again in the code space. The correction is successful if $E + E'$ consists of homologically trivial loops (i.e. can be continuously shrunk to a point, see also Fig.~\ref{fig:non-trivial-cycle}). If $E + E'$ contains a non-trivial cycle (see also Fig.~\ref{fig:non-trivial-cycle}), it corresponds to a logical $Z$ error.

\begin{figure}[!t]
    \centering
\begin{tikzpicture}
\draw (0, 0) grid (5, 5);
\draw[ultra thick, red] (2, 0) to (2, 1) to (1, 1) to (1, 3) to (2, 3) to (2, 4) to (3, 4) to (3, 5) to (2, 5) ; 
\draw[ultra thick, green] (3, 1) to (4, 1) to (4, 2) to (5, 2) to (5, 3) to (3, 3) to (3, 1) ; 
\end{tikzpicture}
\caption{The figure shows trivial and non-trivial loops of the toric code: the green loop is a trivial loop and the red loop is a non-trivial loop.}
\label{fig:non-trivial-cycle}
\end{figure}
The error correction corresponding to $X$ errors is done similarly; we first perform $Z$ type generator measurements, which reveal the boundary of the $X$ error pattern on the dual lattice, and then the error correction is performed based on MWPM.

Let $\cD_Z$ (and $\cD_X$) be the quantum channel corresponding to the $Z$ (and $X$) error correction according to the above two steps. Then, we define the decoding map to be  
\begin{equation}
   \cD = \cE^\dagger \circ \cD_X \circ \cD_Z,
\end{equation}
where $\cE^\dagger = V^\dagger (\cdot) V$. Let $\mathbb{P}([2L^2])$ be the power set of $[2L^2]$. Let $\mathbb{E} \subseteq \mathbb{P}([2L^2])$ be such that any Pauli error on any $E \in \mathbb{E}$ is correctable, that is,
\begin{equation} \label{eq:toric-code-rec}
 \cD \circ \cP(E) \circ \cE = \cI,
\end{equation}
where $\cP(E) = P (\cdot) P^\dagger$ for a Pauli operator $P$, acting non-trivially only on the qubits in $E$. Using the linearity of quantum error correction codes, and Eq.~(\ref{eq:toric-code-rec}), we have that Eq.~(\ref{eq:corr-cond}) is satisfied for any $E \in \mathbb{E}$. 

\smallskip In the following lemma, we show that toric code gives a reliable communication code for local stochastic channels with sufficiently small parameter. The proof of the lemma very heavily relies on the proof of error correction threshold of toric code given in~\cite{preskil2011toric}.
\begin{lemma}
    Consider a local stochastic channel $\Lambda_\alpha$ with parameter $\alpha \leq \frac{1}{36}$. The channels $\cE$ and $\cD$ corresponding to the toric code, with square lattice of length $L$, define an $(m, \zeta)$ communication code for $\Lambda_\alpha$, with $\zeta = \frac{16}{3} L^4 (36 \alpha)^{L/2}$.
\end{lemma}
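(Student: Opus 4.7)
The proof combines the standard topological threshold argument for the toric code \cite{preskil2011toric} with the local stochastic property of $\Lambda_\alpha$. First, I would exploit the CSS structure of the code: since $\cD = \cE^\dagger \circ \cD_X \circ \cD_Z$ corrects $X$ and $Z$ type errors independently, it suffices to bound the failure probability $\zeta_X$ and $\zeta_Z$ for each error type and take $\zeta \leq \zeta_X + \zeta_Z$. By symmetry of the primal and dual lattices, $\zeta_X = \zeta_Z$, so I would focus on $Z$-type errors and effectively reduce to a Pauli error drawn from a local stochastic distribution (the non-Pauli components of $\Lambda_\alpha$ being projected out by the syndrome measurement inside $\cD_Z$, while the bound of Eq.~(\ref{eq:local-stc-2}) is preserved under this reduction since it holds for all subsets $T$ simultaneously).

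For the $Z$-type analysis, let $E \subseteq [2L^2]$ denote the support of the Pauli-$Z$ error and let $E'$ denote the minimum-weight chain with the same boundary as $E$ (the MWPM correction). The decoder succeeds if and only if the residual chain $F := E + E'$ consists entirely of homologically trivial loops on the torus. By optimality of MWPM, $|E'| \leq |E|$, so for any cycle $C \subseteq F$ one has $|C \cap E| \geq |C|/2$. Failure therefore implies the existence of a homologically non-trivial cycle $C$ of length $\ell \geq L$ with at least $\ell/2$ of its edges in $E$.

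Next, I would bound the failure probability by a union bound over all non-trivial self-avoiding closed walks $C$ of length $\ell \geq L$ on the torus. Two ingredients are needed: a combinatorial upper bound on the number of such walks, which for each fixed starting edge is at most $3^{\ell - 1}$ (since the square lattice has coordination number $4$, so a self-avoiding walk has at most $3$ forward choices per step), multiplied by a polynomial-in-$L$ prefactor for the starting edge and for the homology class; and an error probability bound
\begin{equation*}
\Pr\big[|E \cap C| \geq \ell/2\big] \;\leq\; \binom{\ell}{\lceil \ell/2 \rceil} \, \alpha^{\ell/2} \;\leq\; 2^\ell \alpha^{\ell/2},
\end{equation*}
which uses the local stochastic property (Eq.~(\ref{eq:local-stc-2})) applied with $T$ ranging over all subsets of $C$ of size $\lceil \ell/2 \rceil$. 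Combining the two, the per-length contribution is at most a polynomial in $L$ times $(3 \cdot 2 \sqrt{\alpha})^\ell = (36 \alpha)^{\ell/2}$. Summing a geometric-type series from $\ell = L$ onwards (using that $(36\alpha)^{1/2} \leq 1$ for $\alpha \leq 1/36$, which makes the terms non-increasing in $\ell$) and then combining the $X$ and $Z$ contributions yields a bound of the claimed form $\zeta \leq \frac{16}{3} L^4 (36\alpha)^{L/2}$.

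The main obstacle is the combinatorial counting of non-trivial self-avoiding cycles on the torus: the bookkeeping must be sharp enough to produce exactly the prefactor $\frac{16}{3} L^4$, rather than a polynomial of a different degree. A naive count gives $2L^2 \cdot 3^\ell$ for the number of non-trivial self-avoiding walks of length $\ell$ starting at some edge, and one then has to sum $\ell \cdot (36\alpha)^{\ell/2}$-type quantities over $\ell \geq L$ while carefully handling the boundary case $\alpha = 1/36$ where the geometric ratio equals one (in which case each term must be bounded individually, producing the additional polynomial factors in $L$). Beyond this careful bookkeeping, the argument reduces to the well-established Peierls-style calculation, and the local stochastic hypothesis enters only through the single inequality $\Pr[T \subseteq E] \leq \alpha^{|T|}$, which makes the i.i.d. Preskill proof transfer with essentially no change.
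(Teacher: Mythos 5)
Your proposal follows the same Peierls-style argument as the paper: characterize decoder failure as the presence of a non-trivial cycle in $E + E'$, observe by MWPM optimality that such a cycle of length $\ell$ must contain at least $\ell/2$ edges of $E$, union-bound over closed loops with counting factor $O(L^2 3^\ell)$, apply the local-stochastic estimate $\alpha^{|T|}$ to each subset $T$ with $|T| \geq \ell/2$ (contributing $2^\ell$), and sum the resulting $(36\alpha)^{\ell/2}$ terms from $\ell = L$ to $2L^2$. However, there are two places where your accounting differs from the paper's, and one of them is a genuine gap. First, your claim that the non-Pauli components of $\Lambda_\alpha$ are ``projected out by the syndrome measurement'' is not an argument the paper needs or uses; the paper instead works directly with the CP-map decomposition $\Lambda_\alpha = \sum_E (\otimes_{i\notin E}\cI)\otimes (\Lambda_\alpha)_E$ from Def.~\ref{def:local-stoc}, splits the sum into $E \in \mathbb{E}$ (correctable supports) and $E\in\overline{\mathbb{E}}$, and uses Eq.~(\ref{eq:local-stc-2}) as a \emph{diamond-norm} bound rather than a probability bound. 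This matters because the correctable part produces $\sum_{E\in\mathbb{E}} K((\Lambda_\alpha)_E)\,\cI$ with a normalization constant that need not equal $1$; the paper shows $|1 - \sum_E K((\Lambda_\alpha)_E)| \leq \tfrac{8}{3}L^4(36\alpha)^{L/2}$ and then a triangle inequality gives the factor of $2$ that turns $\tfrac{8}{3}$ into $\tfrac{16}{3}$. Second, you attribute your factor of $2$ to $\zeta = \zeta_X + \zeta_Z$ by self-duality, but the paper obtains its factor of $2$ entirely from the $K$-normalization step and does not explicitly double for the dual lattice. If you want to match the stated constant $\tfrac{16}{3}L^4$ you must do one of these two bookkeeping steps, not both, and you still must address the normalization of the non-Pauli correctable terms, which your ``projected out'' remark glosses over.
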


\begin{proof}
Consider the set of non-correctable errors $\overline{\mathbb{E}}:= \mathbb{P}([2L^2]) \setminus \mathbb{E}$, which is given by, 
\begin{align}
    \overline{\mathbb{E}} &= \{ E \subseteq [2L^2] : E + E' \text{ contains a non-trivial cycle} \} \nonumber\\
    &\subseteq  \{ E \subseteq [2L^2] : E + E' \text{ contains a closed loop of size } l \geq L \}. \label{eq:E-bar}
\end{align}
We have that
\begin{align}
&\{  E \subseteq [2L^2] : E + E' \text{ is a closed loop of size $l$ }\}  \nonumber\\ 
&= \{  E \subseteq [2L^2] : E + E' \text{ is a closed loop of size $l$ and } |E| \geq \frac{l}{2}\} \nonumber\\
&\subseteq \{ E \subseteq [2L^2] : E \text{ is included in a closed loop of size $l$ and } |E| \geq \frac{l}{2} \} \nonumber\\
&\subseteq \cup_{S \in \cS_l} \{ E : E \subseteq S, |E| \geq l/2 \}, \label{eq:E-S-loop}
\end{align}
where $\cS_l$ is the set of all closed loops of length $l$ on the toric code lattice. 
Consider now the decomposition of $\Lambda_\alpha$ as follows (see Def.~\ref{def:local-stoc}): 
\begin{align} 
\Lambda_\alpha 
 &=  \sum_{E \in \mathbb{E}} \big( \otimes_{i \in [2L^2]\setminus E} \cI \big) \otimes {\Lambda_\alpha}_E + \sum_{E \in \overline{\mathbb{E}}} \big( \otimes_{i \in [2L^2]\setminus E} \cI \big) \otimes {\Lambda_\alpha}_E \label{eq:dec-lambda},
\end{align}
where maps ${\Lambda_\alpha}_E$ are completely positive. We have that 
\begin{multline}    \label{eq:comm-toric}
 \dnorm{ \cD \circ \Lambda_\alpha \circ \cE - \sum_{E \in \mathbb{E}}  \cD \circ \left(\big( \otimes_{i \in [2L^2]\setminus E} \cI \big)  \otimes {\Lambda_\alpha}_E\right) \circ \cE}  \\ =  \dnorm{ \cD \circ \Lambda_\alpha \circ \cE - \sum_{E \in \mathbb{E}} K({\Lambda_\alpha}_E) \cI} 
 \leq  \dnorm{\sum_{E \in \overline{\mathbb{E}}} \big( \otimes_{i \in [2L^2]\setminus E} \cI \big) \otimes {\Lambda_\alpha}_E}
\end{multline}
Furthermore,
\begin{align}
    \dnorm{\sum_{E \in \overline{\mathbb{E}}} \big( \otimes_{i \in [2L^2]\setminus E} \cI \big) \otimes {\Lambda_\alpha}_E} &\leq \dnorm{ \sum_{l = L}^{2L^2} \sum_{S \in \cS_l} \: \sum_{\substack{E: \tilde{E} \subseteq E \\ \tilde{E} \subseteq S, |\tilde{E}| \geq \frac{l}{2} }} \big( \otimes_{i \in [2L^2]\setminus E} \cI \big) \otimes {\Lambda_\alpha}_E }  \nonumber\\
& \leq \sum_{l = L}^{2L^2} \sum_{S \in \cS_l} \: \sum_{\tilde{E} \subseteq S, |\tilde{E}| \geq \frac{l}{2}} \alpha^{|\tilde{E}|} \nonumber\\
& \leq \sum_{l = L}^{2L^2}   (4 L^2 3^{l-1}) (2^l) \alpha^{l/2} \nonumber \\
& \leq \frac{8}{3} L^4 (36 \alpha)^{L/2}, \text{ for $\alpha \leq \frac{1}{36}$} \label{eq:upper-bound-bad}
\end{align}
where the first inequality uses Eq.~(\ref{eq:E-bar}), (\ref{eq:E-S-loop}) and the complete positivity of maps ${\Lambda_\alpha}_E$, and also the fact that the largest loop is of size $2L^2$. The second inequality uses the property of local stochastic channels from Eq.~(\ref{eq:local-stc-2}), and the last inequality uses that there are $2^l$ subsets of a loop of length $l$ and there are less than $4 L^2 3^{l-1}$ loops of length $l$ (see~\cite{preskil2011toric} for more details). From Eqs.~(\ref{eq:comm-toric}), and (\ref{eq:upper-bound-bad}), we get
\begin{equation}
    \dnorm{ \cD \circ \Lambda_\alpha \circ \cE - \sum_{E \in \mathbb{E}} K({\Lambda_\alpha}_E) \cI} \leq \frac{8}{3} L^4 (36 \alpha)^{L/2}, \label{eq:comm-toric-1}
\end{equation}
Eq.~(\ref{eq:comm-toric-1}) implies that $|1 -  \sum_{E \in \mathbb{E}} K({\Lambda_\alpha}_E)| \leq \frac{8}{3} L^4 (36 \alpha)^{L/2}$. We, therefore, have 
\begin{align}
    \dnorm{ \cD \circ \Lambda_\alpha\circ \cE - \cI} \leq \frac{16}{3} L^4 (36 \alpha)^{L/2}.
\end{align}
Therefore, channels $\cE$ and $\cD$ define an $(m, \zeta)$ communication code for the quantum channel $\Lambda_\alpha$, with $\zeta = \frac{16}{3} L^4 (36 \alpha)^{L/2}$.
\end{proof}

\end{appendix}
\end{document}